\NeedsTeXFormat{LaTeX2e}
\documentclass[10pt,reqno]{amsart}
\usepackage{latexsym,amsmath}
\usepackage{amsfonts}
\usepackage{amssymb}
\usepackage{fixmath}

\usepackage{fullpage}
\usepackage{graphicx}
\usepackage[colorlinks=true,citecolor=blue]{hyperref}
\usepackage[dvipsnames]{xcolor}

\usepackage{enumerate}

\usepackage[T1]{fontenc}
\usepackage{fourier}
\usepackage{bbm}

\usepackage[boxed, linesnumbered, noend, noline]{algorithm2e}
\SetKwInput{KwData}{Input}
\SetKwInput{KwResult}{Output}

\numberwithin{equation}{section}

\newcommand\mypsi{\mathfrak{t}}
\newcommand\myphi{\mathfrak{l}}
\newcommand\UCP{\texttt{PUC}}
\newcommand\PM{\{\pm1\}}
\newcommand\ism{\cong}
\newcommand\uni{\mathtt{u}}
\newcommand\luni{\mathtt{n}}
\newcommand\sign{\mathrm{sign}}

\newcommand\disteq{\stackrel{\mathrm{dist}}{=}}

\newcommand\MU{\vec\mu}
\newcommand{\BP}{\mathrm{BP}}

\newcommand{\vDelta}{\vec\Delta}
\newcommand{\va}{\vec a}

\newcommand{\vC}{\vec C}

\newcommand{\vd}{\vec d}
\newcommand{\vD}{\vec D}

\newcommand{\vT}{\vec T}
\newcommand{\vN}{\vec N}
\newcommand{\vR}{\vec N}

\newcommand{\vs}{\vec s}

\newcommand{\vX}{\vec X}
\newcommand{\vl}{\vec l}

\renewcommand{\epsilon}{\eps}
\renewcommand{\subset}{\subseteq}

\newcommand\vr{\vec r}

\newcommand\PHI{\vec\Phi}
\newcommand\hPHI{\hat{\PHI}}

\newcommand\nix{\,\cdot\,}
\newcommand\vV{\vec V}

\newcommand\vS{\vec S}

\newcommand\dd{{\mathrm d}}

\renewcommand{\vec}[1]{\boldsymbol{#1}}

\newcommand\SIGMA{\vec\sigma}

\newtheorem{definition}{Definition}[section]
\newtheorem{claim}[definition]{Claim}

\newtheorem{theorem}[definition]{Theorem}
\newtheorem{lemma}[definition]{Lemma}
\newtheorem{proposition}[definition]{Proposition}
\newtheorem{corollary}[definition]{Corollary}

\newtheorem{fact}[definition]{Fact}

\newcommand\fQ{\mathfrak{Q}}

\newcommand\fC{\mathfrak{C}}

\newcommand\fA{\mathfrak{A}}
\newcommand\fE{\mathfrak{E}}
\newcommand\fH{\mathfrak{H}}
\newcommand\fR{\mathfrak{R}}
\newcommand\fF{\mathfrak{F}}

\newcommand\cB{\mathcal{B}}
\newcommand\cC{\mathcal{C}}

\newcommand\cU{\mathcal{U}}
\newcommand\cN{\mathcal{N}}

\newcommand\cL{\mathcal{L}}

\newcommand\cP{\mathcal{P}}

\newcommand\cV{\mathcal{V}}
\newcommand\cW{\mathcal{W}}

\def\cR{{\mathcal R}}

\newcommand\vv{\vec v}

\newcommand\vZ{\vec Z}

\newcommand\eul{\mathrm{e}}
\newcommand\eps{\varepsilon}

\newcommand\Var{\mathrm{Var}}
\newcommand\Erw{\mathbb{E}}
\newcommand{\vecone}{\mathbb{1}}

\newcommand{\Po}{{\rm Po}}
\newcommand{\Bin}{{\rm Bin}}

\newcommand\bc[1]{\left({#1}\right)}
\newcommand\cbc[1]{\left\{{#1}\right\}}
\newcommand\bcfr[2]{\bc{\frac{#1}{#2}}}

\newcommand\brk[1]{\left\lbrack{#1}\right\rbrack}

\newcommand\norm[1]{\left\|{#1}\right\|}
\newcommand\abs[1]{\left|{#1}\right|}

\newcommand\RR{\mathbb{R}}

\def\?#1{}
\makeatletter
\def\whp{w.h.p\@ifnextchar-{.}{\@ifnextchar.{.\?}{\@ifnextchar,{.}{\@ifnextchar){.}{\@ifnextchar:{.:\?}{.\ }}}}}}
\makeatother

\makeatletter
\def\Whp{W.h.p\@ifnextchar-{.}{\@ifnextchar.{.\?}{\@ifnextchar,{.}{\@ifnextchar){.}{\@ifnextchar:{.:\?}{.\ }}}}}}
\makeatother

\newcommand{\tensor}{\otimes}

\newcommand{\Bollobas}{Bollob\'as}
\newcommand{\Chvatal}{Chv\'{a}tal}

\newcommand\pr{\mathbb{P}}

\newcommand\Lem{Lemma}
\newcommand\Prop{Proposition}
\newcommand\Thm{Theorem}

\newcommand\Cor{Corollary}
\newcommand\Sec{Section}

\newcommand*{\dif}{\mathop{}\!\mathrm{d}}

\newcommand{\LLN}{\mathtt{logBP}^\tensor_{d,t}} 

\def\ex{{\mathbb E}}
\def\pr{{\mathbb P}}

\newcommand\RSA{Random Structures and Algorithms}

\newcommand\CPC{Combinatorics, Probability and Computing}

\begin{document}

	\title{The number of random 2-SAT solutions is asymptotically log-normal}


	\author{Arnab Chatterjee, Amin Coja-Oghlan, No\"ela M\"uller, Connor Riddlesden, Maurice Rolvien, Pavel~Zakharov, Haodong Zhu}
	\address{Arnab Chatterjee, {\tt arnab.chatterjee@tu-dortmund.de}, TU Dortmund, Faculty of Computer Science, 12 Otto-Hahn-St, Dortmund 44227, Germany.}
	\address{Amin Coja-Oghlan, {\tt amin.coja-oghlan@tu-dortmund.de}, TU Dortmund, Faculty of Computer Science and Faculty of Mathematics, 12 Otto-Hahn-St, Dortmund 44227, Germany.}
	\address{Noela M\"uller, {\tt n.s.muller@tue.nl}, Eindhoven University of Technology, Department of Mathematics and Computer Science, MetaForum MF 4.084, 5600 MB Eindhoven, the Netherlands.}
	\address{Connor Riddlesden, {\tt c.d.riddlesden@tue.nl}, Eindhoven University of Technology, Department of Mathematics and Computer Science, MetaForum MF 4.084, 5600 MB Eindhoven, the Netherlands.}
	\address{Maurice Rolvien, {\tt maurice.rolvien@tu-dortmund.de}, TU Dortmund, Faculty of Computer Science, 12 Otto-Hahn-St, Dortmund 44227, Germany.}
	\address{Pavel Zakharov, {\tt pavel.zakharov@tu-dortmund.de}, TU Dortmund, Faculty of Computer Science and Faculty of Mathematics, 12 Otto-Hahn-St, Dortmund 44227, Germany.}
	\address{Haodong Zhu, {\tt h.zhu1@tue.nl}, Eindhoven University of Technology, Department of Mathematics and Computer Science, 5600 MB Eindhoven, the Netherlands.}

	\begin{abstract}%
		We prove that throughout the satisfiable phase, the logarithm of the number of satisfying assignments of a random 2-SAT formula satisfies a central limit theorem.
		This implies that the log of the number of satisfying assignments exhibits fluctuations of order $\sqrt n$, with $n$ the number of variables.
		The formula for the variance can be evaluated effectively.
		By contrast, for numerous other random constraint satisfaction problems the typical fluctuations of the logarithm of the number of solutions are {\em bounded} throughout all or most of the satisfiable regime.
		\hfill {\em MSc:~05C80, 60C05, 68Q87}
	\end{abstract}

	\maketitle

	\section{Introduction}\label{sec_intro}

	\subsection{Background and motivation}\label{sec_motiv}
	The quest for satisfiability thresholds has been a guiding theme of research into random constraint satisfaction problems~\cite{ANP,Cheeseman,DSS3}.
	But once the satisfiability threshold has been pinpointed a question of no less consequence is to determine the distribution of the number of satisfying assignments within the satisfiable phase~\cite{pnas}.
	Indeed, the number of solutions is intimately tied to phase transitions that affect the geometry of the solution space, which in turn impacts the computational nature of finding or sampling solutions~\cite{Barriers,BreslerHuang,Charis}.
	However, few tools are currently available to count solutions of random problems.
	Where precise rigorous results exist (such as in random NAESAT or XORSAT), the proofs typically rely on the method of moments (e.g.,~\cite{nae,DuboisMandler,PittelSorkin,Feli2}).
	Yet a necessary condition for the success of this approach is that the problem in question exhibits certain symmetries, which are absent in many interesting cases~\cite{ANP,CKM}.

	The aim of the present paper is to shed a closer light on the number of satisfying assignments in random 2-SAT, the simplest random CSP that lacks said symmetry properties.
	While the random 2-SAT satisfiability threshold has been known since the 1990s~\cite{CR,Goerdt}, a first-order approximation to the number of satisfying assignments has been obtained only recently~\cite{2sat}.
	This timeline reflects the computational complexity of the respective questions.
	As is well known, deciding the satisfiability of a 2-CNF reduces to directed reachability, solvable in polynomial time~\cite{APT79}.

	By contrast, calculating the number of satisfying assignmets $Z(\Phi)$ of a 2-CNF $\Phi$ is a $\#$P-hard task~\cite{Valiant}.
	Nonetheless, Monasson and Zecchina~\cite{MZ} put forward a delicate physics-inspired conjecture as to the exponential order of the number of satisfying assignments of random 2-CNFs.
	Achlioptas et al.~\cite{2sat} recently proved this conjecture.
	Their theorem provides a first-order, law-of-large-numbers approximation of the logarithm of the number of satisfying assignments.
	The present paper contributes a much more precise result, namely a central limit theorem.
	We show that throughout the satisfiable phase the logarithm of the number of satisfying assignments, suitably shifted and scaled, converges to a Gaussian.
	This is the first central limit theorem of this type for any random CSP.

	Let $\PHI=\PHI_{n,m}$ be a random 2-CNF on $n$ Boolean variables $x_1,\ldots,x_n$ with $m$ clauses, drawn independently and uniformly from all $4\binom n2$ possible $2$-clauses.
	Suppose that $m\sim dn/2$ for a fixed real $d>0$.
	Thus, $d$ gauges the average number of clauses in which a variable $x_i$ appears.
	The value $d=2$ marks the satisfiability threshold; hence, $\PHI$ is satisfiable with high probability (`w.h.p.') if $d<2$, and unsatisfiable \whp\ if $d>2$~\cite{CR,Goerdt}.
	Achlioptas et al.~\cite{2sat} determined a function $\phi(d)>0$ such that for all $d<2$, i.e., throughout the entire satisfiable phase we have
	\begin{align}\label{eqlaregenum}
		Z(\PHI)&=\exp(n\phi(d)+o(n))&&\mbox{\whp},
	\end{align}
	thereby determining the leading exponential order of $Z(\PHI)$.

	However, \eqref{eqlaregenum} fails to identify the limiting distribution of $Z(\PHI)$.
	To be precise, since \eqref{eqlaregenum} shows that $Z(\PHI)$ scales exponentially, we expect this random variable to exhibit {\em multiplicative} fluctuations.
	Therefore, the appropriate goal is to find the limiting distribution of the logarithm of this random variable, i.e., of $\log Z(\PHI)$.
	Indeed, physics intuition suggests that $\log Z(\PHI)$ should be asymptotically Gaussian~\cite{MM}.
	The main result of the present paper confirms this hunch.
	Specifically, letting $\vec\Gamma_{\eta(d)}$ be a Gaussian with mean $0$ and standard deviation $\eta(d)>0$, we prove that for all $0<d<2$, $\log Z(\PHI)$ satisfies
	\begin{align}\label{eqfluct}
		\pr\brk{\log Z(\PHI)-\ex[\log Z(\PHI)\mid Z(\PHI)>0]<z\sqrt m}\sim\pr\brk{\vec\Gamma_{\eta(d)}<z}&&(z\in\RR).
	\end{align}

	The order $\Theta(\sqrt n)$ of fluctuations confirmed by \eqref{eqfluct} sets random 2-SAT apart from a large family of other random constraint satisfaction problems.
	For example, for random graph $q$-colouring with $q\geq3$ colours the log of the number of $q$-colourings {\em superconcentrates}, i.e., merely has \emph{bounded} fluctuations throughout most of the regime where the random graph is $q$-colourable~\cite{silent}.%
	\footnote{Formally, up to the so-called condensation threshold, which precedes the $q$-colourabiliy threshold by a small additive constant, the logarithm of the number of $q$-colurings minus its expectation converges in distribution to a random variable with bounded moments~\cite{silent,cond,CKM}.}
	The same is true of random NAESAT, XORSAT and the symmetric perceptron~\cite{ALS,Ayre,CKM,PittelSorkin}.
	In each of these cases, certain fundamental symmetry properties (e.g., that the set of $q$-colourings remains invariant under permutations of the colours) enable the computation of the number of solutions via the method of moments.
	Random 2-SAT lacks the respective symmetry (as the set of satisfying assignments is not generally invariant under swapping `true' and `false'), and accordingly \eqref{eqfluct} establishes that the number of solutions fails to superconcentrate (for more details see \cite{CKM}).

	\subsection{The main result}\label{sec_results}
	The formula for the standard deviation $\eta(d)$ from~\eqref{eqfluct} comes in terms of a fixed point equation on a space of probability measures.
	Thus, let $\cP(\RR^2)$ be the set of all (Borel) probability measures on $\RR^2$.
	For $0<d<2$ and $0\leq t\leq1$ we define an operator
	\begin{align}\label{eqlogBPtensor}
		\LLN:&\cP\bc{\RR^2} \to \cP\bc{\RR^2},& \rho\mapsto\hat \rho = \LLN(\rho),
	\end{align}
	as follows.
	Let
	\begin{align*}
		(\vec \xi_{\rho,i})_{i \geq 1},\,(\vec\xi'_{\rho,i})_{i\geq 1},\,(\vec\xi''_{\rho,i})_{i\geq 1},&&
		\vec\xi_{\rho,i}=\binom{\vec\xi_{\rho,i,1}}{\vec\xi_{\rho,i,2}},\,
		\vec\xi_{\rho,i}'=\binom{\vec\xi_{\rho,i,1}'}{\vec\xi_{\rho,i,2}'},\,
		\vec\xi_{\rho,i}''=\binom{\vec\xi_{\rho,i,1}''}{\vec\xi_{\rho,i,2}''}
	\end{align*}
	be random vectors with distribution $\rho$, let $\vd \disteq \Po(td)$, $\vec d', \vec d''\disteq \Po((1-t)d)$ and let $\vs_i,\vs_i',\vs_i'',\vr_i,\vr_i',\vr_i''$ for $i\geq1$ be uniformly random on $\PM$, all mutually independent.
	Then $\hat \rho$ is the distribution of the vector
	\begin{align*}
		\begin{pmatrix}
			\sum_{i=1}^{\vd} \vec s_i \log\bc{\frac12\bc{1+\vec r_i \tanh(\vec \xi_{\rho,i,1}/2)}} +  \sum_{i=1}^{\vec d'} \vec s_i' \log\bc{\frac12\bc{1+\vec r'_i \tanh(\vec\xi_{\rho,i,1}'/2)}}\\
			\sum_{i=1}^{\vd} \vec s_i \log\bc{\frac12\bc{1+\vec r_i \tanh(\vec \xi_{\rho,i,2}/2)}} +  \sum_{i=1}^{\vec d''} \vec s_i'' \log\bc{\frac12\bc{1+\vec r_i'' \tanh(\vec\xi''_{\rho,i,2}/2)}}
		\end{pmatrix}\in\RR^2\enspace.
	\end{align*}
	In addition, define a function $\cB_{d,t}^\tensor:\cP(\RR^2)\to(0,\infty]$ by letting
	\begin{align}
		\cB^\tensor_{d,t}(\rho)&=
		\ex\brk{\prod_{h=1}^2\log\bc{1-\frac14{(1+\vr_{1}\tanh(\vec\xi_{\rho,1,h}/2))(1+\vr_{2}\tanh(\vec\xi_{\rho,2,h}/2))}}}.
		\label{eqfunctional}
	\end{align}

	\begin{figure}
		\includegraphics[height=60mm]{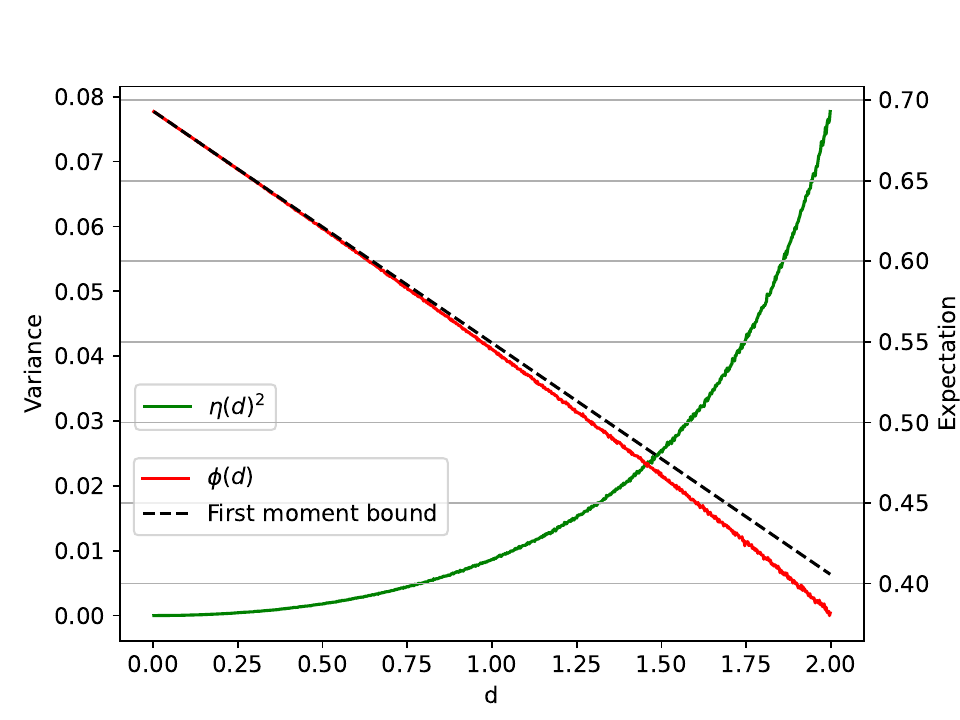}\hfill
		\includegraphics[height=55mm]{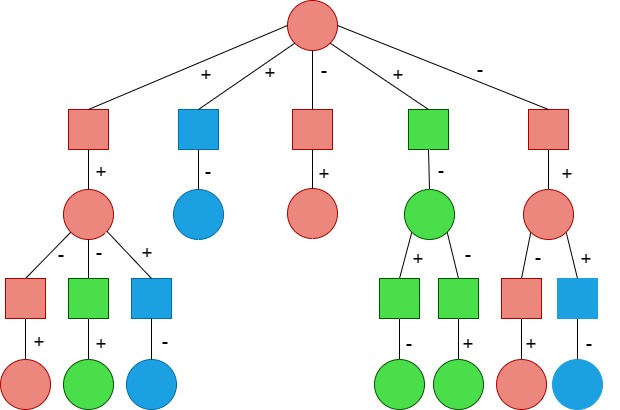}
		\caption{{\em Left:} Numerical approximations to the function $\phi(d)$ from \eqref{eqlaregenum} (red) and the variance $\eta(d)^2$ from \eqref{eqthm_clt} (green). The black dashed line is the first moment bound $d\mapsto\log(2)+\frac d2\log(3/4)$. {\em Right:} An illustration of the tree $\vT^\tensor$ from \Sec~\ref{sec_lwc}.}\label{fig_var}
	\end{figure}

	\begin{theorem}\label{thm_clt}
		For any $0<d<2$, $t\in[0,1]$ there exists a unique probability measure $\rho_{d,t}\in\cP(\RR^2)$ such that
		\begin{align}\label{eqfixedp}
			\rho_{d,t}&=\LLN(\rho_{d,t})&&\mbox{and}&&\int_{\RR^2}\|\xi\|_2^2\dd\rho_{d,t}(\xi)<\infty.
		\end{align}
		Furthermore,
		\begin{align}\label{eqthm_clt_lim}
			\lim_{n\to\infty}\frac{\log Z(\PHI)-\ex[\log Z(\PHI)\mid Z(\PHI)>0]}{\sqrt m}&=\vec\Gamma_{\eta(d)}\qquad\mbox{in distribution, where}\\
			\eta(d)^2&=\int_0^1\cB_{d,t}^\tensor(\rho_{d,t})\dd t-\cB_{d,0}^\tensor(\rho_{d,0})\in(0,\infty).\label{eqthm_clt}
		\end{align}
	\end{theorem}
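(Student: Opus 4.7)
\noindent The plan is to combine an Aizenman--Sims--Starr style interpolation with a Doob martingale central limit theorem, building on the leading-order analysis of~\cite{2sat}.

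First I would settle the existence and uniqueness of the tensorized fixed point $\rho_{d,t}$. For existence, iterate $\LLN$ from the product measure $\mu\otimes\mu$, where $\mu\in\cP(\RR)$ is the unique fixed point of the single-coordinate version of the operator established in~\cite{2sat}; the uniform boundedness of $\log(\tfrac12(1+\vec r\tanh(\cdot/2)))$ together with Poisson tail estimates yields a tight sequence with uniformly finite second moments, and any subsequential weak limit is a fixed point. For uniqueness within the finite-second-moment class, synchronously couple two candidates through the common randomness $(\vd,\vec d',\vec d'',\vec s_i,\vec r_i,\ldots)$; a Lipschitz estimate for $\xi\mapsto\log(\tfrac12(1+\tanh(\xi/2)))$ shows that the $L^2$ distance of the coupled marginals strictly contracts throughout $d<2$. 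Since the two coordinates of $\LLN$ share only the Poisson counters and sign variables but receive independent $\vec\xi$-inputs, the single-coordinate contraction from~\cite{2sat} lifts coordinatewise to the joint law.

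Next I would introduce the natural interpolation between an uncorrelated second-moment and the true one: for $t\in[0,1]$ let $\PHI_t$ be the random formula in which each variable retains its $\PHI$-clauses independently with probability $t$ and acquires $\Po((1-t)d)$ many independent cavity half-edges joined to fresh degree-one variables, so that $\PHI_1\disteq\PHI$, while $\log Z(\PHI_0)$ decomposes as a sum of i.i.d.\ contributions with explicit Gaussian fluctuations. The local weak limit of $\PHI_t$ is the tree $\vT^\tensor$ of \Sec~\ref{sec_lwc}, on which the BP recursion is exactly $\LLN$; hence by the uniqueness just established, the joint law of any two coupled BP messages converges to $\rho_{d,t}$. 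Differentiating $\ex[\log Z(\PHI_t)\mid Z(\PHI_t)>0]$ in $t$ reproduces the Bethe formula of~\cite{2sat}, while differentiating $\Var[\log Z(\PHI_t)\mid Z(\PHI_t)>0]$ pairs two independent resamplings of a single clause and, after BP reduction on $\vT^\tensor$, produces the integrand $\cB^\tensor_{d,t}(\rho_{d,t})$ of \eqref{eqfunctional}. Integrating from $0$ to $1$ and subtracting the $t=0$ boundary contribution, which equals $\cB^\tensor_{d,0}(\rho_{d,0})$ by a direct closed-form evaluation on the product measure $\rho_{d,0}=\mu\otimes\mu$, yields the formula~\eqref{eqthm_clt}.

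Finally, to lift the variance identification to a distributional CLT I would run a Doob decomposition over the clauses: set $\vec M_k=\ex[\log Z(\PHI)\mid \PHI_1,\ldots,\PHI_k]$, a martingale with almost surely bounded increments, since adding a single 2-clause changes $\log Z$ by a bounded amount conditional on $Z>0$. By local weak convergence combined with uniqueness of the tensorized fixed point, the sum of conditional second moments concentrates around $m\,\eta(d)^2$, and the Hall--Heyde martingale CLT then delivers~\eqref{eqthm_clt_lim}. The hard step will be the variance identification: it demands quantitative local-weak-convergence estimates strong enough to control pairs of resampled clauses and to substitute $\rho_{d,t}$ for the true conditional joint law of BP messages, which genuinely depends on the tensorized operator on $\cP(\RR^2)$ rather than on the simpler one-coordinate $\cP(\RR)$. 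The absence of a global symmetry --- precisely what sets 2-SAT apart from NAESAT, XORSAT, or random $q$-colouring --- is what forces this tensorized analysis and precludes the method-of-moments shortcuts available in those symmetric settings.
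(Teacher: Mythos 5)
Your high-level skeleton (fixed point via contraction, Doob martingale over clauses, Hall--Heyde CLT, and the appearance of the tensorized law $\rho_{d,t}$ in the variance) matches the paper's architecture, and the contraction-in-$\cW_2(\RR^2)$ argument for \Prop~\ref{prop_noela} is essentially what you describe. However, there are three genuine gaps where your route would either fail or does not yet constitute a proof.

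First, the unsatisfiability issue is not handled. You propose the Doob martingale $\vec M_k=\ex[\log Z(\PHI)\mid\va_1,\ldots,\va_k]$ and remark ``conditional on $Z>0$,'' but $\log Z(\PHI)=-\infty$ with probability $\Omega(n^{-1})$, and once you condition the whole process on $\{Z(\PHI)>0\}$ the clauses are no longer independent, so $(\vec M_k)_k$ is no longer a Doob martingale for the natural filtration. The paper's way out is the pruned formula $\hat\PHI$ of \Sec~\ref{sec_forced}: $\hat\PHI$ is deterministically satisfiable (Fact~\ref{lem_hatphi_sat}), so $\log Z(\hat\PHI)$ is finite surely and the Doob martingale~\eqref{eqZstab1} is well defined, and \Prop~\ref{prop_arnab} shows $|\log Z(\hat\PHI)-\log Z(\PHI)|\le n^{1/3}$ with probability $1-o(n^{-1/2})$, which is strong enough to transfer the CLT back. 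Without some replacement for pruning, your martingale is not defined.

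Second, your increment bound is wrong as stated. You assert that ``adding a single 2-clause changes $\log Z$ by a bounded amount,'' but adding one clause can set off a long implication chain and change $\log Z$ by $\Theta(n)$; there is nothing special about 2-SAT that makes this $O(1)$. Even on the pruned formulas the paper must work for this: \Lem~\ref{lem_forced_add} bounds $|\log Z(\hat\Phi)-\log Z(\hat\Psi)|$ by $|\cN(\Phi,e)|\log 2$, and \Lem~\ref{lem_tail} gives a subexponential tail for $|\cN(\PHI',\va_m)|$, which is what ultimately delivers the Lindeberg-type condition~\eqref{eq_hh1} and~\eqref{eq_hh3}. This quantitative control is a substantial part of the proof, not a remark.

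Third, your interpolation is a single-formula cavity construction, whereas the paper's variance calculation is built on a \emph{two-replica} device, not on differentiating a single-formula free energy. The crux (\Sec~\ref{sec_comb}, \Lem~\ref{lem_comb_rgb}) is to write $\Var[\log Z(\hat\PHI)]$ as a telescoping sum over two explicitly correlated formulas $\PHI_1(M,m-M)$ and $\PHI_2(M,m-M)$ that share $M$ clauses, then analyze each increment via random-satisfying-assignment marginals (\Prop~\ref{lem_rs}) and the local weak convergence to $\vT^\tensor$ (\Prop~\ref{lemma_lwc}). You gesture at ``two independent resamplings'' when differentiating the variance, but this would need to be made precise, and it is not clear your cavity-half-edge scheme can be massaged into the correlated-pair form required to see $\cB^\tensor_{d,t}$ emerge; the paper's direct construction of $(\PHI_1,\PHI_2)$ sharing a fraction $t$ of clauses, inspired by Chen--Dey--Panchenko, is what makes that computation go through.
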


	The conditioning on $\log Z(\PHI)>0$ is necessary in \eqref{eqthm_clt_lim}, because even for $d<2$ the formula $\PHI$ is unsatisfiable with probability $\Omega(n^{-1})$, in which case $\log Z(\PHI)=-\infty$.
	Moreover, the $L^2$-bound from~\eqref{eqfixedp} ensures that the integral \eqref{eqthm_clt} is well-defined.
	Finally, \eqref{eqthm_clt_lim} implies \eqref{eqfluct}.

	How can the formula \eqref{eqthm_clt} be evaluated?
	Because the proof of the uniqueness of the stochastic fixed point $\rho_{d,t}$ from \eqref{eqfixedp} is based on the contraction method, a fixed point iteration will converge rapidly.
	In effect, for any $d,t$ a discrete distribution that approximates $\rho_{d,t}$ arbitrarily well (in Wasserstein distance) can be computed via a randomised algorithm called {\em population dynamics}~\cite[Chapter~ 14]{MM}.
	Since $\cB^\tensor_{d,t}(\rho_{d,t})$ varies continously in $d$ and $t$, $\eta(d)^2$ can thus be approximated within any desired accuracy, see Figure~\ref{fig_var}.

	\section{Proof strategy}\label{sec_outline}

	\noindent
	The main challenge towards the proof of \Thm~\ref{thm_clt} is to get a handle on the variance of $\log Z(\PHI)$ given satisfiability.
	The key idea, inspired by spin glass theory~\cite{ChenDeyPanchenko} but novel to random constraint satisfaction, is to count the {\em joint} number of satisfying assignments of two correlated random formulas.
	Once this is accomplished \Thm~\ref{thm_clt} will follow from the careful application of a general martingale central limit theorem.
	To get acclimatised we first revisit the method of moments, the reasons it fails on random 2-SAT and the combinatorial interpretation of the law of large numbers \eqref{eqlaregenum}.


	\subsection{The method of moments fails}\label{sec_annealed}
	The default approach to estimating the number of solutions to a random CSP is the venerable second moment method~\cite{ANP}.
	Its thrust is to show that the second moment of the number of solutions is of the same order as the square of the expected number of solutions.
	If so then the moment computation together with small subgraph conditioning yields the precise limiting distribution of the number of solutions~\cite{COW,RW}.
	However, this approach works only if the log of the number of solutions superconcentrates around the log of the expected number of solutions.


	This necessary condition is not satisfied in random 2-SAT.
	In fact, a straightforward calculation yields
	\begin{align}\label{eqannealed}
		\frac1n\log\ex[Z(\PHI)]&\sim\log 2+\frac d2\log(3/4).
	\end{align}
	The formula on the r.h.s.\ is displayed as the black dashed line in Figure~\ref{fig_var}.
	As can be verified analytically, this line strictly exceeds the function $\phi(d)$ from \eqref{eqlaregenum} for any $0<d<2$.
	Consequently, \eqref{eqlaregenum} implies that 
	$\log Z(\PHI)\leq\log\ex[Z(\PHI)]-\Omega(n)$ \whp\
	In other words, the expected number of solutions $\ex[Z(\PHI)]$ overshoots the typical number of solutions by an exponential factor \whp; cf.\ the discussion in~\cite{nae,yuval}.

	\subsection{Belief Propagation}\label{sec_BPop}
	Instead of the method of moments, the prescription of the physics-based work of Monasson and Zecchina~\cite{MZ} is to estimate $\log Z(\PHI)$ by way of the Belief Propagation (BP) message passing algorithm.
	This approach was vindicated rigorously by Achlioptas et al.~\cite{2sat}.

	As we will reuse certain elements of that analysis we dwell on BP briefly.
	For a clause $a$ of a 2-CNF $\Phi$ let $\partial a=\partial_\Phi a$ be the set of variables that $a$ contains.
	Moreover, for $x\in\partial a$ let $\sign_\Phi(x,a)=\sign(x,a)\in\PM$ be the sign with which $x$ appears in $a$.
	Analogously, let $\partial x=\partial_\Phi x$ be the set of clauses in which variable $x$ appears.
	BP introduces `messages' between clauses $a$ and the variables $x\in\partial a$.
	More precisely, each such clause-variable pair $a,x$ comes with two messages $\mu_{x\to a},\mu_{a\to x}$.
	The messages are probability distributions on `true' and `false', which we represent by $\pm1$.
	Thus, $\mu_{x\to a}(\pm1),\mu_{a\to x}(\pm1)\geq0$ and $\mu_{x\to a}(1)+\mu_{x\to a}(-1)=\mu_{a\to x}(1)+\mu_{a\to x}(-1)=1$.

	The messages get updated iteratively by an operator
	\begin{align}\label{eqBPop}
		\BP&:(\mu_{x\to a},\mu_{a\to x})_{a,x\in\partial a}\mapsto(\hat\mu_{x\to a},\hat\mu_{a\to x})_{a,x\in\partial a}=\BP((\mu_{x\to a},\mu_{a\to x})_{a,x\in\partial a}).
	\end{align}
	For a clause $a$ with adjacent variables $\partial a=\{x,y\}$ the updated messages $\hat\mu_{a\to x}(\pm1)$ are defined by
	\begin{align}\label{eqBPop1}
		\hat\mu_{a\to x}(\sign(x,a))&=\frac1{1+\mu_{y\to a}(\sign(y,a))},&
		\hat\mu_{a\to x}(-\sign(x,a))&=\frac{\mu_{y\to a}(\sign(y,a))}{1+\mu_{y\to a}(\sign(y,a))}.
	\end{align}
	Moreover, for a variable $x$ and a clause $a\in\partial x$ we define%
	\footnote{For the sake of tidyness, if the above denominator vanishes we simply let $\hat\mu_{x\to a}(\pm1)=\frac12$.}
	\begin{align}\label{eqBPop2}
		\hat\mu_{x\to a}(s)&=\frac{\prod_{b\in\partial x\setminus\{a\}}\mu_{b\to x}(s)}{\prod_{b\in\partial x\setminus\{a\}}\mu_{b\to x}(1)+\prod_{b\in\partial x\setminus\{a\}}\mu_{b\to x}(-1)}&&(s\in\PM)\enspace.
	\end{align}
	The purpose of BP is to heuristically `approximate' the marginal probabilities that a random satisfying assignment $\SIGMA=\SIGMA_\Phi$ of $\Phi$ will set a certain variable to a specific truth value.
	The `approximation' given by the set $(\mu_{x\to a},\mu_{a\to x})_{a,x\in\partial a}$ of messages reads
	\begin{align}\label{eqBPmargs}
		\mu_x(s)=\frac{\prod_{b\in\partial x}\mu_{b\to x}(s)}{\prod_{b\in\partial x}\mu_{b\to x}(1)+\prod_{b\in\partial x}\mu_{b\to x}(-1)}&&(s\in\PM).
	\end{align}

	The BP `ansatz' now asks that we iterate the $\BP$ operator until an (approximate) fixed point is reached, i.e., ideally until $\hat\mu_{a\to x}=\mu_{a\to x}$ and $\hat\mu_{x\to a}=\mu_{x\to a}$ for all $a,x$.
	Then we evaluate the BP marginals \eqref{eqBPmargs} and plug them into a generic formula called the {\em Bethe free entropy}, which yields the BP `approximation' of $\log Z(\Phi)$; an excellent exposition can be found in~\cite{MM}.
	The BP recipe provably yields the correct result if the bipartite graph induced by the clause-variable incidences of the 2-CNF $\Phi$ is acyclic, but may be totally off otherwise.

	Of course, for $1<d<2$ the bipartite graph associated with the random formula $\PHI$ contains cycles in abundance.
	Nonetheless, \eqref{eqlaregenum} confirms that the BP formula provides a valid approximation to within $o(n)$.
	The proof is based on two observations.
	First, that the local structure of the clause-variable incidence graph can be described by a Galton-Watson tree.
	Second, that the Galton-Watson tree enjoys a spatial mixing property called {\em Gibbs uniqueness}.

	Since the proof of \Thm~\ref{thm_clt} also harnesses Gibbs uniqueness, let us elaborate.
	To mimic the local structure of $\PHI$ consider a multitype Galton-Watson tree $\vT$ whose types are {\em variable nodes} and {\em clause nodes} of four sub-types $(s,s')$ with $s,s'\in\PM$.
	The root $o$ is a variable node.
	The offspring of any variable node is a $\Po(d/4)$ number of clause nodes of each of the four sub-types.
	Finally, the offspring of a clause node is a single variable node.
	The clause type $(s,s')$ indicates that $s$ is the sign with which the parent variable appears in the clause, while $s'$ determines the sign of the child variable.
	Thus, the Galton-Watson tree $\vT$ can be viewed as a (possibly infinite) 2-CNF.
	For an integer $\ell\geq0$ let $\vT^{(2\ell)}$ be the finite tree/2-CNF obtained by deleting all variables and clauses at a distance larger than $2\ell$ from the root.

	The tree $\vT$ approximates $\PHI$ locally in the sense that for any fixed $\ell$ and any given variable $x_i$ the distribution of the depth-$2\ell$ neighbourhood of $x_i$ in $\PHI$ converges to $\vT^{(2\ell)}$ as $n\to\infty$ (in the sense of local weak convergence).
	Moreover, Gibbs uniqueness posits that under random satisfying assignments of the tree-CNF $\vT^{(2\ell)}$ the truth value $\SIGMA_{o}$ of the root under a random satisfying assignment $\SIGMA$ decouples from the values $\SIGMA_{\vT,y}$ of variables $y\in\partial^{2\ell}o$ at distance precisely $2\ell$ from $o$ for large enough $\ell$.
	Formally, with $S(\vT^{(2\ell)})$ the set of satisfying assignments of the 2-CNF $\vT^{(2\ell)}$, the following is true.

	\begin{proposition}[{\cite[\Prop~2.2]{2sat}}]\label{prop_uniqueness_old}
		We have
		\begin{align}\label{eqProp_uniqueness1}
			\lim_{\ell\to\infty}\Erw\brk{\max_{\tau\in S(\vT^{(2\ell)})}\abs{\pr\brk{\SIGMA_o=1\mid\vT^{(2\ell)},\SIGMA_{\partial^{2\ell}o}=\tau_{\partial^{2\ell}o}}-\pr\brk{\SIGMA_o=1\mid\vT^{(2\ell)}}}}&=0.
		\end{align}
	\end{proposition}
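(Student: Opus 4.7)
The plan is to recast the proposition as a decay-of-correlations statement for the Belief Propagation recursion \eqref{eqBPop1}--\eqref{eqBPop2} on $\vT^{(2\ell)}$, and then exploit the subcriticality $d<2$ to obtain exponential decay of the influence of the depth-$2\ell$ boundary on the root marginal.

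First, since $\vT^{(2\ell)}$ is a tree, the conditional marginal $\pr[\SIGMA_o = 1 \mid \vT^{(2\ell)}, \SIGMA_{\partial^{2\ell}o} = \tau_{\partial^{2\ell}o}]$ is exactly reproduced by running BP from the leaves upwards with the leaf messages initialised to be concentrated on the prescribed boundary values $\tau$. The unconditional marginal $\pr[\SIGMA_o = 1 \mid \vT^{(2\ell)}]$ corresponds to BP initialised with the free-boundary Gibbs messages of the stub subtrees hanging off the depth-$2\ell$ frontier. Consequently, the quantity inside the expectation in \eqref{eqProp_uniqueness1} is dominated by the maximal discrepancy at the root between two BP runs started from any two admissible depth-$2\ell$ initialisations, and it suffices to show this discrepancy tends to $0$ in expectation over $\vT$.

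Second, I would reparameterise each message $\mu_{a\to x}$ by its log-likelihood ratio $\xi_{a\to x} = \log(\mu_{a\to x}(+1)/\mu_{a\to x}(-1)) \in \RR$. Under this transformation the variable update \eqref{eqBPop2} becomes the sum of incoming log-ratios along the clause-siblings of the variable, and the clause update \eqref{eqBPop1} becomes a smooth, strictly contractive map on $\RR$. Composing one clause update with one variable update and averaging over the $\Po(d)$ offspring distribution of a variable node of $\vT$ yields a distributional update for the law of a random message whose Lipschitz constant, measured in a suitable metric on the space of message distributions (e.g.\ a weighted Wasserstein-$2$ distance), is strictly less than $1$ for every $d<2$. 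Iterating this per-layer contraction $\ell$ times, and using that all admissible initial messages live in a bounded region of log-ratio space, gives geometric decay of the expected root-message discrepancy. Since the root marginal is Lipschitz in the incoming BP messages, this yields \eqref{eqProp_uniqueness1}.

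The main obstacle is to verify that the per-layer contraction actually survives all the way to $d=2$, rather than only in some small-$d$ regime. The naive $L^\infty$ bound on the clause update gives Lipschitz constant $1$, which combined with a $\Po(d)$ summation yields a trivial factor $d$ per layer; one therefore needs a sharper argument that tracks second-moment fluctuations of the messages and exploits the concavity inherent in \eqref{eqBPop1}. This is essentially the same contraction that underpins the existence and uniqueness of the stochastic fixed point $\rho_{d,t}$ in \Thm~\ref{thm_clt}, so Proposition~\ref{prop_uniqueness_old} can be viewed as the one-dimensional instance of that more general contraction. A secondary nuisance is that the maximum in \eqref{eqProp_uniqueness1} is restricted to $\tau\in S(\vT^{(2\ell)})$, which forces us to handle boundary cases in which some subtrees have messages pinned to the boundary of the simplex, as well as the (exponentially small in $\ell$) probability that $S(\vT^{(2\ell)})$ is empty; standard Galton-Watson concentration estimates suffice for the latter, while the log-ratio parameterisation absorbs the former into the tail of the message law.
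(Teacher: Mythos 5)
The statement you are attempting to prove is Proposition 2.2 of~\cite{2sat}; the present paper does not reprove it but cites it, so there is no in-paper proof to check your attempt against. Your plan resonates with the mechanism that the authors \emph{do} develop in \Lem~\ref{lem_noela} (the $W_2$-contraction of the density-evolution operator $\LLN$, where the factor $d/2<1$ comes from averaging the squared Lipschitz constant of the clause update over the uniform clause sign), and you correctly point out that a naive $L^\infty$ Lipschitz bound of $1$ per clause would be useless.

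There is, however, a genuine gap in the passage from a distributional $W_2$-contraction to the worst-case statement \eqref{eqProp_uniqueness1}. The $W_2$-contraction of $\LLN$ averages over both the tree and a coupling of message laws; \eqref{eqProp_uniqueness1} takes a maximum over $\tau$ \emph{before} the expectation over $\vT$, and this maximum does not commute with the sign-averaging that supplies the $\tfrac12$. If you bound the root discrepancy by $\sum_a\max_{\tau_a}|\log(1+\eul^{-s_a\xi^{(\tau_a)}_{v_a}})-\log(1+\eul^{-s_a\xi^{(\mathrm{free})}_{v_a}})|$ and then take $\Erw_{s_a}$, the adversarial $\tau_a$ may choose, \emph{separately for each sign} $s_a\in\PM$, the boundary that pushes $\xi_{v_a}$ past $\xi_{v_a}^{(\mathrm{free})}$ into the steep regime of the corresponding clause map; if the admissible range of $\xi_{v_a}^{(\tau_a)}$ straddles $\xi_{v_a}^{(\mathrm{free})}$ roughly symmetrically, $\Erw_{s_a}[\max_{\tau_a}\cdots]$ is of the same order as the one-sided worst-case discrepancy itself, and the per-clause factor is not bounded away from $1$. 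To make the sign-averaging bite, you have to track a quantity whose extremal boundary conditions do \emph{not} depend on $s_a$; e.g.\ the diameter $\xi_v^{\max}-\xi_v^{\min}$ of the attainable log-ratio range rather than the distance to the free value. Because the clause map $\xi\mapsto\log(1+\eul^{-s\xi})$ is monotone in $\xi$, the extremisers for the diameter are sign-independent, and the pointwise identity $|\log(1+\eul^{-z_1})-\log(1+\eul^{-z_2})|+|\log(1+\eul^{z_1})-\log(1+\eul^{z_2})|=|z_1-z_2|$ then yields the $\tfrac12$ per clause after sign-averaging, hence $d/2$ per two-level layer. A second, related slip is the assertion that all admissible initial messages live in a bounded region of log-ratio space: a boundary configuration $\tau$ pins leaf messages to $\pm\infty$, and unit-clause forcing can carry $\pm\infty$ an arbitrary number of levels up the tree, so the diameter (or any log-ratio metric) is infinite at the boundary. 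One needs to couple the contraction with a subcriticality argument for the forcing process (or initialise the recursion from the bounded probability parameterisation of the clause messages, as in \eqref{eqBPop1}) before $W_2$ bounds can be applied. Finally, your worry about $S(\vT^{(2\ell)})=\emptyset$ is moot: any 2-CNF whose incidence graph is a tree is satisfiable.
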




	\subsection{Approaching the variance}\label{sec_comb}
	The proof of the formula~\eqref{eqlaregenum} combines the Gibbs uniqueness property and the local convergence to the Galton-Watson tree with a coupling argument called the `Aizenman-Sims-Starr scheme'~\cite{2sat}.
	Unfortunately, this combination does not seem precise enough to get a handle on the limiting distribution of $\log Z(\PHI)$ by a long shot.
	Actually, it is anything but clear how even the {\em order} of the standard deviation of $\log Z(\PHI)$ could be derived along these lines.
	One specific problem is that the rate of convergence of \eqref{eqProp_uniqueness1} diminishes as $d$ approaches the satisfiability threshold.

	To tackle this challenge we devise a combinatorial interpretation of $\log^2Z(\PHI)$.
	A key idea, which we borrow from spin glass theory~\cite{ChenDeyPanchenko}, is to set up a family of correlated random formulas.
	Specifically, given integers $M,M'\geq0$ we construct a correlated pair $(\PHI_1(M,M'),\PHI_2(M,M'))$ of formulas on the variable set $V_n=\{x_1,\ldots,x_n\}$ as follows.
	Let $(\va_i)_{i\geq1}$, $(\va_i')_{i\geq1}$, $(\va_i'')_{i\geq1}$ be sequences of mutually independent uniformly random clauses on $V_n$.
	Then
	\begin{align}\label{eqPHI12}
		\PHI_1(M,M')&=\va_1\wedge\cdots\wedge\va_M\wedge\va_1'\wedge\cdots\wedge\va_{M'}',&
		\PHI_2(M,M')&=\va_1\wedge\cdots\wedge\va_M\wedge\va_1''\wedge\cdots\wedge\va_{M'}''.
	\end{align}
	Thus, the two formulas share clauses $\va_1,\ldots,\va_M$.
	Additionally, each contains another $M'$ independent clauses.
	In particular, $\PHI_1(m,0)$, $\PHI_2(m,0)$ are identical, while $\PHI_1(0,m)$, $\PHI_2(0,m)$ are independent.

	Interpolating between these extreme cases offers a promising avenue for computing the variance:
	given that $\PHI_1(M,m-M)$ and $\PHI_2(M,m-M)$ are satisfiable for all $M$, we can write a telescoping sum
	\begin{align}\label{eq_lem_comb_var}
		\log Z(\PHI_1(m,0))&\cdot\log Z(\PHI_2(m,0))-\log Z(\PHI_1(0,m))\cdot\log Z(\PHI_2(0,m))\\
		&=\sum_{M=1}^m\log Z(\PHI_1(M,m-M)) \cdot  \log Z(\PHI_2(M,m-M))\nonumber\\&\qquad\qquad\qquad-\log Z(\PHI_1(M-1,m-M+1))\cdot\log Z(\PHI_2(M-1,m-M+1)).
		\nonumber
	\end{align}
	If we {\em could} take the expectation on the l.h.s.\ of \eqref{eq_lem_comb_var}, we would precisely obtain the variance of $\log Z(\PHI)$.
	Moreover, each summand on the r.h.s.\ amounts to a `local' change of swapping a shared clause for a pair of independent clauses.
	Yet we cannot just take the expectation of \eqref{eq_lem_comb_var}, because some $\PHI_{h}(M,m-M)$ may be unsatisfiable.
	To remedy this, we will replace $\log Z(\PHI)$ by a tamer random variable with the same limiting distribution.
	Its construction is based on the Unit Clause Propagation algorithm.

	\subsection{Unit Clause Propagation}\label{sec_forced}
	Employed by all modern SAT solvers as a sub-routine, Unit Clause Propagation is a linear time algorithm that tracks the implications of partial assignments.
	The algorithm receives as input a 2-CNF $\Phi$ along with a set $\cL$ of literals.
	These literals are deemed to be `true'.
	The algorithm then pursues direct logical implications, thereby identifying additional `implied' literals that need to be true so that no clause gets violated.
	This procedure is outlined in Steps 1--2 of Algorithm~\ref{fig_ucp}; the outcome of Steps 1--2 is independent of the order in which literals/clauses are processed.

	\IncMargin{1em}
	\begin{algorithm}[h!]
		\KwData{A 2-CNF $\Phi$ along with a set $\cL$ of literals deemed true.}
		\While{\upshape there exists a clause $a\equiv l\vee\neg l'$ with $l'\in\cL$ and $l\not\in\cL$}{%
			add literal $l$ to $\cL$\;}
		For variables $x\in V(\Phi)$ such that $x\in\cL$ or $\neg x\in\cL$ let
		\begin{align*}
			\sigma_x=\begin{cases}
				1&\mbox{ if $x\in\cL$ and $\neg x\not\in\cL$},\\
				-1&\mbox{ if $\neg x\in\cL$ and $x\not\in\cL$},\\
				0&\mbox{ otherwise}.
			\end{cases}
		\end{align*}
		Let $\cC$ be the set of all clauses $a$ such that $\sigma_x=0$ for all $x\in\partial a$ and return $\cL,\cC,\sigma$\;
		\caption{Pessimistic Unit Clause Propagation (`\UCP').}\label{fig_ucp}
	\end{algorithm}
	\DecMargin{1em}

	Clearly, trouble brews if \UCP\ ends up placing both a literal $l$ and its negation $\neg l$ into the set $\cL$.
	Our `pessimistic' Unit Clause variant makes no attempt at mitigating such contradictions.
	Instead, Step~3 just constructs a partial assignment where all conflicting literals are set to a dummy value zero.
	Additionally, \UCP\ identifies the set $\cC$ of {\em conflict clauses} that contain conflicted variables only.

	Now consider a 2-CNF $\Phi$ on a set of variables $V(\Phi)$.
	For each possible literal $l\in\{x,\neg x:x\in V(\Phi)\}$ we run \UCP$(\Phi,\cL=\{l\})$.
	Let $\cC(\Phi,\{l\})$ be the set of conflict clauses returned by \UCP.
	Obtain the {\em pruned formula} $\hat\Phi$ from $\Phi$ by removing all clauses in $\cC(\Phi)=\bigcup_{l}\cC(\Phi,\{l\})$.
	Then it is easy to verify the following.%
	\footnote{See \Sec~\ref{sec_ucp} for a detailed proof.}

	\begin{fact}\label{lem_hatphi_sat}
		For any 2-CNF $\Phi$ the pruned 2-CNF $\hat\Phi$ is satisfiable.
	\end{fact}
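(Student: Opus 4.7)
The plan is to argue by contradiction, leaning on the classical Aspvall--Plass--Tarjan criterion: a 2-CNF is satisfiable if and only if no variable lies in the same strongly connected component of its implication digraph as its own negation. Recall that every clause $l_1 \vee l_2$ contributes exactly the two arcs $\neg l_1 \to l_2$ and $\neg l_2 \to l_1$ to the implication digraph, so the digraph is arc-symmetric under the involution $l \mapsto \neg l$.

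Suppose then that $\hat\Phi$ were unsatisfiable. I would extract from the criterion a variable $x$ together with a directed path $x = l_0 \to l_1 \to \cdots \to l_k = \neg x$ in the implication digraph of $\hat\Phi$. Each arc $l_{i-1} \to l_i$ originates from some clause $a_i \equiv \neg l_{i-1} \vee l_i$ that survived the pruning, so $a_1, \dots, a_k \in \Phi$ as well.

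The heart of the argument is now to run \UCP$(\Phi, \{x\})$ and to track the set $\cL$. Because $a_1, \dots, a_k \in \Phi$, the while-loop in Algorithm~\ref{fig_ucp} successively inserts $l_1, l_2, \dots, l_k = \neg x$ into $\cL$. The crucial observation is that the very same clause $a_i$ also contributes the reverse arc $\neg l_i \to \neg l_{i-1}$; hence, once $\neg x = l_k$ has entered $\cL$, these arcs fire in succession and place $\neg l_{k-1}, \dots, \neg l_1, \neg l_0$ into $\cL$ as well. For every $0 \le i \le k$ both $l_i$ and $\neg l_i$ now belong to $\cL$, so in Step~3 every variable appearing on the chain is assigned $\sigma_\cdot = 0$, i.e.\ it is conflicted.

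The conclusion is then immediate: each clause $a_i$ contains only conflicted variables, so $a_i \in \cC(\Phi, \{x\}) \subseteq \cC(\Phi)$, which means $a_i$ was removed when forming $\hat\Phi$ and cannot have been used on the implication chain in $\hat\Phi$. This contradicts the choice of $a_i$ and proves the claim. The only subtle point in the plan is the symmetry observation in the third paragraph; once one notices that the reverse chain of implications is powered by exactly the same clauses as the forward chain, nothing further about $\Phi$ is needed.
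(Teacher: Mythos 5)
Your proof is correct and takes a genuinely different route from the paper's. The paper proceeds constructively: it fixes an ordering of all literals, runs \UCP\ on $\Phi$ from each in turn, greedily patches together a partial assignment $\sigma:V(\Phi)\to\{0,\pm1\}$, and then verifies directly that every clause of $\hat\Phi$ has a non-conflicted variable $x\in\partial a$ with $\sigma_x=\sign(x,a)$. You instead argue by contradiction through the Aspvall--Plass--Tarjan criterion (equivalently, through the bicycle characterisation the paper records as Fact~\ref{fact_bicycle}): if $\hat\Phi$ were unsatisfiable there would be an implication path $x=l_0\to l_1\to\cdots\to l_k=\neg x$ powered by clauses $a_1,\dots,a_k\in F(\hat\Phi)\subseteq F(\Phi)$; running \UCP$(\Phi,\{x\})$ then reaches every $l_i$ along the forward chain and, via the contrapositive arcs of the \emph{same} clauses starting from $\neg l_k=x\in\cL$, every $\neg l_i$ as well, so every variable $|l_i|$ receives the dummy value $0$ and every $a_i$ becomes a conflict clause in $\cC(\Phi,\{x\})$, contradicting $a_i\in F(\hat\Phi)$. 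Both arguments hinge on the identical mechanism (pruning removes exactly the clauses whose endpoints get conflicted by some \UCP\ run), but your contradiction argument is shorter and more transparent, while the paper's version has the minor virtue of exhibiting a concrete satisfying assignment. One small presentational remark: the key observation — that the reverse chain fires because $\neg l_k=x$ is already in $\cL$ from the start, which is exactly what makes the chain's endpoints be a variable and its negation — is worth stating explicitly, since for an arbitrary implication path $l_0\to\cdots\to l_k$ with $l_k\neq\neg l_0$ the reverse propagation would not be triggered.
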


	Generally, the pruned formula $\hat\Phi$ could have far fewer clauses than the original formula $\Phi$.
	Accordingly, even if $\Phi$ is satisfiable the number $Z(\hat\Phi)$ of satisfying assignments of $\hat\Phi$ could dramatically exceed $Z(\Phi)$.
	However, the following proposition shows that on a random formula, the impact of pruning is modest.

	\begin{proposition}\label{prop_arnab}
		With probability $1 - o(n^{-1/2})$ we have $|\log Z(\hPHI) - \log Z(\PHI)| \leq n^{1/3}.$
	\end{proposition}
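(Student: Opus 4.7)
The plan is to work on the event $\cbc{Z(\PHI) > 0}$, whose complement has probability $O(n^{-1}) = o(n^{-1/2})$ as noted after \Thm~\ref{thm_clt}. On this event both $\PHI$ and $\hPHI$ are satisfiable (by Fact~\ref{lem_hatphi_sat}), and writing $\SIGMA$ for a uniformly random satisfying assignment of $\hPHI$ we have the deterministic identity
\begin{align*}
\log Z(\hPHI) - \log Z(\PHI) &= -\log \Pr\brk{\SIGMA \models \cC(\PHI)},
\end{align*}
where $\cC(\PHI)$ denotes the set of pruned clauses. The task therefore reduces to showing $\Pr\brk{\SIGMA\models\cC(\PHI)}\geq \exp(-n^{1/3})$ with probability $1-o(n^{-1/2})$.

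For the first step I would bound the size of $\cC(\PHI)$. A clause $c$ belongs to $\cC(\PHI)$ only if for some starting literal $l_0$ the \UCP\ procedure from $\cbc{l_0}$ forces every variable of $\partial c$ to take both truth values; unpacking this shows that, for each $x\in\partial c$, the formula $\PHI$ must contain a pair of implication chains from $l_0$ to opposite signs of $x$. The Galton--Watson description from \Sec~\ref{sec_BPop} shows that, for $d<2$, the expected count of such configurations, summed over the choice of $l_0$, the two chains per variable of $\partial c$, and $c$ itself, is $O(1)$; a higher-moment refinement (or a direct tail bound exploiting the barely-supercritical branching character of the UCP process) then upgrades this to $|V(\cC(\PHI))| \leq n^{1/10}$ with probability $1-o(n^{-1/2})$.

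For the second step, since $\PHI$ is satisfiable there is at least one assignment $\tau^\star$ to $V(\cC(\PHI))$ consistent with $\cC(\PHI)$, so it suffices to show $\Pr\brk{\SIGMA_{V(\cC(\PHI))} = \tau^\star} \geq \exp(-Cn^{1/10})$ for some $C=C(d)>0$. The plan here is to exploit that locally around any vertex of $V(\cC(\PHI))$ the measure on $\SIGMA$ is well approximated by the corresponding Gibbs measure on the Galton--Watson tree $\vT$, for which Gibbs uniqueness (\Prop~\ref{prop_uniqueness_old}) implies that every single-variable marginal lies in $\brk{\eps,1-\eps}$ for some $\eps=\eps(d)>0$. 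An inductive peeling argument along a spanning structure of $V(\cC(\PHI))$ then adds at most $\log(1/\eps)$ to $-\log\Pr$ per variable, which combined with the first step yields $|\log Z(\hPHI)-\log Z(\PHI)|=O(n^{1/10})=o(n^{1/3})$.

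The main obstacle is precisely the joint-marginal control in the second step: \Prop~\ref{prop_uniqueness_old} is a single-root statement on the Galton--Watson tree, whereas $V(\cC(\PHI))$ may be clustered inside atypical UCP cones and cannot be straightforwardly reduced to disjoint tree neighbourhoods. A careful decomposition of $V(\cC(\PHI))$ into the cones produced by the pruning, together with a coupling between the local structure of $\PHI$ and $\vT$ that remains accurate on sets of size up to $n^{1/10}$, will be required to make the peeling argument go through.
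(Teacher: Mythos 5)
Your first step (bounding the total size of the pruned region) matches the paper's \Lem~\ref{lem_unicycle} together with \Cor~\ref{lem_bicycle}: the number of starting literals $l$ with $\cC(\PHI,\{l\})\neq\emptyset$ is at most $n^\delta$ for any fixed $\delta>0$ with probability $1-o(n^{-1})$, and every UCP cone $\cV(\PHI,\{l\})$ has size $O(\log^2 n)$ with probability $1-o(n^{-2})$, so the relevant region has total size $o(n^{1/3})$.

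Your second step, however, has a genuine gap and diverges sharply from the paper's route. You want to lower-bound $\Pr[\SIGMA_{V(\cC(\PHI))}=\tau^\star]$ under $\mu_{\hPHI}$ by peeling one variable at a time, using Gibbs uniqueness to claim each single-variable marginal is bounded away from $\{0,1\}$. But \Prop~\ref{prop_uniqueness_old} is a decorrelation statement, not a non-degeneracy statement: it does not itself say that marginals avoid $\{0,1\}$, and more importantly its local-weak-convergence input is a \emph{typical}-neighbourhood statement. The variables in $V(\cC(\PHI))$ are by construction those sitting in atypical, conflict-carrying UCP cones, so they are conditioned to live precisely in structures of vanishing probability in the Galton--Watson limit; a coupling that is ``accurate on sets of size up to $n^{1/10}$'' would not rescue you, since the issue is conditioning on rarity, not on size. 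You correctly flag the joint-marginal control as the main obstacle, but the obstacle is worse than a missing union bound over disjoint tree neighbourhoods: there is no a priori reason a conditional marginal of $\mu_{\hPHI}$ restricted to these variables stays in $[\eps,1-\eps]$ at every peeling step.

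The paper avoids all of this with a purely deterministic combinatorial lemma. \Lem~\ref{lem_forced_remove} (via Claim~\ref{claim_forced_remove}) shows that for any satisfiable 2-CNF $\Phi$ and any starting set $\cL_0$, one has
\begin{align*}
Z(\Phi - \cC(\Phi,\cL_0)) \leq 2^{|\cV(\Phi,\cL_0)|\cdot\vecone\{\cC(\Phi,\cL_0)\neq\emptyset\}}\, Z(\Phi),
\end{align*}
because any satisfying assignment of the pruned formula can be patched on the UCP cone $\cV(\Phi,\cL_0)$ to yield a satisfying assignment of $\Phi$, changing only $|\cV(\Phi,\cL_0)|$ coordinates. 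Summing over all starting literals then bounds $\log Z(\hPHI)-\log Z(\PHI)$ deterministically by $(\log 2)\sum_{l}\vecone\{\cC(\PHI,\{l\})\neq\emptyset\}|\cV(\PHI,\{l\})|$, and the two tail bounds from your first step finish the proof. This sidesteps the Gibbs measure entirely. To salvage your approach you would need a substantial new argument establishing non-degeneracy of conditional marginals of $\mu_{\hPHI}$ at conflict variables, which is not available from the tools in this paper; the deterministic patching bound is both simpler and strictly sufficient.
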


	\subsection{Variance redux}\label{sec_pruned_var}
	The error bound from \Prop~\ref{prop_arnab} is tight enough so that towards the proof of \Thm~\ref{thm_clt} it suffices to establish a central limit theorem for $\log Z(\hPHI)$, i.e., the log of the number of satisfying assignments of the pruned formula.
	Once again the pivotal task to this end is to compute the variance of $\log Z(\hPHI)$.
	Revisiting the telescoping sum \eqref{eq_lem_comb_var}, we obtain the following expression.
	Recalling~\eqref{eqPHI12}, we write $\hPHI_h(M,M')=\widehat{\PHI_h(M,M')}$ for the formula obtained by pruning $\PHI_h(M,M')$.

	\begin{lemma}\label{lem_comb_rgb}
		Let
		\begin{align}\label{eqDeltaM}
			\Delta(M)&=\Erw\brk{\log\bcfr{ Z(\hPHI_1(M,m-M))}{ Z(\hPHI_1(M-1,m-M))} \cdot \log\bcfr{ Z(\hPHI_2(M,m-M))}{ Z(\hPHI_2(M-1,m-M))} },\\
			\Delta'(M)&=\Erw\brk{\log\bcfr{ Z(\hPHI_1(M-1,m-M+1))}{ Z(\hPHI_1(M-1,m-M))} \cdot \log\bcfr{ Z(\hPHI_2(M-1,m-M+1))}{ Z(\hPHI_2(M-1,m-M))} }.\label{eqDelta'M}
		\end{align}
		Then $\displaystyle
			\Var\brk{\log Z(\hPHI)}=\sum_{M=1}^m\Delta(M)-\Delta'(M).
		$
	\end{lemma}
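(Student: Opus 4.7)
The plan is to take expectations on both sides of the telescoping identity \eqref{eq_lem_comb_var}, but now applied to the pruned formulas $\hPHI_h(M,M')$ rather than to $\PHI_h(M,M')$. Because \Fact~\ref{lem_hatphi_sat} guarantees satisfiability, $\log Z(\hPHI_h(M,M'))\in[0,n\log 2]$ throughout, so all expectations are finite and the algebraic manipulations are legitimate.

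First I would read off the two endpoints on the left-hand side. Since $\PHI_1(m,0)$ and $\PHI_2(m,0)$ are literally the same random formula $\va_1\wedge\cdots\wedge\va_m$, their pruned versions agree and each has the distribution of $\hPHI$, so the expectation of the first product equals $\Erw[\log^2 Z(\hPHI)]$. Conversely, $\PHI_1(0,m)$ and $\PHI_2(0,m)$ draw their clauses from the disjoint sequences $(\va'_i)$ and $(\va''_i)$ and are thus independent copies of $\PHI$; hence the expectation of the second product factors as $\bc{\Erw[\log Z(\hPHI)]}^2$. Subtracting yields exactly $\Var[\log Z(\hPHI)]$.

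The substance of the argument is to show that each summand on the right-hand side has expectation $\Delta(M)-\Delta'(M)$. I would condition on the $\sigma$-algebra $\cF_M$ generated by all clauses other than the fresh ones introduced at step $M$, namely $\va_1,\ldots,\va_{M-1}$, $\va'_1,\ldots,\va'_{m-M}$, $\va''_1,\ldots,\va''_{m-M}$. Setting $A_h=\log Z(\hPHI_h(M-1,m-M))$ and writing $B_h$, $B_h'$ for the log-ratios appearing in \eqref{eqDeltaM} and \eqref{eqDelta'M} respectively, we have $\log Z(\hPHI_h(M,m-M))=A_h+B_h$ and $\log Z(\hPHI_h(M-1,m-M+1))=A_h+B_h'$. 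Bilinear expansion gives
\begin{align*}
(A_1+B_1)(A_2+B_2)-(A_1+B_1')(A_2+B_2')=B_1B_2-B_1'B_2'+A_1(B_2-B_2')+A_2(B_1-B_1').
\end{align*}

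The crux, and the only non-routine point, is the vanishing of the two cross terms in expectation. Given $\cF_M$, the quantity $A_1$ is fixed; moreover $B_2$ is a deterministic function of $\cF_M$ and $\va_M$, whereas $B_2'$ is the same function evaluated at $\va''_{m-M+1}$ in place of $\va_M$, since both arise from appending a single fresh clause to $\PHI_2(M-1,m-M)$, pruning, and comparing log-partition functions. As $\va_M$ and $\va''_{m-M+1}$ are i.i.d.\ uniform $2$-clauses independent of $\cF_M$, the conditional distributions $B_2\mid\cF_M$ and $B_2'\mid\cF_M$ coincide, so $\Erw[A_1(B_2-B_2')\mid\cF_M]=0$. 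An identical argument disposes of $A_2(B_1-B_1')$, and summing the resulting identity $\Erw[B_1B_2-B_1'B_2']=\Delta(M)-\Delta'(M)$ over $M=1,\ldots,m$ closes the proof.
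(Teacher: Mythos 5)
Your proof is correct and takes a genuinely different path to the same identity. In the paper, Lemma~\ref{lem_comb_rgb} is obtained as a corollary of the martingale-difference formula of Lemma~\ref{lem_comb_proc}, namely $m\vX_{n,M}^2=\ex[\vDelta(M)+\vDelta'(M)-2\vDelta''(M)\mid\fF_{n,M}]$ with $\fF_{n,M}=\sigma(\va_1,\ldots,\va_M)$; one then sums using the orthogonality of martingale increments together with the further observation $\ex[\vDelta''(M)]=\Delta'(M)$, which makes the $\vDelta''$ term collapse into a second copy of $\Delta'$. You instead expand the telescoping identity~\eqref{eq_lem_comb_var} for the pruned formulas directly and condition on the \emph{larger} $\sigma$-algebra $\cF_M=\sigma(\va_1,\ldots,\va_{M-1},\va'_1,\ldots,\va'_{m-M},\va''_1,\ldots,\va''_{m-M})$, under which $A_1,A_2$ are deterministic and each of $B_h,B_h'$ depends on a single fresh clause independent of $\cF_M$; this makes the vanishing of the cross terms $\ex[A_1(B_2-B_2')]=\ex[A_2(B_1-B_1')]=0$ immediate, and the quantity $\vDelta''(M)$ never appears. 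Your route is more self-contained and arguably cleaner for this lemma in isolation; the paper's route is dictated by the fact that it needs the martingale-difference formula anyway for the central limit theorem (Proposition~\ref{prop_varproc}), so passing through $\vX_{n,M}$ there serves double duty. One small caution: your $\cF_M$ is \emph{not} the paper's filtration $\fF_{n,M}$, which conditions only on $\va_1,\ldots,\va_M$; keep the two notations distinct to avoid confusion when referring back to Section~\ref{sec_hh_apply}.
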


	\Lem~\ref{lem_comb_rgb} expresses the variance as a sum of local changes.
	For example, $\PHI_1(M,m-M)$ is obtained from $\PHI_1(M-1,m-M)$ by adding a single random clause, namely $\va_M$.
	Thus, $\Delta(M)$ equals the expected change  upon addition of a single {\em shared} clause---modulo the effect of pruning, that is.

	But fortunately, on random formulas only a few clauses get pruned \whp\
	In effect, we can express the impact of these random changes neatly in terms of random satisfying assignments of the `small' formulas $\hPHI_h(M-1,m-M)$ that appear in \eqref{eqDeltaM}--\eqref{eqDelta'M}.
	Specifically, the quotients in \eqref{eqDeltaM}--\eqref{eqDelta'M} boil down to the probabilities that random satisfying assignments of the `small' formulas survive the extra clause that gets added to obtain the 2-CNFs in the respective numerators.
	Thus, with $\SIGMA=(\SIGMA_y)_{y\in V_n}$ denoting a random satisfying assignment of $\hPHI_h(M-1,m-M)$, we obtain the following.

	\begin{proposition}\label{lem_rs}
		Let $1\leq M\leq m$.
		\Whp\ we have
		\begin{align*}
			\frac{ Z(\hPHI_h(M,m-M))}{ Z(\hPHI_h(M-1,m-M))}&=1-\prod_{y\in\partial\va_M}
			\pr\brk{\SIGMA_y\neq\sign(y,\va_M)\mid\hat\PHI_h(M-1,m-M),\va_M}+o(1)\quad(h=1,2),\\
			\frac{ Z(\hPHI_1(M-1,m-M+1))}{ Z(\hPHI_1(M-1,m-M))}&=1-\prod_{y\in\partial\va_{m-M+1}'}\pr\brk{\SIGMA_y\neq\sign(y,\va_{m-M+1}')\mid\hPHI_1(M-1,m-M),\va'_{m-M+1}}+o(1),\\
			\frac{ Z(\hPHI_2(M-1,m-M+1))}{ Z(\hPHI_2(M-1,m-M))}&=1-\prod_{y\in\partial\va_{m-M+1}''}\pr\brk{\SIGMA_y\neq\sign(y,\va_{m-M+1}'')\mid\hPHI_2(M-1,m-M),\va'_{m-M+1}}+o(1).
		\end{align*}
	\end{proposition}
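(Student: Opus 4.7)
All three ratios have the same structure --- a single uniformly random clause is appended to one of the random formulas --- so I focus on the first. Set $\PHI^-:=\PHI_h(M-1,m-M)$ and $\PHI^+:=\PHI_h(M,m-M)=\PHI^-\wedge\va_M$. The plan is: (i) show that \whp\ the pruning commutes with this single-clause extension in the clean sense that $\hPHI^+=\hPHI^-\cup\{\va_M\}$; (ii) invoke the elementary identity expressing the resulting ratio as the probability that a uniformly random satisfying assignment of $\hPHI^-$ satisfies $\va_M$; (iii) factorise that probability via Gibbs uniqueness.

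Step (i) hinges on the monotonicity $\cC(\PHI^+)\supseteq\cC(\PHI^-)$, which holds because appending $\va_M$ can only extend the sets $\cL$ returned by \UCP\ from any seed literal and thus only spawn additional contradictions. The elements of $\cC(\PHI^+)\setminus\cC(\PHI^-)$ must therefore involve variables that become contradicted only once $\va_M$ is available for UCP, which forces the UCP derivation of such clauses to traverse $\partial\va_M$. Since $\va_M$ is a uniformly random clause independent of $\PHI^-$ and since the total pruning effect in $\PHI^-$ is negligible (as underlies \Prop~\ref{prop_arnab}), one checks that the probability that $\va_M$ hits any such ``sensitive'' region is $o(1)$; the same randomness ensures that $\va_M$ itself is not a conflict clause \whp\ Under the resulting \whp\ identity $\hPHI^+=\hPHI^-\cup\{\va_M\}$, a direct count yields
\begin{align*}
\frac{Z(\hPHI^+)}{Z(\hPHI^-)}=\pr\brk{\SIGMA\models\va_M\mid\hPHI^-,\va_M}=1-\pr\brk{\bigwedge_{y\in\partial\va_M}\SIGMA_y=-\sign(y,\va_M)\;\Big|\;\hPHI^-,\va_M},
\end{align*}
with $\SIGMA$ uniform on $S(\hPHI^-)$, since a $2$-clause is satisfied unless both literals evaluate to false.

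For step (iii), the two variables $\partial\va_M=\{y_1,y_2\}$ form a uniformly random unordered pair in $V_n$ independent of $\hPHI^-$. By the local weak convergence of $\hPHI^-$ to the Galton--Watson tree $\vec T$ from \Sec~\ref{sec_BPop}, \whp\ the depth-$2\ell$ neighbourhoods of $y_1$ and $y_2$ in $\hPHI^-$ are disjoint and each converges in distribution to $\vec T^{(2\ell)}$. \Prop~\ref{prop_uniqueness_old} then implies that the marginal $\pr[\SIGMA_{y_h}=s\mid\hPHI^-,\va_M]$ is asymptotically determined by the local neighbourhood of $y_h$, so that
\begin{align*}
\pr\brk{\SIGMA_{y_1}=s_1,\SIGMA_{y_2}=s_2\mid\hPHI^-,\va_M}=\prod_{h=1,2}\pr\brk{\SIGMA_{y_h}=s_h\mid\hPHI^-,\va_M}+o(1)
\end{align*}
uniformly in $(s_1,s_2)\in\PM^2$ (taking $\ell\to\infty$ after $n\to\infty$). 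Substituting this factorisation gives the first identity; the arguments for $\va_{m-M+1}'$ and $\va_{m-M+1}''$ proceed verbatim, as those clauses likewise play the role of a freshly added uniformly random clause in the respective formula. The main obstacle is step (i): pruning is highly nonlocal, and one must quantitatively control the typical UCP traces in $\PHI^-$ strongly enough that adding a single uniformly random clause \whp\ triggers no extra conflicts --- a control that becomes increasingly delicate as $d$ approaches the satisfiability threshold $d=2$.
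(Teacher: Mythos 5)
Your three-step plan coincides with the paper's own decomposition: the paper proves exactly your step (i) as Lemma~\ref{lem_addaM}, your step (ii) as Corollary~\ref{cor_addaM}, and your step (iii) as Lemma~\ref{lem_rs_single} (replica symmetry via Proposition~\ref{prop_uniqueness_old} and local weak convergence). Your monotonicity observation $\cC(\PHI^+)\supseteq\cC(\PHI^-)$ and the identity expressing the ratio as $\mu_{\hPHI^-}(\SIGMA\models\va_M)$ are both sound, and the factorisation argument in step (iii) is the same Gibbs-uniqueness argument the paper uses.

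The only place where you have not really done the work is step (i), which you yourself flag as the main obstacle. Your appeal to Proposition~\ref{prop_arnab} is not quite the right tool: that proposition controls $|\log Z(\hPHI)-\log Z(\PHI)|$, whereas what step (i) needs is control of the set of literals $l$ whose implication chains reach $\partial\va_M$ and for which adding $\va_M$ turns a benign UCP trace into a conflicted one. The paper closes this precisely via the ``trigger'' notion in Lemma~\ref{lem_addaM}: it isolates events {\bf E1} (the trigger already produces a conflict clause, with or without $\vl'$ added to the seed) and {\bf E2} (the negation $\neg\vl'$ is reachable from the trigger or from $\vl'$), shows that extra pruning implies one of these, and then bounds their probability by combining Lemma~\ref{lem_unicycle} (at most $n^{\delta}$ literals produce any conflict) with Corollary~\ref{lem_bicycle} (UCP-reachable sets have size $O(\log^2 n)$) and the independence of $\va_M$ from $\PHI^-$. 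So your step (i) is the right claim with the right intuition, but the supporting estimate you cite is a corollary of the ones actually needed rather than the other way around; to complete the argument you would need to rederive essentially the content of Lemma~\ref{lem_addaM}.
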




	\subsection{Local convergence in probability}\label{sec_lwc}
	To evaluate the expressions from \Prop~\ref{lem_rs} we need to get a grip on the {\em joint} distribution of the truth values of $y$ under random satisfying assignments of the two correlated formulas $\hPHI_h(M-1,m-M)$.
	To this end we will devise a Galton-Watson tree $\vT^{\tensor}$ that mimics the {\em joint} distribution of the local structure of $(\hPHI_1(M-1,m-M),\hPHI_2(M-1,m-M))$.
	Subsequently, we will establish Gibbs uniqueness for this Galton-Watson tree to compute the expressions from \Prop~\ref{lem_rs}.

	The Galton-Watson tree $\vT$ from \Sec~\ref{sec_BPop} that describes the local topology of the `plain' random formula $\PHI$ had one type of variable nodes and four types $(\pm1,\pm1)$ of clause nodes.
	To approach the correlated pair $(\hPHI_1(M,m-M-1),\hPHI_2(M,m-M-1))$ we need a Galton-Watson process with three types of variable nodes and a full dozen types of clause nodes.
	Specifically, there are {\em shared}, {\em 1-distinct} and {\em 2-distinct} variable nodes.
	The root $o$ of $\vT^\tensor$ is a shared variable node.
	The clause node types are {\em $(s,s')$-shared}, {\em $(s,s')$ 1-distinct} and {\em $(s,s')$ 2-distinct} for $s,s'\in\PM$.

	In addition to $d\in(0,2)$ the offspring distributions of $\vT^{\tensor}=\vT^\tensor_{d,t}$ involve a second parameter $t\in[0,1]$:
	\begin{itemize}
		\item A {\em shared variable} spawns $\Po(dt/4)$ shared clauses of type $(s,s')$ as well as $\Po(d(1-t)/4)$ $1$-distinct clauses of type $(s,s')$ and $\Po(d(1-t)/4)$ $2$-distinct clauses of type $(s,s')$ for any $s,s'\in\PM$.
		\item An {\em $h$-distinct variable} begets $\Po(d/4)$ $h$-distinct clauses of type $(s,s')$ for any $s,s'\in\PM$ ($h=1,2$).
		\item A {\em shared clause} has precisely one shared variable as its offspring.
		\item An {\em $h$-distinct clause} spawns a single $h$-distinct variable $(h=1,2)$.
	\end{itemize}
	Figure~\ref{fig_var} provides an illustration of the tree $\vT^\tensor$.
	Shared variables/clauses are indicated in red, $1$-distinct variables/clauses in green and $2$-distinct ones in blue.

	From $\vT^\tensor$ we extract a pair $(\vT_1,\vT_2)$ of correlated random trees.
	Specifically, $\vT_h$ is obtained from $\vT^\tensor$ by deleting all $(3 -h)$-distinct variables and clauses.
	Hence, the parameter $t$ determines how `similar' $\vT_1,\vT_2$ are.
	Specifically, if $t=1$ then no $\{1,2\}$-distinct clauses exist and thus $\vT_1,\vT_2$ are identical.
	By contrast, if $t=0$ then $\vT_1,\vT_2$ are independent copies of the tree $\vT$ from \Sec~\ref{sec_BPop}.

	For an integer $\ell\geq0$ obtain $\vT^{\tensor,\,(2\ell)}$, $\vT_1^{(2\ell)}$, $\vT_2^{(2\ell)}$ from $\vT^\tensor,\vT_1,\vT_2$ by omitting all nodes at a distance greater than $2\ell$ from the root $o$.
	As in \Sec~\ref{sec_BPop}, we can interpret these trees as 2-CNFs, with the type $(s,s')$ of a clause indicating the signs of its parent and child variables.
	We say that two possible outcomes $T,T'$ of $\vT^{\tensor,\,(2\ell)}$ are {\em isomorphic} if there is a tree isomorphism that preserves the root $o$ as well as all types.

	Further, a variable $x\in V_n$ is called a {\em $2\ell$-instance} of $T$ in $(\hPHI_1(M,M'),\hPHI_2(M,M'))$ if there exist isomorphisms $\iota_h$ of the 2-CNFs $T_h$ obtained from $T$ by deleting all $(3-h)$-distinct variables/clauses to the depth-$2\ell$ neighbourhoods $\partial^{\leq 2\ell}_{\hPHI_h(M,M')}x$ of $x$ in $\hPHI_h(M,M')$ such that
	\begin{itemize}
		\item the root gets mapped to $x$, i.e., $\iota_1(o)=\iota_2(o)=x$,
		\item for any shared variable $y$ of $T_1,T_2$ the image variables coincide, i.e., $\iota_1(y)=\iota_2(y)$,
		\item for any shared clauses $a$ of $T_1,T_2$ the image $\iota_1(a)=\iota_2(a)\in\{\va_1,\ldots,\va_M\}$ is a shared clause,
		\item for any $1$-distinct clause $a$ whose parent in $T_1$ is a shared variable, $\iota_1(a)\in\{\va'_1,\ldots,\va'_{M'}\}$, and
		\item for any $2$-distinct clause $a$ whose parent in $T_2$ is a shared variable, $\iota_1(a)\in\{\va''_1,\ldots,\va''_{M'}\}$.
	\end{itemize}

	Let $\vN^{(2\ell)}(T, (\PHI_1(M, M'), \PHI_2(M, M')))$ be the number of $2\ell$-instances of $T$ in $(\PHI_1(M, M'), \PHI_2(M, M'))$.
	The following proposition confirms that $\vT^\tensor$ models the local structure of $(\hPHI_1(M,M'),\hPHI_2(M,M'))$ faithfully.

	\begin{proposition}\label{lemma_lwc}
		Let $\ell>0$ be a fixed integer, let $t\in[0,1]$ and suppose that $M\sim tdn/2$ and $M'\sim (1-t)dn/2$.
		Then \whp\ for all possible outcomes $T$ of  $\vT^{\tensor,\,(2\ell)}$  we have $\vN^{(2\ell)}(T, (\hPHI_1(M, M'), \hPHI_2(M, M'))) \sim n\pr\brk{\vT^{\tensor,\,(2\ell)}\ism T}.$ \end{proposition}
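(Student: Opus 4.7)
The plan is a first- and second-moment analysis on the unpruned pair $(\PHI_1(M,M'),\PHI_2(M,M'))$, followed by a transfer step accounting for pruning. Fix an isomorphism class $T$ of $\vT^{\tensor,(2\ell)}$ with $p(T)=\pr[\vT^{\tensor,(2\ell)}\ism T]>0$, and write $\vN_T$ and $\hat{\vN}_T$ for the corresponding instance counts on the unpruned and pruned pair respectively. For any fixed $\eps>0$ only finitely many classes $T$ satisfy $p(T)\geq\eps$, so one may restrict attention to these and control the tail by a direct first-moment bound.

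For the first moment, exchangeability of variables yields $\Erw[\vN_T]=n\,\pr[x_1\text{ is a }2\ell\text{-instance of }T]$. Since the clauses of each of the three families are drawn independently and uniformly from the $4\binom{n}{2}$ possible 2-clauses, the number of shared clauses containing $x_1$ with sign pattern $(s,s')$ converges in distribution to $\Po(td/4)$, and analogously the count of $h$-distinct clauses with sign pattern $(s,s')$ converges to $\Po((1-t)d/4)$; these rates match precisely the root offspring of $\vT^\tensor_{d,t}$. Exploring the joint depth-$2\ell$ neighbourhood of $x_1$ in $(\PHI_1,\PHI_2)$ recursively, each newly exposed variable carries an independent copy of the same Poisson counts up to $O(1/n)$ corrections from the finite variable pool; since only finitely many nodes are revealed these aggregate to $o(1)$, giving $\Erw[\vN_T]=(1+o(1))np(T)$. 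For the second moment, two distinct variables $x_1,x_2$ have disjoint depth-$2\ell$ neighbourhoods in both formulas except on an event of probability $O(1/n)$, because an intersection demands a path of length at most $4\ell$ connecting $x_1$ to $x_2$ in the clause-variable incidence graph of $\PHI_1\cup\PHI_2$, whose expected count is $O(1/n)$. Conditional on disjointness the two neighbourhoods decouple asymptotically via the same Poissonisation, so $\pr[x_1,x_2\text{ both instances}]=p(T)^2+O(1/n)$ and thus $\Var[\vN_T]=O(n)=o(n^2p(T)^2)$. Chebyshev's inequality then yields $\vN_T\sim np(T)$ \whp, and a union bound handles the finitely many relevant $T$.

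The final step is to show $|\vN_T-\hat{\vN}_T|=o(n)$ \whp\ so that the conclusion transfers to $(\hPHI_1,\hPHI_2)$. This in turn reduces to proving that at most $o(n)$ variables lie within distance $2\ell$ of some clause in $\cC(\PHI_h(M,M'))$. For $d<2$, UCP launched from a single literal explores a subcritical branching process, so the set of literals whose exploration encounters a contradiction forms a sparse backbone of size $o(n)$ \whp\ and the pruned clauses are confined to this backbone; combined with the $O(1)$ bound on the depth-$2\ell$ neighbourhood size of any individual clause in a sparse random formula, this yields the required bound. The principal obstacle in this plan is precisely that quantitative pruning estimate: Proposition~\ref{prop_arnab} controls only the cumulative effect of pruning on $\log Z$, whereas here we need a direct bound on the spatial footprint of $\cC(\PHI_h(M,M'))$. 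This should nonetheless be achievable using the same subcritical UCP analysis that presumably underpins Proposition~\ref{prop_arnab}, specifically the fact that the 2-core of the implication digraph of $\PHI_h(M,M')$ is sublinear throughout the satisfiable phase.
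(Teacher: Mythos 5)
Your proposal follows essentially the same route as the paper: first and second moments for $\vN^{(2\ell)}(T,(\PHI_1(M,M'),\PHI_2(M,M')))$ via Poisson approximation of the degree/sign statistics and asymptotic independence of disjoint neighbourhoods (yielding concentration by Chebyshev), then a transfer step showing the pruning footprint is $o(n)$. The paper restricts to trees of bounded maximum degree (its Lemma~\ref{lem_lwc_a}) rather than to classes with $p(T)\geq\eps$, and proves the first moment by explicit induction on $\ell$, but these are cosmetic choices.

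Two points in your transfer step need repair, though neither changes the outcome. First, you invoke ``the $O(1)$ bound on the depth-$2\ell$ neighbourhood size of any individual clause in a sparse random formula''; this is false as stated, since the worst-case depth-$2\ell$ neighbourhood is polylogarithmic (the maximum variable degree is already $\Theta(\log n/\log\log n)$). The paper uses a $\log^2 n$ degree bound (Corollary~\ref{lem_lwc_d}) to get a $\log^{4\ell+4}n$ neighbourhood bound, which multiplied by an $n^{0.1}$ bound on $\big|\bigcup_l\cC(\PHI,\{l\})\big|$ still gives $o(n)$ --- so the conclusion survives, but not with $O(1)$. Second, for the piece you rightly flag as the principal obstacle --- bounding the number of literals whose UCP run produces a conflict --- subcriticality of UCP by itself is not sufficient; one must also exploit that a conflict requires the exploration to revisit an already exposed variable, a $O(\log^2 n / n)$ event per step, which is exactly what the paper's Lemma~\ref{lem_unicycle} quantifies via a factorial-moment calculation (together with the exploration-size bound of Corollary~\ref{lem_bicycle}).
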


	\subsection{Correlated Belief Propagation}\label{sec_gibbs}
	Now that we have a branching process description of our pair of correlated formulas the next step is to run BP on the random trees $(\vT_1,\vT_2)$ to find the joint distribution of the truth values $\SIGMA_{\vT_1^{(2\ell)},o},\SIGMA_{\vT_2^{(2\ell)},o}$ assigned to the root.
	Hence, let
	\begin{align}\label{eqmyrandomvector}
		\vec\mu^{(2\ell)}=
		\bc{\pr\brk{\SIGMA_{\vT_1^{(2\ell)},o}=1\mid\vT^\tensor},\pr\brk{\SIGMA_{\vT_2^{(2\ell)},o}=1\mid\vT^\tensor}}\in(0,1)^2.
	\end{align}

	Since BP is exact on trees, we could calculate these marginals by iterating \eqref{eqBPop}--\eqref{eqBPop2} for $2\ell$ steps, starting from all-uniform messages.
	But our objective is not merely to calculate the marginals of a specific pair of trees, but the {\em distribution} of the vector \eqref{eqmyrandomvector} for a random $\vT^\tensor$.
	Fortunately, due to the Markovian nature of the Galton-Watson tree $\vT^\tensor$, the bottom-up BP computation on a random tree can be expressed by a fixed point iteration on the space of probability distributions on $\RR^2$.
	The appropriate operator is the $\LLN$-operator from~\eqref{eqlogBPtensor}.
	To be precise, that operator expresses the updates of the log-likelihood ratios of the BP messages from \eqref{eqBPop1}--\eqref{eqBPop2}.
	Thus, let
	$$\mypsi: (z_1,z_2)\in\RR^2\mapsto((1+\tanh(z_1/2))/2,(1+\tanh(z_2/2))/2)\in(0,1)^2$$ be the function that maps log-likelihood ratios back to probabilities.
	Furthermore, for a probability measure $\rho\in\cP(\RR^2)$ let $\mypsi(\rho)$ be the pushforward probability measure on $(0,1)^2$.%
	\footnote{That is, for a measurable $\fA\subset(0,1)^2$ we have $\mypsi(\rho)(\fA)=\rho(\mypsi^{-1}(\fA))$.}

	\begin{proposition}\label{prop_bp}
		Let $\rho_{d,t}^{(0)}\in\cP(\RR^2)$ be the atom at the origin and let $\rho_{d,t}^{(\ell)}=\LLN(\rho_{d,t}^{(\ell-1)})$.
		Then $\MU^{(2\ell)}$ has distribution $\mypsi(\rho_{d,t}^{(\ell)})$.
	\end{proposition}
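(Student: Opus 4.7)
The plan is to argue by induction on $\ell$, exploiting the exactness of Belief Propagation on trees together with the Markovian offspring structure of $\vT^\tensor$. First I recast the BP recursion \eqref{eqBPop1}--\eqref{eqBPop2} in log-likelihood-ratio form: a direct computation from~\eqref{eqBPop1} shows that a clause $a$ of type $(s,s')$ whose child variable carries LLR $\xi$ contributes $-s\log(\tfrac12(1+s'\tanh(\xi/2)))$ to the LLR at its parent, and by~\eqref{eqBPop2} the LLR at a variable is the sum of the clause-to-variable contributions over its incident clauses. Because the parent sign is symmetric on $\PM$ the leading minus can be absorbed, matching the summand in~\eqref{eqlogBPtensor} for $\LLN$ exactly. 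In particular, the LLR representation of $\MU^{(2\ell)}$ is what one obtains by running the LLR version of BP, from the leaves upward, on the pair $(\vT_1^{(2\ell)},\vT_2^{(2\ell)})$.

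The base case $\ell=0$ is immediate: $\vT^{\tensor,(0)}$ consists solely of the root, so $\MU^{(0)}=(1/2,1/2)$, matching $\mypsi(\rho_{d,t}^{(0)})=\mypsi(\delta_0)$. For the inductive step I condition on the offspring of $o$. In $\vT^\tensor$ the root spawns $\vd\sim\Po(dt)$ shared clauses together with $\vd'\sim\Po((1-t)d)$ and $\vd''\sim\Po((1-t)d)$ $1$- and $2$-distinct clauses, and once the four $(s,s')$-subtypes are pooled the sign labels $(\vs_i,\vr_i)$, $(\vs_i',\vr_i')$, $(\vs_i'',\vr_i'')$ are iid uniform on $\PM$, precisely as in~\eqref{eqlogBPtensor}. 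By the branching property the subtrees hanging off the grandchildren are conditionally independent: each shared grandchild seeds an independent copy of $\vT^{\tensor,(2\ell-2)}$, while each $h$-distinct grandchild seeds an independent copy of the single-formula tree $\vT^{(2\ell-2)}$ from \Sec~\ref{sec_BPop}.

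By the induction hypothesis, the pairs of LLRs read off the shared grandchildren are iid from $\rho_{d,t}^{(\ell-1)}$. For the $h$-distinct grandchildren only one coordinate is meaningful, and that coordinate has the distribution of the $h$-th marginal of $\rho_{d,t}^{(\ell-1)}$; this is exactly what one obtains by drawing $\vec\xi'_i,\vec\xi''_i\sim\rho_{d,t}^{(\ell-1)}$ and discarding the unused coordinate, which is the convention built into~\eqref{eqlogBPtensor}. Substituting these LLRs into the clause-to-variable recursion from the first paragraph and comparing term-by-term with~\eqref{eqlogBPtensor} identifies the joint distribution of the LLR of $\MU^{(2\ell)}$ with $\LLN(\rho_{d,t}^{(\ell-1)})=\rho_{d,t}^{(\ell)}$, so that $\MU^{(2\ell)}\disteq\mypsi(\rho_{d,t}^{(\ell)})$ and the induction closes.

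The main bookkeeping hurdle, and really the only non-mechanical step, is to track the \emph{joint} structure. Correlation between the two coordinates of $\MU^{(2\ell)}$ enters only through shared subtrees, so one must verify that using the \emph{same} signs $(\vs_i,\vr_i)$ and the \emph{same} random vector $\vec\xi_{\rho,i}$ for both coordinates in~\eqref{eqlogBPtensor} faithfully reflects the fact that a shared clause imposes identical literal constraints in $\vT_1$ and $\vT_2$, whereas the independent draws $\vec\xi'_i,\vec\xi''_i$ (each touching only one coordinate) encode the fact that $h$-distinct subtrees are present in only one of the two formulas. Once these couplings are aligned with~\eqref{eqlogBPtensor}, the rest is a standard tree-BP calculation.
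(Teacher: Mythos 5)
Your proposal is correct and takes essentially the same approach as the paper: an induction on $\ell$ that combines BP tree-exactness with the branching structure of $\vT^\tensor$, tracking separately the shared grandchildren (which inherit a bivariate draw from $\rho_{d,t}^{(\ell-1)}$) and the $h$-distinct grandchildren (which contribute a single marginal). The only cosmetic difference is that you run the recursion directly in log-likelihood coordinates, whereas the paper first carries out the same induction for a probability-space operator $\BP_{d,t}^\tensor$ (\Lem~\ref{lem_gw_bp}, preceded by the univariate \Lem~\ref{lem_rs_uni}) and then translates back to $\LLN$ via \Lem~\ref{lem_trafo}; folding these steps together as you do is perfectly legitimate, though you should still justify explicitly that the $h$-distinct message distribution equals the $h$-th marginal of $\rho_{d,t}^{(\ell-1)}$, which is precisely the content of the paper's univariate lemma and rests on the observation that $\vT_h^{(2\ell-2)}\disteq\vT^{(2\ell-2)}$.
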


	\noindent
	We employ the contraction method to show that the sequence $(\rho_{d,t}^{(\ell)})_{\ell\geq1}$ of measures converges.

	\begin{proposition}\label{prop_noela}
		There exists a unique $\rho_{d,t}\in\cP(\RR^2)$ that satisfies \eqref{eqfixedp} and $\lim_{\ell\to\infty}\rho_{d,t}^{(\ell)}=\rho_{d,t}$ weakly.
	\end{proposition}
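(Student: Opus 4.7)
The plan is to apply Banach's fixed-point theorem in the complete metric space $(\cP_2(\RR^2), W_2)$ of Borel probability measures on $\RR^2$ with finite second moment, equipped with the quadratic Wasserstein distance $W_2$. First I would check that $\LLN$ preserves $\cP_2(\RR^2)$. Using the elementary bound $|\log\tfrac12(1+r\tanh(z/2))| \le |z|/2+\log 2$ (valid for all $r\in\PM$ and $z\in\RR$) together with Wald's identity for the second moment of a compound Poisson sum, and exploiting that the uniform $\PM$-signs $\vec s_i, \vec s'_i, \vec s''_i$ in~\eqref{eqlogBPtensor} force every summand to have mean zero, a direct computation yields $\int \|\xi\|_2^2\,\dd\LLN(\rho)(\xi)\le \tfrac{d}{2}\int\|\xi\|_2^2\,\dd\rho(\xi) + C(d)$ for a finite $C(d)$. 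Since $\rho_{d,t}^{(0)}$ is the atom at the origin, this puts every iterate $\rho_{d,t}^{(\ell)}$ in $\cP_2(\RR^2)$.

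Next I would show that $\LLN$ is a strict contraction on $(\cP_2(\RR^2), W_2)$ with Lipschitz constant $\sqrt{d/2}<1$; the strict inequality is precisely the satisfiability hypothesis $d<2$. To this end, couple each pair $(\vec\xi_{\rho,i}, \vec\xi_{\rho',i})$ and the primed and double-primed analogues via optimal $W_2$-couplings of $\rho$ and $\rho'$, independently across $i$, and use the same Poisson degrees $\vd,\vec d',\vec d''$ and the same signs $\vec s_\bullet, \vec r_\bullet$ when building $\LLN(\rho)$ and $\LLN(\rho')$. Since the $\vec s$-signs have mean zero and are independent across summands, Wald's identity gives for each $h\in\{1,2\}$
\begin{align*}
\Erw\bc{(\LLN(\rho)_h-\LLN(\rho')_h)^2}=d\cdot\Erw\bc{(f_{\vec r}(\vec\xi_{\rho,1,h})-f_{\vec r}(\vec\xi_{\rho',1,h}))^2},
\end{align*}
where $f_r(z):=\log\tfrac12(1+r\tanh(z/2))$ and the outer expectation is over the optimally coupled pair together with an independent uniform $\vec r\in\PM$.

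The main obstacle, and the analytic heart of the argument, is the inequality $\Erw_{\vec r}[(f_{\vec r}(z)-f_{\vec r}(z'))^2] \le \tfrac12(z-z')^2$ for all $z,z'\in\RR$. Naively this is not obvious because the pointwise derivatives $f'_{+1}(u)=\tfrac12(1-\tanh(u/2))$ and $f'_{-1}(u)=-\tfrac12(1+\tanh(u/2))$ each attain absolute values up to $1$, so no individual branch is better than a global $1$-Lipschitz bound. The gain comes only after averaging: a short computation gives $\Erw_{\vec r}[f'_{\vec r}(u)^2]=\tfrac14(1+\tanh^2(u/2))\le\tfrac12$. Writing $f_r(z)-f_r(z')=\int_{z'}^{z}f'_r(u)\,\dd u$, applying Cauchy--Schwarz in the integration variable and then Fubini to exchange the $\vec r$-expectation and the $u$-integral delivers the claimed estimate. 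Summing over $h\in\{1,2\}$ and using that an optimal coupling attains $\Erw\|\vec\xi_\rho-\vec\xi_{\rho'}\|^2=W_2^2(\rho,\rho')$, one obtains $W_2^2(\LLN(\rho),\LLN(\rho'))\le\tfrac{d}{2}W_2^2(\rho,\rho')$.

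Banach's theorem then produces a unique fixed point $\rho_{d,t}\in\cP_2(\RR^2)$ together with $\rho_{d,t}^{(\ell)}\to\rho_{d,t}$ in $W_2$, which implies weak convergence. The second-moment condition in~\eqref{eqfixedp} is precisely membership in $\cP_2(\RR^2)$, and uniqueness within this class is immediate from the contraction; uniqueness within all of $\cP(\RR^2)$ satisfying the $L^2$-bound then also follows since the contraction applies to any such candidate.
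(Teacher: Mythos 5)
Your proposal follows essentially the same route as the paper: show $\LLN$ preserves $\cW_2(\RR^2)$, establish contraction with Lipschitz constant $\sqrt{d/2}$ via the fundamental theorem of calculus plus Cauchy--Schwarz after averaging over the random sign $\vec r$, and invoke Banach's fixed-point theorem. The key quantitative estimate $\Erw_{\vec r}[(f_{\vec r}(z)-f_{\vec r}(z'))^2] \le \tfrac12(z-z')^2$, obtained by observing $\Erw_{\vec r}[f'_{\vec r}(u)^2]=\tfrac14(1+\tanh^2(u/2))\le\tfrac12$, is exactly the mechanism in the paper's Lemma~\ref{lem_noela} (there phrased by writing out the two branches $r=\pm1$ as explicit integrals \eqref{Eq:W2}--\eqref{Eq:W3} and combining them). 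The only slip worth flagging: the pointwise bound $|\log\tfrac12(1+r\tanh(z/2))|\le|z|/2+\log 2$ is false (take $r=1$, $z\to-\infty$: the left side behaves like $|z|$, not $|z|/2$); the correct elementary bound is $|z|+\log 2$, which gives the drift estimate $\int\|\xi\|_2^2\,\dd\LLN(\rho)\le 2d\int\|\xi\|_2^2\,\dd\rho + C(d)$ as in the paper's Claim~\ref{claim_haodong} rather than with coefficient $d/2$. Since all you need for the well-posedness step is that $\LLN$ maps $\cW_2$ into $\cW_2$ (the precise constant is irrelevant once you have the Banach contraction), this does not affect the validity of your argument.
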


	\noindent
	Furthermore,
	the Gibbs uniqueness property \eqref{eqProp_uniqueness1} extends to $\vT_1$ and $\vT_2$.

	\begin{corollary}\label{prop_uniqueness}
		For all $t\in[0,1]$ and $h=1,2$ we have
		\begin{align}\label{eqProp_uniqueness}
			\lim_{\ell\to\infty}\Erw\brk{\max_{\tau\in S(\vT^{(2\ell)}_h)}\abs{\pr\brk{\SIGMA_{\vT^{(2\ell)}_h,o}=1\mid\vT^{\tensor},\SIGMA_{\vT^{(2\ell)}_h,\partial^{2\ell} o}=\tau_{\partial^{2\ell}o}}-\pr\brk{\SIGMA_{\vT_h^{(2\ell)},o}=1\mid\vT^{\tensor}}}}&=0.
		\end{align}
	\end{corollary}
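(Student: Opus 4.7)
My plan is to reduce Corollary~\ref{prop_uniqueness} to the Gibbs uniqueness statement for the simple tree $\vT$ (\Prop~\ref{prop_uniqueness_old}) by means of two elementary observations.

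The first is a distributional identity: for $h=1,2$, the marginal tree $\vT_h$ extracted from $\vT^\tensor$ satisfies $\vT_h\disteq\vT$. To see this, erase all $(3-h)$-distinct vertices and clauses from $\vT^\tensor$ and inspect the offspring distribution that remains. A shared variable retains its $\Po(dt/4)$ shared and $\Po(d(1-t)/4)$ $h$-distinct clause children of each signed type $(s,s')\in\PM\times\PM$; by Poisson superposition these combine to $\Po(d/4)$ clause children of type $(s,s')$. An $h$-distinct variable already spawns $\Po(d/4)$ $h$-distinct clauses of each signed type, and every clause spawns precisely one variable child regardless of its label. Since $(3-h)$-distinct subtrees dangle only off shared variables and never contain $h$-distinct nodes, the deletion preserves connectedness and graph distances within $\vT_h$. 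Forgetting the shared vs.\ $h$-distinct labels, the resulting multitype process therefore has exactly the offspring law of $\vT$ from \Sec~\ref{sec_BPop}.

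The second observation is that the two conditional probabilities inside the absolute value in \eqref{eqProp_uniqueness} depend on $\vT^\tensor$ only through $\vT_h^{(2\ell)}$. Indeed $\vT_h^{(2\ell)}$ is a deterministic function of $\vT^\tensor$, and by the definition of $\SIGMA_{\vT_h^{(2\ell)}}$ as a uniformly random satisfying assignment of the 2-CNF $\vT_h^{(2\ell)}$, the $(3-h)$-distinct portion of $\vT^\tensor$ enters neither the solution set $S(\vT_h^{(2\ell)})$ nor the boundary event $\{\SIGMA_{\vT_h^{(2\ell)},\partial^{2\ell}o}=\tau_{\partial^{2\ell}o}\}$. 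Hence in both probabilities we may replace conditioning on $\vT^\tensor$ by conditioning on $\vT_h^{(2\ell)}$.

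Combining the two observations, the random variable inside the expectation in \eqref{eqProp_uniqueness} has the same law as the random variable inside the expectation in \eqref{eqProp_uniqueness1}; taking expectations and invoking \Prop~\ref{prop_uniqueness_old} therefore yields the claim. The only step requiring any genuine bookkeeping is the Poisson-superposition verification of $\vT_h\disteq\vT$, but this is entirely routine, and I do not foresee a real obstacle in the argument.
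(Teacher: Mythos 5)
Your proposal is correct and takes essentially the same approach the paper intends: the paper states this as a corollary without a written proof, and its later remarks (e.g., in the proofs of \Lem~\ref{lem_rs_single} and \Lem~\ref{lem_rs_uni}) invoke exactly the two facts you isolate, namely that $\vT_h\disteq\vT$ after dropping the shared/$h$-distinct labels and that the conditional probabilities in \eqref{eqProp_uniqueness} are measurable functions of $\vT_h^{(2\ell)}$ alone, so the statement reduces to \Prop~\ref{prop_uniqueness_old}.
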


	Combining \Prop s~\ref{prop_bp} and~\ref{prop_noela} and \Cor~\ref{prop_uniqueness}, we are now in a position to pinpoint the joint marginals of $\hat\PHI_1(M,M'),\hat\PHI_2(M,M')$.
	Formally, let
	\begin{align*}
		\pi_{\hat\PHI_1(M,M'),\hat\PHI_2(M,M')}&=\frac1n\sum_{i=1}^n
		\delta_{(\pr[\SIGMA_{\hPHI_1(M,M'),x_i}=1\mid\hat\PHI_1(M,M')],\pr[\SIGMA_{\hPHI_2(M,M'),x_i}=1\mid\hat\PHI_2(M,M')])}\in\cP([0,1]^2)
	\end{align*}
	be the empiricial distribution of the joint marginals of $\hPHI_1(M,M')$ and $\hPHI_2(M,M')$, which we need to know to evaluate the expressions from \Prop~\ref{lem_rs}.
	Furthermore, denote by $W_1(\nix,\nix)$ the Wasserstein $L^1$-distance of two probability measures on $[0,1]^2$.

	\begin{corollary}\label{lem_empirical}
		For any $t\in[0,1]$ and any $M\sim tnd/2$, $M'\sim(1-t)dn/2$ we have
		$$\ex\brk{W_1\bc{\pi_{\hat\PHI_1(M,M'),\hat\PHI_2(M,M')},\mypsi(\rho_{d,t})}}=o(1).$$
	\end{corollary}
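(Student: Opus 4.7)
The plan is to combine the branching-process description of joint local neighborhoods (\Prop~\ref{lemma_lwc}) with the tree-level Gibbs uniqueness (\Cor~\ref{prop_uniqueness}) to transfer marginal computations from the random trees to the actual pruned formulas. Fix $\ell\geq 1$; we first let $n\to\infty$ and then $\ell\to\infty$ via a diagonalization. For a uniformly random variable $x_i\in V_n$ and $h\in\{1,2\}$, write $\fB_h(i)=\partial^{\leq 2\ell}_{\hPHI_h(M,M')}x_i$ for its depth-$2\ell$ subformula in $\hPHI_h(M,M')$. A uniform satisfying assignment $\SIGMA^{(h)}$ of $\hPHI_h(M,M')$, restricted to $\fB_h(i)$ and conditioned on the values at the boundary $\partial^{2\ell}x_i$, is a uniform satisfying assignment of $\fB_h(i)$ subject to those boundary values. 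Because $\PHI$ is locally tree-like throughout $d<2$, all but an $o_\ell(1)$ fraction of vertices have $\fB_h(i)$ isomorphic to a tree for both $h$ w.h.p.; for such $x_i$, \Cor~\ref{prop_uniqueness} implies that the root marginal is $o_\ell(1)$-close (in expectation) to the \emph{unconditional} root marginal on the corresponding rooted tree CNF.

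Next, \Prop~\ref{lemma_lwc} shows that the joint depth-$2\ell$ neighborhood $(\fB_1(i),\fB_2(i))$ of a uniformly random $x_i$ is distributed, up to $o(1)$ error w.h.p., like $\vT^{\tensor,(2\ell)}$. Since the unconditional root marginal on a finite rooted tree CNF is computed by a bounded-depth BP recursion and hence depends continuously on the rooted structure, combining with the previous step identifies the random empirical joint-marginal distribution $\pi_{\hPHI_1(M,M'),\hPHI_2(M,M')}$, up to $o_\ell(1)$ in $W_1$ and in expectation, with the law of $\MU^{(2\ell)}$ from~\eqref{eqmyrandomvector}. By \Prop~\ref{prop_bp}, that law is exactly $\mypsi(\rho_{d,t}^{(\ell)})$. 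Finally, \Prop~\ref{prop_noela} gives $\rho_{d,t}^{(\ell)}\to\rho_{d,t}$ weakly, and since $\mypsi$ is continuous with image in the compact set $[0,1]^2$, the push-forwards $\mypsi(\rho_{d,t}^{(\ell)})$ converge in $W_1$ to $\mypsi(\rho_{d,t})$. A diagonal choice $\ell=\ell(n)\to\infty$ slowly enough, together with the triangle inequality for $W_1$, then concludes the proof.

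The principal obstacle is the Gibbs uniqueness transfer: the ball $\fB_h(i)$ sits inside a cyclic formula, yet we need its root marginal to behave as if the rest of the formula acted only through a boundary condition at distance $2\ell$. Conditioning on the boundary values at distance exactly $2\ell$ realizes this decoupling, and \Cor~\ref{prop_uniqueness} provides the asymptotic independence of root and boundary. Handling the exceptional $x_i$ with a short cycle in $\fB_h(i)$ is routine: for $d<2$ the expected number of cycles of any bounded length in $\PHI$ is $O(1)$, so these contribute only an $o(1)$ fraction of vertices and can be absorbed into the $W_1$ error via the boundedness of $[0,1]^2$. Pruning-related distortions of local topology require no additional argument, since \Prop~\ref{lemma_lwc} is already formulated directly for the pruned pair $\hPHI_1,\hPHI_2$.
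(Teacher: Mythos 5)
Your proposal is correct and follows the same route the paper implies (the paper states Corollary~\ref{lem_empirical} without a separate written proof, presenting it as a consequence of \Prop~\ref{prop_bp}, \Prop~\ref{prop_noela} and \Cor~\ref{prop_uniqueness}, which is exactly what you assemble, plus the indispensable local-convergence input from \Prop~\ref{lemma_lwc}). The key steps---the domain Markov property inside the radius-$2\ell$ ball, transferring the tree Gibbs uniqueness to the formula via the boundary-conditional decomposition, matching the empirical distribution of rooted neighbourhoods against $\vT^{\tensor,(2\ell)}$, identifying that distribution with $\mypsi(\rho_{d,t}^{(\ell)})$ by \Prop~\ref{prop_bp}, passing to the limit $\mypsi(\rho_{d,t})$ by \Prop~\ref{prop_noela} plus compactness of $[0,1]^2$, and finally diagonalising $\ell=\ell(n)$---are all in place, and match the reasoning the paper uses elsewhere (compare the proof of \Lem~\ref{lem_rs_single}, which employs exactly the same transfer between tree Gibbs uniqueness and the random formula). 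Two minor points you might tighten: (i) when invoking \Cor~\ref{prop_uniqueness} you should note that every boundary condition arising from a satisfying assignment of $\hPHI_h(M,M')$ restricts to an element of $S(\fB_h(i))$ and hence lies in the range of the maximum; and (ii) the phrase ``depends continuously on the rooted structure'' is better replaced by the observation that the root marginal is a deterministic, bounded function of the isomorphism class of the depth-$2\ell$ neighbourhood, so local weak convergence of isomorphism types immediately gives convergence of empirical averages. Neither affects the correctness of the argument.
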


	\noindent
	Finally, combining \Prop~\ref{lem_rs} with \Cor~\ref{lem_empirical}, we obtain the variance of $\log Z(\hPHI)$.

	\begin{corollary}\label{cor_var}
		With $\eta(d)^2$ from \eqref{eqthm_clt} we have $\eta(d)>0$ and $\Var\log Z(\hPHI)\sim m\eta_d^2.$
	\end{corollary}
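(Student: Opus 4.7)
The plan is to evaluate the telescoping sum of \Lem~\ref{lem_comb_rgb} by expressing each increment via \Prop~\ref{lem_rs} as an expected product of clause-survival log-probabilities, taking limits through the joint-marginal Wasserstein convergence of \Cor~\ref{lem_empirical}, and recognising the resulting expression as a Riemann sum for $\eta(d)^2$. For each $1\leq M\leq m$, set $t_M=M/m$.

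For $\Delta(M)$, the shared clause $\va_M$ places the \emph{same} two literals---variables $y_1,y_2$ with signs $r_1,r_2$---into both formulas $\hPHI_h(M-1,m-M)$. Plugging in \Prop~\ref{lem_rs} and applying \Cor~\ref{lem_empirical} at $t=t_M$, the joint distribution of the two pairs of marginals $(\Pr[\SIGMA_{y_i}=1\mid\hPHI_1],\Pr[\SIGMA_{y_i}=1\mid\hPHI_2])$ at $i=1,2$ converges to two independent copies of $\mypsi(\rho_{d,t_M})$. Translating probabilities back to log-likelihoods via $p=(1+r\tanh(\xi/2))/2$ and averaging over the uniform signs $r_1,r_2$ identifies $\Delta(M)=\cB^{\tensor}_{d,t_M}(\rho_{d,t_M})+o(1)$. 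For $\Delta'(M)$, the added clauses $\va'_{m-M+1},\va''_{m-M+1}$ are independent and act on disjoint random variable pairs, so the two log-ratios decouple in the limit and the expected product factorises. Each factor depends only on the one-dimensional marginal of $\rho_{d,t_M}$, which by the symmetry of the construction of $\vT^{\tensor}$ agrees with the marginal of $\rho_{d,0}$ (each $\hPHI_h(M-1,m-M)$ is just a random 2-CNF of density $\sim d$). Hence $\Delta'(M)=\cB^{\tensor}_{d,0}(\rho_{d,0})+o(1)$.

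With both increments uniformly bounded in $M,n$ (using Fact~\ref{lem_hatphi_sat} together with a truncation of the log-factors) and $t\mapsto\cB^{\tensor}_{d,t}(\rho_{d,t})$ continuous (a byproduct of the contraction estimate behind \Prop~\ref{prop_noela}), the telescoping sum from \Lem~\ref{lem_comb_rgb} turns into a Riemann sum converging to
\begin{align*}
\frac{1}{m}\Var\log Z(\hPHI)\to\int_0^1\bc{\cB^{\tensor}_{d,t}(\rho_{d,t})-\cB^{\tensor}_{d,0}(\rho_{d,0})}\dd t=\eta(d)^2.
\end{align*}
Strict positivity $\eta(d)>0$ then comes from inspecting the integrand at $t=1$: there the distinct-clause branching rates in $\vT^{\tensor}_{d,1}$ vanish, so $\vT_1=\vT_2$, $\rho_{d,1}$ is supported on the diagonal of $\RR^2$, and the two log-factors of $\cB^{\tensor}_{d,1}(\rho_{d,1})$ coincide as random variables; this yields $\cB^{\tensor}_{d,1}(\rho_{d,1})-\cB^{\tensor}_{d,0}(\rho_{d,0})=\Var L$, where $L$ is the single log-factor evaluated at independent samples from the common one-dimensional marginal. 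Since $\rho_{d,0}$ is non-atomic for $0<d<2$, $\Var L>0$, and the integral inherits strict positivity by continuity.

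The hard part will be the uniform control of the $o(1)$ error in \Prop~\ref{lem_rs} across all $M\in\{1,\ldots,m\}$---in particular for clauses whose endpoints have marginals pushed near $\{0,1\}$, where $\log(1-\prod_y p_y)$ threatens to diverge. Handling this cleanly calls for truncating the log-factors at a large fixed scale and deploying a quantitative form of the Gibbs-uniqueness estimate from \Cor~\ref{prop_uniqueness} that remains uniform as $d$ approaches the satisfiability threshold $2$.
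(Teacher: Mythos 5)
Your main variance computation follows roughly the same lines as the paper (telescope via \Lem~\ref{lem_comb_rgb}, express increments through \Prop~\ref{lem_rs}, pass to the limit through \Cor~\ref{lem_empirical}, truncate the log-factors), though the paper organises the passage to the limit via \emph{batches} of $\Theta(1)$ consecutive increments (\Lem~\ref{lem_varproc}): within a batch the increments $\vD_M$ are compared to i.i.d.\ samples $\vS_M$ from the limiting law, and a law-of-large-numbers effect absorbs the randomness. A raw pointwise Riemann-sum statement of the form $\Delta(M)=\cB_{d,t_M}^\tensor(\rho_{d,t_M})+o(1)$ is not quite what you need, because the $\Delta(M)$ are random and you must control both the bias and the fluctuations; the batching supplies exactly this. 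You flag the issue at the end, but it needs to be carried through rather than deferred. Also, for uniform integrability of the increments, the relevant tool is \Lem~\ref{lem_forced_add} together with the branching-process tail bound of \Lem~\ref{lem_tail} (feeding into \Lem~\ref{lem_comb_l2} and \Cor~\ref{cor_comb_l2}); Fact~\ref{lem_hatphi_sat} alone gives satisfiability of $\hat\Phi$ but no bound on $\log Z(\hat\Psi)/Z(\hat\Phi)$.

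The genuine gap is the argument for $\eta(d)>0$. You observe that the integrand at $t=1$ equals $\Var(L)>0$ (with $L$ the single log-factor), and conclude that ``the integral inherits strict positivity by continuity.'' This does not follow: strict positivity of a continuous integrand at the endpoint $t=1$ does not prevent it from being negative on some sub-interval, which could cancel the positive contribution near $t=1$. What you would need is that the integrand $\cB_{d,t}^\tensor(\rho_{d,t})-\cB_{d,0}^\tensor(\rho_{d,0})$---which, since the $\vT_h$-marginals of $\rho_{d,t}$ are $t$-independent, is the covariance $\mathrm{Cov}(L_1,L_2)$ under the $t$-coupling---is non-negative for all $t$. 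That is plausible but you give no argument for it, and it is not obvious. The paper sidesteps this entirely via \Lem~\ref{lem_var_pos}: one exhibits two families of small \emph{isolated} sub-formulas of $\hPHI$ (each occurring $\Omega(n)$ times in expectation, with conditional variance $\Omega(n)$) that contribute $\log 5$ and $\log 4$ respectively to $\log Z$, and since $\log Z(\hPHI)$ is additive over connected components this forces $\Var(\log Z(\hPHI))=\Omega(n)$ outright. Combined with the upper-bound part of the telescoping computation this yields $\eta(d)>0$ with no monotonicity input. You would either need to prove the non-negativity of $\mathrm{Cov}(L_1,L_2)$ in $t$, or replace the continuity step with a combinatorial lower bound of the paper's kind.
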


	\begin{figure}
		\includegraphics[height=53mm]{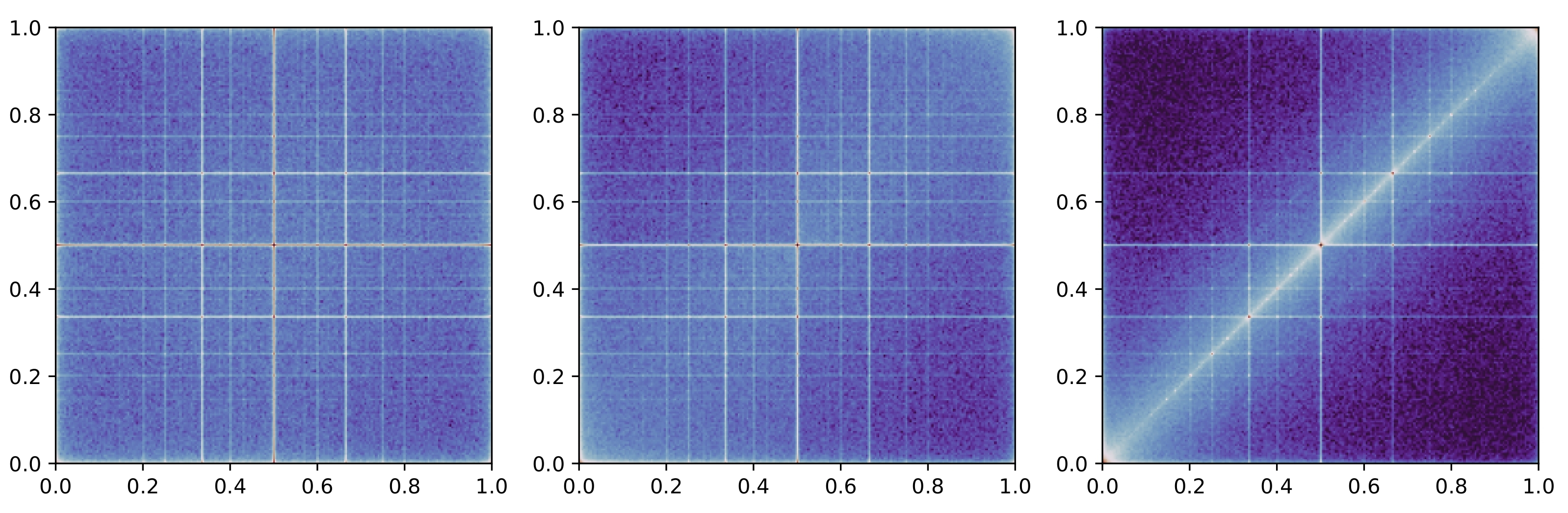}
		\caption{The distributions $\mypsi(\rho_{d,t})$ for $d=1.9$ and $t=0.1 ,0.5,0.9$. }\label{fig_pavel}
	\end{figure}

	Because the proof of \Prop~\ref{prop_noela} is based on a contraction argument, for any $d,t$ the distribution $\rho_{d,t}$ can be approximated effectively within any given accuracy via a fixed point iteration.
	Figure~\ref{fig_pavel} displays approximations to $\mypsi(\rho_{d,t})$ for different values of $t$ and shows how correlations between the two coordinates of the random vector increase with $t$ (brighter diagonal).

	\subsection{The central limit theorem}\label{sec_hh_apply}
	With the variance computation done, we have now overcome the greatest hurdle en route to \Thm~\ref{thm_clt}.
	Indeed, to obtain the desired asymptotic normality we just need to combine the techniques from the variance computation with a generic martingale central limit theorem.

	To this end we set up a filtration $(\fF_{n,M})_{0\leq M\leq m_n}$ by letting $\fF_{n,M}$ be the $\sigma$-algebra generated by $\va_1,\ldots,\va_M$.
	Hence, conditioning on $\fF_{n,M}$ amounts to conditioning on $\va_1,\ldots,\va_M$, while averaging on the remaining clauses $\va_{M+1},\ldots,\va_m$.
	The conditional expectations
	\begin{align}\label{eqZstab1}
		\vZ_{n,M}&=m^{-1/2}\ex\brk{\log Z(\hPHI)\mid\fF_{n,M}}
	\end{align}
	then form a Doob martingale.
	Let $\vX_{n,M}=\vZ_{n,M}-\vZ_{n,M-1}$ be the martingale differences.

	\begin{proposition}\label{prop_varproc}
		For all $0<d<2$  the martingale \eqref{eqZstab1} satisfies
		\begin{align}
			\lim_{n\to\infty}\ex\brk{\max_{1\leq M\leq m}|\vX_{n,M}|}&=0&\mbox{and}&& 
			\lim_{n\to\infty}\ex\abs{\eta(d)^2-\sum_{M=1}^{m}\vX_{n,M}^2}&=0.\label{eqmyhh}
		\end{align}
	\end{proposition}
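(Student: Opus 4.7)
The plan is to verify the two hypotheses of a standard martingale CLT. Orthogonality of Doob differences combined with \Cor~\ref{cor_var} already yields
\begin{align*}
\sum_{M=1}^m\Erw\vX_{n,M}^2=\Var\vZ_{n,m}=\frac{1}{m}\Var\log Z(\hPHI)\longrightarrow\eta(d)^2,
\end{align*}
so $\sum_M\vX_{n,M}^2$ already has the correct asymptotic mean; what remains is to upgrade to $L^1$-convergence and to control the maximum increment.

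For the max condition I would prove a high-probability one-clause Lipschitz estimate for $\log Z(\hat\Phi)$. Writing $\vX_{n,M}=m^{-1/2}\Erw[f(\va)-\Erw_{\va'}f(\va^{(M)})\mid\fF_{n,M}]$ with $f(\va)=\log Z(\hPHI)$ and $\va^{(M)}$ obtained by swapping $\va_M$ for an independent clause $\va'$, it suffices to bound $|\log Z(\hat\Phi)-\log Z(\widehat{\Phi'})|$ whenever $\Phi,\Phi'$ are 2-CNFs differing in a single clause. By the ratio identity of \Prop~\ref{lem_rs}, each such increment equals $\log(1-p)+o(1)$ with $p$ a product of two marginal violation probabilities under~$\SIGMA$; since the $L^2$-control \eqref{eqfixedp} on $\rho_{d,0}$ forces typical marginals to avoid $\{0,1\}$, a Chebyshev tail estimate on the underlying log-likelihood ratios gives $|\log Z(\hat\Phi)-\log Z(\widehat{\Phi'})|\leq O(\log n)$ with probability $1-o(n^{-2})$. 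On the complementary rare event the trivial bound $|\log Z(\hat\Phi)|\leq n\log 2$ (combined with \Prop~\ref{prop_arnab}) confines the contribution to $o(n^{-1/2})$, so a union bound over $M$ yields $\Erw\max_M|\vX_{n,M}|=O((\log n)/\sqrt n)=o(1)$.

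For the quadratic-variation condition I decompose
\begin{align*}
\sum_M\vX_{n,M}^2-\eta(d)^2=\vM_n+\vS_n,\qquad \vM_n:=\sum_M\bigl(\vX_{n,M}^2-\Erw[\vX_{n,M}^2\mid\fF_{n,M-1}]\bigr),
\end{align*}
with $\vS_n:=\sum_M\Erw[\vX_{n,M}^2\mid\fF_{n,M-1}]-\eta(d)^2$. The max bound of the previous step gives $\Erw\vM_n^2\leq\sum_M\Erw\vX_{n,M}^4=O((\log n)^2/m)\cdot\sum_M\Erw\vX_{n,M}^2=O((\log n)^2/m)$, so $\vM_n\to 0$ in $L^2$. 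For $\vS_n$ I identify each conditional second moment via the two-i.i.d.-copies variance formula: with $g(\va_1,\dots,\va_M):=\Erw[\log Z(\hPHI)\mid\va_1,\dots,\va_M]$,
\begin{align*}
m\Erw[\vX_{n,M}^2\mid\fF_{n,M-1}]=\tfrac{1}{2}\Erw_{\va,\va'}\bigl(g(\va_1,\dots,\va_{M-1},\va)-g(\va_1,\dots,\va_{M-1},\va')\bigr)^2.
\end{align*}
Expanding the square and invoking \Prop~\ref{lem_rs} to resolve each single-clause log-increment inside $g$ produces, up to $o(1)$ pruning errors, precisely the two-replica integrand $\cB^\tensor_{d,t_M}$ of \eqref{eqfunctional} evaluated at the joint empirical marginal distribution $\pi_{\hat\PHI_1(M-1,m-M),\hat\PHI_2(M-1,m-M)}$. \Cor~\ref{lem_empirical} at $t_M=M/m$ then pins this down (in probability) to $\cB^\tensor_{d,t_M}(\rho_{d,t_M})-\cB^\tensor_{d,0}(\rho_{d,0})$, and a Riemann-sum argument plus continuity of $t\mapsto\cB^\tensor_{d,t}(\rho_{d,t})$ returns $\vS_n\to 0$ in probability. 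Combined with $\Erw\sum_M\vX_{n,M}^2\to\eta(d)^2$ and non-negativity of $\sum_M\vX_{n,M}^2$, this upgrades convergence in probability to $L^1$-convergence.

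The principal obstacle is a quenched, uniform-in-$M$ strengthening of \Cor~\ref{lem_empirical}: the limit of $m\Erw[\vX_{n,M}^2\mid\fF_{n,M-1}]$ must hold for almost every realisation of $\va_1,\dots,\va_{M-1}$, with an error summable over $M\in\{1,\dots,m\}$. I would obtain this by combining the contraction underlying \Prop~\ref{prop_noela} (which propagates the BP fixed point as $t$ advances by $1/m$) with \Prop~\ref{prop_arnab} (to absorb the pruning discrepancy between $\va_1\wedge\cdots\wedge\va_{M-1}$ and its pruned version), running an induction along the filtration $(\fF_{n,M})_{M\leq m}$.
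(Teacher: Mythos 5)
Your argument for the maximum-increment condition has a genuine gap. You claim that the $L^2$-control on $\rho_{d,0}$, via Chebyshev, yields $|\log Z(\hat\Phi)-\log Z(\widehat{\Phi'})|\le O(\log n)$ with probability $1-o(n^{-2})$. But Chebyshev applied to an $L^2$-bounded law only gives $\pr[|\xi|>t]=O(1/t^2)$; at $t=\Theta(\log n)$ this is $O(1/\log^2 n)$, not $o(n^{-2})$. With only this decay, the complementary event (on which you use the crude bound $\log Z(\hat\Phi)\le n\log 2$) contributes on the order of $\sqrt n\cdot m\cdot\log^{-2}n\to\infty$ to $\Erw\max_M|\vX_{n,M}|$ after the union bound, so the argument does not close. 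There is also a subtler circularity: \Prop~\ref{lem_rs} holds w.h.p.\ for a \emph{single} $M$ and relates the one-clause ratio to marginals of the random formula, not to $\rho_{d,0}$; and \Cor~\ref{lem_empirical} only gives $o(1)$-in-expectation proximity, not a quantitative tail bound usable for a union over $M$. The paper takes a completely different, combinatorial route here: it proves the deterministic Lipschitz inequality $|\log Z(\hat\Phi)-\log Z(\hat\Psi)|\le|\cN(\Phi,e)|\log 2$ (\Lem~\ref{lem_forced_add}) by analysing Unit Clause Propagation directly, and then establishes a sub-exponential tail on $|\cN(\PHI',\va_m)|$ via a branching-process argument (\Lem~\ref{lem_tail}). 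This yields bounded fourth moments of the one-clause change, from which the union bound over $M$ delivers the first condition.

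Your approach to the quadratic-variation condition also departs from the paper's. You split $\sum_M\vX_{n,M}^2-\eta(d)^2$ into a martingale piece $\vM_n$ and a compensator piece $\vS_n$, and correctly note that the programme hinges on a \emph{quenched}, uniform-in-$M$ identification $m\Erw[\vX_{n,M}^2\mid\fF_{n,M-1}]\approx\cB_{d,t_M}^\tensor(\rho_{d,t_M})-\cB_{d,0}^\tensor(\rho_{d,0})$ — which you flag as the main obstacle but only sketch how to obtain. This quenched strengthening of \Cor~\ref{lem_empirical} is not established in the paper and is nontrivial. The paper instead works with the \emph{unconditional} combinatorial formula $m\vX_M^2=\Erw[\vDelta(M)+\vDelta'(M)-2\vDelta''(M)\mid\fF_M]$ (\Lem~\ref{lem_comb_proc}), truncates via \Cor~\ref{cor_comb_l2}, and then batches the sum into windows of length $\Theta(\omega)$ so that, inside each window, $t=M/m$ is essentially constant and the truncated summands are (approximately) i.i.d.\ draws from a distribution governed by $\pi_{d,t}$. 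A law of large numbers within each batch (\Lem~\ref{lem_varproc}) then gives $L^1$-convergence directly, sidestepping any quenched or pathwise statement. Additionally, your estimate $\Erw\vX_{n,M}^4=O((\log n)^2/m)\,\Erw\vX_{n,M}^2$ implicitly treats the $O(\log n)$ bound as almost sure, whereas you only obtained it with high probability; on the complementary event the fourth moment is not controlled by this argument, so $\Erw\vM_n^2\to 0$ is also not justified as written.
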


	Thanks to pruning, the first condition from~\eqref{eqmyhh} is easily checked.
	Furthermore, the steps that we pursued towards the proof of \Cor~\ref{cor_var}, i.e., the variance calculation, also imply the second condition without further ado.
	Finally, as \eqref{eqmyhh} demonstrates that the marginal differences are small and that the variance process converges to a deterministic limit, \Thm~\ref{thm_clt} follows from the general martingale central limit theorem from~\cite{Eagleson}.

	\section{Discussion}\label{sec_discussion}

	\noindent
	The hunt for satisfiability thresholds of random constraint satisfaction problems was launched by the experimental work of Cheeseman, Kanefsky and Taylor~\cite{Cheeseman}.
	The 2-SAT threshold was the first one to be caught~\cite{CR,Goerdt}.
	Subsequent successes include the 1-in-$k$-SAT threshold~\cite{ACIM} and the $k$-XORSAT threshold~\cite{DuboisMandler,PittelSorkin}.
	Furthermore, Friedgut~\cite{Friedgut} proved the existence of non-uniform (i.e., $n$-dependent) satisfiability thresholds in considerable generality.
	The plot thickened when physicists employed a compelling but non-rigorous technique called the cavity method to `predict' the exact satisfiability thresholds of many further problems, including the $k$-SAT problem for $k\geq3$~\cite{MPZ}.
	A line of rigorous work~\cite{nae,yuval,KostaSAT} culminated in the verification of this physics prediction for large $k$~\cite{DSS3}.

	Even though the satisfiability threshold of random 2-SAT was determined already in the 1990s, the problem continued to receive considerable attention.
	For example, \Bollobas,  Borgs, Chayes, Kim and~Wilson~\cite{BBCKW} investigated the scaling window around the satisfiability threshold, a point on which a recent contribution by Dovgal, de Panafieu and Ravelomanana elaborates~\cite{Dogval}.
	Abbe and Montanari~\cite{AM} made the first substantial step towards the study of the number of satisfying assignments that $\frac1n\log Z(\PHI)$ converges in probability to a deterministic limit $\varphi(d)$ for Lebesgue-almost all $d\in(0,2)$.
	However, their techniques do not reveal the value $\varphi(d)$.
	Moreover, Montanari and Shah~\cite{MS} obtain a `law-of-large-numbers' estimate of the number of assignments that satisfy all but $o(n)$ clauses for $d<1.16$.
	Finally, the aforementioned article of Achlioptas et al.~\cite{2sat} verifies the prediction from~\cite{MZ} as to the number of satisfying assignments for all $d<2$.
	The main result of the present paper refines these results considerably by establishing a central limit theorem.

	For random $k$-CNFs with $k\geq3$ an upper bound on the number of satisfying assignments can be obtained via the interpolation method from mathematical physics~\cite{PanchenkoTalagrand}.
	This bound matches the predictions of the cavity method~\cite{MM}.
	However, no matching lower bound is currently known.
	The precise physics prediction called the `replica symmetric solution' has only been verified for `soft' versions of random $k$-SAT where unsatisfied clauses are penalised but not strictly forbidden, and for clause-to-variable ratios well below the satisfiability threshold~\cite{MS,Panchenko2,Talagrand}.

	Random CSPs such as random $k$-XORSAT or random $k$-NAESAT that exhibit stronger symmetry properties than random $k$-SAT tend to be amenable to the method of moments \cite{nae}.%
	\footnote{Formally, by `symmetry' we mean that the empirical distribution of the marginals of random solutions converges to an atom; cf.~\cite{CKPZ}.}
	Therefore, more is known about their number of solutions.
	For example, due to the inherent connection to linear algebra, the number of satisfying assignments of random $k$-XORSAT formulas is known to concentrate on a single value right up to the satisfiability threshold~\cite{Ayre,DuboisMandler,PittelSorkin}.
	Furthermore, in random $k$-NAESAT, random graph colouring and several related problems the logarithm of the number of solutions superconcentrates, i.e., has only bounded fluctuations for constraint densities up to the so-called condensation threshold, a phase transition that shortly precedes the satisfiability threshold~\cite{silent,CKM,Feli2}.
	The same is true of random $k$-SAT instances with regular literal degrees~\cite{COW}.
	A further example is the symmetric perceptron~\cite{ALS}, where the number of solutions superconcentrates but the limiting distribution is a log-normal with bounded variance.
	Going beyond the condensation transition, Sly, Sun and Zhang~\cite{SSZ} proved that the number of satisfying assignments of random regular $k$-NAESAT formulas matches the `1-step replica symmetry breaking' prediction from physics.

	Apart from the superconcentration results for symmetric problems from~\cite{silent,COW,CKM,Feli2}, the limiting distribution of the logarithm of the number of solutions has not been known in any random constraint satisfaction problem.
	In particular, \Thm~\ref{thm_clt} is the first central limit theorem for this quantity in any random CSP.
	We expect that the technique developed in the present work, particularly the use of two correlated random instances in combination with spatial mixing, can be extended to other problems.
	The present use of correlated instances is inspired by the work of Chen, Dey and Panchenko~\cite{ChenDeyPanchenko} on the $p$-spin model from mathematical physics, a generalisation of the famous Sherrington-Kirkpatrick model.
	That said, on a technical level the present use of correlated instances is quite different from the approach from~\cite{ChenDeyPanchenko}.
	Specifically, while here we construct correlated 2-CNFs that share a specific fraction of their clauses and employ a martingale central limit theorem, Chen, Dey and Panchenko combine a continuous interpolation of two mixed $p$-spin Hamiltonians with Stein's method.

	A further line of work deals with central limit theorems for random optimisation problems.
	Cao~\cite{Cao} provided a general framework based on the `objective method'~\cite{AldousSteele}.
	Unfortunately, the conditions of Cao's theorem tend to be unwieldy for \textsc{Max Csp} problems with hard constraints.
	Recent work of Krea\v ci\v c~\cite{Kreacic} and Glasgow, Kwan, Sah, Sawhney~\cite{GKSS} on the matching number therefore instead resorts to the use of stochastic differential equations.
	A promising question for future work might be whether the present method of considering correlated instances might extend to random optimisation problems.

	\subsection*{Organisation}\label{sec_org}
	In the rest of the paper we carry out the strategy from \Sec~\ref{sec_outline} in detail.
	After some preliminaries in \Sec~\ref{sec_prelim}, we prove \Prop~\ref{prop_arnab} in \Sec~\ref{sec_prop_arnab}.
	Subsequently \Sec~\ref{sec_lemma_lwc} deals with the proof of \Prop~\ref{lemma_lwc}.
	The proof of \Prop~\ref{lem_rs} follows in \Sec~\ref{sec_lem_rs}.
	Moreover, \Sec~\ref{sec_prop_noela} contains the proof of \Prop~\ref{prop_noela}.
	Further, in \Sec~\ref{sec_prop_bp} we prove \Prop~\ref{prop_bp} and \Sec~\ref{sec_prop_var} contains the proofs of \Prop~\ref{prop_varproc} and \Cor~\ref{cor_var}.
	Finally, in \Sec~\ref{sec_hh} we complete the proof of \Thm~\ref{thm_clt}.

	\section{Preliminaries and notation}\label{sec_prelim}

	\subsection{Boolean formulas}\label{sec_prelim_boolean}
	A {\em $2$-SAT formula} or {\em 2-CNF $\Phi$} consists of a finite set $V(\Phi)$ of propositional variables and another set $F(\Phi)$ of clauses.
	Unless specified otherwise, we assume that each clause contains two distinct variables.

	For a clause $a\in F(\Phi)$ we denote by $\partial a=\partial_\Phi a$ the set of variables that appear in clause $a$.
	Similarly, for a variable $x\in V(\Phi)$ let $\partial x=\partial_\Phi x$ signify the set of clauses in which $x$ appears.
	Thus, the formula $\Phi$ induces a bipartite graph on variables and clauses, the so-called {\em incidence graph} of $\Phi$.
	Further, the shortest path metric on the incidence graph induces a metric on the variables and clauses of $\Phi$.
	Accordingly, for a variable or clause $u$ let $\partial^\ell u=\partial^\ell_\Phi u$ be the set of all nodes at a distance precisely $\ell$ from $u$.
	Moreover, let $\partial^{\leq\ell}u=\partial^{\leq\ell}_\Phi u$ be the sub-formula of $\Phi$ obtained by deleting all clauses and variables at a distance greater than $\ell$ from $u$.
	In other words, $\partial^{\leq\ell}u$ is the depth-$\ell$ neighbourhood of $u$.

	We encode the Boolean values `true' and `false' as $\pm1$.
	Accordingly, let $S(\Phi)\subset\PM^{V(\Phi)}$ be the set of satisfying assignments of $\Phi$ and let $Z(\Phi)=|S(\Phi)|$.
	Further, $\sign(x,a)=\sign_\Phi(x,a)\in\{\pm1\}$ denotes the sign with which variable $x$ appears in clause $a$, i.e., $\sign(x,a)=1$ if $x$ appears in $a$ positively and $\sign(x,a)=-1$ if $a$ contains the negation $\neg x$.
	Finally, for a literal $l\in\{x,\neg x\}$ we let $|l|=x$ denote the underlying Boolean variable.

	Assuming $S(\Phi) \neq \emptyset$ let
	\begin{align}\label{eqboltz}
		\mu_{\Phi}(\sigma) &= \vecone\{\sigma \in S(\Phi)\} / Z(\Phi), &&(\sigma \in  \{\pm1\}^{V(\Phi)})
	\end{align}
	be the uniform distribution on $S(\Phi)$.
	We write $\vec\sigma=\vec\sigma_{\Phi}=(\vec\sigma_{\Phi,x})_{x\in V(\Phi)}\in\PM^{V(\Phi)}$ for a sample from $\mu_{\Phi}$, i.e., a uniformly random satisfying assignment of $\Phi$.

	In contrast to $k$-SAT for $k\geq3$, the 2-SAT problem can be solved in polynomial time.
	This is because a 2-SAT instance is unsatisfiable if and only if it contains a peculiar sub-formula called a bicycle.
	To be precise, let $\Phi$ be a CNF with clauses of length one or two.
	A {\em bicycle} of $\Phi$ is an alternating sequence $l_0,a_1,l_1,a_2,\ldots,a_k,l_k$ of literals $l_0,\ldots,l_k$ and clauses $a_1,\ldots,a_k\in F(\Phi)$ such that
	\begin{description}
		\item[BIC1] $l_0=l_k$,
		\item[BIC2] $l_i=\neg l_0$ for some $0<i<k$ and
		\item[BIC3] $a_i\equiv \neg l_{i-1}\vee l_i\equiv l_{i-1}\to l_i$.
	\end{description}
	(Observe that a clause $a$ comprising only a single literal $l$ is logically equivalent to $l\vee l\equiv\neg l\to l$.)
	Hence, the bicycle consists of clauses that are logically equivalent to a chain of implications $l_0\to\neg l_0\to l_0$.

	\begin{fact}[{\cite{APT79}}]\label{fact_bicycle}
		A CNF $\Phi$ with clauses of lengths one or two is unsatisfiable iff $\Phi$ contains a bicycle.
	\end{fact}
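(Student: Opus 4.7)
The plan is to prove both implications via the standard \emph{implication digraph} $D(\Phi)$ associated with a 2-CNF, which underpins the Aspvall--Plass--Tarjan linear-time algorithm.

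First, for \emph{bicycle $\Rightarrow$ unsatisfiable}, I would argue directly from \textbf{BIC1}--\textbf{BIC3}: each clause $a_j$ is logically the implication $l_{j-1}\to l_j$, so chaining the first $i$ clauses gives $l_0\to\neg l_0$ and the remaining $k-i$ clauses give $\neg l_0\to l_0$. No truth value for $l_0$ respects both chains, so no single assignment satisfies all of $a_1,\dots,a_k$.

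For the converse, \emph{unsatisfiable $\Rightarrow$ bicycle}, I would define $D(\Phi)$ on the set of literals, where every 2-clause $l\vee l'$ contributes the arcs $\neg l\to l'$ and $\neg l'\to l$, and every unit clause $l$ contributes the arc $\neg l\to l$. Absence of a bicycle in $\Phi$ is equivalent to the absence of a directed cycle in $D(\Phi)$ visiting both some literal and its negation, i.e., no variable lies in the same strongly connected component as its negation. I would then argue the contrapositive by constructing a satisfying assignment under this hypothesis: using that negation is an order-reversing involution on the condensation (the arc set of $D(\Phi)$ is closed under $(l\to l')\mapsto(\neg l'\to\neg l)$), repeatedly pick a sink SCC $S$, set its literals to true, set the literals of the mirror SCC $\bar S=\{\neg l:l\in S\}$ to false, and delete both. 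Since $S\neq\bar S$ by hypothesis (else some variable would share an SCC with its negation), this produces a consistent total assignment. A clause $\neg l\vee l'$, read as the arc $l\to l'$, could fail only if $l'$ were removed strictly before $l$, which the arc forbids; unit clauses are handled by the arc $\neg l\to l$ in the same manner.

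The main technical point is the soundness of this SCC peeling, which relies on the symmetry of $D(\Phi)$ under negation making sinks pair with source mirrors; and the reverse translation, where a directed cycle of $D(\Phi)$ through $l_0$ and $\neg l_0$ must be relabelled by the clauses generating its arcs in order to recover \textbf{BIC1}--\textbf{BIC3} (with unit clauses corresponding to degenerate arcs $l\to l$). Both checks reduce to tracking how the involution $l\mapsto\neg l$ interacts with the digraph and are routine once the implication-digraph formalism is in place, which is why \cite{APT79} and subsequent texts record the equivalence as standard.
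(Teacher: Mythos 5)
The paper gives no proof of this statement; it records it as a Fact cited from \cite{APT79}, so there is nothing in the paper to compare your argument against. Your route via the implication digraph $D(\Phi)$ --- each $2$-clause $l\vee l'$ contributing the contrapositive pair of arcs $\neg l\to l'$ and $\neg l'\to l$, a unit clause $l$ contributing $\neg l\to l$, and a bicycle corresponding to some literal sharing a strongly connected component with its negation --- is the standard Aspvall--Plass--Tarjan argument, and your forward direction (bicycle implies unsatisfiable) is correct as written.

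In the converse, one sentence is wrong as stated and, read literally, would not give the contradiction you need: ``A clause $\neg l\vee l'$, read as the arc $l\to l'$, could fail only if $l'$ were removed strictly before $l$, which the arc forbids.'' The arc $l\to l'$ does not forbid removing $l'$ first --- in a sink-first peeling, successors are removed \emph{earlier} than their predecessors by design. What the arc actually yields, when $l$ lies in a sink SCC $S$ at its removal time, is that $l'$ is either in $S$ (hence set true, clause satisfied) or was removed strictly earlier. The failure mode that must be excluded is therefore precisely ``$l'$ removed earlier and set false,'' i.e.\ $\neg l'$ lay in an earlier sink SCC $S'$. To exclude it you need the \emph{mirror} arc $\neg l'\to\neg l$: when $S'$ was the sink, $\neg l$ was still present (since $l$, hence $\neg l$, is removed only later), so $\neg l$ would have to lie in $S'$ and be set true, forcing $l$ false --- contradiction. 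You already flag the negation symmetry of $D(\Phi)$ as the main technical point, so the idea is present, but the quoted sentence attributes the obstruction to the wrong arc; replacing it by the mirror-arc argument repairs the step. A small slip of the same flavour: near the end you say unit clauses give ``degenerate arcs $l\to l$,'' but a unit clause $l$ gives the arc $\neg l\to l$, exactly as you stated correctly earlier in the paragraph, not a self-loop.
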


	\subsection{Unit Clause Propagation}\label{sec_ucp}
	The \UCP\ algorithm (Algorithm~\ref{fig_ucp}) takes as input a CNF $\Phi$ along with an initial set $\cL_0$ of literals.
	\UCP\ outputs a set $\cL=\cL(\Phi,\cL_0)\supseteq\cL_0$ of literals.
	Let $$\cV(\Phi,\cL_0)=\{|l|:l\in\cL(\Phi,\cL_0)\}$$ be the set of underlying variables.
	In addition to $\cL(\Phi,\cL_0)$, \UCP\ also outputs a partial assignment $$\sigma=\sigma_{\Phi,\cL_0}:\cV(\Phi,\cL_0)\to\{0,\pm1\}$$ that sets each $x\in\cV$ either to a truth value $\pm1$ or to the dummy value $0$.
	Let $$\cV_0(\Phi,\cL_0)=\{x\in\cV(\Phi,\cL_0):\sigma_{\Phi,\cL_0,x}=0\}$$ be the set of variables that receive the dummy value.
	Finally, the algorithm identifies a set $\cC(\Phi,\cL_0)$ of {\em conflict clauses}, i.e., clauses $a$ such that $\partial a\subset\cV_0(\Phi,\cL_0)$.

	We make a note of a few basic facts about \UCP.
	These remarks apply to any CNF $\Phi$ with clauses of length {\em at most} two.
	To get started, we say that a literal $l'$ is {\em implication-reachable} from another literal $l$ if there exists an alternating sequence $l=l_0,a_1,l_1,\ldots,a_k,l_k=l'$ of literals $l_i$ and clauses $a_i$ of $\Phi$ such that $a_i\equiv \neg l_{i-1}\vee l_i\equiv l_{i-1}\to l_i$ for all $1\leq i\leq k$.
	We call this sequence an {\em implication chain} from $l$ to $l'$.
	Observe that a unit clause (clause of length one) comprising a single literal $l$ is equivalent to the implication $\neg l\to l\equiv l\vee l$.
	Furthermore, if $l'$ is implication-reachable from $l$, then $\neg l$ is implication-reachable from $\neg l'$.
	Indeed, if $l=l_0,a_1,l_1,\ldots,a_k,l_k=l'$ is an implication chain from $l$ to $l'$, then its contraposition
	\begin{align*}
		\neg l'=\neg l_k,a_k,\neg l_{k-1},\ldots,\neg l_1,a_1,\neg l
	\end{align*}
	is an implication chain from $\neg l'$ to $\neg l$.

	\begin{lemma}\label{lem_ucp_reachable}
		Let $\Phi$ be a CNF with clauses of length at most two and let $\cL_0$ be a set of literals of $\Phi$.
		Then $\cL(\Phi,\cL_0)$ is the set of all literals $l'$ that are implication-reachable from a literal $l\in\cL_0$.
	\end{lemma}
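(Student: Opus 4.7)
The plan is to prove the two inclusions separately, each by a short induction. Since the lemma amounts essentially to an unwinding of the two definitions (of \UCP\ and of implication-reachability), I do not anticipate a serious obstacle; the only bookkeeping subtlety concerns unit clauses, which I address at the end.

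For the forward inclusion, that every $l'\in\cL(\Phi,\cL_0)$ is implication-reachable from some $l\in\cL_0$, I would induct on the order in which literals are appended to $\cL$ during the execution of \UCP. Initially $\cL=\cL_0$, and every $l\in\cL_0$ is trivially implication-reachable from itself via the empty chain. In the inductive step, whenever the while loop adds a literal $l$ to $\cL$ because of a clause $a\equiv l\vee\neg l'$ with $l'\in\cL$ already and $l\not\in\cL$, the clause $a$ is logically equivalent to $l'\to l$. Appending $(a,l)$ to an implication chain from some $l_0\in\cL_0$ to $l'$, which exists by the inductive hypothesis, produces an implication chain from $l_0$ to $l$, as required.

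For the reverse inclusion, I would induct on the length $k$ of an implication chain $l=l_0,a_1,l_1,\ldots,a_k,l_k=l'$ witnessing that $l'$ is implication-reachable from $l\in\cL_0$. The case $k=0$ is immediate from $\cL_0\subseteq\cL(\Phi,\cL_0)$. For $k\geq 1$, the inductive hypothesis yields $l_{k-1}\in\cL(\Phi,\cL_0)$, and the clause $a_k\equiv\neg l_{k-1}\vee l_k$ belongs to $\Phi$. Were $l_k$ missing from $\cL(\Phi,\cL_0)$, the clause $a_k$ in combination with the literal $l_{k-1}\in\cL$ would still satisfy the while loop's entry condition, contradicting the fact that \UCP\ halts only when no such clause remains. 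Hence $l'=l_k\in\cL(\Phi,\cL_0)$.

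The only bookkeeping point is that a unit clause consisting of a single literal $l$ must be interpreted uniformly as $l\vee l\equiv\neg l\to l$ both in the while loop of Algorithm~\ref{fig_ucp} and in the definition of an implication chain. Under this convention, such a unit clause triggers the insertion of $l$ into $\cL$ precisely when $\neg l\in\cL$, and the two inductive arguments above apply without any case distinction on clause length.
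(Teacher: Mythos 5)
Your proof is correct and takes essentially the same approach as the paper, which simply states the result follows by ``an easy induction on the length of the shortest implication chain.'' You supply the details of both inclusions and correctly observe that the forward inclusion is most naturally proved by induction on the order in which \UCP\ inserts literals, while the reverse inclusion uses induction on chain length; your treatment of unit clauses as $l\vee l\equiv\neg l\to l$ matches the convention stated just before the lemma in the paper.
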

	\begin{proof}
		This is an easy induction on the length of the shortest implication chain from $l$ to $l'$.
	\end{proof}

	\noindent
	An immediate consequence of \Lem~\ref{lem_ucp_reachable} is that the order in which \UCP\ proceeds is irrelevant.

	Finally, for the sake of completeness, we carry out the proof of Fact~\ref{lem_hatphi_sat}.

	\begin{proof}[Proof of Fact~\ref{lem_hatphi_sat}]
		Fix some order $l_1,\ldots,l_\nu$ of the literals $\{x,\neg x:x\in V(\Phi)\}$.
		Let $\sigma_i$ be the assignments produced by \UCP\ on input $(\Phi,\{l_i\})$.
		We construct an assignment $\sigma:V(\Phi)\to\{0,\pm1\}$ by proceeding as follows for $i=1,\ldots,\nu$.
		For each variable $x$ such that $\{x,\neg x\}\cap\cL(\Phi,l_i)\setminus\bigcup_{1\leq h<i}\cL(\Phi,l_h)\neq\emptyset$ let $\sigma_x=\sigma_i(x)$.
		We claim that for each clause $a\in\hat\Phi$ there is a variable $x\in\partial a$ such that $\sigma_x=\sign(x,a)$.
		Indeed, it is not possible that $\sigma_x=0$ for all $x\in\partial a$; for otherwise $a\in\cC(\Phi,l_i)$ for some $i$, and thus $a$ would not be present in $\hat\Phi$.
		Thus, there exists $x\in\partial a$ such that $\sigma_x\in\PM$.
		If $\sigma_x\neq\sign(x,a)$, then \UCP\ would have included the second literal $k$ that appears in $a$ into the set $\cL$.
		Hence, $\sigma_{|k|}=\sign(|k|,a)$, because otherwise $a$ would have been a contradiction and therefore omitted from $\hat\Phi$.
	\end{proof}
	\subsection{Random 2-SAT}\label{sec_prelim_random}
	Recall from \Sec~\ref{sec_motiv} that $\PHI$ denotes the random $2$-CNF formula with variables $V_n=\{x_1,\ldots,x_n\}$ and clauses $F_m=\{\va_1,\ldots,\va_m\}$, where $m=m_n\sim dn/2$ for a fixed $d>0$.
	{\bf\em We tacitly assume that $0<d<2$, i.e., that we are in the satisfiable regime.}

	In the following sections we will need estimates of the sizes of the sets $|\cV(\PHI,\cL)|$, $|\cC(\PHI,\cL)|$ produced by \UCP\ on the random formula $\PHI$ for singletons $\cL$.
	Thus, suppose we start the \UCP\ algorithm from an initial literal $\cL=\{l\}$.
	Since the ensuing chain of implications traced by \UCP\ is stochastically dominated by a sub-critical branching process (for $d<2$), we obtain the following bound.

	\begin{lemma}[{\cite[Claim~6.8]{2sat}}]\label{lem_forced_tail}
		For any literal $l$ and every $t>8/(2-d)$ we have
		\begin{align*}
			\pr\brk{\abs{\cV(\PHI,\{l\})}>t}\leq(2+o(1))\exp(-dt/40).
		\end{align*}
	\end{lemma}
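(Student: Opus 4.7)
My plan is to recognise \UCP\ started from the singleton $\{l\}$ as a breadth-first exploration of the implication digraph of $\PHI$, dominate this exploration by a subcritical Galton--Watson tree, and then deduce the exponential tail from a Chernoff bound on the tree's total progeny. Concretely, I would initialise a FIFO queue with $l$; at each step pop a literal $l'$ and reveal every so-far unexamined clause of $\PHI$ of the form $l''\vee\neg l'$---each such clause encodes an implication $l'\to l''$---adding $l''$ to the queue whenever $|l''|$ has not already been visited. By \Lem~\ref{lem_ucp_reachable}, $\cV(\PHI,\{l\})$ is precisely the set of underlying variables of the enqueued literals, so $|\cV(\PHI,\{l\})|$ is bounded by the total number of exploration steps.

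To dominate the branching factor, I would use that $\PHI$ consists of $m\sim dn/2$ independent uniform 2-clauses and that any prescribed literal lies in a uniformly random 2-clause with probability exactly $1/n$. Hence, conditionally on any exploration history, the number of previously unrevealed clauses discovered in the current step is stochastically dominated by $\Bin(m,1/n)$ and in turn by $\Po((1+o(1))d/2)$. Consequently $|\cV(\PHI,\{l\})|$ is dominated by the total progeny $\vec T$ of a Galton--Watson branching process with $\Po((1+o(1))d/2)$ offspring distribution, which is subcritical precisely because $d<2$. The event $|\vec T|\ge t$ forces $\sum_{i=1}^tX_i\ge t$ with $X_i$ iid $\Po((1+o(1))d/2)$, since otherwise the exploration queue would have emptied by step $t$. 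A Chernoff bound
\begin{align*}
\Pr\bc{\textstyle\sum_{i=1}^tX_i\ge t}\le\exp\bc{-\theta t+(1+o(1))\tfrac{dt}{2}(\eul^\theta-1)}
\end{align*}
with a suitable $\theta>0$ then yields an exponential tail of the required form.

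The only substantive remaining task is arithmetic: selecting $\theta$ and exploiting the hypothesis $t>8/(2-d)$ to absorb the finite-$n$ $\Bin$-to-$\Po$ correction together with the polynomial Borel-type prefactor of the total-progeny distribution into the factor $2+o(1)$, so as to reproduce the explicit numerical rate $d/40$. Calibrating these constants uniformly over $d\in(0,2)$---in particular when the branching process is only just subcritical near $d=2$---is the chief obstacle in making the bound precise, but the constants $8$ and $40$ in the statement are generous rather than optimal (reflecting their purely instrumental use in later arguments), so the calibration is a routine if slightly tedious computation. All of the conceptual content is contained in the subcritical branching-process coupling.
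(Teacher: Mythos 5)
The paper does not prove this lemma; it is quoted verbatim from [2sat, Claim~6.8] and used as a black box, so there is no in-paper proof to compare against. Taken on its own terms, your strategy---view \UCP\ as a breadth-first exploration of the implication digraph, dominate it by a subcritical Galton--Watson tree with $\Po$ offspring of mean $\sim d/2<1$, and bound the tail of the total progeny---is indeed the natural route to such a tail estimate, and is almost certainly the skeleton of the argument in the cited source.

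However, your assertion that the numerical rate $d/40$ drops out of a Chernoff bound after ``routine if slightly tedious'' calibration is wrong, and this is a genuine gap rather than a bookkeeping nuisance. Optimising the Chernoff bound $\Pr\bigl(\sum_{i\le t}X_i\ge t\bigr)\le\exp\bigl(-\theta t+\tfrac{dt}{2}(\eul^\theta-1)\bigr)$ over $\theta$ gives $\theta^\ast=\log(2/d)$ and the Cram\'er rate $I=\tfrac d2-1-\log\tfrac d2$, which vanishes \emph{quadratically} as $d\to 2$: Taylor expansion gives $I\sim (2-d)^2/8$. In particular one checks that $I<d/40$ already for $d\gtrsim 3/2$, so for $d\in(3/2,2)$ the Chernoff-derived decay $\exp(-tI)$ is exponentially weaker than the claimed $\exp(-dt/40)$. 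The $t^{-3/2}$ Borel-type prefactor of the total-progeny distribution that you mention cannot repair this (a polynomial prefactor does not change an exponential rate), and the hypothesis $t>8/(2-d)$ does not obviously rescue the argument either, since the gap between the two exponents only grows as $t$ increases. Reproducing the constants $8$ and $40$ uniformly on $d\in(0,2)$ is therefore precisely the nontrivial step, not an afterthought, and a plain Chernoff bound on the dominating branching process demonstrably cannot deliver it near the satisfiability threshold. Whatever the cited proof actually does must involve something sharper than the calculation you sketch.
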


	\begin{corollary}\label{lem_bicycle}
		With probability $1-o(n^{-2})$ we have
		\begin{align*}
			\max_{1\leq i\leq n}|\cV(\PHI,\{x_i\})|+|\cV(\PHI,\{\neg x_i\})|&\leq\log^2n.
		\end{align*}
	\end{corollary}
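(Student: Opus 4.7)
The claim is a routine union bound built on top of the single-literal tail estimate of \Lem~\ref{lem_forced_tail}. The plan is to apply that lemma with threshold $t=\tfrac12\log^2 n$ and then union-bound over all $2n$ literals of the form $x_i$ or $\neg x_i$.

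First, since $d<2$ is a fixed constant, we have $t=\tfrac12\log^2 n > 8/(2-d)$ for all sufficiently large $n$, so \Lem~\ref{lem_forced_tail} applies. It yields, for any fixed literal $l$,
\[
	\pr\brk{|\cV(\PHI,\{l\})|>\tfrac12\log^2 n}\leq (2+o(1))\exp\bc{-\tfrac{d}{80}\log^2 n},
\]
which decays super-polynomially in $n$ and is in particular $o(n^{-3})$.

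Next I would observe that whenever $|\cV(\PHI,\{x_i\})|+|\cV(\PHI,\{\neg x_i\})|$ exceeds $\log^2 n$, at least one of the two summands must exceed $\tfrac12\log^2 n$. A union bound over the $2n$ literals $\{x_i,\neg x_i\}_{i=1}^n$ therefore gives
\[
	\pr\brk{\max_{1\leq i\leq n}|\cV(\PHI,\{x_i\})|+|\cV(\PHI,\{\neg x_i\})|>\log^2 n}\leq 2n(2+o(1))\exp\bc{-\tfrac{d}{80}\log^2 n}=o(n^{-2}),
\]
which is the desired bound. I do not anticipate any real obstacle here: the only `design choice' is the threshold $t=\tfrac12\log^2 n$, tuned to beat the factor $2n$ from the union bound by a super-polynomial margin while still keeping the two per-literal quantities summing to at most $\log^2 n$.
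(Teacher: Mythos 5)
Your proposal is correct and follows exactly the paper's intended route: the paper's proof simply says the corollary is an immediate consequence of \Lem~\ref{lem_forced_tail}, and your write-up supplies precisely the standard threshold choice and union bound one would use to unpack that. No gaps.
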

	\begin{proof}
		This is an immediate consequence of \Lem~\ref{lem_forced_tail}.
	\end{proof}

	\noindent
	Finally, the following statement estimates the probability that a random formula is unsatisfiable.

	\begin{lemma}\label{cor_bicycle}
		We have $\pr\brk{\PHI\mbox{ is unsatisfiable}}\leq n^{o(1)-1}$.
	\end{lemma}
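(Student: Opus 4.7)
The plan is to leverage Fact~\ref{fact_bicycle} together with a first-moment bound on the number of bicycles. Since $\PHI$ is unsatisfiable iff it contains a bicycle, and every clause of $\PHI$ has two distinct variables so any bicycle must have length $k\geq 3$, it suffices to bound the expected number of bicycles of each length and sum over $k$.

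First I would count the admissible literal sequences $(l_0,\ldots,l_k)$ satisfying BIC1 and BIC2: there are at most $2n$ choices for $l_0=l_k$, at most $k-1$ choices for a position $i\in\{1,\ldots,k-1\}$ witnessing $l_i=\neg l_0$, and at most $2n$ choices for each of the remaining $k-2$ intermediate literals, giving at most $(k-1)(2n)^{k-1}$ sequences. Given a literal sequence, BIC3 determines the clauses $a_j\equiv \neg l_{j-1}\vee l_j$, and since each $\va_j$ is uniform on the $4\binom{n}{2}=2n(n-1)$ proper two-clauses, the expected number of ordered index tuples $(i_1,\ldots,i_k)\in[m]^k$ with $\va_{i_j}=a_j$ for all $j$ is at most $m^k/(2n(n-1))^k$. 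Combining these two counts and summing over $k$ yields
\begin{align*}
	\pr[\PHI\text{ unsat}]\leq \ex[\#\text{bicycles}]\leq \sum_{k\geq 3}(k-1)(2n)^{k-1}\cdot\frac{m^k}{(2n(n-1))^k}\leq\frac{1}{2n}\sum_{k\geq 3}(k-1)\bcfr{m}{n-1}^k.
\end{align*}
Since $d<2$, the ratio $m/(n-1)\to d/2<1$, so the series converges to a finite constant depending only on $d$. Hence the total is of order $O(1/n)$, which in particular is at most $n^{o(1)-1}$.

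The only subtle point is that the inner bound $m^k/(2n(n-1))^k$ on the clause-match probability is derived by summing over distinct indices, whereas repeated indices in $(i_1,\ldots,i_k)$ would require the corresponding $a_j$'s to coincide. Each such coincidence imposes a linear constraint on the literals, reducing the literal-sequence count by a factor of order $n$, which exactly cancels the factor of $n$ gained in the clause-match probability. So degenerate bicycles contribute at the same order $O(1/n)$ and are absorbed in the constant; no serious obstacle arises.
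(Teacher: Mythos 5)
Your proposal is correct and takes essentially the same route as the paper: a first-moment bound on the number of bicycles via Fact~\ref{fact_bicycle}, which is precisely what the cited \cite[Claim~6.9]{2sat} carries out. One small inaccuracy in your final paragraph: a clause coincidence $a_j=a_{j'}$ generically fixes \emph{two} literals, costing $\Theta(n^2)$ in the literal-sequence count against only a $\Theta(n)$ gain in the match probability, so degenerate bicycles contribute strictly less than the main term rather than ``the same order'' -- but this only strengthens, and does not affect, the conclusion.
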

	\begin{proof}
		This follows from Fact~\ref{fact_bicycle} and \cite[Claim~6.9]{2sat}.
	\end{proof}

	Recall the Galton-Watson tree $\vT$ from \Sec~\ref{sec_BPop}.
	The following lemma shows that $\vT$ mimics the local structure of the `plain' random formula with $n'$ variables and $m'$ independent random $2$-clauses.
	Also recall that $\partial^{\leq2\ell}_{\Phi} x$ denotes the sub-formula of $\Phi$ comprising all clauses and variables at distance at most $2\ell$ from $x$.

	\begin{lemma}[{\cite[p.~15]{2sat}}]\label{lem_lwc_old}
		Let $\ell \geq 0$ be an integer and let $T$ be a possible outcome of $\vT^{(2\ell)}$.
		Let $\PHI_0$ be a random 2-CNF with $n' \sim n$ variables and $m' \sim dn/2$ clauses.
		Then \whp\ the number $\vN^{(2\ell)}(T,\PHI_0)$ of variables $x_i$ of $\PHI_0$ such that $\partial^{\leq2\ell}_{\PHI_0} x_i \ism T^{2\ell}$ satisfies
		\begin{align*}
			\vN^{(2\ell)}(T,\PHI_0)= n\pr\brk{ \vT^{(2\ell)} \ism T } + o(n).
		\end{align*}
	\end{lemma}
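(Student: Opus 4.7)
The plan is to apply the standard first-and-second-moment method, exploiting the fact that in the sparse regime $m'\sim dn/2$ the depth-$2\ell$ neighbourhood of a fixed variable in $\PHI_0$ is asymptotically tree-like and its offspring structure is governed by Poisson variables matching those of $\vT$.

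\textbf{First moment.}
By linearity, $\Erw[N^{(2\ell)}(T,\PHI_0)]=\sum_{i=1}^{n'}\pr[\partial^{\leq 2\ell}_{\PHI_0} x_i\ism T^{(2\ell)}]$. Fix $x_1$ and observe that the total number of clause-variable slots a given variable can occupy behaves like a Binomial that converges to Poisson. Concretely, for each sign combination $(s,s')\in\PM^2$, the number of clauses of $\PHI_0$ in which $x_1$ appears with sign $s$ together with another variable appearing with sign $s'$ is asymptotically $\Po(d/4)$, and the four counts become independent in the limit. Since the children variables are, conditional on their identities, uniformly distributed among $V_{n'}\setminus\{x_1\}$, the same Poisson description applies recursively at every internal vertex of depth $<2\ell$. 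Furthermore, the probability that the depth-$2\ell$ neighbourhood of $x_1$ contains a cycle is $O(1/n)$, since for fixed $\ell$ such a neighbourhood has at most a bounded random number of clauses and each additional incidence closing a cycle costs a factor $1/n$. Matching these distributions with the offspring rules of $\vT$ from \Sec~\ref{sec_BPop} yields $\pr[\partial^{\leq 2\ell}_{\PHI_0} x_1\ism T^{(2\ell)}]=\pr[\vT^{(2\ell)}\ism T]+o(1)$, and hence $\Erw[N^{(2\ell)}(T,\PHI_0)]=n\pr[\vT^{(2\ell)}\ism T]+o(n)$.

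\textbf{Second moment.}
Expanding $\Erw[N^{(2\ell)}(T,\PHI_0)^2]=\sum_{i,j}\pr[\partial^{\leq 2\ell}_{\PHI_0} x_i\ism T^{(2\ell)},\,\partial^{\leq 2\ell}_{\PHI_0} x_j\ism T^{(2\ell)}]$, the diagonal $i=j$ contributes $O(n)$. For $i\neq j$ we split according to whether the depth-$2\ell$ neighbourhoods of $x_i$ and $x_j$ are vertex-disjoint. The probability of overlap is $O(1/n)$, since a shared vertex forces an additional incidence among boundedly many slots. On the disjoint event, the two neighbourhoods are conditionally independent Galton--Watson explorations in $\PHI_0$ (after revealing the degrees of one ball, the remaining random clauses are still uniform on the remaining variable set), so by the first-moment computation each contributes $\pr[\vT^{(2\ell)}\ism T]+o(1)$. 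Summing gives $\Erw[N^{(2\ell)}(T,\PHI_0)^2]=(1+o(1))(n\pr[\vT^{(2\ell)}\ism T])^2$, hence $\Var[N^{(2\ell)}(T,\PHI_0)]=o(n^2)$.

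\textbf{Conclusion.}
Chebyshev's inequality now yields $N^{(2\ell)}(T,\PHI_0)=n\pr[\vT^{(2\ell)}\ism T]+o(n)$ \whp\ The main technical point is the matching of the Poisson parameters at every layer of the branching process while controlling the vanishing cycle probability; this is routine for fixed $\ell$ because all relevant quantities (neighbourhood sizes, number of incidences) remain tight, so Poisson approximation applies uniformly. Nothing here depends on satisfiability, so the result holds regardless of the value of $d$ within the claimed regime.
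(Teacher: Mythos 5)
This lemma is imported from the reference \cite{2sat} and is not proved in the present paper, so there is no internal proof to compare against. Your first-and-second-moment computation combined with Chebyshev's inequality is the standard route to this kind of local weak convergence statement, and it is exactly the strategy the paper adopts for the correlated analogue \Prop~\ref{lemma_lwc} in \Sec~\ref{sec_lemma_lwc}: \Lem~\ref{lem_lwc_b} gives the first moment by induction on $\ell$ (matching the Poisson offspring numbers layer by layer, as you do), \Lem~\ref{lem_lwc_c} gives the second moment by simultaneously embedding $T$ at two roots, and the proof of \Prop~\ref{lemma_lwc} concludes via Chebyshev. Your argument is correct; the only phrasing worth tightening is ``an additional incidence among boundedly many slots''---the depth-$2\ell$ neighbourhood of $x_j$ is not deterministically bounded, but since you condition on the ball of $x_i$ being isomorphic to the fixed tree $T$ (hence of bounded size), and the ball of $x_j$ has bounded expected size, the intersection probability really is $O(1/n)$ and the factorisation goes through.
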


	As a final preparation we need an upper bound on the maximum variable degree.

	\begin{lemma}\label{lem_lwc_d_old}
		With probability $1-o(n^{-10})$ the degree of any variable node $x_i$, $i  = 1, \ldots, n$ in $\PHI$ is bounded by $\log^2 n$.
	\end{lemma}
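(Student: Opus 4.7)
The plan is to observe that the degree of a single variable is binomially distributed with bounded mean, apply a standard Chernoff tail bound to push the failure probability for one variable below $n^{-11}$, and then take a union bound over all $n$ variables.

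First, I would fix $i\in\{1,\ldots,n\}$ and consider the degree $\vec D_i=|\partial_{\PHI}x_i|$ of variable $x_i$ in $\PHI$. Since each clause $\va_j$ is chosen independently and uniformly from all $4\binom{n}{2}$ possible 2-clauses, the event $\{x_i\in\partial\va_j\}$ has probability $2/n$ regardless of the signs. Hence $\vec D_i$ is distributed as $\mathrm{Bin}(m,2/n)$, with mean $\Erw[\vec D_i]=2m/n\sim d$.

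Next, I would invoke the standard upper-tail Chernoff bound for the binomial: for any $t\geq 2\Erw[\vec D_i]$,
\begin{align*}
\pr\brk{\vec D_i\geq t}\leq\left(\frac{\eul\,\Erw[\vec D_i]}{t}\right)^{t}.
\end{align*}
Setting $t=\log^2 n$ and using $\Erw[\vec D_i]\leq d+o(1)\leq 3$ for large $n$, this yields
\begin{align*}
\pr\brk{\vec D_i\geq\log^2 n}\leq\left(\frac{3\eul}{\log^2 n}\right)^{\log^2 n}=n^{-\omega(1)},
\end{align*}
which in particular is $o(n^{-11})$ for $n$ sufficiently large.

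Finally, a union bound over the $n$ variables gives
\begin{align*}
\pr\brk{\max_{1\leq i\leq n}\vec D_i\geq\log^2 n}\leq n\cdot o(n^{-11})=o(n^{-10}),
\end{align*}
as required. There is no real obstacle here; the only care needed is to verify that the bound one uses for the binomial tail is genuinely superpolynomial at the scale $\log^2 n$, which is immediate from the $(\eul\mu/t)^{t}$ estimate since $\log\log^2 n\to\infty$.
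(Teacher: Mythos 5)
Your proposal is correct and follows essentially the same approach as the paper: the paper's proof likewise notes that the degree of each variable has distribution $\Bin(m,2/n)$ and concludes via the Chernoff bound together with a union bound. You have merely spelled out the tail estimate that the paper leaves implicit.
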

	\begin{proof}
		The number of clauses that contain a given variable $x_i$ has distribution $\Bin(m,2/n)$.
		Therefore, the assertion follows from the Chernoff bound.
	\end{proof}

	\begin{corollary}\label{lem_lwc_d}
		With probability $1-o(n^{-10})$ the degree of any variable node $x_i$, $i  = 1, \ldots, n$ in $(\PHI_1, \PHI_2)$ is bounded by $\log^2 n$.
	\end{corollary}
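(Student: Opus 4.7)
The proposal is to reduce the corollary to the single-formula statement Lemma~\ref{lem_lwc_d_old} via a straightforward union bound. The point is that by construction \eqref{eqPHI12}, for each $h\in\{1,2\}$ the formula $\PHI_h=\PHI_h(M,M')$ consists of $M+M'$ clauses drawn independently and uniformly at random from the set of all $4\binom{n}{2}$ possible 2-clauses over $V_n$. Hence, regardless of the precise value of $M+M'$, the marginal distribution of $\PHI_h$ is that of a random 2-CNF with the corresponding number of independent uniform clauses. Provided $M+M'\leq 2m$ (which is the only relevant case, since we only ever consider $M+M'\leq m$), the argument of \Lem~\ref{lem_lwc_d_old} applies verbatim.

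Concretely, for each fixed $h\in\{1,2\}$ and each $i\in\{1,\dots,n\}$, the number of clauses of $\PHI_h$ containing variable $x_i$ is distributed as $\Bin(M+M',2/n)$, with mean $O(d)=O(1)$. A Chernoff bound then yields
\begin{align*}
\pr\brk{\deg_{\PHI_h}(x_i)>\log^2 n}\leq \exp(-\Omega(\log^2 n))=o(n^{-12}).
\end{align*}
Taking a union bound over the $2n$ pairs $(h,i)$ with $h\in\{1,2\}$ and $i\in\{1,\dots,n\}$ gives that with probability $1-o(n^{-10})$ every variable node has degree at most $\log^2 n$ in both $\PHI_1$ and $\PHI_2$.

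There is really no substantial obstacle: the claim is a direct consequence of the fact that the marginal of each $\PHI_h$ is the standard random 2-CNF to which \Lem~\ref{lem_lwc_d_old} already applies, together with a union bound over $h\in\{1,2\}$. The only bookkeeping detail is to confirm that the notion of ``degree in $(\PHI_1,\PHI_2)$'' is interpreted as the maximum of the two individual degrees (or equivalently the degree in the union of the clause multisets, which increases the total number of clauses by at most a factor of two and hence leaves the Chernoff bound unaffected after adjusting constants).
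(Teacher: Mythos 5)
Your proof is correct and follows the same route as the paper: the key observation that each $\PHI_h$ marginally has the distribution of a plain random 2-CNF, so that \Lem~\ref{lem_lwc_d_old} (Chernoff bound) applies, followed by a union bound over $h\in\{1,2\}$ and $i\in\{1,\dots,n\}$. Your additional bookkeeping remark about the interpretation of ``degree in $(\PHI_1,\PHI_2)$'' is sensible but not needed, since the union bound already covers both formulas simultaneously.
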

	\begin{proof}
		Since $\PHI_1,\PHI_2$ separately are distributed as $\PHI$, the assertion follows from \Lem~\ref{lem_lwc_d_old} and the union bound.
	\end{proof}

	\subsection{Convergence of probability measures}\label{sec_prelim_wasserstein}
	For a measurable subset $\Omega$ of Euclidean space $\RR^k$ we let $\cP(\Omega)$ denote the space of all probability distributions on $\Omega$ equipped with the Borel $\sigma$-algebra.
	Moreover, for $p\geq1$ we define $\cW_p(\Omega)$ to be the set of all $\mu \in \cP(\Omega)$ such that $\int_\Omega\|x\|_2^p\dd\mu(x)<\infty$.
	We equip $\cW_p(\Omega)$ with the Wasserstein metric
	\begin{align}\label{eqWasserstein}
		W_p(\mu, \mu') = \inf_{\vX,\vX'}\Erw\brk{\|\vec X-\vec X'\|_2^p}^{1/p}&&
		(\mu, \mu' \in \cW_p(\Omega)),
	\end{align}
	where the infimum is taken over all pairs of random variables $\vX,\vX'$ that are defined on some common probability space such that $\vX$ has distribution $\mu$ and $\vX'$ has distribution $\mu'$.

	The infimum in \eqref{eqWasserstein} is attained for any $\mu,\mu'$.
	Random vectors $\vec X, \vec X'$ for which the infimum is attained are called \textit{optimal couplings}.
	Such optimal couplings exist for all $\mu, \mu'$~\cite{Bickel}.

	The spaces $(\cW_p(\Omega), W_p)$ are complete metric spaces~\cite{Bickel}.
	Finally, convergence in  $(\cW_p(\Omega), W_p)$ implies weak convergence of the corresponding probability measures.

	For a measure $\rho \in \cP(\Omega)$ and a measurable function $f: \Omega \to \Omega'$ from $\Omega$ to another probability space $\Omega'$ we denote by $f(\rho)$  the pushforward measure of $\rho$.
	Thus, the measure $f(\rho)$ that assigns mass $\rho(f^{-1}(A))$ to measurable $A \subseteq \Omega'$.

	Throughout the paper we let
	\begin{align*}
		\mypsi: \RR^2 \to (0,1)^2, \mypsi\begin{pmatrix}x_1 \\ x_2\end{pmatrix} = \begin{pmatrix} \frac{1+\tanh(x_1/2)}{2}\\ \frac{1+\tanh(x_2/2)}{2}\end{pmatrix}, \qquad
		\myphi: (0,1)^2 \to \RR^2,  \myphi\begin{pmatrix}x_1 \\ x_2\end{pmatrix} = \begin{pmatrix} \log\frac{x_1}{1-x_1}\\ \log \frac{x_2}{1-x_2}\end{pmatrix}.
	\end{align*}

	\section{Proof of \Prop~\ref{prop_arnab}}\label{sec_prop_arnab}

	\noindent
	In this section we estimate the difference between the number of satisfying assignments of the pruned random formula $\hPHI$ and the original formula $\PHI$.
	We begin with a basic observation about the Unit Clause Propagation algorithm, and then estimate the number of clauses that the pruning process removes.
	Apart from proving \Prop~\ref{prop_arnab}, the considerations in this section also pave the way for the proof of the variance formula in \Sec~\ref{sec_prop_var}.

	\subsection{Tracing Unit Clause Propagation}\label{sec_prune_ucp}
	For a 2-CNF $\Phi$ and a set of literals $\cL_0$ consider a set of conflict clauses $\cC=\cC(\Phi,\cL_0)$ that \UCP\ produces along with a set $\cV=\cV(\Phi,\cL_0)$ of conflict variables.
	Let $\Phi-\cC$ be the formula obtained from $\Phi$ by deleting the clauses from $\cC$.
	Clearly $Z(\Phi)\leq Z(\Phi-\cC)$.
	Conversely, the following lemma puts a bound on how much bigger $Z(\Phi-\cC)$ may be.

	\begin{lemma}
		\label{lem_forced_remove}
		Assume that $\Phi$ is a satisfiable 2-CNF.
		For any set $\cL_0$ of literals we have
		\begin{align}\label{eq_lem_forced_remove}
			Z(\Phi-\cC(\Phi,\cL_0))\leq 2^{|\cV(\Phi,\cL_0)| \cdot \vecone\{\cC(\Phi,\cL_0) \not= \emptyset\}}Z(\Phi).
		\end{align}
	\end{lemma}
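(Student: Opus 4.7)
If $\cC(\Phi,\cL_0)=\emptyset$ the claim is trivial since then $\Phi-\cC(\Phi,\cL_0)=\Phi$. So we may assume $\cC=\cC(\Phi,\cL_0)\neq\emptyset$ and abbreviate $\cV=\cV(\Phi,\cL_0)$, $\cV_0=\cV_0(\Phi,\cL_0)$. The plan is to construct an injection $S(\Phi-\cC)\hookrightarrow S(\Phi)\times\PM^\cV$, from which \eqref{eq_lem_forced_remove} follows by counting the codomain.

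Fix once and for all a reference $\tau^\star\in S(\Phi)$, which exists by hypothesis on $\Phi$. For every $\sigma\in S(\Phi-\cC)$ define $\tau=\tau(\sigma)\in\PM^{V(\Phi)}$ by copying $\sigma$ on $V(\Phi)\setminus\cV$, copying the \UCP-assignment $\sigma_{\Phi,\cL_0}$ on $\cV\setminus\cV_0$, and copying $\tau^\star$ on $\cV_0$. Then $\sigma\mapsto(\tau(\sigma),\sigma|_\cV)$ is manifestly injective because $\tau(\sigma)$ restricts to $\sigma$ on $V(\Phi)\setminus\cV$ and the second coordinate restores $\sigma|_\cV$. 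So the only real content is to verify that $\tau(\sigma)\in S(\Phi)$ for every $\sigma\in S(\Phi-\cC)$.

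This verification proceeds by a case distinction on a clause $a\equiv\epsilon_x x\vee\epsilon_y y\in F(\Phi)$ according to which of the sets $V(\Phi)\setminus\cV$, $\cV\setminus\cV_0$, $\cV_0$ contain $x$ and $y$. The two ``pure'' cases $\partial a\subset V(\Phi)\setminus\cV$ and $\partial a\subset\cV_0$ are immediate: in the first $a\notin\cC$ and $\tau$ agrees with $\sigma$ on $\partial a$; in the second $\tau$ agrees with $\tau^\star$ on $\partial a$. All remaining mixed configurations are handled by one structural observation drawn from the forward-propagation property of \UCP\ (\Lem~\ref{lem_ucp_reachable}): if $-\epsilon_y y\in\cL(\Phi,\cL_0)$ then the implication $\neg(\epsilon_y y)\to\epsilon_x x$ encoded by $a$ forces $\epsilon_x x\in\cL(\Phi,\cL_0)$, whence $x\in\cV$. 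The contrapositive rules out the configuration $x\in V(\Phi)\setminus\cV$, $y\in\cV_0$ outright (since $y\in\cV_0$ entails $-\epsilon_y y\in\cL$), and in every surviving configuration the same observation forces $\sigma_{\Phi,\cL_0,y}=\sign(y,a)$ for some $y\in\partial a\cap(\cV\setminus\cV_0)$, which makes $\tau$ satisfy $a$.

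The satisfiability hypothesis on $\Phi$ is used exactly once, to ensure that $\tau^\star|_{\cV_0}$ simultaneously satisfies all clauses in $\cC$ in the second ``pure'' case. We expect the main (and essentially only) obstacle to be orchestrating the case distinction cleanly: the genuine content is the one observation on \UCP's forward propagation, which precisely matches the configurations in which $\tau$ could otherwise fail to satisfy a clause of $\Phi$.
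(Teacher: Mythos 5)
Your proof is correct, but it takes a genuinely different route from the paper's. The paper's proof factors through an auxiliary formula $\Phi_0$ on the variable set $\cV$ (containing all internal clauses plus unit clauses encoding boundary forcings) and establishes, via the bicycle characterisation (Fact~\ref{fact_bicycle}), the intermediate Claim~\ref{claim_forced_remove} that $\Phi_0$ admits a satisfying assignment $\tau$ agreeing with the \UCP\ output on $\cV\setminus\cV_0$; the extension $\chi''$ is then built from this $\tau$. Your proof skips both the construction of $\Phi_0$ and the bicycle argument entirely: on $\cV_0$ you simply borrow a fixed reference assignment $\tau^\star\in S(\Phi)$ (available because $\Phi$ is satisfiable), and on $\cV\setminus\cV_0$ you use the \UCP\ output directly. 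The verification that $\tau(\sigma)\in S(\Phi)$ then reduces to the two trivial pure cases plus a single use of the forward-propagation property (\Lem~\ref{lem_ucp_reachable}), which both rules out the ``$\cV_0$ meets $V\setminus\cV$'' configuration and forces the \UCP\ value on any $\cV\setminus\cV_0$ endpoint to satisfy the clause. This is more elementary than the paper's argument: it trades the bicycle machinery for a straightforward case check, at the cost of introducing the essentially arbitrary $\tau^\star$. The paper's Claim~\ref{claim_forced_remove} is slightly stronger as a standalone statement, but since it is not reused elsewhere, your shortcut incurs no loss. Both proofs bound $Z(\Phi-\cC)$ by exhibiting a suitable map into $S(\Phi)\times\PM^\cV$; the paper phrases it as a $2^{|\cV|}$-to-one surjection onto extendible restrictions, while you phrase it as an explicit injection, which is the cleaner presentation.
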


	Towards the proof of \Lem~\ref{lem_forced_remove} let $\cL=\cL(\Phi,\cL_0)$ be the final set of literals that \UCP\ produces.
	Moreover, let $\sigma:\cV\to\{0,\pm1\}$ be the function that \UCP\ outputs and let $\cV_0=\{x\in\cV:\sigma_x=0\}$.
	Further, let $\Phi_0$ be a CNF with variable set $\cV$ that contains the following clauses:
	\begin{enumerate}[(i)]
		\item any clause $a\in F(\Phi)$ with $\partial a\subset\cV$,
		\item a unit clause $l$ for every literal $l$ with $|l|\in\cV$ such that $\Phi$ contains a clause $a\equiv l\vee l'$ with $|l'|\not\in\cV$.
	\end{enumerate}
	Thus, $\Phi_0$ contains clauses of length one or two.

	\begin{claim} \label{claim_forced_remove}
		The formula $\Phi_0$ possesses a satisfying assignment $\tau$ such that $\tau_x=\sigma_x\mbox{ for all }x\in\cV\setminus\cV_0.$
	\end{claim}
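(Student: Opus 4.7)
The plan is to produce $\tau$ as the restriction to $\cV$ of a carefully chosen satisfying assignment $\sigma^*$ of $\Phi$ itself. For $x\in\cV\setminus\cV_0$ let $\sigma_x\cdot x$ denote the literal $x$ if $\sigma_x=+1$ and $\neg x$ if $\sigma_x=-1$, and set $\cL^*:=\{\sigma_x\cdot x : x\in\cV\setminus\cV_0\}$. By construction $\cL^*\subseteq\cL$ while $\{\neg l:l\in\cL^*\}\cap\cL=\emptyset$. The heart of the argument is to prove that $\Phi':=\Phi\cup\cL^*$ (viewing the elements of $\cL^*$ as unit clauses) is satisfiable; any $\sigma^*\in S(\Phi')$ then automatically satisfies $\sigma^*_x=\sigma_x$ for every $x\in\cV\setminus\cV_0$, and setting $\tau:=\sigma^*|_\cV$ will secure the required agreement.

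To prove $\Phi'$ satisfiable I invoke the bicycle characterization (\Fact~\ref{fact_bicycle}): suppose towards contradiction that $\Phi'$ admits a bicycle through literals $l_0$ and $\neg l_0$. Such a bicycle corresponds to a closed walk through $l_0$ and $\neg l_0$ in the augmented implication graph $G^+$ obtained from $\Phi$'s implication graph by adjoining, for every $\hat l\in\cL^*$, an edge $\neg\hat l\to\hat l$. The key consequence of \Lem~\ref{lem_ucp_reachable} is that any $\Phi$-path starting at a literal in $\cL$ stays in $\cL$; in particular, since $\neg\hat l\notin\cL$ for every $\hat l\in\cL^*$, no such path can reach any $\neg\hat l$.

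A case analysis on $l_0$ then delivers the contradiction. If $l_0\in\cL^*$, either the path $l_0\to\neg l_0$ in $G^+$ avoids new edges---whence $l_0\to\neg l_0$ in $\Phi$, so $\neg l_0\in\cL$, contradicting $l_0\in\cL^*$---or its first new edge is $\neg\hat l\to\hat l$ for some $\hat l\in\cL^*$, and the $\Phi$-prefix $l_0\to\neg\hat l$ forces $\neg\hat l\in\cL$, contradicting $\hat l\in\cL^*$. The case $\neg l_0\in\cL^*$ is symmetric. If $|l_0|\notin\cV$ (so neither $l_0$ nor $\neg l_0$ lies in $\cL$), then examining the first new edge on whichever half of the cycle uses one, and contraposing its $\Phi$-prefix, places one of $l_0,\neg l_0$ into $\cL$---a contradiction---while if neither half uses a new edge the cycle lies in $\Phi$ itself, contradicting the satisfiability of $\Phi$. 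Finally, if $|l_0|\in\cV_0$ (so both $l_0,\neg l_0\in\cL$), the first-new-edge argument applied directly (not contrapositively) rules out new edges on either half, once again reducing to a $\Phi$-bicycle.

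With $\sigma^*$ secured I set $\tau:=\sigma^*|_\cV$ and verify $\tau\in S(\Phi_0)$. Clauses of type (i) are satisfied because $\sigma^*\in S(\Phi)$ and both their variables lie in $\cV$. For unit clauses of type (ii), a short preliminary observation---if $\Phi$ contains $l\vee l'$ with $|l'|\notin\cV$ and $\neg l\in\cL$, then UCP would propagate $l'$ into $\cL$, contradicting $|l'|\notin\cV$---yields $\neg l\notin\cL$, whence $|l|\in\cV\setminus\cV_0$ and $\sigma_{|l|}$ matches the sign of $l$, so $\tau_{|l|}=\sigma_{|l|}$ makes the unit clause true. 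The only genuine obstacle is the bicycle analysis of the preceding two paragraphs; everything else is bookkeeping.
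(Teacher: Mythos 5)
Your proof is correct, and it takes a related but genuinely distinct route from the paper's. Both arguments adjoin unit clauses encoding the $\sigma$-values on $\cV\setminus\cV_0$ and invoke Fact~\ref{fact_bicycle} together with \Lem~\ref{lem_ucp_reachable}, but the base formula differs: the paper augments $\Phi_0$ to get $\Phi_1$ and proves $\Phi_1$ satisfiable directly, whereas you augment the full $\Phi$ to get $\Phi'$, extract $\sigma^*\in S(\Phi')$, and restrict to $\cV$. Working with $\Phi$ buys you a cleaner bicycle analysis: the only unit clauses in $F(\Phi')\setminus F(\Phi)$ are the $\sigma$-based ones from $\cL^*$, so their negations all lie outside $\cL$, while the paper's $F(\Phi_1)\setminus F(\Phi)$ also contains the type-(ii) boundary unit clauses of $\Phi_0$, which the paper must handle separately. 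Consequently your argument is local — a four-way case analysis on where $l_0$ sits relative to $\cL^*$, $\cV_0$ and $\cV$, each case refuted by examining the first new edge (directly or contrapositively) on one arc of the bicycle — whereas the paper's argument is global, propagating $\cL$-membership around the entire implication cycle and its contraposition to conclude $|l_i|\in\cV_0$ for all $i$ before ruling out any non-$\Phi$ unit clause. The price you pay is the extra final step of checking that $\tau=\sigma^*|_{\cV}$ satisfies the type-(ii) unit clauses of $\Phi_0$, which you handle correctly by showing each such $l$ lies in $\cL^*$ (via the propagation observation that $\neg l\in\cL$ would force $|l'|\in\cV$). Both proofs are sound and of comparable length; yours is arguably somewhat more transparent because $\Phi'$'s structure is simpler than $\Phi_1$'s.
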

	\begin{proof}
		Obtain a formula $\Phi_1$ by adding to $\Phi_0$ a unit clause $x$ for every variable $x\in\cV$ with $\sigma_x=1$ and a unit clause $\neg x$ for every $x\in\cV$ with $\sigma_x=-1$.
		Then we just need to show that $\Phi_1$ is satisfiable.

		Assume otherwise.
		Then by Fact~\ref{fact_bicycle} $\Phi_1$ contains a bicycle $l_0,a_1,l_1,a_2,\ldots,a_k,l_k$.
		This bicycle is logically equivalent to an implication chain
		\begin{align}\label{eq_lem_forced_remove_101}
			l_0\to l_1\to\cdots\to\neg l_0\to\cdots\to l_{k-1}\to l_k=l_0.
		\end{align}
		The contraposition of this chain reads
		\begin{align}\label{eq_lem_forced_remove_102}
			\neg l_0=\neg l_k\to\neg l_{k-1}\to\cdots\to l_0\to\cdots\to\neg l_1\to\neg l_0.
		\end{align}
		Since $\Phi$ is satisfiable, Fact~\ref{fact_bicycle} shows that the bicycle~\eqref{eq_lem_forced_remove_101} cannot be contained in $\Phi$.
		Therefore, the bicycle contains a unit clause $l_i\in F(\Phi_1)\setminus F(\Phi)$ for some $1\leq i\leq k$.
		Hence, the constructions of $\Phi_0$ and $\Phi_1$ ensure that $l_i\in\cL(\Phi,\cL_0)$.
		Indeed, letting $\cU$ be the the set of all literals $l_i$ that appear in~\eqref{eq_lem_forced_remove_101} as unit clauses, we obtain $\cU\subset\cL(\Phi,\cL_0)$.

		We claim that in fact $l_0,\ldots,l_k\in\cL(\Phi,\cL_0)$.
		To see this, pick any $0\leq j\leq k$ such that $l_j$ does not appear as a unit clause in $\Phi_1$.
		Define $l_{-i}=l_{k-i}$ for $0\leq i< k$ and let $1-k\leq i<j$ be the largest index such that $l_i\in\cU$.
		Then $\Phi$ contains the implication chain $l_i\to\cdots\to l_j$.
		Therefore, \Lem~\ref{lem_ucp_reachable} implies that $l_j\in\cL(\Phi,\cL_0)$.
		Analogously, considering the contraposition~\eqref{eq_lem_forced_remove_102}, we conclude that the negations of the literals $l_0,\ldots,l_k$ belong to $\cL(\Phi,\cL_0)$.
		In summary,
		\begin{align}\label{eq_lem_forced_remove_103}
			l_0,\neg l_0,\ldots,l_k,\neg l_k\in\cL(\Phi,\cL_0).
		\end{align}

		But~\eqref{eq_lem_forced_remove_103} implies that $|l_0|,\ldots,|l_k|\in\cV_0$.
		Consequently, none of these literals belongs to a unit clause $u\in F(\Phi_1)\setminus F(\Phi_0)$.
		Furthermore, none of the literals $l_i,\neg l_i$ belongs to a unit clause $a\in F(\Phi_0)\setminus F(\Phi)$.
		This is because if $\Phi$ contains a clause $l_i\vee l'$ or $\neg l_i\vee l'$ and $l_i,\neg l_i\in\cL(\Phi,\cL_0)$, then \UCP\ added $l'$ to $\cL(\Phi,\cC_0)$ as well.
		Thus, we conclude that the bicycle \eqref{eq_lem_forced_remove_101}	consists of clauses of $\Phi$ only.
		But by Fact~\ref{fact_bicycle} this contradicts the fact that $\Phi$ is satisfiable.
	\end{proof}

	\begin{proof}[Proof of \Lem~\ref{lem_forced_remove}]
		Clearly, if $\cC(\Phi,\cL_0) = \emptyset$, the statement is true.
		Hence, assume that $\cC(\Phi,\cL_0) \neq \emptyset$ and let $\tau$ be a satisfying assignment of $\Phi_0$ from Claim~\ref{claim_forced_remove}.
		Consider a satisfying assignment $\chi$ of $\Phi-\cC$ and let $\chi':V(\Phi)\setminus\cV\to\PM$ be the restriction of $\chi$ to $V(\Phi)\setminus\cV$.
		We extend $\chi'$ to a satisfying assignment $\chi''$ of $\Phi$ by letting
		\begin{align*}
			\chi''_x&=\vecone\{x\in\cV\}\tau_x+\vecone\{x\not\in\cV\}\chi'_x;
		\end{align*}
		clearly, $\chi''$ satisfies all clauses $a$ such that $\partial a\cap\cV=\emptyset$, because all these clauses are contained in $\Phi-\cC$.
		Moreover, $\chi''$ satisfies all $a$ such that $\partial a\subset\cV$, because these clauses belong to $\Phi_0$.
		Further, if $a=l\vee l'$ is a clause such that $|l|\in\cV$ but $|l'|\not\in\cV$, then $\tau_{|l|}=\sigma_{|l|}$.
		Since $|l'|\not\in\cV$, this means that $\sigma_x=\sign(|l|,a)$, as otherwise \UCP\ would have added $l'$ to $\cL$.
		Therefore, $\chi''_{|l|}=\tau_{|l|}=\sigma_{|l|}$ satisfies $a$.
		Since the map $\chi\mapsto\chi'$ only discards the values of the variables in $\cV$, we obtain the bound \eqref{eq_lem_forced_remove}.
	\end{proof}

	%
	%

	%

	\subsection{Cycles in random formulas}\label{sec_topology} To prove \Prop~\ref{prop_arnab} we need a good estimate of the total number of clauses that will be removed from $\PHI$ to obtain $\hPHI$.
	This estimate is provided by the following lemma.

	\begin{lemma}\label{lem_unicycle}
		Fix any $\delta>0$.
		With probability $1-o(n^{-1})$ the number of literals $l$ such that $\cC(\PHI,\cbc l)\neq\emptyset$ is smaller than $n^\delta$.
	\end{lemma}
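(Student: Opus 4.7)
The plan is to reduce the claim to showing that, for any fixed literal $l$, the event $\{\cC(\PHI,\{l\}) \neq \emptyset\}$ forces the UCP exploration from $l$ to close a cycle in the bipartite clause--variable incidence graph of $\PHI$, which is a rare event in a sparse random formula. The argument then proceeds via a first-moment estimate followed by a higher-moment boost to upgrade to an $o(n^{-1})$ tail.

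\textbf{Cycle necessity.} Suppose $\cC(\PHI,\{l\}) \neq \emptyset$. Then there is some $u \in \cV_0(\PHI,\{l\})$, meaning both $u$ and $\neg u$ lie in $\cL(\PHI,\{l\})$. By \Lem~\ref{lem_ucp_reachable} there exist implication chains from $l$ to both $u$ and $\neg u$. Translating each chain to the bipartite incidence graph yields two paths from $|l|$ to $|u|$ whose terminal clauses differ, so their union contains a cycle residing inside the sub-formula $\PHI[\cV(\PHI,\{l\}) \cup \{|l|\}]$. In particular, if the incidence graph of this sub-formula is a forest, no conflict can arise.

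\textbf{Per-literal bound.} By \Cor~\ref{lem_bicycle}, $|\cV(\PHI,\{l\})| \leq \log^2 n$ for all literals $l$ with probability $1 - o(n^{-2})$; condition on this event. I would then carry out the UCP exploration from $l$ with deferred revelation of clauses: at each step we inspect the clauses containing the frontier literal's negation and expose their ``other'' variables, which are approximately uniform on $V_n \setminus \cV$. Hence the probability of colliding with an already-explored variable is at most $\log^2 n/(n-1) + o(1/n)$. A union bound over the $O(\log^2 n)$ clauses revealed during the exploration yields
\begin{align*}
\pr\!\left[\cC(\PHI,\{l\}) \neq \emptyset\right] \;\leq\; \pr\!\left[\text{exploration closes a cycle}\right] \;=\; O\!\left(\log^4 n / n\right).
\end{align*}
Summing over the $2n$ literals gives $\ex X = O(\log^4 n)$, where $X$ denotes the number of literals $l$ with $\cC(\PHI,\{l\}) \neq \emptyset$.

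\textbf{Higher moments.} Fix $k = \lceil 2/\delta \rceil$ and write $A_l = \{\cC(\PHI,\{l\}) \neq \emptyset\}$. I decompose
\begin{align*}
\ex\bigl[X^k\bigr] \;=\; \sum_{l_1,\dots,l_k}\pr\!\left[\bigcap_{i=1}^{k} A_{l_i}\right]
\end{align*}
according to how the $k$ UCP explorations overlap. For $k$-tuples with pairwise disjoint exploration neighbourhoods the events $A_{l_i}$ are conditionally independent, and the contribution is at most $(2n)^k \cdot (C\log^4 n / n)^k = (2C \log^4 n)^k$. Overlapping tuples are rarer by a factor of $n$ for each enforced overlap while gaining only a bounded factor in the joint collision probability, so they are of lower order. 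This yields $\ex[X^k] = O((\log^4 n)^k)$, and Markov's inequality gives
\begin{align*}
\pr\!\left[X > n^\delta\right] \;\leq\; \frac{\ex[X^k]}{n^{k\delta}} \;=\; O\!\left(\frac{(\log n)^{O(k)}}{n^{k\delta}}\right) \;=\; o\!\left(n^{-1}\right),
\end{align*}
since $k\delta \geq 2$. Combined with the $o(n^{-2})$ failure probability of the conditioning event, the lemma follows.

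The main obstacle lies in the higher-moment bookkeeping: one must carefully enumerate the overlap patterns of $k$ joint UCP explorations and verify that each enforced overlap contributes at most a bounded multiplicative factor against the lost literal choices. The cycle-necessity step is a short combinatorial observation via \Lem~\ref{lem_ucp_reachable}, and the per-literal probability estimate is a standard deferred-decision calculation once the $\log^2 n$ size bound from \Cor~\ref{lem_bicycle} is in place.
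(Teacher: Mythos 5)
Your overall strategy --- bound a high moment of the count $X$ of conflicted literals and apply Markov, using a breadth-first exploration of \UCP\ together with a per-step collision-probability estimate --- is the same as the paper's, and your per-literal first-moment bound is essentially correct. The gap is in the higher-moment step, which you yourself flag as ``the main obstacle''. Writing $A_l=\{\cC(\PHI,\cbc l)\neq\emptyset\}$, you assert that each enforced overlap between the explorations of $l_i$ and $l_j$ costs a factor of $n$ in literal choices while gaining only a \emph{bounded} factor in the joint probability $\pr[A_{l_i}\cap A_{l_j}]$. That is not correct as stated: if $l_j\in\cL(\PHI,\cbc{l_i})$, then by the transitivity encoded in \Lem~\ref{lem_ucp_reachable} one has $\cL(\PHI,\cbc{l_j})\subseteq\cL(\PHI,\cbc{l_i})$ and hence $\cC(\PHI,\cbc{l_j})\subseteq\cC(\PHI,\cbc{l_i})$, so $A_{l_j}$ already implies $A_{l_i}$. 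For such pairs the joint probability is of order $\pr[A_{l_j}]=O(\log^4 n/n)$, a factor of order $n/\log^4 n$ above the disjoint-case value $O(\log^8 n/n^2)$ --- nowhere near a bounded factor. The count of such overlapping pairs is cut by a comparable $n/\log^2 n$ factor, so the overlap contribution is indeed lower order, but that conclusion requires an actual balancing calculation and not the heuristic you state; as presented the justification would not survive.

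The paper's proof sidesteps this overlap case analysis entirely. Rather than bounding $\ex[X^k]$ through $k$ separate explorations, it bounds the $\ell$-th falling-factorial moment $\ex\brk{\prod_{i=1}^\ell(\vR-i+1)}\le(2n)^\ell\pr\brk{\bigcap_{i=1}^\ell\{\cC(\PHI,\cbc{x_i})\neq\emptyset\}}$ by running a \emph{single} breadth-first \UCP\ exploration seeded simultaneously at $x_1,\dots,x_\ell$. The target event forces at least $\ell$ ``collision'' steps (steps at which a newly revealed clause closes on an already-discovered variable), so the probability is at most $\pr[\vS\geq\ell]$ with $\vS$ the collision count, which in turn is estimated by summing over ordered $\ell$-tuples of collision times using the uniform per-step bound $L/n$ with $L=\ell\log^2 n$. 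The joint exploration absorbs the inter-literal correlations automatically; supplying that observation, or else carrying out your overlap bookkeeping rigorously, is the missing piece of your argument.
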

	\begin{proof}
		Let $\vR$ be the number of literals $l$ such that $\cC(\PHI,\cbc l)\neq\emptyset$.
		We are going to show that for any fixed (i.e., $n$-independent) $\ell\geq1$ for large enough $n$ we have,
		\begin{align}\label{eq_lem_unicycle_1}
			\ex\brk{\prod_{i=1}^\ell(\vR-i+1)}\leq\bc{\ell\log^3n}^\ell.
		\end{align}
		Providing $\ell\geq2/\delta$ and $n$ is sufficiently large, Markov's inequality then shows that
		\begin{align*}
			\pr\brk{\vR\geq n^\delta}&\leq\pr\brk{\prod_{i=1}^\ell(\vR-i+1)\geq (n^\delta/2)^\ell}\leq\bcfr{2\ell\log^3n}{n^\delta}^\ell=o(n^{-1}),
		\end{align*}
		which implies the assertion.

		Thus, we are left to prove \eqref{eq_lem_unicycle_1}.
		By symmetry it suffices to bound the probability of the event $$\fE=\bigcap_{i=1}^\ell \{\cC(\PHI,\{x_i\}) \not= \emptyset \}$$ that \UCP\ will produce at least one conflict clause from each of the literals $x_1,\ldots,x_\ell$; then
		\begin{align}\ex\brk{\prod_{i=1}^\ell(\vR-i+1)}\leq (2n)^\ell\pr\brk{\fE}.\label{eq_lem_unicycle_1a}\end{align}

		In order to estimate the probability of $\fE$ we are going to launch \UCP\ from the initial set $\cL=\{x_1,\ldots,x_\ell\}$.
		While the order in which the literals and clauses are processed does not affect the ultimate outcome of \UCP, for the present analysis we assume that \UCP\ processes the literals one at a time, each time pursuing {\em all} the clauses $l\vee \neg l'$ that contain the negation of a specific $l'$.
		We also presume that the literals are processed in the same order as they get inserted into the set $\cL$.
		In other words, \UCP\ proceeds in breadth-first-search order.
		Let $\fH_t$ be the history of the execution of \UCP\ up to and including the point where the first $t$ literals and their adjacent clauses have been explored.
		Formally, $\fH_t$ is the $\sigma$-algebra generated by these first $t$ literals that get added to $\cL$ and their adjacent clauses.

		\Lem~\ref{lem_forced_tail} implies that with probability $1 - o(n^{-1})$ the set $\cL$ returned by \UCP\ has size at most $L = \ell \log^2 n$.
		Let $\fE_t$ be the event that at time $t$ we explored a clause that contains two variables from $\cL$ and $|\cL| \leq L$.
		Moreover, let $\vS = \sum_t \vecone\{\fE_t\}$.
		Let $0<t_1 <  \ldots < t_\ell\leq L$ be distinct time steps.
		Then
		\begin{align}\label{eq_lem_unicycle_2}
			\pr\brk{ \fE } \leq \pr\brk{ \vS \geq \ell } \leq \sum_{t_1,\ldots,t_\ell} \pr\brk{ \bigcap_{i=1}^\ell \fE_{t_i} } = \sum_{t_1,\ldots,t_\ell}  \prod_{i=1}^\ell \pr\brk{ \fE_{t_i} \Big| \bigcap_{j=1}^{i-1} \fE_{t_j} }.
		\end{align}
		To bound the r.h.s.\ of \eqref{eq_lem_unicycle_2} we will estimate the probability of $\fE_{t+1}$ given the history $\fH_t$ of the process up to time $t$, showing that for all $t\geq0$,
		\begin{align}\label{eq_lem_unicycle_3}
			\pr\brk{ \fE_{t + 1} | \fH_t} \leq {L \over n}.
		\end{align}
		In fact the probability that in step $t + 1$ we will run into already discovered variable is bounded by the probability that the literal explored during that step shares a clause with an already explored variable, which is bounded by $Lm/n\leq L/n$; for if more than $L$ literals have been already explored the event $\fE_{t+1}$ does not occur by definition.
		Finally, because the event $\fE_t$ is $\fH_t$-measurable, \eqref{eq_lem_unicycle_3} implies
		\begin{align*}
			\prod_{i=1}^\ell \pr\brk{ \fE_{t_i} \Big| \bigcap_{j=1}^{i-1} \fE_{t_j}} \leq \left( {L \over n} \right)^{\ell}.
		\end{align*}
		Thus \eqref{eq_lem_unicycle_2} gives $\pr\brk{ \fE } \leq {L \choose \ell} {L^\ell \over n^\ell} \leq \bc{\eul L^2 \over \ell n }^\ell$, which together with~\eqref{eq_lem_unicycle_1a} implies \eqref{eq_lem_unicycle_1}.
	\end{proof}

	\begin{proof}[Proof of \Prop~\ref{prop_arnab}]
		\Lem~\ref{cor_bicycle} shows that $\pr\brk{Z(\PHI)=0}=o(n^{-1/2})$.
		Thus, we may condition on the event that $\PHI$ is satisfiable.
		Furthermore, \Lem~\ref{lem_forced_remove} shows that given that $\PHI$ is satisfiable we have
		\begin{align}\label{eq_prop_arnab_1}
			\log Z(\hPHI)-\log Z(\PHI)\leq\sum_{i=1}^n\vecone\{\cC(\PHI,\{x_i\})\neq\emptyset\}|\cV(\PHI,\{x_i\})|+\vecone\{\cC(\PHI,\{\neg x_i\})\neq\emptyset\}|\cV(\PHI,\{\neg x_i\})|.
		\end{align}
		Finally, \Cor~\ref{lem_bicycle} and \Lem~\ref{lem_unicycle} (applied with $\delta<1/3$) imply that with probability $1-o(n^{-1/2})$,
		\begin{align}\label{eq_prop_arnab_2}
			\sum_{i=1}^n&\vecone\{\cC(\PHI,\{x_i\})\neq\emptyset\}|\cV(\PHI,\{x_i\})|+\vecone\{\cC(\PHI,\{\neg x_i\})\neq\emptyset\}|\cV(\PHI,\{\neg x_i\})|\\
			&\leq\bc{\abs{\cbc{x\in V_n:\cC(\PHI,\{x\})\neq\emptyset}}+\abs{\cbc{x\in V_n:\cC(\PHI,\{\neg x\})\neq\emptyset}}}
			\max_{1\leq i\leq n}\abs{\cV(\PHI,\{x_i\})|+|\cV(\PHI,\{\neg x_i\})}=o(n^{1/3}).\nonumber
		\end{align}
		Thus, the assertion follows from \eqref{eq_prop_arnab_1}--\eqref{eq_prop_arnab_2}.
	\end{proof}

	\section{Proof of \Prop~\ref{lemma_lwc}}\label{sec_lemma_lwc}

	\noindent
	The proof of \Prop~\ref{lemma_lwc} is based on a combination of  a coupling and a second moment argument.
	As a first step we observe that we do not need to worry about trees of very high maximum degree.

	\begin{lemma}\label{lem_lwc_a}
		For any $\eps>0$, $\ell\geq0$ there exists $L>0$ such that for all $t\in[0,1]$ with probability at least $1-\eps$ the tree $\vT^{\tensor,\,(2\ell)}$ has maximum degree less than $L$.
	\end{lemma}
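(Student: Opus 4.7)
The plan is to reduce the statement to a uniform-in-$t$ stochastic bound on the offspring distribution at variable nodes and then apply a first-moment union bound over the (random) node set of $\vT^{\tensor,(2\ell)}$. The key observation is that for every $t\in[0,1]$ and every variable type, the total number of clause children of a variable node is Poisson with rate at most $2d$: a shared variable produces $\Po(d(2-t))$ clause children in total, while a $1$-distinct or $2$-distinct variable produces $\Po(d)$. Every clause node contributes the constant degree $2$ (one parent variable, one child variable), so only variable nodes are relevant for the maximum-degree question as soon as $L\geq 3$.

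First I would control the expected size of the truncated tree uniformly in $t$. Writing $Z_k$ for the number of nodes at depth $k$, the stochastic domination above together with the deterministic fact that each clause has exactly one child variable yields $\Erw[Z_{2k+1}]\leq 2d\cdot\Erw[Z_{2k}]$ and $\Erw[Z_{2k+2}]=\Erw[Z_{2k+1}]$. Iterating from $Z_0=1$ gives
\begin{align*}
	\Erw[|V(\vT^{\tensor,(2\ell)})|]\leq C(d,\ell)
\end{align*}
for some constant depending only on $d$ and $\ell$, and, crucially, independent of $t\in[0,1]$.

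Next, I would bound the first moment of the number of variable nodes of large degree. Summing over variable nodes at depths $0,2,\ldots,2\ell-2$ and using that the offspring of each such node, conditional on the tree structure up to its own generation, is Poisson with rate at most $2d$ (and that its total degree differs from its offspring by at most one, accounting for its parent clause), the tower property combined with the preceding size bound yields
\begin{align*}
	\pr\brk{\max_{v\in V(\vT^{\tensor,(2\ell)})}\deg(v)\geq L}\leq C(d,\ell)\cdot\pr\brk{\Po(2d)\geq L-1}
\end{align*}
for every $L\geq 3$. Since $\pr[\Po(2d)\geq L-1]\to 0$ as $L\to\infty$, choosing $L=L(\eps,d,\ell)$ large enough forces the right-hand side below $\eps$, which is what the lemma claims.

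I do not foresee any substantive obstacle. The single point worth verifying carefully is the uniformity in $t\in[0,1]$ of both the expected-size bound and the Poisson rate dominating the offspring, but this is immediate from the explicit Poisson parameters $d(2-t),\,d$ used in the definition of $\vT^\tensor$. The rest is a routine first-moment calculation.
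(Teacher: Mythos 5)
Your proposal is correct, and at its core it rests on exactly the same observation as the paper's proof: uniformly in $t\in[0,1]$, the total number of clause children of any variable node of $\vT^{\tensor,(2\ell)}$ is Poisson with mean at most $2d$ (namely $\Po(d(2-t))$ for shared variables, $\Po(d)$ for $h$-distinct ones, both dominated by $\Po(2d)$), so the per-node degree tail vanishes as $L\to\infty$. The difference is in how the union bound over nodes is executed. You bound $\Erw\brk{\abs{V(\vT^{\tensor,(2\ell)})}}$ by a $t$-independent constant $C(d,\ell)$ via the branching recursion $\Erw[Z_{2k+1}]\leq 2d\,\Erw[Z_{2k}]$, $\Erw[Z_{2k+2}]=\Erw[Z_{2k+1}]$, and then apply a Wald-type first-moment bound $\pr[\max\deg\geq L]\leq C(d,\ell)\,\pr[\Po(2d)\geq L-1]$; this works because, conditionally on a node being present at a given depth, its offspring count is a fresh Poisson independent of the past, and Poisson is stochastically monotone in its mean. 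The paper instead conditions on all earlier generations having degree $<L$ (so there are at most $L^{2\ell}$ nodes to control — the "chain rule starting from the root") and uses Bennett's inequality for a quantitative tail $\exp(-L^2/(4d+L))$. Both arguments are elementary and sound; yours avoids the conditional union bound and is arguably a touch cleaner, while the paper's gives an explicit rate in $L$. There is no gap in your reasoning.
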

	\begin{proof}
		The construction of the tree $\vT^{\tensor,\, (2\ell)}$ in \Sec~\ref{sec_lwc} ensures that every variable node has a Poisson number of clauses as offspring.
		The mean of this Poisson variable is always bounded by $2d$.
		Hence, Bennett's inequality shows that for any $L>2d$ the probability that a specific variable has more than $L$ offspring is bounded by $\exp(-L^2/(4d+L))$.
		Thus, choosing $L$ sufficiently large so that $\eps>L^{2\ell}\exp(-L^2/(4d+L))$ and applying the union bound, we obtain the assertion (combined with the chain rule starting from the root).
	\end{proof}

	Thus, in the following we confine ourselves to trees $T$ with a maximum degree bounded by a large enough number $L$.
	First we are going to count the number of copies of such trees $T$ in $(\PHI_1(M,M'),\PHI_2(M,M'))$ via the method of moments.
	The following lemma estimates the first moment.

	\begin{lemma}\label{lem_lwc_b}
		For any fixed integers $L,\ell$, 
		any possible outcome $T$ of $\vT^{\tensor,\,(2\ell)}$ of maximum degree at most $L$ and any $M \sim tdn/2$, $M' \sim (1-t) dn/2$ we have
		\begin{align*}
			\ex[\vN^{(2\ell)}(T, (\PHI_1(M, M'), \PHI_2(M, M')))]\sim n\pr\brk{\vT^{\tensor,\,(2\ell)}\ism T}.
		\end{align*}
	\end{lemma}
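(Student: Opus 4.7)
The argument proceeds by symmetry together with a joint Poisson approximation of the breadth-first exploration of $(\PHI_1(M,M'),\PHI_2(M,M'))$ around a fixed variable, paralleling \Lem~\ref{lem_lwc_old} but now with three categories of clauses and three categories of variables. By linearity of expectation and symmetry of the construction in~\eqref{eqPHI12},
\begin{align*}
	\ex[\vN^{(2\ell)}(T,(\PHI_1(M,M'),\PHI_2(M,M')))]
	= n\cdot\pr\brk{\partial^{\leq 2\ell}_{(\PHI_1,\PHI_2)}x_1\cong T},
\end{align*}
where the isomorphism respects the shared / $1$-distinct / $2$-distinct typing of Section~\ref{sec_lwc}. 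It therefore suffices to show that $\pr[\partial^{\leq 2\ell}_{(\PHI_1,\PHI_2)}x_1\cong T]\sim\pr[\vT^{\tensor,(2\ell)}\cong T]$.

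I would expose the joint depth-$2\ell$ neighbourhood of $x_1$ by a BFS that reveals one variable's adjacencies at a time. Each of the $M$ shared clauses independently contains $x_1$ with sign $s$ together with a specified second variable of sign $s'$ with probability $(2n(n-1))^{-1}$, so the number of shared clauses at $x_1$ with sign pattern $(s,s')$ is $\Bin(M,1/(2n))$; analogously, the $1$-distinct and $2$-distinct counts are $\Bin(M',1/(2n))$ each. For $M\sim tdn/2$ and $M'\sim(1-t)dn/2$ all twelve of these Binomials converge jointly to independent Poissons of means $dt/4$, $d(1-t)/4$ and $d(1-t)/4$, which matches the offspring rates of a shared variable in $\vT^\tensor$. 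The same Binomial-to-Poisson estimate applies conditionally at every subsequent variable $y$ reached by the BFS, with a case split on the type of $y$: a shared $y$ spawns all three clause types with the rates above, while an $h$-distinct $y$ must (in the isomorphism-class matching $T$) spawn only $h$-distinct children. Since the second endpoint of each freshly exposed clause is uniform on $V_n$ minus the already-seen variables (a set of size at most $|V(T)|\leq L^{2\ell}=O(1)$), each prescribed variable identification along the tree contributes a matching factor $(1+o(1))/n$, while the $(1+o(1))$ slack between $M,M'$ and $tdn/2,(1-t)dn/2$ is absorbed into the Poisson rates.

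The main obstacle is the bookkeeping at $h$-distinct BFS vertices: in the actual random formula an $h$-distinct $y$ may carry shared or $(3-h)$-distinct clauses, which are absent from an $h$-distinct branch of $\vT^\tensor$. The key observation is that any such clause leads, with probability $1-O(|V(T)|/n)=1-o(1)$, to a \emph{fresh} variable outside the already-explored set, which (because $y$ lies outside the depth-$2\ell$ ball of $\hPHI_{3-h}$ around $x_1$) contributes nothing to the joint neighbourhood isomorphism type. Thus the expected number of ``bad'' collisions during the finite exploration is $O(1/n)$, and they can be discarded into an additive $o(1)$ error. Combining this with the joint Poisson convergence and the matching $1/n$ factors for each of the $O(1)$ variable identifications in $T$ yields $\pr[\partial^{\leq 2\ell}_{(\PHI_1,\PHI_2)}x_1\cong T]=(1+o(1))\pr[\vT^{\tensor,(2\ell)}\cong T]$, as desired.
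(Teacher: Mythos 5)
Your overall plan---reduce by symmetry to computing $\pr\brk{\partial^{\leq2\ell}_{(\PHI_1,\PHI_2)}x_1\cong T}$, then expose the neighbourhood step by step and match Binomials to Poissons with the rates of $\vT^\tensor$---is essentially the same as the paper's. The paper organises the exposure by induction on $\ell$, handling the root's children explicitly (where the clause-set labels really do matter) and then invoking \Lem~\ref{lem_lwc_old} for the $h$-distinct grandchild subtrees (which collapse to the single-formula tree $\vT$) and the induction hypothesis for the shared ones.

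However, the ``main obstacle'' you identify is not a real obstacle, and the patch you propose would not even be sound if it were. The typing constraints in the definition of a $2\ell$-instance only bind at \emph{shared} clauses and at \emph{distinct clauses whose parent in $T$ is a shared variable}. For an $h$-distinct variable $\tilde y$ deeper in $T$, the isomorphism $\iota_h$ is allowed to send the children of $\tilde y$ to \emph{any} clauses of $\PHI_h$, whether they come from $\{\va_1,\ldots,\va_M\}$ or from $\{\va'_1,\ldots,\va'_{M'}\}$; the combined rate per sign pattern is $td/4+(1-t)d/4=d/4$, which is exactly the $\Po(d/4)$ offspring rate of an $h$-distinct variable in $\vT^\tensor$. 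So there is nothing to discard. Your proposed resolution---that a shared clause at $\iota_h(\tilde y)$ ``contributes nothing to the joint neighbourhood isomorphism type'' because its second endpoint is a fresh vertex---is incorrect: as long as $\tilde y$ sits at depth $<2\ell$, that clause and the fresh vertex it reaches \emph{are} part of $\partial^{\leq 2\ell}_{\PHI_h}x_1$, and any clause at $\iota_h(\tilde y)$ not accounted for by $T_h$ simply breaks the isomorphism $\partial^{\leq 2\ell}_{\PHI_h}x_1\cong T_h$. Freshness of endpoints is what gives near-independence between branches (i.e.\ rules out collisions with probability $1-O(1/n)$); it does not license discarding excess degree. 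The correct bookkeeping has no $O(1/n)$ bad event to throw away at $h$-distinct vertices at all---the point is that the total degree (shared plus $h$-distinct) is what must match, and it matches $\Po(d/4)$ exactly.
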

	\begin{proof}
		We proceed by induction on $\ell$.
		In the case $\ell=0$ the tree $T$ consists of nothing but the root, so that there is nothing to show.
		Hence, let $\ell \geq 1$.
		Let $\lambda_{0,s_1, s_2}$ be the number of shared children of the root $o$ of $T$ where $o$ appears with sign $s_1 \in \{-1, +1\}$ and the other variable appears with sign $s_2 \in \{-1, +1\}$.
		Also let $\lambda_{h, s_1, s_2}$ be the number of $h$-distinct children of $o$ ($h=1,2$),  where $o$ appears with sign $s_1 \in \{-1, +1\}$ and the other variable appears with sign $s_2 \in \{-1, +1\}$.

		Consider the event $\fE$ that variable $x_1$ is a $2\ell$-instance of $T$.
		Further, consider the event $\fR$ that $x_1$ occurs in precisely $\lambda_{0,s_1, s_2}$ clauses among $\va_1, \ldots, \va_M$, where the sign of $x_1$ is $s_1$ and the sign of the other variable is $s_2$, precisely in $\lambda_{1, s_1, s_2}$ clauses among $\va'_1, \ldots, \va'_{M'}$, where the sign of $x_1$ is $s_1$ and the sign of the other variable is $s_2$ and precisely in $\lambda_{2, s_1, s_2}$ clauses among $\va''_1, \ldots, \va''_{M'}$, where the sign of $x_1$ is $s_1$ and the sign of the other variable is $s_2$.
		Since $M\sim dtn/2$ and $\lambda_{h,\pm1,\pm1}\leq L$ for $h\in\{0,1,2\}$ we have
		\begin{align}\nonumber
			\pr\brk\fR&\sim\prod_{s_1, s_2\in \{\pm1\}} \pr\brk{\Bin(M,(2n)^{-1})=\lambda_{0, s_1, s_2}}\pr\brk{\Bin(m-M,(2n)^{-1})=\lambda_{1, s_1, s_2}}\pr\brk{\Bin(m-M,(2n)^{-1})=\lambda_{2, s_1, s_2}}\\
			&\sim \prod_{s_1, s_2 \in \{\pm1\}}  \pr\brk{\Po(dt/4)=\lambda_{0, s_1, s_2}} \pr\brk{\Po(d(1-t)/4)=\lambda_{1, s_1, s_2}} \pr\brk{\Po(d(1-t)/4)=\lambda_{2, s_1, s_2}}.\label{eq_lem_lwc_b_1}
		\end{align}
		Let $\lambda_h = \lambda_{h, -1, -1} + \lambda_{h, -1, +1} + \lambda_{h, +1, -1} + \lambda_{h, +1, +1}$ for $h \in \{0, 1, 2\}$.
		Given $\fR$ let $(\vv_{0,i})_{1\leq i\leq\lambda_0}$ be the second variables (other than $x_1$) contained in neighbours of $x_1$ among $\va_1, \ldots, \va_M$.
		Analogously, let $(\vv_{1,i})_{1\leq i\leq\lambda_h}$ be the second variables contained in neighbours of $x_1$ among $\va'_1, \ldots, \va'_M$ and $(\vv_{2,i})_{1\leq i\leq\lambda_h}$ be the second variables contained in neighbours of $x_1$ among $\va''_1, \ldots, \va''_M$.
		By $\PHI_h^{-}$, $h = 1, 2$ define a random formula obtained from $\PHI_h(M, M')$ by deleting $x_1$ and its adjacent clauses.
		Let $\fF$ be the event that the distance between any two of $\vv_{0, 1}, \ldots, \vv_{2,\lambda_2}$ in both $\PHI_1^{-}$ and $\PHI_2^{-}$ is at least $2\ell$.
		A routine union bound argument shows that
		\begin{align}\label{eq_lem_lwc_b_2}
			\pr\brk{\fF} = 1 - o(1).
		\end{align}
		Further, let $T_{0,i}$ be the sub-tree obtained from $T$ comprising the $i$-th shared grandchild of $o$ and its descendants.
		Consider the event $\fH_0$ that $\fR$ and $\fF$ occur and $\vv_{0,i}$ is a $(2\ell-2)$-instance of $T_{0, i}$ in $(\PHI_1^{-}, \PHI_2^{-})$ for any $i = 1, \ldots, \lambda_0$.
		Since the depth and the maximum degree of $T$ are bounded, by induction we obtain
		\begin{align*}
			\pr\brk{ \text{$\vv_{0,i}$ is a $(2\ell-2)$-instance of $T_{0,i}$ in $(\PHI_1^{-}, \PHI_2^{-})$}} = \pr\brk{\vT^{\tensor,\, (2\ell-2)} \ism T_{0, i}} + o(1).
		\end{align*}
		for $i = 1, \ldots, \lambda_0 $.
		Thus
		\begin{align}\label{eq_lem_lwc_b_3}
			\pr\brk{\fH_0 | \fF\cap\fR} =
			\pr\brk{\bigcap_{i=1}^{\lambda_0} \text{$\vv_{0,i}$ is a $(2\ell-2)$-instance of $T_{0,i}$ in $(\PHI_1^{-}, \PHI_2^{-})$}} = \prod_{i=0}^{\lambda_0}\pr\brk{\vT^{\tensor,\, (2\ell-2)} \ism T_{0, i}} + o(1).
		\end{align}

		Analogously, let $T_{h,i}$ be the sub-tree of $T$ pending on the $i$-th $h$-distinct grandchild of the root.
		Consider the events $\fH_h$ that $\fF$ and $\fR$ occur and that the depth $(2\ell-2)$-neighbourhood of $\vv_{h,i}$ is isomorphic to $T_{h, i}$ in $\PHI_h^{-}$ for any $i = 1, \ldots, \lambda_h$, $h = 1, 2$.
		Since $\vv_{1, i}$ and $\vv_{2, j}$ are chosen independently for all $i$ and $j$, using the same embedding process as above in combination with Lemma~\ref{lem_lwc_old} we obtain
		\begin{align}\label{eq_lem_lwc_b_4}
			\pr\brk{\fH_1 | \fF \cap\fR \cap \fH_0} &= \prod_{i=1}^{\lambda_1}\pr\brk{\vT^{(2\ell-2)} \ism T_{1, i}} + o(1)
			\\
			\pr\brk{\fH_2 | \fF\cap\fR \cap \fH_0 \cap \fH_1} &= \prod_{i=1}^{\lambda_2}\pr\brk{\vT^{(2\ell-2)} \ism T_{2, i}} + o(1).\label{eq_lem_lwc_b_5}
		\end{align}
		Finally, combining \eqref{eq_lem_lwc_b_1}--\eqref{eq_lem_lwc_b_5} we obtain
		\begin{align*}
			\pr\brk\fE&\sim \pr\brk\fR\prod_{i=1}^{\lambda_0}\pr\brk{\vT^{\tensor,\, (2\ell-2)} \ism T_{0, i}}\prod_{i=1}^{\lambda_1}\pr\brk{\vT^{(2\ell-2)} \ism T_{1, i}}\prod_{i=1}^{\lambda_2}\pr\brk{\vT^{(2\ell-2)} \ism T_{2, i}}\sim\pr\brk{\vT^{\tensor,\,(2\ell-2)}\ism T}.
		\end{align*}
		As $\ex[\vN^{(2\ell)}(T, (\PHI_1(M, M'), \PHI_2(M, M')))]=n\pr\brk\fE$ the assertion follows from the linearity of expectation.
	\end{proof}

	We also need an estimate of the second moment of $\vN^{(2\ell)}(T, (\PHI_1(M, M'), \PHI_2(M, M')))$.

	\begin{lemma}\label{lem_lwc_c}
		For any fixed integers $L,\ell$ and any possible outcome $T$ of $\vT^{(2\ell)}$ of maximum degree at most $L$ and any $M \sim tdn/2$, $M' \sim (1-t) dn/2$ we have
		\begin{align*}
			\ex[\vN^{(2\ell)}(T, (\PHI_1(M, M'), \PHI_2(M, M')))^2]\sim n^2\pr\brk{\vT^{\tensor,\,(2\ell)}\ism T}^2.
		\end{align*}
	\end{lemma}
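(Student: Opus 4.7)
The plan is to compute the second moment by a standard pair-counting argument that mirrors the inductive Poisson approximation in Lemma~\ref{lem_lwc_b}. Writing $\fE_i$ for the event that $x_i$ is a $2\ell$-instance of $T$ in $(\PHI_1(M,M'),\PHI_2(M,M'))$, expand
\begin{align*}
\ex[\vN^{(2\ell)}(T, (\PHI_1(M, M'), \PHI_2(M, M')))^2] = \sum_{i,j=1}^n \pr[\fE_i \cap \fE_j].
\end{align*}
The diagonal contribution $i = j$ is at most $\ex[\vN^{(2\ell)}(T, (\PHI_1,\PHI_2))] \sim n\pr[\vT^{\tensor,(2\ell)}\ism T] = O(n) = o(n^2)$ by Lemma~\ref{lem_lwc_b}. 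For the off-diagonal terms split on whether the depth-$2\ell$ neighbourhoods of $x_i$ and $x_j$ in the union formula $\PHI_1(M,M') \cup \PHI_2(M,M')$ are vertex-disjoint.

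A routine union bound over the $L^{O(\ell)}$ variables in each bounded-degree neighbourhood shows that the two neighbourhoods intersect with probability $O(1/n)$. Hence non-disjoint pairs contribute at most $n \cdot L^{O(\ell)} \cdot \pr[\fE_i] = o(n^2)$, using Lemma~\ref{lem_lwc_a} to restrict attention to trees of maximum degree at most $L$ (which is harmless by the argument already used for Lemma~\ref{lem_lwc_b}).

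For the disjoint-neighbourhood pairs I replay the induction of Lemma~\ref{lem_lwc_b} simultaneously on two roots. The joint Binomial counts of shared, $1$-distinct and $2$-distinct clauses incident to $x_i$ and $x_j$ are asymptotically independent Poissons with the same parameters $dt/4$ and $d(1-t)/4$ as for a single root, because the relevant clause slots involving $x_i$ are disjoint from those involving $x_j$ and $M, M' = \Theta(n)$. After revealing the $O(1)$ clauses touching $x_i$, the conditional distribution of the clauses incident to $x_j$ is perturbed only by $O(1/n)$, and the pending sub-trees at the grandchildren of $x_i$ and $x_j$ grow inside disjoint portions of the formula. The inductive hypothesis (supplemented by Lemma~\ref{lem_lwc_old} for the purely $h$-distinct sub-trees, exactly as in \eqref{eq_lem_lwc_b_3}--\eqref{eq_lem_lwc_b_5}) then factorises the joint sub-tree probabilities and yields
\begin{align*}
\pr[\fE_i \cap \fE_j \cap \{\text{disjoint neighbourhoods}\}] = \pr[\vT^{\tensor,(2\ell)} \ism T]^2 + o(1).
\end{align*}
Summing over the $n(n-1) \sim n^2$ ordered pairs gives the claimed asymptotic $n^2\pr[\vT^{\tensor,(2\ell)}\ism T]^2$.

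The main obstacle is bookkeeping: one must verify that the correlation introduced through the shared clauses $\va_1,\ldots,\va_M$ does not spoil the factorisation at the second root. However, because only $O(1)$ of the $\Theta(n)$ shared clauses are consumed when exposing the neighbourhood of $x_i$, the residual Binomial-to-Poisson approximation for the neighbourhood of $x_j$ is asymptotically untouched. With this in hand the inductive step is mechanical, and combining the first and second moment estimates via Chebyshev delivers Proposition~\ref{lemma_lwc}.
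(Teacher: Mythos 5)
Your proposal is correct and takes essentially the same approach as the paper: the paper's own proof of Lemma~\ref{lem_lwc_c} consists of a single sentence stating that one should repeat the inductive embedding argument of Lemma~\ref{lem_lwc_b} simultaneously at two roots $x_1,x_2$, which is precisely the pair-counting and joint Poissonisation argument you spell out (diagonal negligible, overlapping neighbourhoods contribute $o(n^2)$ by a union bound, disjoint neighbourhoods factorise up to $O(1/n)$ perturbation).
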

	\begin{proof}
		Consider the event $\fE_{ij}$ that both variables $x_i$ and $x_j$ are $2\ell$-instances of $T$ for $i, j = 1, \ldots, n$.
		Now we can rewrite the second moment as follows.
		\begin{align}\label{eq_lem_lwc_c_1}
			\Erw\brk{\vN^{(2\ell)}(T, (\PHI_1(M, M'), \PHI_2(M, M')))^2} = \sum_{i, j = 1}^{n} \pr\brk{\fE_{ij}} = n \pr\brk{\fE_{11}}  + n(n-1)\pr\brk{\fE_{12}}.
		\end{align}
		From \Lem~\ref{lem_lwc_b} we know that $\pr\brk{\fE_{11}} = \pr\brk{\vT^{\tensor,\,(2\ell)}\ism T} + o(1)$, so we only need to estimate $\pr\brk{\fE_{12}}$.
		Let $\fF$ be such event that the distance between $x_1$ and $x_2$  is at least $2\ell$.
		A routine union bound argument shows that
		\begin{align*}
			\pr\brk{\fF} = 1 - o(1).
		\end{align*}
		This fact completes the proof of \Lem~\ref{lem_lwc_c}.
		\begin{align}\label{eq_lem_lwc_c_2}\nonumber
			\pr\brk{ \fE_{12} } &= 	\pr\brk{ \fE_{12} \mid \fF } + o(1) = \pr\brk{\text{$x_1$ is a $2\ell$-instance of $T$} \mid \fF}\cdot  \pr\brk{\text{$x_2$ is a $2\ell$-instance of $T$}\mid \fF} + o(1)
			\\
			&= \pr\brk{\vT^{\tensor,\,(2\ell)}\ism T}^2 + o(1).
		\end{align}
		Thus the assertion follows from \eqref{eq_lem_lwc_c_1} and \eqref{eq_lem_lwc_c_2}.
	\end{proof}

	\begin{proof}[Proof of \Prop~\ref{lemma_lwc}]
		From \Lem s~\ref{lem_lwc_a}--\ref{lem_lwc_c} in combination with Chebyshev's inequality it follows that for any $\ell \geq 0,T$ \whp\
		\begin{align}\label{eq_lem_lwc_1}
			\vN^{(2\ell)}(T, (\PHI_1(M, M'), \PHI_2(M, M'))) \sim n\pr\brk{\vT^{\tensor,\,(2\ell)}\ism T}.
		\end{align}
		We need to extend this to the pruned formulas $(\hPHI_1(M, M'), \hPHI_2(M, M'))$.
		Let $\vN^{(2\ell), +}(T, (\PHI_1, \PHI_2))$ be the number of variable nodes $x$ such that $x$ is an $2\ell$-instance of $T$ in $(\hPHI_1, \hPHI_2)$ but not in $(\PHI_1, \PHI_2)$.
		Similarly, let $\vN^{(2\ell), -}(T, (\PHI_1, \PHI_2))$ be the number of variable nodes $x$ such that they are $2\ell$-instances of $T$ in $(\PHI_1, \PHI_2)$ but not in $(\hPHI_1, \hPHI_2)$.
		Then
		\begin{align}\label{eq_lem_lwc_2}
			\vN^{(2\ell)}(T, (\hPHI_1, \hPHI_2)) = \vN^{(2\ell)}(T, (\PHI_1, \PHI_2)) + \vN^{(2\ell), +}(T, (\PHI_1, \PHI_2)) - \vN^{(2\ell), -}(T, (\PHI_1, \PHI_2)).
		\end{align}
		Note that both $\vN^{(2\ell), +}(T, (\PHI_1, \PHI_2))$ and $\vN^{(2\ell), -}(T, (\PHI_1, \PHI_2))$ do not exceed the number of variable nodes $x$ whose depth-$2\ell$ neighbourhood in $(\PHI_1, \PHI_2)$ contains at least one clause from $\bigcup_{l\in\{x_i,\neg x_i, ~1\leq i\leq n\}} \cC(\PHI,\{l\})$.
		Moreover, \Lem s~\ref{lem_bicycle} and~\ref{lem_unicycle} show that \whp\
		\begin{align}\label{eq_lem_lwc_3}
			\abs{\bigcup_{l\in\{x_i,\neg x_i, ~1\leq i\leq n\}} \cC(\PHI,\{l\})} \leq n^{0.1}.
		\end{align}

		It follows from Lemma~\ref{lem_lwc_d} that \whp\ the $2\ell$-depth neighbourhood of each vertex consists of no more then $\log^{4\ell + 4} n$ vertices.
		Combining this fact with \eqref{eq_lem_lwc_3} we conclude that
		\begin{align}\label{eq_lem_lwc_4}
			\vN^{(2\ell), +}(T, (\PHI_1, \PHI_2)) \leq n^{0.1} \log^{4\ell + 4} n, \qquad\qquad \vN^{(2\ell), -}(T, (\PHI_1, \PHI_2)) \leq n^{0.1} \log^{4\ell + 4} n
		\end{align}
		\whp.\
		Finally, the assertion follows from \eqref{eq_lem_lwc_1}, \eqref{eq_lem_lwc_2} and \eqref{eq_lem_lwc_4}.
	\end{proof}

	\section{Proof of \Prop~\ref{lem_rs}}\label{sec_lem_rs}

	\noindent
	We will deal with $\frac{ Z(\hPHI_h(M,m-M))}{ Z(\hPHI_h(M-1,m-M))}$ in detail; the arguments for the other two quotients are similar.

	\begin{lemma}\label{lem_addaM}
		Let $h\in\{1,2\}$.
		\Whp\ $\hat\PHI_h(M,m-M)$ is obtained from $\hat\PHI_h(M-1,m-M)$ by adding a clause $\va_M$.
	\end{lemma}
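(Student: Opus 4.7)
The plan is to show that the pruning operation commutes with adding $\va_M$ with high probability, by reducing the claim to two statements: (I) $\cC(\Phi') \cap F(\Phi) = \cC(\Phi)$, and (II) $\va_M \notin \cC(\Phi')$, where we abbreviate $\Phi = \PHI_h(M-1, m-M)$ and $\Phi' = \Phi \cup \{\va_M\} = \PHI_h(M, m-M)$. Since $\hat\Psi = \Psi \setminus \cC(\Psi)$, the conjunction of (I) and (II) is equivalent to the lemma. Monotonicity handles one half of (I) at once: adding the $2$-clause $\va_M = l_1 \vee l_2$ only appends the two directed implication edges $\neg l_1 \to l_2$ and $\neg l_2 \to l_1$ to the implication graph, so \Lem~\ref{lem_ucp_reachable} yields $\cL(\Phi, \{l^*\}) \subseteq \cL(\Phi', \{l^*\})$ for every literal $l^*$, whence $\cC(\Phi) \subseteq \cC(\Phi')$.

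The crux is to rule out new conflicts via a union bound, and the essential input is that $\Phi$'s conflict structure is small whp. Fix $\delta \in (0, 1/3)$. Since $\Phi$ is distributed as a random $2$-CNF with $m-1 \sim dn/2$ clauses, \Lem~\ref{lem_unicycle} yields $|\cU| \leq n^\delta$ whp for $\cU = \{l^* : \cC(\Phi, \{l^*\}) \neq \emptyset\}$; moreover \Cor~\ref{lem_bicycle} gives $|\cV(\Phi, \{l^*\})| \leq \log^2 n$ for every $l^*$, and by contraposition the backward reachability sets $B_l = \{l' : l' \to^*_\Phi l\}$ also satisfy $|B_l| \leq 2 \log^2 n$. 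Thus both forward and backward neighbourhoods in the implication graph are logarithmic, which will drive all subsequent estimates.

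For (II): if $\va_M \in \cC(\Phi', \{l^*\})$ then all four of $l_1, \neg l_1, l_2, \neg l_2$ lie in $R_{l^*}(\Phi') := \cL(\Phi', \{l^*\})$. Tracing the first use of a new edge in the corresponding implication chains forces at least $\{\neg l_1, \neg l_2\} \subseteq R_{l^*}(\Phi)$; since $l_1, l_2$ are uniformly random, this has probability at most $(|R_{l^*}(\Phi)|/(2n))^2 \leq \log^4 n/n^2$ for each fixed $l^*$, and summing over $2n$ literals yields $o(1)$. For (I), suppose some $a = l_3 \vee l_4 \in F(\Phi)$ becomes newly conflict in $\cC(\Phi', \{l^*\})$; then all of $l_3, \neg l_3, l_4, \neg l_4$ must lie in $R_{l^*}(\Phi')$ with at least one newly reached, and $l^*$ must already have conflicted some variable of $a$ in $\Phi$, placing it in a slightly enlarged set $\bar\cU$ of size $\leq n^\delta$ (via an adaptation of the proof of \Lem~\ref{lem_unicycle} bounding the number of starting literals that admit any internal collision in the \UCP\ BFS). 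Analyzing the first new-edge use along the new reachability forces conjunctions such as $\neg l_1 \in R_{l^*}(\Phi)$ and $l_2 \in B_{l_3}(\Phi)$; for fixed $l^*$ and $a$ the probability over $(l_1, l_2)$ is at most $\log^4 n/n^2$. Union-bounding over $O(n)$ choices of $a$ and $n^\delta$ choices of $l^*$ then gives $O(n^{\delta-1} \log^4 n) = o(1)$.

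The main obstacle is handling implication chains that use the two new edges multiple times, alternating through $\{l_1, \neg l_1, l_2, \neg l_2\}$ iteratively. I would deal with this by stratifying by the number $k$ of new-edge traversals in the hypothetical new implication chain: each additional traversal imposes one further reachability constraint on the random literals $l_1, l_2$, contributing an extra factor of $O(\log^2 n / n)$ in probability. Summing geometrically over $k \geq 1$ still produces $o(1)$, so the bound survives. Combining (I) and (II) with these estimates concludes that $\hat\Phi' = \hat\Phi \cup \{\va_M\}$ whp.
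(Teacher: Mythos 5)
Your decomposition into (I)~$\cC(\Phi')\cap F(\Phi)=\cC(\Phi)$ and (II)~$\va_M\notin\cC(\Phi')$ is the right one, and your toolbox (randomness of the literals $\vl,\vl'$ of $\va_M$, \Lem~\ref{lem_unicycle} to bound the number of literals with nonempty conflict sets, \Cor~\ref{lem_bicycle} for the $\log^2n$ bound on \UCP\ reachability sets) is exactly what the paper uses. However, there is a genuine logical gap in your treatment of~(II). You claim that $\va_M\in\cC(\Phi',\{l^*\})$ ``forces at least $\{\neg l_1,\neg l_2\}\subseteq R_{l^*}(\Phi)$'', and your $o(1)$ bound rests on the resulting factor $(\log^2n/n)^2$. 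But this conjunction is false: tracing the first new edge in each chain only yields the \emph{disjunction} $\neg l_1\in R_{l^*}(\Phi)$ or $\neg l_2\in R_{l^*}(\Phi)$. Concretely, if $\Phi$ contains a chain $l^*\to\cdots\to\neg l_1$ and a chain $l_2\to\cdots\to\neg l_2$, then in $\Phi'$ the closure of $\{l^*\}$ picks up $\neg l_1$ (old edges), then $l_2$ (new edge $\neg l_1\to l_2$), then $\neg l_2$ (old edges from $l_2$), then $l_1$ (new edge $\neg l_2\to l_1$), so $\va_M\in\cC(\Phi',\{l^*\})$ even though $\neg l_2\notin R_{l^*}(\Phi)$. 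The second constraint that actually arises in this scenario is $\neg l_2\in R_{l_2}(\Phi)$, not $\neg l_2\in R_{l^*}(\Phi)$. With only the disjunction in hand, your union bound over $2n$ choices of $l^*$ gives $O(\log^2n)$ rather than $o(1)$.

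The paper avoids this by writing out exactly which pairs of constraints can arise: it defines $l$ to be a \emph{trigger} of $\neg\vl$ when $\neg\vl\in\cL(\PHI_h(M-1,m-M),\{l\})$, and then sets up the events $\fE,\fE'$ via the conditions \textbf{E1} (a pre-existing conflict set from $l$ or $\{l,\vl'\}$ is nonempty) and \textbf{E2} ($\neg\vl'\in\cL_l$ \emph{or} $\neg\vl'\in\cL_{\vl'}$); the second alternative of \textbf{E2} is precisely the constraint $\neg l_2\in R_{l_2}(\Phi)$ that your argument omits. The paper then proves the deterministic inclusion $\fQ\subseteq\fE\cup\fE'$ and bounds $\pr[\fE],\pr[\fE']=o(1)$, which packages all the case analysis cleanly. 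Your ``stratify by the number $k$ of new-edge traversals'' remark gestures at the right repair --- each traversal does impose a fresh constraint on $(l_1,l_2)$ --- but as written it does not specify \emph{which} constraints arise (they live on $R_{l^*},R_{l_1},R_{l_2}$ and their backward versions, not just on $R_{l^*}$), and without that accounting the $k=1$ case is not visibly $o(1)$. Your ancillary claim that the number of starting literals admitting any internal collision in \UCP\ is $\le n^\delta$ (the ``enlarged $\bar\cU$'') is plausible and does follow from the proof technique of \Lem~\ref{lem_unicycle}, but it is an additional statement that would need its own verification.
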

	\begin{proof}
		Let $\vl,\vl'$ be the constituent literals of $\va_M$, i.e., $\va_M = \vl \vee \vl'$.
		Moreover, let $\fQ$ be the event that $\hat\PHI_h(M,m-M)$ does \emph{not} result from $\hat\PHI_h(M-1,m-M)$ by adding clause $\va_M$.
		Thus, on the event $\fQ$ the additional clause $\va_M$ triggers the pruning of clauses that do not get pruned from $\hPHI(M-1,m-M)$ (including potentially $\va_M$ itself).

		We are going to construct events $\fE,\fE'$ whose probabilities are easy to estimate such that
		\begin{align}\label{eq_lem_addaM_100}
			\fQ\subseteq\fE\cup\fE'.
		\end{align}
		To this end, for a literal $l$ let $\cL_l=\cL(\PHI_h(M-1, m - M), \{l\})$ be the final set of literals that \UCP$(\PHI_h(M-1, m - M), \{l\})$ produces.
		Call $l$ a {\em trigger} of $\neg\vl$ if $\neg\vl\in\cL_l$.
		Further, let $\fE$ be the event that there exists a trigger $l$ of $\neg\vl$ such that
		\begin{description}
			\item[E1] $\cC(\PHI(M-1, m - M),\{l\})\cup\cC(\PHI(M-1, m - M),\{l,\vl'\})\neq\emptyset$, or
			\item[E2] $\neg\vl'\in\bigcup_{\lambda\in\{l,\vl'\}}\cL_\lambda.$
		\end{description}
		Define $\fE'$ analogously with the roles of $\vl,\vl'$ swapped.

		We claim that these events $\fE,\fE'$ satisfy~\eqref{eq_lem_addaM_100}.
		To see this, assume that neither $\fE$ nor $\fE'$ occurs.
		We claim that then
		\begin{align}\label{eq_lem_addaM_101}
			\cC(\PHI_h(M-1,m-M),\{l\})=\cC(\PHI_h(M,m-M),\{l\})
		\end{align}
		for all literals $l$; if so, then clearly $\fQ$ does not occur either.

		Thus, assume that~\eqref{eq_lem_addaM_101} is false and that $l$ is a literal such that
		\begin{align}\label{eq_lem_addaM_200}
			\cC(\PHI_h(M-1,m-M),\{l\})\neq\cC(\PHI_h(M,m-M),\{l\}).
		\end{align}
		Then $l$ must be a trigger of $\neg\vl$ or of $\neg\vl'$; for otherwise the presence of the extra clause $\va_M$ has no impact on the set of conflict clauses.
		Hence, suppose that $l$ is a trigger of $\neg\vl$.
		Then the presence of clause $\va_M$ in $\PHI_h(M,m-M)$ causes \UCP\ to add $\vl'$ to $\cL(\PHI_h(M,m-M),\{l\})$.
		Since the event $\fE$  does not occur, neither does {\bf E1} and we conclude that $\cC(\PHI_h(M-1,m-M),\{l\})=\cC(\PHI_h(M-1,m-M),\{l,\vl'\})=\emptyset$.
		Hence, none of the clauses $a\in F(\PHI_h(M-1,m-M))$ is a conflict clause and thus~\eqref{eq_lem_addaM_200} implies that
		\begin{align*}
			\{\va_M\}=\cC(\PHI_h(M,m-M),\{l\})\setminus\cC(\PHI_h(M-1,m-M),\{l\}).
		\end{align*}
		But this is not possible either.
		For if $\va_M\in\cC(\PHI_h(M,m-M),\{l\})$, then \Lem~\ref{lem_ucp_reachable} shows that one of $l,\vl,\vl'$ is a trigger of $\neg\vl'$, and thus {\bf E2} occurs.
		Thus, we obtain~\eqref{eq_lem_addaM_101}.

		To complete the proof we are going to show that
		\begin{align}\label{eq_lem_addaM_110}
			\pr\brk\fE,\pr\brk{\fE'}&=o(1).
		\end{align}
		Indeed, \Lem~\ref{lem_unicycle} shows that the number of literals $l$ such that $\cC(\PHI_h(M-1,m-M), \cbc{l}) \neq \emptyset$ can be bounded by $n^{0.1}$ \whp.
		Furthermore, \Cor~\ref{lem_bicycle} shows that $|\cV(\PHI_h(M-1,m-M),\{l\})|\leq\log^2n$ \whp\ for all $l$.
		Hence, \whp\ the total number of literals $\lambda$ that have a trigger $l$ such that $\cC(\PHI_h(M-1,m-M),\{l\})\neq\emptyset$ is bounded by $O(n^{0.1}\log^2n)$.
		Consequently, the probability that the random literal $\neg\vl$ possesses such a trigger is bounded by $O(n^{-0.9}\log^2n)$.
		Moreover, since $\vl'$ is a random literal as well, \Lem~\ref{lem_unicycle} shows that $\pr\brk{\cC(\PHI(M-1,m-M),\{\vl'\})=\emptyset}=1-O(n^{-0.9})$.
		Additionally, \whp\ for any trigger $l$ of $\neg\vl$ we have $\cV(\PHI(M-1,m-M),\{l\})\cap\cV(\PHI(M-1,m-M),\{\vl'\})=\emptyset$, because $\vl,\vl'$ are drawn independenly of $\PHI(M-1,m-M)$.
		Similarly,
		\begin{align*}
			\pr\brk{\neg\vl'\in\cL(\PHI(M-1,m-M),\{l\})}=o(1)&&\mbox{and}&&\pr\brk{\neg\vl'\in\cL(\PHI(M-1,m-M),\{\vl'\})}=o(1).
		\end{align*}
		Combining these estimates, we conclude that $\pr\brk\fE=o(1)$.
		By symmetry, the same estimate holds for $\fE'$.
		Thus, we obtain \eqref{eq_lem_addaM_110}.
		Finally, the assertion follows from~\eqref{eq_lem_addaM_100} and~\eqref{eq_lem_addaM_110}.
	\end{proof}

	\begin{corollary}\label{cor_addaM}
		Let $h\in\{1,2\}$.
		\Whp\ we have
		\begin{align*}
			\frac{ Z(\hPHI_h(M,m-M))}{ Z(\hPHI_h(M-1,m-M))}&=\mu_{\hPHI_h(M-1,m-M)}\bc{\cbc{\SIGMA\models\va_M}}.
		\end{align*}
	\end{corollary}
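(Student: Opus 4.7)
The plan is to derive Corollary~\ref{cor_addaM} as an essentially immediate consequence of Lemma~\ref{lem_addaM} together with the definition of the uniform measure on the satisfying set. First I would condition on the high-probability event $\fQ^c$ from the proof of Lemma~\ref{lem_addaM}, namely the event that $\hPHI_h(M,m-M)$ coincides with the formula obtained from $\hPHI_h(M-1,m-M)$ simply by appending the extra clause $\va_M$. By Lemma~\ref{lem_addaM} this event occurs w.h.p., so it suffices to prove the identity pointwise on $\fQ^c$.

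On $\fQ^c$ we have the set identity
\begin{align*}
S(\hPHI_h(M,m-M)) \;=\; \cbc{\sigma \in S(\hPHI_h(M-1,m-M)) : \sigma \models \va_M}.
\end{align*}
Moreover, Fact~\ref{lem_hatphi_sat} guarantees that $\hPHI_h(M-1,m-M)$ is satisfiable, so $Z(\hPHI_h(M-1,m-M)) > 0$ and the Boltzmann distribution $\mu_{\hPHI_h(M-1,m-M)}$ defined in~\eqref{eqboltz} is the uniform probability measure on $S(\hPHI_h(M-1,m-M))$. Dividing both sides of the displayed set identity by $Z(\hPHI_h(M-1,m-M))$ and recognising the right-hand side as the $\mu_{\hPHI_h(M-1,m-M)}$-probability of the event $\{\SIGMA \models \va_M\}$ yields the claim.

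There really is no hard step here: the content of the corollary is packaged entirely inside Lemma~\ref{lem_addaM}, and the only subtlety is checking that the denominator is nonzero, which Fact~\ref{lem_hatphi_sat} supplies unconditionally. So the whole argument reduces to a one-line set-theoretic computation on the high-probability event provided by Lemma~\ref{lem_addaM}.
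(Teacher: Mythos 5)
Your proof is correct and takes essentially the same approach as the paper: condition on the high-probability event from Lemma~\ref{lem_addaM} that $\hPHI_h(M,m-M)$ is $\hPHI_h(M-1,m-M)$ with $\va_M$ appended, and then observe that on this event the ratio of partition functions equals the $\mu_{\hPHI_h(M-1,m-M)}$-probability of satisfying $\va_M$. Your explicit invocation of Fact~\ref{lem_hatphi_sat} to guarantee the denominator is nonzero is a small welcome detail that the paper leaves implicit.
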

	\begin{proof}
		From \Lem~\ref{lem_addaM} we know that \whp
		\begin{align}\label{eq_corr_addaM}
			Z(\hPHI_h(M,m-M)) = Z(\hPHI(M - 1,m-M)+\va_M).
		\end{align}
		Assuming that \eqref{eq_corr_addaM} is correct, $ Z(\hPHI_h(M,m-M))$ equals the number of satisfying assignments of $\hPHI_h(M-1,m-M)$ that also happen to satisfy $\va_M$.
	\end{proof}

	Additionally, we need the following asymptotic independence property, known as `replica symmetry' in physics parlance.

	\begin{lemma}\label{lem_rs_single}
		Let $h\in\{1,2\}$.
		For all $s,s'\in\PM$ we have
		\begin{align*}
			\frac1{n^2}\sum_{i,j=1}^n\ex\abs{\mu_{\hPHI_h(M-1,m-M)}(\{\SIGMA_{x_i}=s,\,\SIGMA_{x_j}=s'\})-\mu_{\hPHI_h(M-1,m-M)}(\{\SIGMA_{x_i}=s\})\mu_{\hPHI_h(M-1,m-M)}(\{\SIGMA_{x_j}=s'\})}&=o(1).
		\end{align*}
	\end{lemma}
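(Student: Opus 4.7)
The plan is to prove Lemma~\ref{lem_rs_single} by combining the local weak convergence of the pruned formula $\hat\Phi = \hPHI_h(M-1,m-M)$ to the Galton--Watson tree $\vT_h$ (Proposition~\ref{lemma_lwc}) with the tree-level Gibbs uniqueness (Corollary~\ref{prop_uniqueness}), via a boundary-decoupling argument standard in the mean-field spin glass literature. Throughout I write $\mu = \mu_{\hat\Phi}$ and $\SIGMA$ for a sample from $\mu$.

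First I fix $\eps > 0$ and use Corollary~\ref{prop_uniqueness} to choose $\ell = \ell(\eps)$ so large that the expected maximum discrepancy of the tree root marginal under arbitrary boundary conditions at depth $2\ell$ is at most $\eps$. For a variable $x_i$ of $\hat\Phi$ set $N_i = \partial^{\leq 2\ell}_{\hat\Phi} x_i$ and $B_i = \partial^{2\ell}_{\hat\Phi} x_i$. The Gibbs measure $\mu$ is a Markov random field whose factor graph is the incidence graph of $\hat\Phi$, so for any $\tau \in \PM^{B_i}$ with $\mu(\SIGMA_{B_i} = \tau) > 0$,
\begin{align*}
\mu(\SIGMA_{x_i} = s \mid \SIGMA_{B_i} = \tau) = \mu_{N_i[\tau]}(\SIGMA_{x_i} = s),
\end{align*}
where $N_i[\tau]$ denotes $N_i$ with boundary pinned to $\tau$. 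Proposition~\ref{lemma_lwc}, combined with the degree bound inherited from Lemma~\ref{lem_lwc_d_old} (pruning cannot increase degrees), ensures that for a uniformly random $i$ the sub-formula $N_i$ is \whp\ acyclic and isomorphic to a realisation of $\vT_h^{(2\ell)}$. Averaging the Markov identity over $\tau \sim \mu|_{B_i}$ and invoking the tree estimate then supplies a BP-type proxy $\hat\nu_i(s) \in [0,1]$ depending only on the combinatorial type of $N_i$ with
\begin{align*}
\frac{1}{n}\sum_{i=1}^{n} \ex\abs{\mu(\SIGMA_{x_i} = s) - \hat\nu_i(s)} \leq \eps + o(1).
\end{align*}

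For pairs, a first-moment computation using the polylogarithmic degree bound shows that $N_i$ and $N_j$ are vertex-disjoint in $\hat\Phi$ for all but $o(n^2)$ pairs $(i,j) \in [n]^2$. On this good event, $B_i \cup B_j$ separates $x_i$ from $x_j$ in the incidence graph, so the Markov property factorises
\begin{align*}
\mu(\SIGMA_{x_i} = s, \SIGMA_{x_j} = s' \mid \SIGMA_{B_i \cup B_j}) = \mu(\SIGMA_{x_i} = s \mid \SIGMA_{B_i})\, \mu(\SIGMA_{x_j} = s' \mid \SIGMA_{B_j}).
\end{align*}
Averaging over the boundary and invoking the single-variable estimate on each factor yields $\ex|\mu(\SIGMA_{x_i} = s, \SIGMA_{x_j} = s') - \hat\nu_i(s)\hat\nu_j(s')| \leq 2\eps + o(1)$ for all but $o(n^2)$ pairs, and combining with the single-variable bound and sending $\eps \to 0$ after $n \to \infty$ gives the claim. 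The delicate point will be translating the tree Gibbs uniqueness into a uniform bound on $\hat\Phi$: the boundary law $\mu|_{B_i}$ is not the uniform boundary law on the abstract tree, and $N_i$ could contain cycles for some pathological (but rare) $x_i$. Both issues are neutralised by Proposition~\ref{lemma_lwc} together with the bounded-degree estimate, so that the tree estimate $\eps$ applies uniformly in expectation; the restriction to a single formula $\hat\Phi_h$ in the statement lets me invoke Corollary~\ref{prop_uniqueness} for the marginal $\vT_h$ of $\vT^{\tensor}$ directly, with no extra coupling work.
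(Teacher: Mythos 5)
Your proposal is correct and follows essentially the same route as the paper's proof: local weak convergence to the Galton--Watson tree (\Prop~\ref{lemma_lwc}), Gibbs uniqueness on the tree, and a Markov-property/factorization argument for two distant variables. The cosmetic differences --- you introduce the BP-type proxy $\hat\nu_i$, average over all pairs $(i,j)$ rather than reducing to a single exchangeable pair $(x_1,x_2)$, and fix $\ell=\ell(\eps)$ then send $\eps\to0$ rather than letting $\ell=\ell(n)\to\infty$ slowly --- are all standard equivalents; and your appeal to \Cor~\ref{prop_uniqueness} is interchangeable with the paper's use of \Prop~\ref{prop_uniqueness_old} since, as the paper notes, $\vT_h^{(2\ell)}$ has the same law as $\vT^{(2\ell)}$.
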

	\begin{proof}
		We adapt an argument from~\cite{Mossel} to the present setting.
		By exchangeability it suffices to prove that
		\begin{align*}
			\ex\abs{\mu_{\hPHI_h(M-1,m-M)}(\{\SIGMA_{x_1}=s,\,\SIGMA_{x_2}=s'\})-\mu_{\hPHI_h(M-1,m-M)}(\{\SIGMA_{x_1}=s\})\mu_{\hPHI_h(M-1,m-M)}(\{\SIGMA_{x_2}=s'\})} = o(1).
		\end{align*}
		The proof rests on the Gibbs uniqueness property.
		Indeed, \Prop~\ref{lemma_lwc} shows that for any fixed $\ell$ the depth-$2\ell$ neighbourhood $\partial^{\leq2\ell}x_i$ of $x_i$ in $\hPHI_h(M-1,m-M)$ is within total variation distance $o(1)$ of the Galton-Watson tree $\vT_h^{(2\ell)}$.
		Furthermore, the distribution of $\vT_h^{(2\ell)}$ by itself is identical to the distribution of the Galton-Watson tree $\vT^{(2\ell)}$.
		Additionally, \Prop~\ref{prop_uniqueness_old} shows that $\vT^{(2\ell)}$ enjoys the Gibbs uniqueness property~\eqref{eqProp_uniqueness1}.
		Consequently, taking $\ell=\ell(n)\to\infty$ sufficiently slowly as $n\to\infty$, we see that \whp
		\begin{align}\label{eq_lem_rs_single_100}
			\sum_{s\in\PM}\max_{\kappa\in S(\hPHI_h(M-1,m-M))}\abs{\mu_{\hPHI_h(M-1,m-M)}(\{\SIGMA_{x_1}=s ~|~\SIGMA_{\partial^{\ell} x_1} = \kappa_{\partial^{2\ell}x_1}\}) - \mu_{\hPHI_h(M-1,m-M)}(\{\SIGMA_{x_1}=s\})}&=o(1).
		\end{align}

		Furthermore, providing $\ell=\ell(n)\to\infty$ slowly enough, the distance between $x_1,x_2$ exceeds $4\ell$ \whp\
		In this case, \eqref{eq_lem_rs_single_100} gives
		\begin{align}\nonumber
			&\mu_{\hPHI_h(M,m-M)}(\{\SIGMA_{x_1}=s,\,\SIGMA_{x_2}=s'\}) = \mu_{\hPHI_h(M,m-M)}(\{\SIGMA_{x_1}=s ~|~ \SIGMA_{x_2}=s'\}) \cdot \mu_{\hPHI_h(M,m-M)}(\{\SIGMA_{x_2}=s'\})
			\\
			&=  \mu_{\hPHI_h(M,m-M)}(\{\SIGMA_{x_2}=s'\}) \cdot \sum_{\kappa \in \{\pm1\}^{\partial^{2\ell} x_1}} \mu_{\hPHI_h(M,m-M)}(\{\SIGMA_{x_1}=s ~|~\SIGMA_{\partial^{2\ell} x_1} = \kappa,  \SIGMA_{x_2}=s'\})\nonumber\\
			&\qquad\qquad\qquad\qquad\qquad\qquad\qquad\qquad \cdot \mu_{\hPHI_h(M,m-M)}(\{\SIGMA_{\partial^{2\ell} x_1} = \kappa ~|~  \SIGMA_{x_2}=s'\})\nonumber\\
			&=  \mu_{\hPHI_h(M,m-M)}(\{\SIGMA_{x_2}=s'\}) \cdot \sum_{\kappa \in \{\pm1\}^{\partial^{2\ell} x_1}} \mu_{\hPHI_h(M,m-M)}(\{\SIGMA_{x_1}=s ~|~\SIGMA_{\partial^{2\ell} x_1} = \kappa\})\nonumber\\
			&\qquad\qquad\qquad\qquad\qquad\qquad\qquad\qquad \cdot \mu_{\hPHI_h(M,m-M)}(\{\SIGMA_{\partial^{2\ell} x_1} = \kappa ~|~  \SIGMA_{x_2}=s'\})\nonumber\\
			&= \mu_{\hPHI_h(M-1,m-M)}(\{\SIGMA_{x_1}=s\})\mu_{\hPHI_h(M-1,m-M)}(\{\SIGMA_{x_2}=s'\}) \cdot (1 + o(1)),
		\end{align}
		as claimed.
	\end{proof}

	\begin{proof}[Proof of \Prop~\ref{lem_rs}]
		The proposition follows from \Cor~\ref{cor_addaM} and \Lem~\ref{lem_rs_single}.
	\end{proof}

	\section{Proof of \Prop~\ref{prop_noela}}\label{sec_prop_noela}

	\noindent
	In this section, we prove \Prop~\ref{prop_noela} via a contraction argument. For this, recall the operator $\LLN$ from~\eqref{eqlogBPtensor}.
	For notational convenience we let
	\begin{align*}
		\vV&=\begin{pmatrix}
			\sum_{i=1}^{\vd} \vec s_i \log\bc{\frac{1+\vec r_i \tanh(\vec \xi_{\rho,i,1}/2)}{2}} +  \sum_{i=1}^{\vec d'} \vec s_i' \log\bc{\frac{1+\vec r'_i \tanh(\vec\xi_{\rho,i,1}'/2)}{2}}\\
			\sum_{i=1}^{\vd} \vec s_i \log\bc{\frac{1+\vec r_i \tanh(\vec \xi_{\rho,i,2}/2)}{2}} +  \sum_{i=1}^{\vec d''} \vec s_i'' \log\bc{\frac{1+\vec r_i'' \tanh(\vec\xi''_{\rho,i,2}/2)}{2}}
		\end{pmatrix}.
	\end{align*}
	The main step towards \Prop~\ref{prop_noela} is the following lemma:
	\begin{lemma}\label{lem_noela}
		$\LLN$ is a contraction on the space $(\cW_2(\RR^2), W_2)$ for all $0<d<2$ and $0 \leq t \leq 1$.
	\end{lemma}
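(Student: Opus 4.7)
The plan is to establish the contraction via a synchronous coupling on the inputs, and to leverage two nice algebraic identities for the map $f_r(\xi):=\log((1+r\tanh(\xi/2))/2)$ to extract an explicit contraction constant $\sqrt{d/2}<1$. For $\rho,\rho'\in\cW_2(\RR^2)$ I would first pick an optimal coupling for each of the three input sequences $(\vec\xi_{\rho,i},\vec\xi_{\rho',i})$, $(\vec\xi'_{\rho,i},\vec\xi'_{\rho',i})$, $(\vec\xi''_{\rho,i},\vec\xi''_{\rho',i})$ so that $\Erw\|\vec\xi_{\rho,1}-\vec\xi_{\rho',1}\|_2^2=W_2(\rho,\rho')^2$, and likewise for the two primed sequences. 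Running~\eqref{eqlogBPtensor} on these coupled inputs using the \emph{same} Poisson counts $\vd,\vec d',\vec d''$ and the \emph{same} signs $\vec s_i,\vec r_i,\vec s_i',\vec r_i',\vec s_i'',\vec r_i''$ yields coupled outputs $\vV,\vV'$ distributed as $\LLN(\rho),\LLN(\rho')$, so $W_2(\LLN(\rho),\LLN(\rho'))^2\leq\Erw\|\vV-\vV'\|_2^2$.

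The second step is to expand $\Erw[(\vV_1-\vV'_1)^2]$ and $\Erw[(\vV_2-\vV'_2)^2]$. Since the signs $\vec s_i,\vec s_i',\vec s_i''$ are centred and independent of all other randomness, every off-diagonal cross-term vanishes, and a Wald-type identity over the Poisson indices collapses each diagonal sum to a single expectation. For the first coordinate this yields
\[
\Erw[(\vV_1-\vV'_1)^2] = td\cdot\Erw[(f_{\vec r}(\vec\xi_{\rho,1,1})-f_{\vec r}(\vec\xi_{\rho',1,1}))^2] + (1-t)d\cdot\Erw[(f_{\vec r}(\vec\xi'_{\rho,1,1})-f_{\vec r}(\vec\xi'_{\rho',1,1}))^2],
\]
and an analogous identity for the second coordinate with $\vec d''$ replacing $\vec d'$. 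The whole task thereby reduces to the pointwise estimate $\Erw_{\vec r}[(f_{\vec r}(x)-f_{\vec r}(y))^2]\leq\tfrac12(x-y)^2$ for all $x,y\in\RR$.

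The main obstacle is that the naive Lipschitz bound $|f_r'|\leq 1$ only produces constant $1$ at this step and would deliver contraction merely for $d<1$, not throughout the satisfiable regime $d<2$. To extract the sharper factor $\tfrac12$ I would use two identities: first, $f_1(\xi)-f_{-1}(\xi)=\xi$, which follows at once from $\tanh(\xi/2)=(e^\xi-1)/(e^\xi+1)$; second, $f_1(\xi)+f_{-1}(\xi)=-2\log\cosh(\xi/2)-2\log 2$, whose derivative $-\tanh(\xi/2)$ is uniformly bounded by $1$ in absolute value. Writing $a:=f_1(x)-f_1(y)$ and $b:=f_{-1}(x)-f_{-1}(y)$, these give $a-b=x-y$ and $|a+b|\leq|x-y|$, and the parallelogram identity $2(a^2+b^2)=(a-b)^2+(a+b)^2$ then delivers exactly
\[
\Erw_{\vec r}[(f_{\vec r}(x)-f_{\vec r}(y))^2]=\tfrac12(a^2+b^2)\leq\tfrac12(x-y)^2.
\]

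Plugging this back, and noting that the three couplings can be chosen as i.i.d.\ copies of the optimal coupling of $\rho,\rho'$ so as to split $W_2(\rho,\rho')^2$ identically across the two coordinates, summation over both coordinates yields $\Erw\|\vV-\vV'\|_2^2\leq\tfrac{d}{2}W_2(\rho,\rho')^2$, which is strict contraction since $d<2$. To confirm that $\LLN$ maps $\cW_2(\RR^2)$ into itself, I would apply the same coupling computation with $\rho'=\delta_0$: since $|f_r(0)|=\log 2$, a direct Wald computation yields $\LLN(\delta_0)\in\cW_2(\RR^2)$, and the contraction bound together with the triangle inequality then gives $W_2(\LLN(\rho),\delta_0)<\infty$ for every $\rho\in\cW_2(\RR^2)$.
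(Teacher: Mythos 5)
Your proposal is correct and follows the same global strategy as the paper: synchronous coupling of inputs, cancellation of cross-terms via the centred signs, a Wald-type collapse over the Poisson indices, and a reduction to the one-dimensional estimate $\Erw_{\vec r}[(f_{\vec r}(x)-f_{\vec r}(y))^2]\leq\tfrac12(x-y)^2$ for $f_r(\xi)=\log((1+r\tanh(\xi/2))/2)$. Where you diverge is in proving that key pointwise bound. The paper conditions on $\vec r$, writes each term $(f_{\pm1}(x)-f_{\pm1}(y))^2$ via the fundamental theorem of calculus, applies the integral Cauchy--Schwarz inequality, and then uses the pointwise bound $\big(\tfrac{1-\tanh(z/2)}{2}\big)^2+\big(\tfrac{1+\tanh(z/2)}{2}\big)^2\leq1$ on the resulting integrand. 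You instead exploit the exact algebraic identities $f_1-f_{-1}=\id$ and $(f_1+f_{-1})'=-\tanh(\cdot/2)$, and obtain the factor $\tfrac12$ from the parallelogram identity $2(a^2+b^2)=(a-b)^2+(a+b)^2$ after setting $a=f_1(x)-f_1(y)$, $b=f_{-1}(x)-f_{-1}(y)$. Your route is cleaner: it uses no integral inequality and makes transparent that the constant $\tfrac12$ is sharp, with the two eigendirections $f_1\pm f_{-1}$ of the Rademacher average being a $1$-Lipschitz map and the identity itself. Your closing argument for well-definedness of $\LLN:\cW_2\to\cW_2$ (contraction estimate plus $\LLN(\delta_0)\in\cW_2$) is also sound and mildly more economical than the paper's separate direct second-moment computation (Claim~\ref{claim_haodong}).
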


	Indeed, it immediately follows from \Lem~\ref{lem_noela} and Banach's fixed point theorem that for every $d \in (0,2)$ and $t \in [0,1]$, there is a unique $\rho_{d,t} \in \cW_2(\RR^2)$ with $\rho_{d,t} = \LLN(\rho_{d,t})$, and that for any $\rho \in \cW_2(\RR^2)$ and $\ell \to \infty$, the $\ell$-fold application of $\LLN$ to $\rho$ converges to $\rho_{d,t}$ in Wasserstein distance.

	We prove  \Lem~\ref{lem_noela} in the following subsection, and conclude the section with the proof of \Prop~\ref{prop_noela}.

	\subsection{Proof of \Lem~\ref{lem_noela}}
	We first check that the operator $\LLN$ is well-defined in the sense that it maps the space $(\cW_2(\RR^2), W_2)$ to itself.

	\begin{claim}\label{claim_haodong}
		The operator $\LLN$ maps the space $(\cW_2(\RR^2), W_2)$ to itself.
	\end{claim}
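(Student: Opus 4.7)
The goal is to show that if $\rho\in\cW_2(\RR^2)$ then $\ex[\|\vV\|_2^2]<\infty$, where $\vV$ is the random vector whose law is $\LLN(\rho)$. Writing $V_h$ for its $h$-th coordinate, it suffices to bound $\ex[V_1^2]$ and $\ex[V_2^2]$ separately. The plan is to exploit two features of the definition of $\LLN$: (i) a clean pointwise bound on each summand in terms of $|\vec\xi_{\rho,i,h}|$, and (ii) the fact that the Rademacher signs $\vs_i,\vs_i',\vs_i''$ make the summands mean zero and uncorrelated conditional on $\vd,\vd',\vd''$ and the $\vec\xi$'s, so that Wald-type identities reduce everything to the finiteness of $\int\|\xi\|_2^2\,\dd\rho$.

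The pointwise bound I would establish first is
\begin{align*}
\Bigl|\log\Bigl(\tfrac{1+r\tanh(z/2)}{2}\Bigr)\Bigr|\leq \log 2+|z|\qquad(r\in\PM,\ z\in\RR).
\end{align*}
This follows from the identity $\log\bigl((1+\tanh(z/2))/2\bigr)=-\log(1+\eul^{-z})$ (and the analogue with a minus sign for $r=-1$), together with the elementary estimate $\log(1+\eul^{-z})\le\log 2+\max\{0,-z\}$. Applied inside $\LLN$, this yields
\begin{align*}
\Bigl|\vs_i\log\Bigl(\tfrac{1+\vr_i\tanh(\vec\xi_{\rho,i,h}/2)}{2}\Bigr)\Bigr|\leq \log 2+|\vec\xi_{\rho,i,h}|,
\end{align*}
and the $L^2$-hypothesis $\int\|\xi\|_2^2\,\dd\rho(\xi)<\infty$ makes the right-hand side square-integrable.

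With this in hand I would finish by a conditional second-moment calculation. Fix $h\in\{1,2\}$ and condition on $\vd,\vd',\vd''$ and on the family of vectors $(\vec\xi_{\rho,i}),(\vec\xi_{\rho,i}'),(\vec\xi_{\rho,i}'')$. The signs $\vs_i,\vs_i',\vs_i''$ are then independent Rademacher variables, independent of everything else, so each summand has conditional mean zero and distinct summands are conditionally orthogonal. Hence
\begin{align*}
\ex[V_h^2]=\ex[\vd]\cdot\ex\!\left[\log^2\!\Bigl(\tfrac{1+\vr_1\tanh(\vec\xi_{\rho,1,h}/2)}{2}\Bigr)\right]+\ex[\vd^{(h)}]\cdot\ex\!\left[\log^2\!\Bigl(\tfrac{1+\vr_1\tanh(\vec\xi_{\rho,1,h}^{(h)}/2)}{2}\Bigr)\right],
\end{align*}
where $\vd^{(h)}$ and $\vec\xi^{(h)}$ denote $\vd',\vec\xi'$ for $h=1$ and $\vd'',\vec\xi''$ for $h=2$. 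Substituting the pointwise bound and using $(a+b)^2\leq 2a^2+2b^2$ gives $\ex[V_h^2]\leq 2d\bigl(\log^2 2+\int\xi_h^2\,\dd\rho(\xi)\bigr)<\infty$, and summing over $h$ yields $\ex[\|\vV\|_2^2]<\infty$. There is no serious obstacle here; the only mildly delicate step is the pointwise logarithmic estimate, everything else is bookkeeping.
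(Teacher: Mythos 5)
Your proposal is correct and structurally the same as the paper's argument: both exploit the orthogonality of the Rademacher-signed summands and Wald's identity to reduce to bounding $\ex\bigl[\log^2\!\bigl((1+\vr_1\tanh(\vec\xi_{\rho,1,h}/2))/2\bigr)\bigr]$, then show that quantity is at most $2(\log^2 2+\ex[\vec\xi_{\rho,1,h}^2])$. The only (cosmetic) divergence is in how this last bound is obtained: you use the closed-form $\log\bigl((1+\tanh(z/2))/2\bigr)=-\log(1+\eul^{-z})$ to get the pointwise estimate $|\log(\ldots)|\le\log 2+|z|$ directly, whereas the paper writes the same quantity as $\int_0^z\frac{1\mp\tanh(x/2)}{2}\dd x-\log 2$ via the fundamental theorem of calculus and then applies Cauchy--Schwarz; both routes deliver the identical final bound $2d\,\ex[\|\vec\xi\|_2^2]+4d\log^2 2$.
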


	\begin{proof}
		Let $\rho\in (\cW_2(\RR^2), W_2)$ and $\vec V$ be a random vector with distribution $\LLN(\rho)$. By the definition of $\LLN$,
		\begin{align}\label{eq_hd_1}
			\Erw\brk{\norm{\vV}_2^2}=&\Erw\Bigg[\bc{\sum_{i=1}^{\vd} \vec s_i \log\bc{\frac{1+\vec r_i \tanh(\vec \xi_{\rho,i,1}/2)}{2}} +  \sum_{i=1}^{\vec d'} \vec s_i' \log\bc{\frac{1+\vec r_i' \tanh(\vec\xi_{\rho,i,1}'/2)}{2}}}^2\nonumber\\
			&+\bc{\sum_{i=1}^{\vd} \vec s_i \log\bc{\frac{1+\vec r_i \tanh(\vec \xi_{\rho,i,2}/2)}{2}} +  \sum_{i=1}^{\vec d''} \vec s_i'' \log\bc{\frac{1+\vec r_i'' \tanh(\vec\xi''_{\rho,i,2}/2)}{2}}}^2\Bigg].
		\end{align}
		By the independence of the random variables $(\vec s_i)_{i \geq 1}$, $(\vs'_i)_{i \geq 1}$ and $(\vec s''_i)_{i \geq 1}$ from everything else, all cross-terms in the evaluation of the squares in \eqref{eq_hd_1} vanish, e.g. for $i \not= j$,
		\begin{align*}
			\Erw&\brk{\vec s_{i} {\vec s}_{j}  \log\bc{\frac{1+\vec r_{i} \tanh(\vec \xi_{\rho,i,1}/2)}{2}} \log\bc{\frac{1+\vec r_{j} \tanh(\vec \xi_{\rho,j,1}/2)}{2}}}\\
			&=\Erw\brk{\vec s_{i}}\Erw\brk{ {\vec s}_{j}  \log\bc{\frac{1+\vec r_{i} \tanh(\vec \xi_{\rho,i,1}/2)}{2}} \log\bc{\frac{1+\vec r_{j} \tanh(\vec \xi_{\rho,j,1}/2)}{2}}}=0.
		\end{align*}
		As a consequence, (\ref{eq_hd_1}) in combination with the independence of the Poisson random variables gives that
		\begin{align}
			\Erw\brk{\norm{\vV}_2^2} 
			=&\Erw\Bigg[td  \log^2\bc{\frac{1+\vec r_1 \tanh(\vec \xi_{\rho,1,1}/2)}{2}} +  (1-t)d  \log^2\bc{\frac{1+\vec r_1' \tanh(\vec \xi'_{\rho,1,1}/2)}{2}}\nonumber\\
			&+td \log^2\bc{\frac{1+\vec r_1 \tanh(\vec \xi_{\rho,1,2}/2)}{2}} +  (1-t)d \log^2\bc{\frac{1+\vec r_1'' \tanh(\vec\xi''_{\rho,1,2}/2)}{2}}\Bigg].\label{eq_hd_2}
		\end{align}
		Finally, conditioning on the value of $\vec r_1$ and an application of the fundamental theorem of calculus, followed by the Cauchy-Schwarz inequality give
		\begin{align*}
			\Erw\brk{\log^2\bc{\frac{1+\vec r_1 \tanh(\vec \xi_{\rho,1,1}/2)}{2}}}=&\frac{1}{2}\Erw\brk{\log^2\bc{\frac{1+ \tanh(\vec \xi_{\rho,1,1}/2)}{2}}+\log^2\bc{\frac{1- \tanh(\vec \xi_{\rho,1,1}/2)}{2}}}\\
			= &\frac{1}{2}\Erw\brk{\bc{\int_0^{\vec \xi_{\rho,1,1}}\frac{1-\tanh(x/2)}{2}\dif x - \log 2}^2+\bc{\int_0^{\vec \xi_{\rho,1,1}}\frac{1+\tanh(x/2)}{2}\dif x + \log 2}^2}\\
			\leq &2\Erw\brk{\log^2 2+\vec \xi_{\rho,1,1}^2}.
		\end{align*}
		Analogous bounds can be derived for the remaining three terms in (\ref{eq_hd_2}).
		Therefore, for any vector $\vec \xi\in\RR^2$ with distribution $\rho$,
		\begin{align*}
			\Erw\brk{\norm{\vV}_2^2}&\leq \Erw\brk{2td\bc{\vec \xi_{\rho,1,1}^2+\vec \xi_{\rho,1,2}^2}+2(1-t)d\bc{{\vec\xi'}^{\,2}_{\rho,1,1}+{\vec\xi''}^{\,2}_{\rho,1,2}}}+4d\log^22=2d\Erw\brk{\norm{\vec \xi}_2^2}+4d\log^22<\infty.
		\end{align*}
	\end{proof}

	\begin{proof}[Proof of \Lem~\ref{lem_noela}]
		Let $\rho, \nu \in \cW_2(\RR^2)$ be arbitrary.
		To show contraction, consider three independent sequences of optimally coupled pairs $( \vec\xi_{\rho,i},  \vec\xi_{\nu,i})_{i \geq 1}$, $( \vec\xi_{\rho,i}',  \vec\xi_{\nu,i}')_{i \geq 1}$ and $( \vec\xi_{\rho,i}'',  \vec\xi_{\nu,i}'')_{i \geq 1}$ such that for each $\vec\zeta=\vec\xi,\vec\xi',\vec\xi''$, the $\vec\zeta_{\rho,i}=(\vec\zeta_{\rho,i,1},\vec\zeta_{\rho,i,2})\in\RR^2$ have distribution $\rho$, the $\vec\zeta_{\nu,i}=(\vec\zeta_{\nu,i,1},\vec\zeta_{\nu,i,2})\in\RR^2$ have distribution $\nu$ and
		\begin{align}
			W_2(\rho, \nu) = \Erw\brk{\|\vec \zeta_{\rho,i} - \vec \zeta_{\nu,i}\|_2^2}^{1/2}.\label{Eq:W0}
		\end{align}

		Let $\vd \sim \Po(td)$ and $\vec d',\vec d'' \sim \Po((1-t)d)$ all be independent.
		Moreover, let $(\vec s_i)_{i \geq 1}$, $(\vec r_i)_{i \geq 1}$, $(\vs'_i)_{i \geq 1}$, $(\vec r'_i)_{i \geq 1}$, $(\vec s''_i)_{i \geq 1}$ and $(\vec r''_i)_{i \geq 1}$ be independent sequences of i.i.d. Rademacher random variables with parameter $1/2$; all not explicitly coupled random variables are assumed to be independent.
		Then with $\hat\rho=\LLN(\rho)$ and $\hat\nu=\LLN(\nu)$ we obtain
		\begin{align}
			W_2(\hat\rho, \hat\nu)^2 &\leq \Erw\brk{\bc{ \sum_{i=1}^{\vd} \vec s_i \log\bc{\frac{1+\vec r_i \tanh(\vec \xi_{\rho,i,1}/2)}{1+\vec r_i \tanh(\vec \xi_{\nu,i,1}/2)}} +\sum_{i=1}^{\vec d'} \vec s_i' \log\bc{\frac{1+\vec r_i' \tanh(\vec \xi'_{\rho,i,1}/2)}{1+\vec r_i' \tanh(\vec \xi'_{\nu,i,1}/2)}}}^2} \nonumber \\
			& \quad + \Erw\brk{\bc{ \sum_{i=1}^{\vd} \vec s_i \log\bc{\frac{1+\vec r_i \tanh(\vec \xi_{\rho,i,2}/2)}{1+\vec r_i \tanh(\vec \xi_{\nu,i,2}/2)}} +\sum_{i=1}^{\vec d''} \vec s_i'' \log\bc{\frac{1+\vec r_i'' \tanh(\vec \xi''_{\rho,i,2}/2)}{1+\vec r_i'' \tanh(\vec \xi''_{\nu,i,2}/2)}}}^2}. \nonumber
		\end{align}
		Analogous to the derivation of (\ref{eq_hd_2}), by the independence of the random signs, the expectations of the cross-terms cancel. Combined with the independence of the Poisson random variables, we conclude that
		\begin{align}
			W_2(\hat\rho, \hat\nu)^2
			&\leq td\Erw\brk{\log^2\bc{\frac{1+\vec r_1 \tanh(\vec \xi_{\rho,1,1}/2)}{1+\vec r_1 \tanh(\vec \xi_{\nu,1,1}/2)}}}  + (1-t)d \Erw\brk{ \log^2\bc{\frac{1+\vec r_1' \tanh(\vec \xi'_{\rho,1,1}/2)}{1+\vec r_1' \tanh(\vec \xi'_{\nu,1,1}/2)}}} \nonumber \\
			& \quad + td \Erw\brk{ \log^2\bc{\frac{1+\vec r_1 \tanh(\vec \xi_{\rho,1,2}/2)}{1+\vec r_1 \tanh(\vec \xi_{\nu,1,2}/2)}}} + (1-t)d \Erw\brk{ \log^2\bc{\frac{1+\vec r_1''\tanh(\vec \xi''_{\rho,1,2}/2)}{1+\vec r_1'' \tanh(\vec \xi''_{\nu,1,2}/2)}}}.\label{Eq:W1}
		\end{align}
		Moreover, conditioning on the value of $\vec r_1$ and an application of the fundamental theorem of calculus yield
		\begin{align}
			\log^2 \frac{1+ \tanh(\vec\xi_{\rho,1,1}/2)}{1+ \tanh(\vec \xi_{\nu,1,1}/2)} &= \brk{\int_{\vec \xi_{\nu,1,1}}^{\vec \xi_{\rho,1,1} } \frac{\partial \log (1+ \tanh(z/2))}{\partial z} \dd z }^2 = \brk{\int_{\vec \xi_{\rho,1,1} \wedge \vec \xi_{\nu,1,1}}^{\vec \xi_{\rho,1,1} \vee \vec \xi_{\nu,1,1}} \frac{1- \tanh(z/2)}{2} dz }^2, \label{Eq:W2} \\
			\log^2 \frac{1- \tanh(\vec \xi_{\rho,1,1}/2)}{1- \tanh(\vec \xi_{\nu,1,1} /2)} &= \brk{\int_{\vec \xi_{\nu,1,1}}^{\vec \xi_{\rho,1,1} } \frac{\partial \log (1- \tanh(z/2))}{\partial z} dz }^2 = \brk{\int_{\vec \xi_{\rho,1,1} \wedge \vec \xi_{\nu,1,1}}^{\vec \xi_{\rho,1,1} \vee \vec \xi_{\nu,1,1} } \frac{1+ \tanh(z/2)}{2} dz }^2. \label{Eq:W3}
		\end{align}
		Combining (\ref{Eq:W2}) and (\ref{Eq:W3}) and applying the Cauchy-Schwarz inequality, we obtain
		\begin{align}
			\Erw \brk{\log^2\bc{\frac{1+\vec r_1 \tanh(\vec \xi_{\rho,1,1}/2)}{1+\vec r_1 \tanh( \vec \xi_{\nu,1,1}/2)}}} &\leq \frac{1}{2} \Erw\brk{\bc{\vec \xi_{\rho,1,1} - \vec \xi_{\nu,1,1}}^2}. \label{Eq:W4}
		\end{align}
		An identical argument can be made for $\vec\xi_{\rho,1,2}$, $\vec\xi_{\nu,1,2}$, $\vec\xi'_{\rho,1,1}$, $\vec\xi'_{\nu,1,1}$ and $\vec\xi''_{\rho,1,2}$, $\vec\xi''_{\nu,1,2}$.
		Finally, (\ref{Eq:W0}),(\ref{Eq:W1}) and (\ref{Eq:W4}) yield
		\begin{align}
			W_2(\hat\rho, \hat\nu)^2 &\leq \frac{td}{2} \Erw \brk{ \bc{ \vec\xi_{\rho,1,1}- \vec \xi_{\nu,1,1}}^2 + \bc{ \vec\xi_{\rho,1,2}-  \vec\xi_{\nu,1,2}}^2 } + \frac{(1-t)d}{2} \Erw\brk{\bc{ \vec\xi'_{\rho,1,1}- \vec \xi'_{\nu,1,1}}^2 + \bc{\vec \xi''_{\rho,1,2}-  \vec\xi''_{\nu,1,2}}^2 }\nonumber \\
			&= \frac{td}{2} W_2(\rho, \nu)^2 +\frac{(1-t)d}{2} W_2(\rho, \nu)^2 = \frac{d}{2} W_2(\rho, \nu)^2,\label{eq_hd_connor}
		\end{align}
		which implies contraction because $d<2$.
	\end{proof}

	\subsection{Proof of \Prop~\ref{prop_noela}}
	The uniqueness of $\rho_{d,t} \in \cW_2(\RR^2)$ with $\LLN(\rho_{d,t}) = \rho_{d,t}$  (which yields \eqref{eqfixedp})  follows from  \Lem~\ref{lem_noela} and the Banach fixed point theorem. As the Dirac measure in zero is an element of $\cW_2(\RR^2)$,  \Lem~\ref{lem_noela} also implies the weak convergence of $(\rho_{d,t}^{(\ell)})_{\ell \geq 0}$ to $\rho_{d,t}$.

\section{Proof of \Prop~\ref{prop_bp}}\label{sec_prop_bp}

\noindent
As a first step toward the proof of \Prop~\ref{prop_bp} we are going to introduce an operator on probability distributions on the unit square that resembles the Belief Propagation update equations~\eqref{eqBPop1}--\eqref{eqBPop2}.
We will see that this operator is closely related to the operator from~\eqref{eqlogBPtensor}.
Specifically, \eqref{eqlogBPtensor} is the log-likelihood version of the new operator.
Subsequently, we will show that the Belief Propagation operator correctly implements marginal computations on the Galton-Watson trees $(\vT_1,\vT_2)$.

\subsection{Density evolution}\label{sec_density}
Recall that $\cP((0,1)^2)$ is the space of all Borel probability measures on the unit square $(0,1)^2$.
We define an operator
\begin{align}\label{eqBPtensor}
	\BP_{d,t}^{\tensor}&:\cP((0,1)^2)\to\cP((0,1)^2),&\pi\mapsto\hat\pi=\BP_{d,t}^\tensor(\pi)
\end{align}
as follows.
For $s\in\PM$ let
\begin{align*}
	(\MU_{\pi,s,i,1},\MU_{\pi,s,i,2})_{i\geq1},
	(\MU_{\pi,s,i,1}',\MU_{\pi,s,i,2}')_{i\geq1},
	(\MU_{\pi,s,i,1}'',\MU_{\pi,s,i,2}'')_{i\geq1}
\end{align*}
be three sequences of random vectors with distribution $\pi$.
Further, let $(\vd_{s},\vd_{s}',\vd_{s}'')_{s\in\PM}$ be Poisson variables with $\ex[\vd_s]=td/2$ and $\ex[\vd_s']=\ex[\vd_s'']=(1-t)d/2$.
Finally, let $((\vr_{s,i},\vr_{s,i}',\vr_{s,i}''))_{s\in\PM,i\geq1}$ be uniformly distributed on $\{\pm1\}^3$.
All of these random variables are mutually independent.
Then $\hat\pi\in\cP((0,1)^2)$ is the distribution of the random vector
\begin{align}\label{eqBPtensor_update}
	\Bigg(&
	\frac{2^{-\vd_{-1}-\vd_{-1}'}\prod_{i=1}^{\vec d_{-1}} (1+\vr_{-1,i}(2\MU_{\pi,-1,i,1}-1))\prod_{i=1}^{\vec d_{-1}'}(1+\vr_{-1,i}'(2\vec\mu_{\pi,-1,i,1}'-1))}
	{\sum_{s\in\{\pm1\}}2^{-\vd_{s}-\vd_{s}'}\prod_{i=1}^{\vec d_{s}} (1+\vr_{s,i}(2\vec\mu_{\pi,s,i,1}-1))\prod_{i=1}^{\vec d_{s}'}(1+\vr_{s,i}'(2\MU_{\pi,s,i,1}'-1))} ,\\
	&\qquad		\frac{2^{-\vd_{-1}-\vd_{-1}''}\prod_{i=1}^{\vec d_{-1}} (1+\vr_{-1,i}(2\MU_{\pi,-1,i,2}-1))\prod_{i=1}^{\vec d_{-1}''}(1+\vr_{-1,i}''(2\vec\mu_{\pi,-1,i,2}''-1))}
	{\sum_{s\in\{\pm1\}}2^{-\vd_{s}-\vd_{s}''}\prod_{i=1}^{\vec d_{s}} (1+\vec r_{s,i}(2\vec\mu_{\pi,s,i,2}-1))\prod_{i=1}^{\vec d_{s}''}(1+\vr_{s,i}''(2\MU_{\pi,s,i,2}''-1))}
	\Bigg)\in(0,1)^2.\nonumber
\end{align}
Let $\uni^\tensor\in\cP((0,1)^2)$ denote the atom on the centre $(\frac12,\frac12)$ of the unit square.
We write $\BP_{d,t}^{\tensor (\ell)}$ 
for the $\ell$-fold application of the operator $\BP_{d,t}^\tensor$.
We are going to perform a fixed point iteration using the operator $\BP_{d,t}^{\tensor}$, starting from $\uni^\tensor$.
This fixed point iteration is known as {\em density evolution} in physics jargon~\cite{MM}.
Let
\begin{align*}
\pi_{d,t}^{(\ell)}=\BP_{d,t}^{\tensor (\ell)}(\uni^\tensor).
\end{align*}

\begin{lemma}\label{lem_trafo}
Let $d \in (0,2)$, $t \in [0,1]$ and set $\pi_{d,t} = \mypsi(\rho_{d,t})$, where $\rho_{d,t}$ is
		the unique fixed point of $\LLN$ from \Prop~\ref{prop_noela}.

		Then $\pi_{d,t}$ is a fixed point of $\BP_{d,t}^{\tensor}$, and
		\begin{align*}
			\pi_{d,t}&=\lim_{\ell\to\infty}\pi_{d,t}^{(\ell)}.
		\end{align*}
	\end{lemma}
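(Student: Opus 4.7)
The plan is to exploit the fact that $\LLN$ is exactly the log-likelihood version of $\BP_{d,t}^{\tensor}$: under the change of coordinates $\mypsi\colon\RR^2\to(0,1)^2$ that turns log-likelihood ratios into probability vectors, the two dynamical systems are conjugate. Concretely, the central claim to verify is the intertwining identity
\begin{align*}
	\BP_{d,t}^{\tensor}(\mypsi(\rho))&=\mypsi(\LLN(\rho))&&(\rho\in\cP(\RR^2)),
\end{align*}
where $\mypsi(\rho)$ denotes the pushforward measure on $(0,1)^2$.

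To establish this intertwining I would construct a direct coupling. Given $\rho$, generate one sample from $\LLN(\rho)$ as in \eqref{eqlogBPtensor} and simultaneously construct a sample from $\BP_{d,t}^{\tensor}(\mypsi(\rho))$ as in \eqref{eqBPtensor_update} on the same probability space, reusing the Poisson counts $\vd,\vd',\vd''$ and all Rademacher variables, and setting $\vec\mu_{\pi,s,i,h}=\mypsi(\vec\xi_{\rho,i,h})$ after an appropriate re-indexing. The bridge between the two constructions is Poisson thinning: the $\vd\sim\Po(td)$ shared clauses in the $\LLN$ formula, partitioned according to the root-sign variable $\vec s_i\in\PM$, yield two independent $\Po(td/2)$ groups, which correspond to the $\vd_{\pm1}$ counts in \eqref{eqBPtensor_update}; the same split applies to the distinct clauses. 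Exponentiating the first component of $\LLN(\rho)$ and using the elementary identity $\mypsi(z)=\eul^z/(1+\eul^z)$ then reproduces the ratio on the first coordinate of \eqref{eqBPtensor_update}, because the factors $((1+r(2\mu-1))/2)^{s}$ corresponding to $s=+1$ end up in the numerator while those for $s=-1$ end up in the denominator (along with the corresponding powers of $2$). The second coordinate is handled by the same calculation with subscripts $2$ and $''$ in place of $1$ and $'$. Carrying out this algebra while keeping the sign conventions straight is the main technical step.

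Once the intertwining is in place, the rest is immediate. For the fixed-point assertion, taking $\rho=\rho_{d,t}$ and using $\LLN(\rho_{d,t})=\rho_{d,t}$ from \Prop~\ref{prop_noela} gives
\begin{align*}
	\BP_{d,t}^{\tensor}(\pi_{d,t})=\BP_{d,t}^{\tensor}(\mypsi(\rho_{d,t}))=\mypsi(\LLN(\rho_{d,t}))=\mypsi(\rho_{d,t})=\pi_{d,t}.
\end{align*}
For the convergence claim, observe that $\mypsi(0,0)=(1/2,1/2)$, so $\uni^{\tensor}=\mypsi(\delta_{(0,0)})$ is the image under $\mypsi$ of the initial measure for the $\LLN$-iteration from \Prop~\ref{prop_noela}. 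Iterating the intertwining identity therefore yields $\pi_{d,t}^{(\ell)}=\mypsi(\rho_{d,t}^{(\ell)})$ for every $\ell\geq0$. Since \Prop~\ref{prop_noela} provides $W_2$-convergence $\rho_{d,t}^{(\ell)}\to\rho_{d,t}$, which implies weak convergence, and since $\mypsi$ is continuous, the continuous mapping theorem delivers the weak convergence $\pi_{d,t}^{(\ell)}\to\pi_{d,t}$, as required.
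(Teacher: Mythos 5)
Your proposal is correct and follows essentially the same route as the paper's proof: both rest on the intertwining $\BP_{d,t}^{\tensor}\circ\mypsi=\mypsi\circ\LLN$ verified by Poisson thinning of the shared and distinct clause counts according to the random signs, followed by applying the $\LLN$-fixed point from \Prop~\ref{prop_noela} and the continuous mapping theorem to get convergence of $\pi_{d,t}^{(\ell)}=\mypsi(\rho_{d,t}^{(\ell)})$. The only cosmetic difference is that you state the conjugation as a general identity for all $\rho\in\cP(\RR^2)$, whereas the paper carries out the same distributional computation directly at $\rho=\rho_{d,t}$ and then cites the inductive extension to the iterates; both are sound.
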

	\begin{proof}
		For $s\in\PM$ let
		\begin{align}\label{eqNoelasXis}
			(\vec\xi_{\rho,s,i,1},\vec\xi_{\rho,s,i,2})_{i\geq1},
			(\vec\xi_{\rho,s,i,1}',\vec\xi_{\rho,s,i,2}')_{i\geq1},
			(\vec\xi_{\rho,s,i,1}'',\vec\xi_{\rho,s,i,2}'')_{i\geq1}
		\end{align}
		be three sequences of random vectors with distribution $\rho_{d,t}$.
		Further, let $(\vd_{s},\vd_{s}',\vd_{s}'')_{s\in\PM}$ be Poisson variables with $\ex[\vd_s]=td/2$ and $\ex[\vd_s']=\ex[\vd_s'']=(1-t)d/2$ and let $(\vd,\vd',\vd'')$ be Poisson variables with $\ex[\vd]=td$ and $\ex[\vd']=\ex[\vd'']=(1-t)d$.
		Finally, let $((\vr_{s,i},\vr_{s,i}',\vr_{s,i}''))_{s\in\PM,i\geq1}$, $((\vs_{i},\vs_{i}',\vs_{i}''))_{i\geq1}$ and $((\vr_{i},\vr_{i}',\vr_{i}''))_{i\geq1}$ all be uniformly distributed on $\{\pm1\}^3$.
		All of these random variables are mutually independent.
		Throughout the proof, we write $\myphi = (\myphi_1, \myphi_2)$ and $\mypsi = (\mypsi_1, \mypsi_2)$ for $\myphi_h:(0,1) \to \RR$, $\myphi_h(x) = \log \frac{x}{1-x}$, and $\mypsi_h:\RR \to (0,1)$, $\mypsi_h(x) = (1+\tanh(x/2))/2$, where $h \in \{1,2\}$.

		Then, since $\mypsi(\vec\xi_{\rho,s,1,1}, \vec\xi_{\rho,s,1,2}) = (\mypsi_2(\vec\xi_{\rho,s,1,1}), \mypsi_2(\vec\xi_{\rho,s,1,2}))$ has distribution $\mypsi(\rho_{d,t})$, using the definitions of $\myphi$ and $\mypsi,$
		\begin{align*}
			\BP_{d,t}^{\tensor}(\mypsi(\rho_{d,t})) &\disteq
			\Bigg(
			\frac{2^{-\vd_{-1}-\vd_{-1}'}\prod_{i=1}^{\vec d_{-1}} (1+\vr_{-1,i}(2\mypsi_1(\vec\xi_{\rho,-1,i,1})-1))\prod_{i=1}^{\vec d_{-1}'}(1+\vr_{-1,i}'(2\mypsi_1(\vec\xi_{\rho,-1,i,1}')-1))}
			{\sum_{s\in\{\pm1\}}2^{-\vd_{s}-\vd_{s}'}\prod_{i=1}^{\vec d_{s}} (1+\vr_{s,i}(2\mypsi_1(\vec\xi_{\rho,s,i,1})-1))\prod_{i=1}^{\vec d_{s}'}(1+\vr_{s,i}'(2\mypsi_1(\vec\xi_{\rho,s,i,1}')-1))} ,\\
			&\qquad		\frac{2^{-\vd_{-1}-\vd_{-1}''}\prod_{i=1}^{\vec d_{-1}} (1+\vr_{-1,i}(2\mypsi_1(\vec\xi_{\rho,-1,i,2})-1))\prod_{i=1}^{\vec d_{-1}''}(1+\vr_{-1,i}''(2\mypsi_1(\vec\xi_{\rho,-1,i,2}'')-1))}
			{\sum_{s\in\{\pm1\}}2^{-\vd_{s}-\vd_{s}''}\prod_{i=1}^{\vec d_{s}} (1+\vec r_{s,i}(2\mypsi_1(\vec\xi_{\rho,s,i,2})-1))\prod_{i=1}^{\vec d_{s}''}(1+\vr_{s,i}''(2\mypsi_1(\vec\xi_{\rho,s,i,2}'')-1))}
			\Bigg)\\
			&\disteq \mypsi \begin{pmatrix} \sum_{i=1}^{\vec d} \vec s_i\log\frac{1+\vr_{i}(2\mypsi_1(\vec\xi_{\rho,i,1})-1)}{2} + \sum_{i=1}^{\vec d'}\vec s_i' \log\frac{1+\vr_{i}'(2\mypsi_1(\vec\xi_{\rho,i,1}')-1)}{2}  \\
				\sum_{i=1}^{\vec d} \vec s_i\log\frac{1+\vr_{i}(2\mypsi_1(\vec\xi_{\rho,i,2})-1)}{2} + \sum_{i=1}^{\vec d''}\vec s_i'' \log\frac{1+\vr_{i}''(2\mypsi_1(\vec\xi_{\rho,i,2}'')-1)}{2}
			\end{pmatrix}\\
			&\disteq \mypsi \begin{pmatrix} \sum_{i=1}^{\vec d} \vec s_i\log\frac{1+\vr_{i}\tanh(\vec\xi_{\rho,i,1}/2)}{2} + \sum_{i=1}^{\vec d'}\vec s_i' \log\frac{1+\vr_{i}'\tanh(\vec\xi_{\rho,i,1}'/2)}{2}  \\
				\sum_{i=1}^{\vec d} \vec s_i\log\frac{1+\vr_{i}\tanh(\vec\xi_{\rho,i,2})/2)}{2} + \sum_{i=1}^{\vec d''}\vec s_i'' \log\frac{1+\vr_{i}''\tanh(\vec\xi_{\rho,i,2}''/2)}{2}
			\end{pmatrix}.
		\end{align*}
		Since 
		$\rho_{d,t}$ is a fixed point of $\LLN$, the last argument vector of $\mypsi$ has distribution $\rho_{d,t}$, and  we get that
		\begin{align*}
			\BP_{d,t}^{\tensor}(\mypsi(\rho_{d,t})) & = \mypsi(\rho_{d,t}).
		\end{align*}
		So $\mypsi(\rho_{d,t})$ is a fixed point of $\BP_{d,t}^{\tensor}$.

		Next, let $\luni^\tensor\in\cP(\RR^2)$ denote the atom in $(0,0)$. Then since $\mypsi(0,0)=(\frac12,\frac12)$, we have $\uni^\tensor = \mypsi(\luni^\tensor)$. By a computation analogous to the first part of the proof, one can show inductively that for all $\ell \geq 1$, $\pi_{d,t}^{(\ell)} = \BP_{d,t}^{\tensor}(  \mypsi(\luni^\tensor)) =  \mypsi({\LLN}^{\,(\ell)}(\luni^\tensor))$. As 
		\begin{align*}
			\rho_{d,t}&=\lim_{\ell\to\infty}{\LLN}^{\,(\ell)}(\luni^\tensor),
		\end{align*}
		the second part of the claim now follows from the continuous mapping theorem.
	\end{proof}

	\subsection{Belief Propagation on the Galton-Watson tree}\label{sec_bp_gw}
	The proof of \Prop~\ref{prop_bp} relies on the fact that Belief Propagation is `exact' on trees.
	The following fact, which is a direct consequence of~\cite[\Thm~14.1]{MM}, furnishes the precise statement that we will use.

	\begin{fact}\label{fact_BP}
		Assume that the bipartite graph associated with the 2-CNF $\Phi$ is a (finite) tree.
		Let $z\in V(\Phi)$ be a variable and let $\ell\geq1$ be an integer such that no variable or clause of $\Phi$ has distance greater than $2\ell$ from $z$.
		Let
		\begin{align*}
			\mu_{x\to a}^{(0)}(s)&=\mu_{a\to x}^{(0)}(s)=\frac12&&\mbox{for all }x\in V(\Phi),\,a\in\partial x,\,s\in\PM.
		\end{align*}
		Furthermore, obtain the messages $(\mu_{x\to a}^{(i+1)}(s),\mu_{a\to x}^{(i+1)}(s))_{x,a,s}$ by applying the BP operator~\eqref{eqBPop} to \\ $(\mu_{x\to a}^{(i)}(s),\mu_{a\to x}^{(i)}(s))_{x,a,s}$.
		Then for all $i\geq2\ell$ we have
		\begin{align}\label{eq_fact_BP}
			\mu_{\Phi}(\{\SIGMA_z=s\})&=\frac{\prod_{a\in\partial z}\mu_{a\to z}^{(i)}(s)}{\prod_{a\in\partial z}\mu_{a\to z}^{(i)}(1)+\prod_{a\in\partial z}\mu_{a\to z}^{(i)}(-1)}.
		\end{align}
	\end{fact}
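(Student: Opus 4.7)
The plan is to root the bipartite tree at $z$ and interpret every BP message as a conditional count of satisfying assignments on a rooted subtree. For each directed edge in the rooted tree, introduce a sub-2-CNF: for a directed edge $(x\to a)$ in which variable $x$ sits below its parent clause $a$, let $\Phi_{x\to a}$ denote the sub-formula on the subtree rooted at $x$ with $a$ removed; for a directed edge $(a\to x)$ in which clause $a$ sits below its parent variable $x$, let $\Phi_{a\to x}$ consist of $a$ itself together with the subtree rooted at the unique child variable $y$ of $a$. Write $N(\Psi,s)$ for the number of satisfying assignments of such a sub-formula $\Psi$ in which its root variable takes value $s\in\PM$.

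The heart of the argument is the following identification, to be proved by induction on the depth of the relevant subtree: for every $i$ at least as large as that depth,
\begin{align*}
	\mu_{x\to a}^{(i)}(s) &= \frac{N(\Phi_{x\to a},s)}{N(\Phi_{x\to a},1)+N(\Phi_{x\to a},-1)},\\
	\mu_{a\to x}^{(i)}(s) &= \frac{N(\Phi_{a\to x},s)}{N(\Phi_{a\to x},1)+N(\Phi_{a\to x},-1)}.
\end{align*}
The base case is a leaf variable $x$, for which $\Phi_{x\to a}$ contains no clauses and thus $N(\Phi_{x\to a},\pm1)=1$, matching the uniform initialization $\mu_{x\to a}^{(0)}(\pm1)=1/2$. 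For the variable-to-clause step, the subtrees hanging off $x$ are vertex-disjoint, so $N(\Phi_{x\to a},s)=\prod_{b\in\partial x\setminus\{a\}}N(\Phi_{b\to x},s)$; dividing by the corresponding sum over $s\in\PM$ and invoking the induction hypothesis on each $\mu_{b\to x}^{(i-1)}$ reproduces \eqref{eqBPop2} exactly. For the clause-to-variable step with $\partial a=\{x,y\}$, a direct case analysis yields $N(\Phi_{a\to x},\sign(x,a))=N(\Phi_{y\to a},1)+N(\Phi_{y\to a},-1)$ (since $a$ is then already satisfied by $x$, so $y$ is unconstrained) while $N(\Phi_{a\to x},-\sign(x,a))=N(\Phi_{y\to a},\sign(y,a))$; forming the ratio and using the induction hypothesis for $\mu_{y\to a}^{(i-1)}$ reproduces \eqref{eqBPop1}.

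Combining at the root, independence of the subtrees hanging off $z$ gives $Z(\Phi)\cdot\mu_\Phi(\{\SIGMA_z=s\})=\prod_{a\in\partial z}N(\Phi_{a\to z},s)$; dividing by the same expression summed over $s\in\PM$ and invoking the lemma a final time yields \eqref{eq_fact_BP}. Every subtree in question has depth at most $2\ell$ in the bipartite graph, and since the BP update is deterministic, once iteration $i$ exceeds the depth of a given subtree the corresponding message remains fixed at its correct value for all larger $i$, justifying the conclusion uniformly for $i\geq 2\ell$. The main bookkeeping hurdle is tracking normalization constants through the induction, but since every message is normalized to a probability vector at each step and the root combination takes a ratio, these constants cancel cleanly; no substantive difficulty is anticipated beyond careful depth-counting.
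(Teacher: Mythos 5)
Your proof is correct. It is worth noting that the paper does not actually prove Fact~\ref{fact_BP} at all: it simply invokes it as a direct consequence of Theorem~14.1 of the M\'ezard--Montanari monograph \cite{MM}, the standard statement that Belief Propagation is exact on tree factor graphs. What you have written is, in effect, a self-contained proof of that textbook theorem specialised to 2-CNFs: rooting the tree at $z$, identifying each message $\mu_{x\to a}^{(i)}$, $\mu_{a\to x}^{(i)}$ with the normalised counts $N(\Phi_{x\to a},\nix)$, $N(\Phi_{a\to x},\nix)$ of satisfying assignments of the corresponding rooted sub-formula, and checking by induction on subtree depth that the updates \eqref{eqBPop1}--\eqref{eqBPop2} reproduce exactly the recursions these counts satisfy (the clause-to-variable case analysis $N(\Phi_{a\to x},\sign(x,a))=N(\Phi_{y\to a},1)+N(\Phi_{y\to a},-1)$ and $N(\Phi_{a\to x},-\sign(x,a))=N(\Phi_{y\to a},\sign(y,a))$ is the right one, and the disjointness of the subtrees hanging off a variable justifies the product formula). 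Your observation that messages stabilise once the iteration count exceeds the depth of the relevant subtree, so that the conclusion holds uniformly for $i\geq2\ell$, is also the right way to handle the quantifier on $i$. The only point you leave implicit is that no denominator ever vanishes; this is immediate here, since on a tree every value of the root of a sub-formula extends to a satisfying assignment (so all counts $N(\nix,\pm1)$ are strictly positive), or alternatively since the messages stay strictly positive when started from the uniform initialisation. The trade-off between the two routes is simply self-containedness versus brevity: your argument adds about a page but removes the reliance on an external reference.
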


	As a preparation toward the proof of \Prop~\ref{prop_bp} we establish the following `univariate' variant of the proposition.

	\begin{lemma}\label{lem_rs_uni}
		Let $h=1,2$.
		Let $\pi_{d,t,h}^{(\ell)}$ be the distribution of the $h$-th component of a random vector with distribution $\pi_{d,t}^{(\ell)}$.
		Then $\mu_{\vT_h^{(2\ell)}}(\{\SIGMA_o=1\})$ has distribution $\pi_{d,t,h}^{(\ell)}$.
	\end{lemma}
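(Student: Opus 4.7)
The plan is to proceed by induction on $\ell$, using Fact~\ref{fact_BP} to identify $\mu_{\vT_h^{(2\ell)}}(\{\SIGMA_o=1\})$ with a Belief Propagation computation on the Galton-Watson tree, and then to match the distributional BP recursion with the $h$-th coordinate projection of the density evolution iteration $\pi_{d,t}^{(\ell)}=\BP_{d,t}^\tensor(\pi_{d,t}^{(\ell-1)})$. As a first simplification, I would observe that $\vT_h\disteq\vT$, the ``plain'' Galton-Watson tree from \Sec~\ref{sec_BPop}. Indeed, in $\vT_1$ a shared root produces $\Po(dt/4)+\Po(d(1-t)/4)=\Po(d/4)$ clauses of each sign pair $(s,s')\in\PM^2$, and every descendant variable (shared or $1$-distinct) has the same $\Po(d/4)$-per-type offspring distribution; every clause has a unique variable child. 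An identical argument handles $\vT_2$. Thus it suffices to show that $\mu_{\vT^{(2\ell)}}(\{\SIGMA_o=1\})\sim\pi_{d,t,h}^{(\ell)}$; as a consistency check, this marginal must turn out to be independent of $t$.

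For the base case $\ell=0$, the tree $\vT^{(0)}$ is the root alone, so $\mu_{\vT^{(0)}}(\{\SIGMA_o=1\})=\tfrac12$ almost surely, matching the marginal of the atom $\uni^\tensor$. For the inductive step, Fact~\ref{fact_BP} applied to $\vT^{(2\ell)}$ with all messages initialized to $\tfrac12$ gives
\begin{align*}
\mu_{\vT^{(2\ell)}}(\{\SIGMA_o=1\})=\frac{\prod_{a\in\partial o}\mu_{a\to o}^{(2\ell)}(1)}{\prod_{a\in\partial o}\mu_{a\to o}^{(2\ell)}(1)+\prod_{a\in\partial o}\mu_{a\to o}^{(2\ell)}(-1)}.
\end{align*}
For each $a\in\partial o$ with $\partial a=\{o,y_a\}$, $s=\sign(o,a)$, $s'=\sign(y_a,a)$, the clause-to-variable update~\eqref{eqBPop1} expresses $\mu_{a\to o}^{(2\ell)}(\pm s)$ as an explicit function of $\mu_{y_a\to a}^{(2\ell-1)}(s')$. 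Applying Fact~\ref{fact_BP} to the depth-$(2\ell-2)$ subtree pending at $y_a$ identifies this latter message with the exact Gibbs marginal $\mu_{\vT^{(2\ell-2)}}(\{\SIGMA_{y_a}=s'\})$; by the branching property these subtrees are i.i.d.\ copies of $\vT^{(2\ell-2)}$, so the messages $(\mu_{y_a\to a}^{(2\ell-1)}(s'))_{a\in\partial o}$ are mutually independent, and by the inductive hypothesis each is marginally distributed as $\pi_{d,t,h}^{(\ell-1)}$ or its reflection $1-\pi_{d,t,h}^{(\ell-1)}$, according to $s'$.

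The remaining task is to verify that the resulting distributional recursion for $\mu_{\vT^{(2\ell)}}(\{\SIGMA_o=1\})$ coincides with the $h$-th marginal of $\BP_{d,t}^\tensor$ applied to $\pi_{d,t}^{(\ell-1)}$. Inspecting \eqref{eqBPtensor_update}, one checks that the $h$-th coordinate of the output depends on the input only through its $h$-th marginal, and that the sum of Poisson counts $\vd_s+\vd_s'\sim\Po(d/2)$ (respectively $\vd_s+\vd_s''\sim\Po(d/2)$) matches the $\Po(d/2)$ distribution of the number of clauses incident on $o$ in $\vT$ with $\sign(o,\cdot)=s$. The principal obstacle lies precisely in this bookkeeping: one must reconcile the two independent i.i.d.\ sequences $(\MU_{\pi,s,i,h})$ and $(\MU_{\pi,s,i,h}')$ (or $(\MU_{\pi,s,i,h}'')$) appearing in~\eqref{eqBPtensor_update} with the single i.i.d.\ sequence of messages emanating from $o$'s grandchildren in $\vT^{(2\ell)}$, which is justified by Poisson additivity together with the fact that both of the former sequences are marginally distributed as $\pi_{d,t,h}^{(\ell-1)}$. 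Once this identification is made, combining with $\pi_{d,t}^{(\ell)}=\BP_{d,t}^\tensor(\pi_{d,t}^{(\ell-1)})$ closes the induction.
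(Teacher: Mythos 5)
Your proof is correct and follows essentially the same route as the paper: induction on $\ell$, the observation $\vT_h\disteq\vT$ (via Poisson additivity of the offspring counts), Fact~\ref{fact_BP} to identify the root marginal with a BP recursion, independence of the subtrees hanging off the grandchildren, and finally matching the resulting distributional recursion against the $h$-th coordinate of~\eqref{eqBPtensor_update}. The one stylistic difference is that you frame the sign $s'$ of the other literal as producing a mixture of $\pi_{d,t,h}^{(\ell-1)}$ and its reflection, whereas the paper absorbs this into the Rademacher variables $\vr_{s,i}$ appearing in the $\BP_{d,t}^\tensor$ operator; these are the same thing.
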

	\begin{proof}
		We proceed by induction on $\ell$.
		For $\ell=0$ there is nothing to show because both $\mu_{\vT_h^{(0)}}(\{\SIGMA_o=1\})=\frac12$ and $\pi_{d,t, h}^{(\ell)}$ is the atom on $1/2$.
		To go from $\ell-1$ to $\ell\geq1$ we exploit the fact that $\vT_h$ by itself has the same distribution as the `plain' Galton-Watson tree $\vT$ from \Sec~\ref{sec_BPop}, after all distinctions between different types of clauses and variables are dropped.
		In effect, the tree $\vT_{h,x}$ pending on any grandchild $x\in\partial^2o$ of the root has the same distribution as $\vT$ itself, and these trees are mutually independent for all $x\in\partial^2o$.
		Consequently, by induction we know that the marginal $\mu_{\vT_{h,x}^{(2(\ell-1))}}(\{\SIGMA_x=1\})$ of $x$ in $\vT_{h,x}^{(2(\ell-1))}$ has distribution $\pi_{d,t,h}^{(\ell-1)}$.

		Now let $a_x\in\partial x\cap\partial o$ be the clause that links $x$ and $o$.
		Fact~\ref{fact_BP} implies that the marginals $\mu_{\vT_{h,x}^{(2(\ell-1))}}(\{\SIGMA_x=s\})$ coincide with the messages $\mu_{\vT^{(2\ell)},x\to a_x}^{(2(\ell-1))}(s)$.
		Indeed, the marginal formula~\eqref{eq_fact_BP} for the tree $\vT_{h,x}^{(2(\ell-1))}$ coincides with the message update formula \eqref{eqBPop2}, because clause $a_x$ is not part of $\vT_{h,x}$.
		Furthermore, because the trees pending on the different grandchildren of the root $o$ are mutually independent, the incoming messages $(\mu_{\vT_{h,x}^{(2\ell)},x\to a_x}^{(2(\ell-1))}(\pm1))_{x\in\partial^2o}$ are mutually independent.
		Moreover, the quotient from \eqref{eq_fact_BP}, which, upon substituting in the update equation~\eqref{eqBPop1}, can be rewritten as
		\begin{align}\label{eq_lem_rs_uni_1}
			\mu_{\vT_h^{(2\ell)}}(\{\SIGMA_o=1\})&=\frac{\prod_{x\in\partial^2 o}\vecone\{\sign(o,a_x)=1\}+\vecone\{\sign(o,a_x)=-1\}\mu_{\vT_{h,x}^{(2\ell)},x\to a_x}^{(2(\ell-1))}(\sign(x,a_x))}{\sum_{s\in\PM}\prod_{x\in\partial^2 o}\vecone\{\sign(o,a_x)=s\}+\vecone\{\sign(o,a_x)=-s\}\mu_{\vT_{h,x}^{(2\ell)},x\to a_x}^{(2(\ell-1))}(\sign(x,a_x))}.
		\end{align}
		Also recall that $o$ has $\Po(d)$ children.
		Hence, comparing~\eqref{eq_lem_rs_uni_1} with the $h$-component of \eqref{eqBPtensor_update}, we conclude that $\mu_{\vT_h^{(2\ell)}}(\{\SIGMA_o=1\})$ has distribution $\pi_{d,t,h}^{(\ell)}$.
	\end{proof}

	\begin{lemma}\label{lem_gw_bp}
		Let $t\in[0,1]$.
		Then $\MU^{(2\ell)}$ has distribution $\pi_{d,t}^{(\ell)}$.
	\end{lemma}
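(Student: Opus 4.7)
I will proceed by induction on $\ell$, paralleling the argument in \Lem~\ref{lem_rs_uni} but tracking the joint distribution of the two root marginals. The base case $\ell=0$ is immediate: $\vT^{\tensor,(0)}$ is a single vertex, so both coordinates of $\MU^{(0)}$ equal $1/2$, which matches $\pi_{d,t}^{(0)}=\uni^\tensor$.

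For the inductive step, I exploit the multi-type Markov structure of $\vT^\tensor$ at its (shared) root $o$. First I partition the clause-children of $o$ by type and by the sign $s\in\PM$ with which $o$ appears: for each $s$ the numbers of shared, 1-distinct and 2-distinct clauses of $o$ with root-sign $s$ are independent with distributions $\Po(td/2)$, $\Po((1-t)d/2)$ and $\Po((1-t)d/2)$, respectively, and within each group the grandchild sign $s'$ is an i.i.d.\ uniform $\PM$-variable across clauses. These counts and signs supply exactly the independent variables $\vd_s,\vd_s',\vd_s''$ and $\vr_{s,i},\vr_{s,i}',\vr_{s,i}''$ that appear in the definition of $\BP^\tensor_{d,t}$ in \eqref{eqBPtensor_update}.

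Next I identify the required input distributions from the grandchildren. By the branching property of $\vT^\tensor$ the sub-trees pending on distinct grandchildren of $o$ are mutually independent. A sub-tree rooted at a \emph{shared} grandchild $x$ is itself distributed as $\vT^\tensor$, so by the induction hypothesis the pair $(\mu_{\vT_{1,x}^{(2\ell-2)}}(\{\SIGMA_x=1\}),\mu_{\vT_{2,x}^{(2\ell-2)}}(\{\SIGMA_x=1\}))$ has law $\pi_{d,t}^{(\ell-1)}$; these supply the $\MU_{\pi,s,i,\cdot}$-terms. A sub-tree rooted at an $h$-distinct grandchild $x$ lives only inside $\vT_h$ and is distributed as the plain tree $\vT$ from \Sec~\ref{sec_BPop}; by \Lem~\ref{lem_rs_uni} its root marginal has law $\pi_{d,t,h}^{(\ell-1)}$, which equals the $h$-th marginal of $\pi_{d,t}^{(\ell-1)}$ and can thus be realised as the $h$-th coordinate of an independent sample from $\pi_{d,t}^{(\ell-1)}$, supplying the $\MU_{\pi,s,i,1}'$ and $\MU_{\pi,s,i,2}''$ terms. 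I then invoke Fact~\ref{fact_BP} for $\vT_1^{(2\ell)}$ and $\vT_2^{(2\ell)}$ separately: each root marginal is the ratio in \eqref{eqBPmargs}, with every incoming message $\mu_{a\to o}^{(2\ell-1)}$ expressed via \eqref{eqBPop1} in terms of the outgoing message $\mu_{x\to a}$ from its grandchild, which equals the corresponding grandchild marginal computed above. Substitution turns the pair of BP formulas into exactly \eqref{eqBPtensor_update} evaluated on independent samples from $\pi_{d,t}^{(\ell-1)}$, so $\MU^{(2\ell)}\sim\BP^\tensor_{d,t}(\pi_{d,t}^{(\ell-1)})=\pi_{d,t}^{(\ell)}$.

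The main obstacle is strictly bookkeeping rather than anything conceptual: one must be careful that the \emph{same} grandchild-sign variable $\vr_{s,i}$ is used in both coordinates of \eqref{eqBPtensor_update} for each shared clause (which is correct because a shared clause has a single $(s,s')$ type visible in both $\vT_1$ and $\vT_2$), while the 1-distinct and 2-distinct terms employ independent sign variables $\vr_{s,i}'$ and $\vr_{s,i}''$. One must also rely on \Lem~\ref{lem_rs_uni} to realise a $\vT$-subtree marginal as an appropriate coordinate projection of $\pi_{d,t}^{(\ell-1)}$, which is legitimate because the first (resp.\ second) coordinate of the output of $\BP^\tensor_{d,t}(\pi)$ in \eqref{eqBPtensor_update} only reads off the first (resp.\ second) coordinates of its shared and $h$-distinct input vectors.
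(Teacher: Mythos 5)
Your proof is correct and follows essentially the same route as the paper: induction on $\ell$, decomposing the grandchildren of the root of $\vT^\tensor$ into shared and $h$-distinct groups, invoking the inductive hypothesis for the shared ones (whose pending subtrees are i.i.d.\ copies of $\vT^\tensor$), invoking \Lem~\ref{lem_rs_uni} for the $h$-distinct ones (whose pending subtrees are i.i.d.\ copies of $\vT$), and then matching the resulting BP update against \eqref{eqBPtensor_update}. Your explicit remark that the shared clause terms must reuse the \emph{same} sign variable $\vr_{s,i}$ in both coordinates, while the 1- and 2-distinct terms use independent ones, is the key bookkeeping point that the paper handles via the phrase ``matching the expressions,'' so the two arguments are essentially identical in substance.
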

	\begin{proof}
		As in the proof of \Lem~\ref{lem_rs_uni} we proceed by induction on $\ell$.
		For $\ell=0$ there is nothing to show.
		To go from $\ell-1$ to $\ell\geq1$ we reuse the rewritten update equation~\eqref{eq_lem_rs_uni_1}.
		In each of the correlated trees $\vT_1,\vT_2$ the root $o$ has $\Po((1-t)d)$ $\{1,2\}$-distinct grandchildren.
		By construction, the trees pending on these grandchildren are mutually independent copies of the tree $\vT$.
		Hence, the same consideration as in the proof of \Lem~\ref{lem_rs_uni} shows that the messages that the $\{1,2\}$-distinct grandchildren pass up are independent with distribution $\pi_{d,t,h}^{(\ell-1)}$.
		The same is true of the messages $\vec\mu'_{\pi_{d,t}^{(\ell-1)},\pm1,i,h}, \vec\mu''_{\pi_{d,t}^{(\ell-1)},\pm1,i,h}$ from~\eqref{eqBPtensor_update}.
		Consequently, the contribution of the $\{1,2\}$-distinct children in~\eqref{eqBPtensor_update} matches the corresponding contribution to the update equation \eqref{eq_lem_rs_uni_1}.

		With respect to the shared grandchildren, we apply induction as in the proof of \Lem~\ref{lem_rs_uni}.
		Indeed, the trees pending on the shared grandchildren $x\in\partial_{\vT^\tensor}^2o$ have the same distribution as the original tree $\vT^\tensor$ and are mutually independent.
		Therefore, by the induction hypothesis, the pair of messages $(\mu_{\vT_h^{(2\ell-2)},x\to a_x}^{(2(\ell-1))}(1))_{h=1,2}$ that a shared grandchild $x$ sends towards the root has distribution $\pi_{d,t}^{(\ell-1)}$.
		Finally, since $o$ has $\Po(dt)$ shared grandchildren, matching the expressions \eqref{eq_lem_rs_uni_1} and~\eqref{eqBPtensor_update} completes the proof.
	\end{proof}

	\begin{proof}[Proof of \Prop~\ref{prop_bp}]
		The assertion follows from \Lem s~\ref{lem_trafo} and~\ref{lem_gw_bp}.
	\end{proof}

\section{Proof that the variance is finite}\label{sec_no_prop}
The main goal of this section is to show that both the evaluation of the functional $\cB_{d,t}^{\tensor}$ on $\rho_{d,t}$ as well as the integration to obtain $\eta(d)^2$ yield finite values for any $d \in (0,2)$ and $t \in [0,1]$.

\begin{lemma}\label{lem_bethe_fin}
	For any $d \in (0,2)$ and $t \in [0,1]$, $\cB_{d,t}^{\tensor}(\rho_{d,t}) < \infty$. Moreover, for any $d \in (0,2)$, $\eta(d)^2 < \infty$.
\end{lemma}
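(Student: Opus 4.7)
The plan is to derive a pointwise linear bound on the logarithm appearing in~\eqref{eqfunctional} and then exploit the second-moment bound $\int_{\RR^2}\|\xi\|_2^2\dd\rho_{d,t}(\xi)<\infty$ guaranteed by \Prop~\ref{prop_noela}.

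First I would establish that there is an absolute constant $C>0$ such that for all $r_1,r_2\in\PM$ and $x_1,x_2\in\RR$,
\[
\abs{\log\bc{1-\tfrac14(1+r_1\tanh(x_1/2))(1+r_2\tanh(x_2/2))}}\leq C(1+|x_1|+|x_2|).
\]
Let $A$ denote the argument of the logarithm; it always lies in $(0,1]$, so only the regime $A\ll 1$ is problematic. If $A\geq 1/2$ then $|\log A|\leq\log 2$. Otherwise, writing $u_i=r_i\tanh(x_i/2)$, both $u_i$ must be strictly positive (for if some $u_i\leq 0$ then $(1+u_1)(1+u_2)/4\leq 1/2$), which forces $r_i=\sign(x_i)$. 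Setting $\epsilon_i:=1-u_i=1-\tanh(|x_i|/2)=2/(e^{|x_i|}+1)\geq e^{-|x_i|}$, a direct expansion together with $\epsilon_1\epsilon_2\leq\epsilon_1+\epsilon_2$ (valid since $\epsilon_i\in(0,1]$) yields
\[
A=\frac{\epsilon_1+\epsilon_2}{2}-\frac{\epsilon_1\epsilon_2}{4}\geq\frac{\epsilon_1+\epsilon_2}{4}\geq\frac{e^{-\min(|x_1|,|x_2|)}}{4},
\]
so that $-\log A\leq\log 4+\min(|x_1|,|x_2|)$, proving the claim.

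Next, applying this pointwise bound to both the $h=1$ and $h=2$ factors in~\eqref{eqfunctional} would give
\[
\abs{\cB_{d,t}^\tensor(\rho_{d,t})}\leq C^2\,\ex\brk{(1+|\vec\xi_{\rho_{d,t},1,1}|+|\vec\xi_{\rho_{d,t},2,1}|)(1+|\vec\xi_{\rho_{d,t},1,2}|+|\vec\xi_{\rho_{d,t},2,2}|)}.
\]
I would then expand the product into a sum of nine terms, noting that $\vec\xi_{\rho_{d,t},1}$ and $\vec\xi_{\rho_{d,t},2}$ are independent. Each cross-term is either a product of single-variable expectations of marginals of $\rho_{d,t}$ or of the form $\ex[|\vec\xi_{\rho_{d,t},i,1}|\,|\vec\xi_{\rho_{d,t},i,2}|]\leq\ex[\|\vec\xi_{\rho_{d,t},i}\|_2^2]$ by Cauchy-Schwarz. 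Every such term is finite by the $L^2$-bound from \Prop~\ref{prop_noela}, which completes the proof.

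The main obstacle is the pointwise bound in the first step: the logarithm is singular at $0$, so one must pinpoint the direction along which its argument vanishes (namely $r_i=\sign(x_i)$ with both $|x_i|\to\infty$) and exploit the exponential rate at which $1-\tanh(|x|/2)$ tends to zero in order to convert the singularity into a merely linear blowup that is controllable by the $L^2$ moment of $\rho_{d,t}$.
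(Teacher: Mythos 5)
Your proof is correct, and it takes a genuinely different route from the paper's. The paper first establishes symmetry properties of the fixed point $\pi_{d,t}=\mypsi(\rho_{d,t})$ (namely that the two marginals of $\pi_{d,t}$ coincide in distribution and that each marginal is invariant under $\mu\mapsto 1-\mu$), then applies the Cauchy--Schwarz inequality to decouple the product over $h=1,2$, uses these symmetries to collapse the four sign configurations into one, bounds $\log^2(1-\mu\mu')\leq\log^2(1-\mu')$, and finally controls $\ex[\log^2(1-\vec\mu)]$ by a case split on $\vec\mu\lessgtr\tfrac12$ that produces $\log^2 2 + \ex[\vec\xi^2]$. Your argument sidesteps the symmetry claim entirely by deriving the uniform pointwise bound $\abs{\log A}\leq C(1+|x_1|+|x_2|)$ on each logarithm factor directly: you correctly identify that $A<\tfrac12$ forces $r_i=\sign(x_i)$, rewrite $A=\tfrac{\epsilon_1+\epsilon_2}{2}-\tfrac{\epsilon_1\epsilon_2}{4}$ with $\epsilon_i=2/(\eul^{|x_i|}+1)\geq\eul^{-|x_i|}$, and conclude $A\geq\tfrac14\eul^{-\min(|x_1|,|x_2|)}$, which turns the logarithmic singularity into a linear growth that the $L^2$ bound from \Prop~\ref{prop_noela} absorbs after expanding the nine-term product (using independence of $\vec\xi_{\rho_{d,t},1}$ and $\vec\xi_{\rho_{d,t},2}$ plus Cauchy--Schwarz for the within-vector cross-terms). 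Your route is more elementary and self-contained, since it does not require the structural Claim about $\pi_{d,t}$; the paper's route fits more naturally into the BP machinery it has already developed and reuses it.
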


\subsection{Proof of \Lem~\ref{lem_bethe_fin}}
Let $\rho_{d,t}^{(\ell)} \in \cW_2(\RR^2)$ be the result of $\ell$ iterations of $\LLN$ launched from $\luni^{\tensor}$, the atom at $(0,0)$. In the proof of \Lem~\ref{lem_bethe_fin}, the following properties of the fixed point $\pi_{d,t}$ will be used:

\begin{claim}\label{claim_sym_prop}
	Let $\pi_{d,t} = \mypsi(\rho_{d,t})$ 
	and $\vec \mu_{\pi_{d,t}} = (\vec \mu_{\pi_{d,t},1}, \vec\mu_{\pi_{d,t},2})$ be a random vector with distribution $\pi_{d,t}$. Then
	\begin{align*}
		\vec \mu_{\pi_{d,t},1} \disteq \vec\mu_{\pi_{d,t},2} \qquad \text{ and } \qquad \vec\mu_{\pi_{d,t}1}\disteq 1 - \vec\mu_{\pi_{d,t},1}.
	\end{align*}
\end{claim}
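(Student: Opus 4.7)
The strategy is to lift both symmetries to the log-likelihood space: I plan to show that the fixed point $\rho_{d,t}$ of $\LLN$ is invariant under the coordinate swap $\tau:(\xi_1,\xi_2)\mapsto(\xi_2,\xi_1)$ and has marginals that are symmetric about zero, and then transport these properties to $\pi_{d,t}$ through the coordinatewise pushforward under $\mypsi$.

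For the coordinate-swap symmetry, I verify that $\LLN$ commutes with the $\tau$-pushforward. Inspecting the definition \eqref{eqlogBPtensor}, after swapping the two components of the output vector the only remaining asymmetry between the two coordinates is that the first coordinate now contains the summation driven by $\vec d'',\vec s_i'',\vec r_i'',\vec\xi_{\rho,i}''$ while the second contains the one driven by $\vec d',\vec s_i',\vec r_i',\vec\xi_{\rho,i}'$. Since the tuples $(\vec d',(\vec s_i')_{i\geq1},(\vec r_i')_{i\geq1},(\vec\xi_{\rho,i}')_{i\geq1})$ and $(\vec d'',(\vec s_i'')_{i\geq1},(\vec r_i'')_{i\geq1},(\vec\xi_{\rho,i}'')_{i\geq1})$ are independent and identically distributed, relabelling them preserves the joint law. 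Hence $\tau(\LLN(\rho))=\LLN(\tau(\rho))$ for every $\rho\in\cW_2(\RR^2)$. Applied to $\rho_{d,t}$ this shows that $\tau(\rho_{d,t})$ is itself a fixed point of $\LLN$ in $\cW_2(\RR^2)$, and the uniqueness assertion of \Prop~\ref{prop_noela} forces $\tau(\rho_{d,t})=\rho_{d,t}$. Because $\mypsi$ acts coordinatewise, the pushforward identity $\tau(\pi_{d,t})=\pi_{d,t}$ follows, which gives $\vec\mu_{\pi_{d,t},1}\disteq\vec\mu_{\pi_{d,t},2}$.

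For the sign-flip symmetry I exploit the Rademacher signs at the front of every summand. The first coordinate of a $\LLN(\rho)$-distributed vector is a signed sum $\sum_{i=1}^{\vec d}\vec s_i\,Y_i+\sum_{i=1}^{\vec d'}\vec s_i'\,Y_i'$, where $Y_i$ and $Y_i'$ are measurable functions of the remaining randomness and hence independent of $(\vec s_i)_{i\geq1}$ and $(\vec s_i')_{i\geq1}$. Because $\vec s_i\disteq-\vec s_i$ and $\vec s_i'\disteq-\vec s_i'$, conditioning on $\vec d,\vec d'$ and $(Y_i,Y_i')_{i\geq1}$ renders this sum symmetric about zero; integrating out the conditioning shows that the first marginal of $\LLN(\rho)$ is symmetric about $0$ for every input $\rho$, and in particular for the fixed point $\rho_{d,t}$. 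Since $\mypsi_1(-\xi)=(1+\tanh(-\xi/2))/2=1-\mypsi_1(\xi)$, this symmetry transfers to $\vec\mu_{\pi_{d,t},1}\disteq 1-\vec\mu_{\pi_{d,t},1}$.

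The only delicate step is the coordinate-swap argument: we must use that the primed and double-primed tuples are jointly interchangeable, not merely individually equidistributed, so that after the relabelling the whole joint law of the output (and not only its marginals) can be recognised as $\LLN(\tau(\rho))$. Once this is in place the conclusion is immediate from uniqueness in \Prop~\ref{prop_noela}, with no need for any iterative or density-evolution argument.
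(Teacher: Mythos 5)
Your proof is correct, and it takes a genuinely different route from the paper's for both halves of the claim. For the coordinate-swap symmetry, the paper iterates $\BP_{d,t}^\tensor$ starting from the atom $\uni^\tensor$ (which has identical marginals), checks that each application of the operator preserves the property, and passes to the limit via \Lem~\ref{lem_trafo}; you instead prove the equivariance $\tau\circ\LLN=\LLN\circ\tau$ at the log-likelihood level and invoke the uniqueness part of \Prop~\ref{prop_noela} to conclude $\tau(\rho_{d,t})=\rho_{d,t}$ in one stroke, then push forward through $\mypsi$ (which commutes with $\tau$ since it acts coordinatewise). Your route avoids tracking a property through the iterates and buys a cleaner, more conceptual argument, at the price of the small bookkeeping step you rightly flag: after swapping the output coordinates, the relabelling must interchange the \emph{entire} primed and double-primed tuples $(\vec d',(\vec s_i')_i,(\vec r_i')_i,(\vec\xi'_{\rho,i})_i)\leftrightarrow(\vec d'',(\vec s_i'')_i,(\vec r_i'')_i,(\vec\xi''_{\rho,i})_i)$, and this is legitimate because the two tuples are i.i.d. For the sign-flip symmetry, the paper works at the $\BP_{d,t}^\tensor$ level, identifying $\vec\mu_{\pi_{d,t},1}$ and $1-\vec\mu_{\pi_{d,t},1}$ with the two summands of the denominator of~\eqref{eqBPtensor_update}, and swapping the labels $s=-1\leftrightarrow s=+1$; you instead observe that the Rademacher prefactors $\vec s_i,\vec s_i'$ make the first marginal of $\LLN(\rho)$ symmetric about zero for \emph{any} input $\rho$, apply this at the fixed point, and transfer to $\pi_{d,t}$ through $\mypsi_1(-x)=1-\mypsi_1(x)$. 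This is somewhat more transparent, since the symmetry is read off directly from the sign variables without having to recognise $1-\vec\mu_{\pi_{d,t},1}$ as a ratio of terms in the density-evolution formula.
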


\begin{proof}
	Recall the definition of $\BP_{d,t}^{\tensor}$ from \eqref{eqBPtensor_update}. 
	The first claim then follows from the following limiting argument: By \Lem~\ref{lem_trafo},  
	$\pi_{d,t} = \lim_{\ell \to \infty} \BP_{d,t}^{\tensor (\ell)}(\uni^\tensor)$, where $\uni^\tensor$ is the Dirac measure on $(1/2,1/2)$.
	As the marginal distributions of the initial distribution $\uni^\tensor$ are identical,  inspection of the update rule (\ref{eqBPtensor_update}) yields that also the marginal distributions of $ \BP_{d,t}^{\tensor (1)}(\uni^\tensor)$ are identical. Analogously, it is immediate from  (\ref{eqBPtensor_update}) that any $\pi$ with two identical marginals will be mapped to a measure $\BP_{d,t}^{\tensor}(\pi)$ with two identical marginals, such that the marginal distributions of $ \BP_{d,t}^{\tensor (\ell)}(\uni^\tensor)$ for any $\ell \geq 0$ are identical. Hence, also in the limit,
	$$ \vec \mu_{\pi_{d,t},1} \disteq \vec\mu_{\pi_{d,t},2}. $$
	On the other hand,  \Lem~\ref{lem_trafo} also implies that $\pi_{d,t} = \BP_{d,t}^{\tensor}(\pi_{d,t})$, so that the distribution of $\vec \mu_{\pi_{d,t},1}$ is the same as the distribution of
	\begin{align}\label{eq_mupidt1}
		\frac{2^{-\vd_{-1}-\vd_{-1}'}\prod_{i=1}^{\vec d_{-1}} (1+\vr_{-1,i}(2\MU_{\pi_{d,t},-1,i,1}-1))\prod_{i=1}^{\vec d_{-1}'}(1+\vr_{-1,i}'(2\vec\mu_{\pi_{d,t},-1,i,1}'-1))}
		{\sum_{s\in\{\pm1\}}2^{-\vd_{-s}-\vd_{-s}'}\prod_{i=1}^{\vec d_{-s}} (1+\vr_{-s,i}(2\vec\mu_{\pi_{d,t},-s,i,1}-1))\prod_{i=1}^{\vec d_{-s}'}(1+\vr_{-s,i}'(2\MU_{\pi_{d,t},-s,i,1}'-1))}
	\end{align}
	while the distribution of $1-\vec \mu_{\pi_{d,t},1}$ is the same as the distribution of
	\begin{align}\label{eq_mupidt2}
		\frac{2^{-\vd_{1}-\vd_{1}'}\prod_{i=1}^{\vec d_{1}} (1+\vr_{1,i}(2\MU_{\pi_{d,t},1,i,1}-1))\prod_{i=1}^{\vec d_{1}'}(1+\vr_{1,i}'(2\vec\mu_{\pi_{d,t},1,i,1}'-1))}
		{\sum_{s\in\{\pm1\}}2^{-\vd_{-s}-\vd_{-s}'}\prod_{i=1}^{\vec d_{-s}} (1+\vr_{-s,i}(2\vec\mu_{\pi_{d,t},-s,i,1}-1))\prod_{i=1}^{\vec d_{-s}'}(1+\vr_{-s,i}'(2\MU_{\pi_{d,t},-s,i,1}'-1))}\enspace.
	\end{align}
	This immediately shows that (\ref{eq_mupidt1}) and (\ref{eq_mupidt2}) have the same distribution. As a consequence, the second claim holds as well.
\end{proof}

\begin{proof}[Proof of \Lem~\ref{lem_bethe_fin}]
	Recall that $\pi_{d,t}=\mypsi(\rho_{d,t})$. Let $$\vec\mu_{\pi_{d,t},1} = (\vec\mu_{\pi_{d,t},1,1}, \vec\mu_{\pi_{d,t},1,2})= \mypsi(\vec\xi_{\rho_{d,t},1,1}, \vec\xi_{\rho_{d,t},1,2}),$$
	and
	$$\vec\mu_{\pi_{d,t},2} = (\vec\mu_{\pi_{d,t},2,1}, \vec\mu_{\pi_{d,t},2,2}) = \mypsi(\vec\xi_{\rho_{d,t},2,1}, \vec\xi_{\rho_{d,t},2,2}).$$
	Then they are independent random vectors with distribution $\pi_{d,t}$. Let $\vec r_{1}$, $\vec r_{2}$ be independent Rademacher random variables with parameter $1/2$ , independent of $\vec\mu_{\pi_{d,t},1}$ and $\vec\mu_{\pi_{d,t},2}$. Conditioning on the values of $\vec r_{1}$ and $\vec r_{2}$ yields the upper bound
	\begin{align*}
		|\cB_{d,t}^{\tensor}(\rho_{d,t})| &\leq \frac{1}{4}\Erw\brk{\abs{\log\bc{1-\vec\mu_{\pi_{d,t},1,1}\vec\mu_{\pi_{d,t},2,1}}\log\bc{1-\vec\mu_{\pi_{d,t},1,2}\vec\mu_{\pi_{d,t},2,2}}}} \\
		&+ \frac{1}{4}\Erw\brk{\abs{\log\bc{1-\bc{1-\vec\mu_{\pi_{d,t},1,1}}\vec\mu_{\pi_{d,t},2,1}}\log\bc{1-\bc{1-\vec\mu_{\pi_{d,t},1,2}}\vec\mu_{\pi_{d,t},2,2}}}} \\
		&+ \frac{1}{4}\Erw\brk{\abs{\log\bc{1-\vec\mu_{\pi_{d,t},1,1}\bc{1-\vec\mu_{\pi_{d,t},2,1}}}\log\bc{1-\vec\mu_{\pi_{d,t},1,2}\bc{1-\vec\mu_{\pi_{d,t},2,2}}}}} \\
		&+ \frac{1}{4}\Erw\brk{\abs{\log\bc{1-\bc{1-\vec\mu_{\pi_{d,t},1,1}}\bc{1-\vec\mu_{\pi_{d,t},2,1}}}\log\bc{1-\bc{1-\vec\mu_{\pi_{d,t},1,2}}\bc{1-\vec\mu_{\pi_{d,t},2,2}}}}}.
	\end{align*}
	The Cauchy-Schwarz inequality further gives that
	\begin{align*}
		|\cB_{d,t}^{\tensor}(\rho_{d,t})| &\leq \frac{1}{4}\Erw\brk{\log^2\bc{1-\vec\mu_{\pi_{d,t},1,1}\vec\mu_{\pi_{d,t},2,1}}}^{1/2} \Erw\brk{\log^2\bc{1-\vec\mu_{\pi_{d,t},1,2}\vec\mu_{\pi_{d,t},2,2}}}^{1/2} \\
		&+ \frac{1}{4}\Erw\brk{\log^2\bc{1-\bc{1-\vec\mu_{\pi_{d,t},1,1}}\vec\mu_{\pi_{d,t},2,1}}}^{1/2}\Erw\brk{\log^2\bc{1-\bc{1-\vec\mu_{\pi_{d,t},1,2}}\vec\mu_{\pi_{d,t},2,2}}}^{1/2} \\
		&+ \frac{1}{4}\Erw\brk{\log^2\bc{1-\vec\mu_{\pi_{d,t},1,1}\bc{1-\vec\mu_{\pi_{d,t},2,1}}}}^{1/2}\Erw\brk{\log^2\bc{1-\vec\mu_{\pi_{d,t},1,2}\bc{1-\vec\mu_{\pi_{d,t},2,2}}}}^{1/2} \\
		&+ \frac{1}{4}\Erw\brk{\log^2\bc{1-\bc{1-\vec\mu_{\pi_{d,t},1,1}}\bc{1-\vec\mu_{\pi_{d,t},2,1}}}}^{1/2}\Erw\brk{\log^2\bc{1-\bc{1-\vec\mu_{\pi_{d,t},1,2}}\bc{1-\vec\mu_{\pi_{d,t},2,2}}}}^{1/2}.
	\end{align*}
	As $\vec\mu_{\pi_{d,t},1,1}\disteq \vec\mu_{\pi_{d,t},1,2}$ and $\vec\mu_{\pi_{d,t},1,1}\disteq 1 - \vec\mu_{\pi_{d,t},1,1}$ thanks to Claim \ref{claim_sym_prop}, we further get
	\begin{align*}
		|\cB_{d,t}^{\tensor}(\rho_{d,t})| &\leq \Erw\brk{\log^2\bc{1-\vec\mu_{\pi_{d,t},1,1}\vec\mu_{\pi_{d,t},2,1}}}.
	\end{align*}
	Next, recalling~\eqref{eqNoelasXis},
	\begin{align*}
		& \Erw\brk{\log^2\bc{1-\vec\mu_{\pi_{d,t},1,1}\vec\mu_{\pi_{d,t},2,1}}}  \leq \Erw\brk{\log^2\bc{1-\vec\mu_{\pi_{d,t},1,1}}} \\
		&\leq  \Erw\brk{\vecone\cbc{\vec \mu_{\pi_{d,t},1,1} \leq \frac{1}{2}} \abs{\log^2\bc{1-\vec \mu_{\pi_{d,t},1,1}}}} + \Erw\brk{\vecone\cbc{\vec \mu_{\pi_{d,t},1,1}> \frac{1}{2}} \abs{\log^2\bc{1-\vec \mu_{\pi_{d,t},1,1}}}} \\
		& \leq  \Erw\brk{\vecone\cbc{\vec \mu_{\pi_{d,t},1,1} \leq \frac{1}{2}} \log^2 2 } + \Erw\brk{\vecone\cbc{\vec \mu_{\pi_{d,t},1,1} > \frac{1}{2}} \bc{\log\frac{\vec \mu_{\pi_{d,t},1,1} }{\bc{1-\vec \mu_{\pi_{d,t},1,1} }} - \log\vec \mu_{\pi_{d,t},1,1} }^2} \\
		&\leq \log^2 2 + \Erw\brk{\log^2\bc{\frac{\vec \mu_{\pi_{d,t},1,1} }{1 - \vec \mu_{\pi_{d,t},1,1} }}} - 2 \Erw\brk{\vecone\cbc{\vec \mu_{\pi_{d,t},1,1} > \frac{1}{2}}\log\frac{\vec \mu_{\pi_{d,t},1,1} }{\bc{1-\vec \mu_{\pi_{d,t},1,1} }} \log\vec \mu_{\pi_{d,t},1,1}}\\
		& \leq \log^2 2 + \Erw\brk{\log^2\bc{\frac{\vec \mu_{\pi_{d,t},1,1} }{1 - \vec \mu_{\pi_{d,t},1,1} }}} = \log^22 +  \Erw\brk{\vec \xi_{\rho_{d,t},1,1}^2}.
	\end{align*}
	However $ \Erw[\vec \xi_{\rho_{d,t},1,1}^2] < \infty$, since $\rho_{d,t} \in \cW_2(\RR^2)$. Thus, $\cB_{d,t}^{\tensor}(\rho_{d,t}) < \infty$.
	
	Moreover, it is easy to see that the distribution of the marginal sample $\vec \xi_{\rho_{d,t},1,1}$ is independent of for any $t \in [0,1]$. Call this distribution $\rho_d$ and let $\vec \xi_{\rho_{d}}$ be a sample from $\rho_d$. Then the previous upper bound yields
	\begin{align*}
	\eta(d)^2 \leq \int_0^1 |\cB_{d,t}^{\tensor}(\rho_{d,t})| \mathrm{d}t + |\cB_{d}^{\tensor}(\rho_{d,0})| \leq  \int_0^1\log^22 +  \Erw\brk{\vec \xi_{\rho_{d}}^2} \mathrm{d}t + \log^22 +  \Erw\brk{\vec \xi_{\rho_{d}}^2} \leq 2 \bc{\log^22 +  \Erw\brk{\vec \xi_{\rho_{d}}^2}} < \infty.
	\end{align*}
\end{proof}

	\section{Proof of \Prop~\ref{prop_varproc}}\label{sec_prop_var}

	\noindent
	We combine the results from the previous sections in order to analyse the variance process.
	As a first step we derive a rough upper bound on the potential change in the number of satisfying assignments upon insertion of a single clause (\Lem s~\ref{lem_forced_add} and~\ref{lem_tail}).
	Subsequently we derive a combinatorial formula for the squared martingle difference (\Lem~\ref{lem_comb_proc}), which easily implies \Lem~\ref{lem_comb_rgb}.
	The combinatorial formula puts us in a position to obtain an $L^2$-bound on the squared martingale difference (\Lem~\ref{lem_comb_l2}).
	With these ingredients in place, we complete the proof of \Prop~\ref{prop_varproc} and of \Thm~\ref{thm_clt} in \Sec~\ref{sec_var_proc}.

	\subsection{A pessimistic estimate}\label{sec_prune}

	Let $\Phi,\Psi$ be two 2-CNFs on the same set of variables such that $\Psi$ is obtained from $\Phi$ by adding a single clause $e$.
	We are going to need a baseline estimate of the difference $|\log Z(\hat\Phi)-\log Z(\hat\Psi)|$.
	The principal difficulty here is to assess the impact of the additional clause on the pruning operation.
	The issue is that the extra clause may also cause additional pruning.
	Indeed, while clearly
	\begin{align*}
		F(\hat\Psi)\setminus\{e\}\subset F(\hat\Phi),
	\end{align*}
	i.e., any clause $a\neq e$ that survives pruning on $\Psi$ also remains present in $\hat\Phi$, the pruned formula $\hat\Psi$ may ironically end up having strictly fewer clauses than $\hat\Phi$.

	To get a handle on the potential repercussions of pruning, let $\{v,v'\}=\partial e$ be the variables that appear in clause $e$.
	Let $\cN(\Phi,v)$ be the set of all literals $l$ such that $\{v,\neg v\}\cap\cL(\Phi,\cbc l)\not= \emptyset$.
	Thus, \UCP\ may reach $v$ or $\neg v$ once $l$ is deemed true.
	Observe that $v\in\cN(\Phi,v)$ and $\neg v\in\cN(\Phi,v)$.
	Define $\cN(\Phi,v')$ analogously.
	Further, let
	\begin{align}\label{eqLphil}
		\cN(\Phi,e)=\bigcup_{l\in\cN(\Phi,v)\cup\cN(\Phi,v')}\cL(\Phi,\cbc l).
	\end{align}
	Thus, $\cN(\Phi,e)$ contains all literals that \UCP\ can reach by tracing the implications of a literal from $\cN(\Phi,v)\cup\cN(\Phi,v')$.
	The definition of the sets $\cN(\Phi,v)$, $\cN(\Phi,v')$ ensures that
	\begin{align}\label{eqclaim_PhiPsi1}
		v,\neg v&\in \cN(\Phi,v),&v',\neg v'&\in\cN(\Phi,v').
	\end{align}

	\begin{lemma}\label{lem_forced_add}
		Let $\Phi$ be a 2-CNF formula.
		Suppose that $\Psi$ is obtained from $\Phi$ by adding a single clause $e$.
		Then
		\begin{align}\label{eqlem_forced_add}
			\abs{\log(Z(\hat\Phi))-\log(Z(\hat\Psi))}\leq\abs{\cN(\Phi,e)}\log 2.
		\end{align}
	\end{lemma}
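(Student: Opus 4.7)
The plan is to establish the symmetric ratio bound
$2^{-|\cN(\Phi,e)|}\leq Z(\hat\Phi)/Z(\hat\Psi)\leq 2^{|\cN(\Phi,e)|}$,
from which the lemma follows on taking logarithms. It is convenient to work with the underlying variable set $W=\{|l|:l\in\cN(\Phi,e)\}$; since each variable contributes at most two literals, $|W|\leq|\cN(\Phi,e)|$, and I will aim for the sharper bound with $|W|$ in place of $|\cN(\Phi,e)|$.

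The first step is a \emph{locality} observation: for every literal $l$, the literals newly reached by $\Psi$-UCP compared to $\Phi$-UCP stay inside $\cN(\Phi,e)$; that is, $\cL(\Psi,\{l\})\setminus\cL(\Phi,\{l\})\subseteq\cN(\Phi,e)$. This is because any new implication chain must invoke the clause $e$ at least once, forcing it to pass through one of $v,\neg v,v',\neg v'$ and then continue inside $\cL(\Phi,\{v,\neg v,v',\neg v'\})\subseteq\cN(\Phi,e)$. Three consequences follow: (i) any variable newly conflicted in $\Psi$-UCP lies in $W$; (ii) every clause in $\cC(\Psi)\setminus\cC(\Phi)$ has at least one endpoint in $W$; and (iii) every clause $a$ of $\Phi$ with $\partial a\cap W=\emptyset$ satisfies $a\in F(\hat\Phi)\iff a\in F(\hat\Psi)$.

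With (iii) in hand, I would partition $S(\hat\Phi)$ and $S(\hat\Psi)$ according to the restriction $\tau:V(\Phi)\setminus W\to\{\pm1\}$ and prove the \emph{bilateral extension claim}: a partial assignment $\tau$ extends to $S(\hat\Phi)$ if and only if it extends to $S(\hat\Psi)$. Granted this claim, letting $T$ denote the common set of extendable $\tau$'s, each such $\tau$ admits between $1$ and $2^{|W|}$ extensions to $V(\Phi)$, so both $Z(\hat\Phi)$ and $Z(\hat\Psi)$ lie in the interval $[|T|,2^{|W|}|T|]$. The desired two-sided ratio bound follows at once, giving $|\log Z(\hat\Phi)-\log Z(\hat\Psi)|\leq|W|\log 2\leq|\cN(\Phi,e)|\log 2$.

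The bilateral extension claim is the main obstacle. I expect its proof to mirror the construction in Claim~\ref{claim_forced_remove} together with Fact~\ref{fact_bicycle}: the reduced $2$-CNFs $\hat\Phi[\tau]$ and $\hat\Psi[\tau]$ on $W$ differ only by (a) the clause $e$ (possibly present in $\hat\Psi[\tau]$, absent from $\hat\Phi[\tau]$) and (b) the additionally pruned clauses in $\cC(\Psi)\setminus\cC(\Phi)$ (present in $\hat\Phi[\tau]$, absent from $\hat\Psi[\tau]$). A hypothetical bicycle in one of the two reduced formulas can be translated into a bicycle in the other via the implication-reachability structure captured by $\cN(\Phi,e)$: a bicycle in $\hat\Psi[\tau]$ using $e$ corresponds to an implication chain through $v,v'$, and the extra pruned clauses precisely encode alternative routes that close the cycle without invoking $e$; conversely, any bicycle in $\hat\Phi[\tau]$ that uses an extra pruned clause can be rerouted through $e$. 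The bookkeeping--tracking how the substitution by $\tau$ interacts with these implication chains on $W$--is the chief technical delicacy.
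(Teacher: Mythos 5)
Your outer argument---bounding $Z(\hat\Phi)/Z(\hat\Psi)$ by $2^{|W|}$ (and its reciprocal) through a bilateral extension claim for partial assignments $\tau:V(\Phi)\setminus W\to\PM$---is exactly the counting argument used in the paper, which splits $\sigma\in S(\hat\Phi)$ into $\tilde\sigma$ on $V\setminus W$ and $\check\sigma$ on $W$, bounds the first parts by $Z(\hat\Psi)$ via its Claim~\ref{claim_forced_sat} and the second parts trivially by $2^{|\cN(\Phi,e)|}$. Your ``locality observation'' is essentially the paper's Claim~\ref{claim_PhiPsi}, and your ``bilateral extension claim'' is, modulo the observation that $\tilde\Phi=\tilde\Psi$, a restatement of the paper's Claims~\ref{claim_forced_sat} and~\ref{claim_forced_sat_Phi}. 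Your (iii) is correct and your proposed sharper bound $|W|\log 2$ is valid (though the paper is content with the weaker $|\cN(\Phi,e)|\log 2$).

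Where you diverge is in how you propose to prove the extension claim, and that is where the bulk of the work lies. You sketch a bicycle-translation argument: apply Fact~\ref{fact_bicycle} to the residual formulas $\hat\Phi[\tau]$ and $\hat\Psi[\tau]$ on $W$ and reroute hypothetical bicycles through $e$ or through the extra pruned clauses. The paper does something quite different and notably more robust: it builds a CNF $\check\Psi$ on $W$ that is \emph{strictly stronger} than any residual $\hat\Psi[\tau]$---it adds as unit clauses every $W$-endpoint of an interface clause, unconditionally, rather than only those forced by $\tau$---and then constructs a satisfying assignment of $\check\Psi$ directly from the \UCP\ partial assignments $\sigma_i$ in the manner of the proof of Fact~\ref{lem_hatphi_sat}. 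This construction is $\tau$-independent, so the ``chief technical delicacy'' you flag (the interaction of $\tau$ with the implication chains) simply never arises. As written, your bicycle sketch has a genuine gap: you assert that a bicycle in $\hat\Psi[\tau]$ through $e$ ``corresponds to'' an implication chain that the extra pruned clauses close without $e$, and conversely, but this correspondence is not established and it is unclear that it holds---the residual unit clauses created by $\tau$ on interface clauses differ from one $\tau$ to the next, and the extra pruned clauses are conflict clauses (not implication clauses from a chain through $v,v'$) so there is no ready reason they assemble into a bicycle avoiding $e$. To complete a proof along your lines you would need either to carry out this translation rigorously for all $\tau$ simultaneously, or, more naturally, to adopt the paper's device of strengthening the residual formula to one that does not depend on $\tau$ at all.
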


	As a first step towards the proof of \Lem~\ref{lem_forced_add} we observe that \eqref{eqLphil} can be rewritten as follows.

	\begin{claim}\label{claim_PhiPsi}
		We have
		\begin{align}\label{eqLpsil}
			\cN(\Phi,e)=\bigcup_{l\in\cN(\Phi,v)\cup\cN(\Phi,v')}\cL(\Psi,\cbc l).
		\end{align}
	\end{claim}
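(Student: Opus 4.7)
\smallskip\noindent\textbf{Proof plan for Claim~\ref{claim_PhiPsi}.}
The inclusion $\subseteq$ is immediate: since $\Psi$ is obtained from $\Phi$ by adding a single clause, every implication chain in $\Phi$ is also an implication chain in $\Psi$, and hence by \Lem~\ref{lem_ucp_reachable} we have $\cL(\Phi,\{l\})\subseteq\cL(\Psi,\{l\})$ for every literal $l$. Taking the union over $l\in\cN(\Phi,v)\cup\cN(\Phi,v')$ yields $\cN(\Phi,e)\subseteq\bigcup_{l\in\cN(\Phi,v)\cup\cN(\Phi,v')}\cL(\Psi,\{l\})$.

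For the reverse inclusion $\supseteq$ I will argue as follows. Let $l\in\cN(\Phi,v)\cup\cN(\Phi,v')$ and pick any $l^\ast\in\cL(\Psi,\{l\})$. By \Lem~\ref{lem_ucp_reachable} there is an implication chain $l=l_0,a_1,l_1,\ldots,a_k,l_k=l^\ast$ in $\Psi$. Let $k,k'$ be the literals constituting $e$, so that $|k|=v$ and $|k'|=v'$. If none of the clauses $a_i$ equals $e$, then the entire chain lies in $\Phi$ and $l^\ast\in\cL(\Phi,\{l\})\subseteq\cN(\Phi,e)$. Otherwise let $j$ be the largest index with $a_j=e$. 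Since $a_j\equiv\neg l_{j-1}\vee l_j$, the literal $l_j$ coincides with one of $k,k'$, and hence $l_j\in\{v,\neg v,v',\neg v'\}\subseteq\cN(\Phi,v)\cup\cN(\Phi,v')$ by~\eqref{eqclaim_PhiPsi1}. By the choice of $j$, the suffix $l_j,a_{j+1},l_{j+1},\ldots,a_k,l_k=l^\ast$ does not use $e$, so it is an implication chain in $\Phi$ starting from $l_j$. Therefore $l^\ast\in\cL(\Phi,\{l_j\})\subseteq\cN(\Phi,e)$, completing the argument.

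The only point that needs care is the observation that $l_j$ necessarily lies in $\{v,\neg v,v',\neg v'\}$, which relies on the fact that $e$ is the \emph{only} new clause in $\Psi$ and on the structural description of $e$; this is where~\eqref{eqclaim_PhiPsi1} is used. Apart from this, the proof is essentially a mechanical manipulation of implication chains combined with \Lem~\ref{lem_ucp_reachable}, so no further obstacle arises.
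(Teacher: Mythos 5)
Your proof is correct and takes essentially the same route as the paper: you verify the trivial inclusion via $\cL(\Phi,\{l\})\subseteq\cL(\Psi,\{l\})$, and for the nontrivial direction you use \Lem~\ref{lem_ucp_reachable} to produce an implication chain in $\Psi$, locate the last occurrence of $e$, observe that the literal there lies in $\{v,\neg v,v',\neg v'\}\subseteq\cN(\Phi,v)\cup\cN(\Phi,v')$ via~\eqref{eqclaim_PhiPsi1}, and conclude from the $e$-free suffix chain that the target literal is in $\cN(\Phi,e)$ by~\eqref{eqLphil}.
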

	\begin{proof}
		Clearly $\cL(\Psi,\cbc l)\supseteq\cL(\Phi,\cbc l)$ for every literal $l$.
		Hence, we just need to show that
		\begin{align}\label{eqLpsil_1}
			\cL(\Psi,\cbc l)\subseteq\cN(\Phi,e)&&\mbox{for all }l\in\cN(\Phi,v)\cup\cN(\Phi,v').
		\end{align}
		Hence, let $l\in\cN(\Phi,v)\cup\cN(\Phi,v')$ and let $l'\in\cL(\Psi,\cbc l)$.
		Then \Lem~\ref{lem_ucp_reachable} shows that there exists an implication chain
		\begin{align}\label{eqLpsil_2}
			l=l_0,a_1,l_1,\ldots,a_k,l_k=l'
		\end{align}
		comprising literals $l_i$ and clauses $a_i\in F(\Psi)$ such that $a_i\equiv  l_{i-1}\to l_i$ for all $1\leq i\leq k$.
		If $a_i\neq e$ for all $i$, then the chain~\eqref{eqLpsil_2} is contained in $\Phi$ and thus $l'\in\cL(\Phi,\{l\})\subseteq\cN(\Phi,e)$.
		Otherwise let $1\leq j\leq k$ be the largest index such that $a_j=e$.
		Then $l_j$ is one of the constituent literals of $a_j$ and thus $l_j\in\{v,\neg v,v',\neg v'\}$.
		Furthermore, the implication chain $l_j,a_{j+1},l_{j+1},\ldots,a_k,l_k=l'$ from $l_j$ to $l'$ is contained in $\Phi$.
		Therefore, \eqref{eqclaim_PhiPsi1} shows in combination with \Lem~\ref{lem_ucp_reachable} and~\eqref{eqLphil} that $l'\in\cL(\Phi,\{l_j\})\subset\cN(\Phi,e)$.
	\end{proof}

	We proceed to show that $\cN(\Phi,e)$ contains the variables of all clauses $a\in F(\Phi)$ on which the pruning processes run on $\Phi,\Psi$ differ.

	\begin{claim}\label{claim_forced_add_a}
		For any clause $a\in F(\hat\Phi)\setminus F(\hat\Psi)$ we have $\partial a\subset \cN(\Phi,e)$.
	\end{claim}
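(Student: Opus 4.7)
The plan is to dissect implication chains in $\Psi$ into segments lying inside $\Phi$, separated by the occasional use of the new clause $e$, and then exploit the fact that any such use forces the chain endpoints to be among $\{v,\neg v,v',\neg v'\}$.

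Fix $a\in F(\hat\Phi)\setminus F(\hat\Psi)$. Since $a\notin F(\hat\Psi)$ there is a literal $l$ with $a\in\cC(\Psi,\{l\})$; writing $\partial a=\{x,y\}$, this gives $\{x,\neg x,y,\neg y\}\subseteq\cL(\Psi,\{l\})$. On the other hand $a\in F(\hat\Phi)$ rules out $a\in\cC(\Phi,\{l\})$, so at least one literal $m^\ast\in\{x,\neg x,y,\neg y\}$ must fail to lie in $\cL(\Phi,\{l\})$. The first substep is to show
\begin{align*}
l\in\cN(\Phi,v)\cup\cN(\Phi,v').
\end{align*}
By \Lem~\ref{lem_ucp_reachable} there is an implication chain $l=l_0,a_1,l_1,\ldots,a_k,l_k=m^\ast$ in $\Psi$; since $m^\ast\notin\cL(\Phi,\{l\})$, the same lemma applied to $\Phi$ forces this chain to use $e$. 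Let $j$ be the least index with $a_j=e$. Then $a_1,\ldots,a_{j-1}\in F(\Phi)$, so $l_{j-1}\in\cL(\Phi,\{l\})$; and since $e\equiv v\vee v'$ corresponds to the two implications $\neg v\to v'$ and $\neg v'\to v$, necessarily $l_{j-1}\in\{\neg v,\neg v'\}$. Hence $\cL(\Phi,\{l\})\cap\{v,\neg v,v',\neg v'\}\neq\emptyset$, which is exactly the definition of $l\in\cN(\Phi,v)\cup\cN(\Phi,v')$.

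Next I show that \emph{every} $m\in\{x,\neg x,y,\neg y\}$ lies in $\cN(\Phi,e)$, which yields $\partial a\subseteq\cN(\Phi,e)$. If $m\in\cL(\Phi,\{l\})$, then $m\in\cN(\Phi,e)$ is immediate from \eqref{eqLphil} together with the substep above. Otherwise a chain from $l$ to $m$ in $\Psi$ uses $e$; letting $a_j=e$ be the \emph{last} such use, the tail $a_{j+1},\ldots,a_k$ lies in $\Phi$ and $l_j\in\{v,v'\}$, so $m\in\cL(\Phi,\{l_j\})\subseteq\cN(\Phi,e)$, again by \eqref{eqLphil} and $v,v'\in\cN(\Phi,v)\cup\cN(\Phi,v')$. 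There is no real obstacle beyond bookkeeping: the only point that needs care is to interpret $e$ correctly as an implication in each direction, which is the same observation used for bicycles in \textbf{BIC3}, and to distinguish the \emph{first} versus \emph{last} occurrence of $e$ depending on whether $m$ is already $\Phi$-reachable from $l$ or not.
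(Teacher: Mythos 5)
Your proof is correct and follows essentially the same approach as the paper: decompose the implication chains in $\Psi$ at their occurrences of $e$, observe that the literals of $e$ written in implication form are $\neg v\to v'$ and $\neg v'\to v$, and deduce membership in $\cN(\Phi,v)\cup\cN(\Phi,v')$ and then in $\cN(\Phi,e)$. The one organizational difference is that the paper first establishes the separate Claim~\ref{claim_PhiPsi} (namely $\cN(\Phi,e)=\bigcup_{l\in\cN(\Phi,v)\cup\cN(\Phi,v')}\cL(\Psi,\{l\})$) and then invokes it to finish, whereas you re-derive inline the piece of that claim you need (splitting at the last occurrence of $e$); your first substep is also marginally more direct, since you start from a specific $m^\ast\notin\cL(\Phi,\{l\})$ rather than arguing that one of the chains to $\neg w,\neg w'$ must use $e$. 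Both routes are sound and yield the same conclusion.
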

	\begin{proof}
		Consider a clause $a\in F(\Phi)$ that was removed by pruning applied to $\Psi$ but not by pruning applied to $\Phi$.
		Let $w,w'$ be the constituent literals of $a$, i.e., $ a\equiv w\vee w'$.
		Then \UCP$(\Psi,\cbc l)$ added $a$ to the set $\cC(\Psi,\cbc l)$ of conflict clauses for some literal $l$.
		Hence,
		\begin{align}\label{eq_claim_forced_add_a_1}
			w,\neg w,w',\neg w'\in\cL(\Psi,\cbc l).
		\end{align}
		Consequently, \Lem~\ref{lem_ucp_reachable} shows that for each literal $k\in\{\neg w,\neg w'\}$, \UCP$(\Psi,\{l\})$ traverses an implication chain
		$$l_{0,k}=l,a_{0,k},l_{1,k},a_{1,k},\ldots,l_{j_k,k} =k$$
		of literals $l_{i,k}\in\cL(\Psi,\cbc l)$ and clauses $a_{i,k}\equiv\neg l_{i,k}\vee l_{i+1,k}\equiv l_{i,k}\to l_{i+1,k}$ for $0\leq i<j_k$.
		Because $a\not\in\cC(\Phi,l)$, at least one of these two sequences includes the clause $e$ and thus at least one of $v,\neg v$ and one of $v',\neg v'$.
		Hence, $l\in\cN(\Phi,v)\cup\cN(\Phi,v')$.
		Therefore, combining (\ref{eqLpsil}) and~\eqref{eq_claim_forced_add_a_1}, we conclude that $\partial a=\{|w|,|w'|\}\subset\cN(\Phi,e)$.
	\end{proof}

	Let $\tilde\Phi$ be the formula obtained from $\hat\Phi$ by removing all variables $x\in V(\hat\Phi)$ such that $\{x,\neg x\}\cap\cN(\Phi,e)\neq\emptyset$ along with their adjacent clauses.

	\begin{claim}\label{claim_forced_sat}
		For any $\tilde\sigma\in S(\tilde\Phi)$ there exists $\sigma\in S(\hat\Psi)$ such that $\sigma_x=\tilde\sigma_x$ for all $x\in V(\tilde\Phi)$.
	\end{claim}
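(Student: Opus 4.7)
My plan is to prove the claim by showing that the 2-CNF $\hat\Psi^{\mathrm{aug}}$, obtained from $\hat\Psi$ by appending the unit clauses $\{\tilde\sigma_x\cdot x:x\in V(\tilde\Phi)\}$ that enforce $\tilde\sigma$, is satisfiable; any satisfying assignment then provides the required extension of $\tilde\sigma$. By Fact~\ref{fact_bicycle} it suffices to rule out a bicycle in $\hat\Psi^{\mathrm{aug}}$. Suppose for contradiction that a bicycle $l_0,a_1,\ldots,a_k,l_k$ exists; each $a_i$ is either a binary clause from $F(\hat\Psi)$ (type~(a)) or a $\tilde\sigma$-unit clause (type~(b)). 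If every $a_i$ is of type~(a), the bicycle lives entirely in $\hat\Psi$, contradicting Fact~\ref{lem_hatphi_sat}, so at least one type-(b) clause must participate.

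The key structural input is that $\cN(\Phi,e)$ is closed under implication-reachability in~$\Phi$, and hence also in~$\Psi$: this follows from~\eqref{eqLphil} and Lemma~\ref{lem_ucp_reachable}, together with the observation that the only new implications in~$\Psi$ arise from $e$, whose endpoint literals $v,v'$ already lie in $\cN(\Phi,e)$ by~\eqref{eqclaim_PhiPsi1}. As a consequence, for any clause $b\equiv l\vee l'\in F(\Phi)$ straddling the partition---with $|l|\in V(\tilde\Phi)$ and $|l'|\in\cV$---one must have $l'\in\cN(\Phi,e)$; otherwise $\neg l'\in\cN(\Phi,e)$ would force $l\in\cN(\Phi,e)$ via $b$, contradicting $|l|\in V(\tilde\Phi)$.

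I would then dissect the bicycle at its type-(b) clauses into maximal type-(a) segments. Each such segment is an implication chain in $\hat\Psi$ running from some $\tilde\sigma_x\cdot x$ to some $\neg\tilde\sigma_y\cdot y$ with $x,y\in V(\tilde\Phi)$, and it necessarily leaves $V(\tilde\Phi)$---otherwise the implication would contradict $\tilde\sigma\in S(\tilde\Phi)$. Hence the segment crosses into $\cV$; by the previous paragraph, all $\cV$-literals encountered lie in $\cN(\Phi,e)$, so they are implication-reachable in~$\Psi$ from some seed $l^*\in\cN(\Phi,v)\cup\cN(\Phi,v')$, with the clause $e$ itself bridging the two halves if necessary. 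Following the cyclic structure of the bicycle one concludes $\{l_0,\neg l_0\}\subseteq\cL(\Psi,\{l^*\})$, whence every type-(a) clause $a_j$ with $\partial a_j\subseteq\cV$ lies in $\cC(\Psi,\{l^*\})$ and was removed during the pruning that produces $\hat\Psi$---contradicting $a_j\in F(\hat\Psi)$.

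The principal obstacle will be the combinatorial bookkeeping required to stitch the type-(a) segments together across the $V(\tilde\Phi)$/$\cV$ boundary and through $e$. One must verify that the ``jumps'' provided by the $\tilde\sigma$-unit clauses correspond, via the consistency of $\tilde\sigma$ with $\tilde\Phi$, to valid concatenations of the $\cV$-segments into a single implication chain in~$\Psi$ emanating from a common seed in $\cN(\Phi,v)\cup\cN(\Phi,v')$, so that the final contradiction with the conflict-clause removal in the definition of $\hat\Psi$ can be extracted cleanly.
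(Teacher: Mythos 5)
Your proposal takes a genuinely different route from the paper, and it does work. The paper splits off the ``interior'' formula $\check\Psi$ on the variable set $V(\hat\Psi)\setminus V(\tilde\Phi)$ (inheriting the clauses of $\hat\Psi$ contained in this set plus unit clauses for the boundary), and then directly constructs a satisfying assignment of $\check\Psi$ by running \UCP\ from all seeds in $\cN(\Phi,e)$ in some fixed order, mimicking the proof of Fact~\ref{lem_hatphi_sat}. Your approach instead augments $\hat\Psi$ by $\tilde\sigma$-enforcing unit clauses and rules out a bicycle by exploiting the closure of $\cN(\Phi,e)$ under $\Psi$-implication (which is precisely Claim~\ref{claim_PhiPsi}) together with the boundary observation. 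The paper's route is constructive; yours is indirect via Fact~\ref{fact_bicycle}, but arguably cleaner once streamlined.

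However, your final step is more elaborate than it needs to be, and as written it has a gap: the assertion ``$\{l_0,\neg l_0\}\subseteq\cL(\Psi,\{l^*\})$'' does not follow if $|l_0|\in V(\tilde\Phi)$ (which is possible), and the conflict-clause detection you are aiming for is not actually required. A single type-(a) segment already gives the contradiction, without stitching segments or exhibiting a common seed. Indeed, take any maximal type-(a) segment $l_{i}\to l_{i+1}\to\cdots\to l_{j}$, where $l_i=\tilde\sigma_x\cdot x$ and $l_j=\neg(\tilde\sigma_y\cdot y)$ for some $x,y\in V(\tilde\Phi)$, and all intermediate clauses lie in $F(\hat\Psi)$. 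Since $F(\hat\Psi)\setminus\{e\}\subset F(\hat\Phi)$ and $\partial e\subset V(\check\Psi)$, any such clause with both endpoints in $V(\tilde\Phi)$ belongs to $F(\tilde\Phi)$; if the entire segment stayed inside $V(\tilde\Phi)$, following the implications from the true literal $\tilde\sigma_x\cdot x$ under $\tilde\sigma\in S(\tilde\Phi)$ would force $\neg(\tilde\sigma_y\cdot y)$ to be true, a contradiction. So the segment must cross into $V(\check\Psi)$; let $a_{j^*}$ be the first such crossing. Your boundary observation gives $l_{j^*}\in\cN(\Phi,e)$ (and note this only needs $\{l_{j^*},\neg l_{j^*}\}\cap\cN(\Phi,e)\neq\emptyset$ plus closure under $\Psi$-implication and $\neg l_{j^*-1}\notin\cN(\Phi,e)$). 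But then the closure of $\cN(\Phi,e)$ under $\Psi$-implication forces $l_{j^*},l_{j^*+1},\ldots,l_{j}\in\cN(\Phi,e)$, in particular $|l_j|\in V(\check\Psi)$ --- contradicting $|l_j|=y\in V(\tilde\Phi)$. There is therefore no need for conflict-clause reasoning or a common seed; the obstruction you anticipated in your last paragraph dissolves.

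One more small point worth making explicit in a full write-up: the step ``the segment stays in $V(\tilde\Phi)$ implies a contradiction with $\tilde\sigma\in S(\tilde\Phi)$'' relies on the inclusion $\{a\in F(\hat\Psi):\partial a\subset V(\tilde\Phi)\}\subseteq F(\tilde\Phi)$, which in turn uses $F(\hat\Psi)\setminus\{e\}\subset F(\hat\Phi)$ together with $|v|,|v'|\in V(\check\Psi)$ (a consequence of \eqref{eqclaim_PhiPsi1}).
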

	\begin{proof}
		Let $\check\Psi$ be a CNF with variables $$V(\check\Psi)=V(\hat\Psi)\setminus V(\tilde\Phi)=\{x\in V(\hat\Phi):\{x,\neg x\}\cap\cN(\Phi,e)\neq\emptyset\}.$$
		The clauses of $\check\Psi$ include all $a\in F(\hat\Psi)$ such that $\partial a\subset V(\check\Psi)$.
		Additionally, for every clause $a\in F(\hat\Psi)$ that contains exactly one literal $l$ with  $|l|\in V(\check\Psi)$ we include the literal $l$ as a unit clause into $\check\Psi$.
		In light of Claim~\ref{claim_forced_add_a}, to prove the assertion it suffices to show that $\check\Psi$ is satisfiable.
		For then we could extend any $\sigma\in S(\tilde\Phi)$ to a satisfying assignment of $\hat\Psi$ by simply setting the variables $x\in V(\hat\Psi)\setminus V(\tilde\Phi)$ in accordance with a satisfying assignment of $\check\Psi$.

		As in the proof of Fact~\ref{lem_hatphi_sat}, to construct a satisfying assignment of $\check\Psi$ we fix an order $l_1,\ldots,l_k$ of the literals $\cN(\Phi,e)$.
		Let $\sigma_i$ be the assignment that \UCP\ outputs on input $\Psi,\{l_i\}$.
		Further, define a $\{0,\pm1\}$-valued assignment $(\sigma_x)_{x\in V(\check\Psi)}$ by letting $\sigma_x=\sigma_{i,x}$ for the least index $i$ such that $\{x,\neg x\}\cap\cL(\Psi,l_i)\neq\emptyset$.

		We claim that
		\begin{align}\label{eq_claim_forced_sat_1}
			\forall a\in F(\check\Psi)\,\exists x\in\partial_{\check\Psi}a\,:\,\sigma_x=\sign(x,a)\enspace;
		\end{align}
		thus, we can turn $\sigma$ into a satisfying assignment of $\check\Psi$ by assigning those variables $y$ with $\sigma_y=0$ arbitrarily.
		To verify \eqref{eq_claim_forced_sat_1}, we consider two cases separately.
		\begin{description}
			\item[Case 1: $|\partial_{\check\Psi} a|=2$]
			then $a\in F(\Psi)$.
			Let $\partial a=\{x,x'\}$ and let $i$ be the smallest index such that $\cL(\Psi,l_i)\cap\{x,\neg x,x',\neg x'\}\neq\emptyset$.
			Also let $l,l'$ be the constitutent literals of $a$ such that $|l|=x$ and $|l'|=x'$.
			Suppose that $l\in\cL(\Psi,\{l_i\})$.
			If $\neg l\not\in\cL(\Psi,l_i)$, then $\sigma_x=\sign(x,a)$ by construction.
			Hence, assume that $l,\neg l\in\cL(\Psi,l_i)$.
			Then the construction in Steps 1--2 of \UCP\ ensures that $l'\in\cL(\Psi,l_i)$ as well.
			Moreover, if $\neg l'\not\in\cL(\Psi,l_i)$, then $\sigma_{x'}=\sign(x',a)$.
			Finally, the case $l,l',\neg l,\neg l'\in\cL(\Psi,l_i)$ cannot occur because otherwise $a$ would have been pruned, i.e., $a\not\in F(\hat\Psi)$.
			\item[Case 2: $|\partial_{\check\Psi}a|=1$] there exists a clause $b\in F(\hat\Psi)$ and literals $l,l'$ with $|l'|\not\in V(\check\Psi)$ such that $b=l\vee l'$ and $a=l$.
			Let $i$ be the least index such that $\{l,\neg l\}\cap\cL(\Psi,\{l_i\})\neq\emptyset$.
			If $\neg l\in\cL(\Psi,\{l_i\})$, then \UCP$(\Psi,\{l_i\})$ would have added $l'$ to the set $\cL(\Psi,\{l_i\})$ as well and thus $|l'|\in V(\check\Psi)$.
			But $|l'|\not\in V(\check\Psi)$.
			Hence, $\{l,\neg l\}\cL(\Psi,\{l_i\})=\{l\}$ and thus $\sigma_{|l|}=\sigma_{i,|l|}=\sign_\Psi(|l|,b)=\sign_{\check\Psi}(|l|,a)$.
		\end{description}
		Thus, in either case $\sigma$ satisfies clause $a$.
	\end{proof}

	In perfect analogy to the above let $\tilde\Psi$ be the formula obtained from $\hat\Psi$ by removing all variables $x\in V(\hat\Psi)$ such that $\{x,\neg x\}\cap\cN(\Phi,e)\neq\emptyset$, along with their adjacent clauses.

	\begin{claim}\label{claim_forced_sat_Phi}
		For any $\tilde\sigma\in S(\tilde\Psi)$ there exists $\sigma\in S(\hat\Phi)$ such that $\sigma_x=\tilde\sigma_x$ for all $x\in V(\tilde\Psi)$.
	\end{claim}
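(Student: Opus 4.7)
The plan is to mirror the argument in the proof of Claim~\ref{claim_forced_sat}, swapping the roles of $\Phi$ and $\Psi$ throughout. I would introduce an auxiliary CNF $\check\Phi$ whose variable set is
\[V(\check\Phi)=\{x\in V(\hat\Phi):\{x,\neg x\}\cap\cN(\Phi,e)\neq\emptyset\}=V(\hat\Phi)\setminus V(\tilde\Psi),\]
and whose clauses are (i) every $a\in F(\hat\Phi)$ with $\partial a\subset V(\check\Phi)$, together with (ii) a unit clause $l$ for every $a\in F(\hat\Phi)$ of the form $a\equiv l\vee l'$ with $|l|\in V(\check\Phi)$ and $|l'|\notin V(\check\Phi)$. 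It then suffices to show that $\check\Phi$ is satisfiable: the combined assignment that equals $\tilde\sigma$ on $V(\tilde\Psi)$ and agrees with a satisfying assignment of $\check\Phi$ on $V(\check\Phi)$ will satisfy $\hat\Phi$.

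To produce such an assignment $\sigma$ of $\check\Phi$, I would fix an ordering $l_1,\ldots,l_k$ of the literals in $\cN(\Phi,e)$, let $\sigma_i$ denote the partial assignment returned by \UCP$(\Phi,\{l_i\})$, and for each $x\in V(\check\Phi)$ set $\sigma_x:=\sigma_{i,x}$ where $i$ is the smallest index with $\{x,\neg x\}\cap\cL(\Phi,\{l_i\})\neq\emptyset$, assigning $\sigma_x\in\PM$ arbitrarily afterwards whenever $\sigma_{i,x}=0$. The verification that every $a\in F(\check\Phi)$ is satisfied by $\sigma$ splits into precisely the two cases of the proof of Claim~\ref{claim_forced_sat}: for a two-variable clause $a\equiv l\vee l'$ inherited from $\hat\Phi$, the fact that $a$ was \emph{not} pruned from $\Phi$ prevents \UCP$(\Phi,\{l_i\})$ from ever placing all four literals of $\partial a$ into $\cL$, which yields a variable of $\partial a$ whose $\sigma$-value matches its sign in $a$; and for a unit clause $a\equiv l$ arising from some $b\equiv l\vee l'\in F(\hat\Phi)$ with $|l'|\notin V(\check\Phi)$, the assumption $|l'|\notin\cN(\Phi,e)$ forces $\neg l\notin\cL(\Phi,\{l_i\})$ at the smallest relevant index $i$ (otherwise \UCP\ would append $l'$, contradicting $|l'|\notin\cN(\Phi,e)$), so $\sigma_{|l|}=\sign(|l|,a)$.

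Finally, I would verify that the combined assignment satisfies every $a\in F(\hat\Phi)$. Clauses with $\partial a\subset V(\check\Phi)$ or straddling $V(\check\Phi)$ and $V(\tilde\Psi)$ are handled via the corresponding clauses of $\check\Phi$ and hence by $\sigma$. Clauses with $\partial a\subset V(\tilde\Psi)$ must be satisfied by $\tilde\sigma$, which requires knowing $a\in F(\tilde\Psi)$; here I would invoke Claim~\ref{claim_forced_add_a}, which states that any $a\in F(\hat\Phi)\setminus F(\hat\Psi)$ has $\partial a\subset\cN(\Phi,e)$. This is incompatible with $\partial a\subset V(\tilde\Psi)$, so such an $a$ must lie in $F(\hat\Psi)$, and therefore in $F(\tilde\Psi)$, and thus $\tilde\sigma$ satisfies it.

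I do not expect a serious obstacle: the argument is parallel to, and indeed slightly simpler than, the proof of Claim~\ref{claim_forced_sat}, since we do not need to worry about the extra clause $e$, which belongs to neither $\Phi$ nor $\hat\Phi$. The only point requiring mild care is the treatment of those clauses of $\hat\Phi$ that are absent from $\hat\Psi$, and Claim~\ref{claim_forced_add_a} confines them cleanly to $F(\check\Phi)$, so that they are satisfied automatically through $\sigma$.
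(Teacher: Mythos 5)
Your proposal is correct and is essentially the paper's own argument: the paper simply asserts that the proof of Claim~\ref{claim_forced_sat} carries over with $\Phi$ in place of $\Psi$ and \eqref{eqLphil} in place of \eqref{eqLpsil}. You are right that Claim~\ref{claim_forced_add_a} is needed to ensure that every $a\in F(\hat\Phi)$ with $\partial a\subset V(\tilde\Psi)$ in fact lies in $F(\tilde\Psi)$; the paper leaves this step implicit, but it is part of the intended argument.
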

	\begin{proof}
		Let $\check\Phi$ be a CNF with variables $V(\check\Phi)=V(\hat\Phi)\setminus V(\tilde\Psi)$.
		Include in $\check\Phi$ all $a\in F(\hat\Phi)$ with $\partial a\subset V(\check\Phi)$.
		Moreover, for every $a\in F(\hat\Phi)$ that contains exactly one literal $l$ with $|l|\in V(\check\Psi)$ add  $l$ as a clause to $\check\Phi$.
		As in the proof of Claim~\ref{claim_forced_sat} it suffices to construct a satisfying assignment of $\check\Phi$.
		Due to \eqref{eqLphil} the same argument as in the proof of Claim~\ref{claim_forced_sat} extends.
	\end{proof}

	\begin{proof}[Proof of \Lem~\ref{lem_forced_add}]
		We use Claim~\ref{claim_forced_sat}	to prove that $Z(\hat\Phi)\leq2^{|\cN(\Phi,e)|}Z(\hat\Psi)$; similar reasoning based on Claim~\ref{claim_forced_sat_Phi} yields the reverse bound.
		To show the desired bound split a satisfying assignment $\sigma\in S(\hat\Phi)$ up into two parts $\tilde\sigma=(\sigma_x)_{\{x, \neg x\} \cap \cN(\Phi,e) = \emptyset}$, $\check\sigma=(\sigma_x)_{\{x, \neg x\} \cap \cN(\Phi,e) \not= \emptyset}$.
		Claim~\ref{claim_forced_sat} shows that the number of possible first parts $\tilde\sigma$ for $\sigma\in S(\hat\Phi)$ is bounded by $Z(\hat\Psi)$, because every $\tilde\sigma$ extends to a satisfying assignment of $\hat\Psi$.
		Moreover, the total number of possible second parts is bounded by $2^{|\cN(\Phi,e)|}$.
	\end{proof}

	\subsection{A tail bound}\label{sec_tail}

	As a next step we are going to derive a bound on the r.h.s.\ of~\eqref{eqlem_forced_add} on random formulas.
	More specifically, obtain the formula $\PHI'$ from $\PHI$ by deleting the last clause $\va_m$.
	Let $\vN'=|\cN(\PHI',\va_m)|$.

	\begin{lemma}\label{lem_tail}
		There exists $c=c(d)>0$ such that for all $t>c$ we have
		\begin{align*}
			\pr\brk{\vN'>t^2}\leq c\exp(-t/c).
		\end{align*}
	\end{lemma}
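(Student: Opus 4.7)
The plan is to reformulate $\vN'$ via the implication digraph of $\PHI'$ and then dominate it by two successive subcritical explorations. View each clause $p\vee q$ of $\PHI'$ as contributing directed edges $\neg p\to q$ and $\neg q\to p$ on the $2n$ literals; then $\cL(\PHI',\{l\})$ is the descendant set of $l$ in this digraph, and by contraposition (\Lem~\ref{lem_ucp_reachable}) the ancestor set of any literal $w$ equals $\neg\cL(\PHI',\{\neg w\})$. Writing $W=\{v,\neg v,v',\neg v'\}$ and $\vec R=\bigcup_{w\in W}\cL(\PHI',\{w\})$, one obtains the identity $\cN(\PHI',v)\cup\cN(\PHI',v')=\neg\vec R$, whence by~\eqref{eqLphil} $\vN'=|\bigcup_{l\in\neg\vec R}\cL(\PHI',\{l\})|$.

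First bound $|\vec R|$. Since $v,v'$ are uniformly random and independent of $\PHI'$, \Lem~\ref{lem_forced_tail} applies to each of the four literals in $W$; combined with the crude inequality $|\cL(\PHI',\{w\})|\leq 2|\cV(\PHI',\{w\})|$ and a union bound over $W$, this furnishes a constant $c_1=c_1(d)>0$ with $\pr\brk{|\vec R|>s}\leq c_1\exp(-s/c_1)$ for all $s$ sufficiently large.

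Next, control the forward exploration rooted at $\neg\vec R$. Conditional on the backward sweep that produced $\vec R$, the as-yet-untouched clauses of $\PHI'$ are still uniform, so for any literal $l$ the number of unrevealed out-neighbours of $l$ in the implication digraph is dominated by $\Po(d/2)$. Hence each $|\cL(\PHI',\{l\})|$ with $l\in\neg\vec R$ is stochastically dominated by the total progeny of a subcritical $\Po(d/2)$ Galton-Watson tree (mean $d/2<1$), which yields the same tail bound as in \Lem~\ref{lem_forced_tail} uniformly in $l$ with some constant $c_2=c_2(d)>0$. Applying the trivial inequality $\vN'\leq|\vec R|\cdot\max_{l\in\neg\vec R}|\cL(\PHI',\{l\})|$ together with a union bound over the at-most-$t$ starting literals in $\neg\vec R$ (on the event $|\vec R|\leq t$) gives
\[\pr\brk{\vN'>t^2}\leq\pr\brk{|\vec R|>t}+\pr\brk{|\vec R|\leq t,\,\max_{l\in\neg\vec R}|\cL(\PHI',\{l\})|>t}\leq c_1\exp(-t/c_1)+tc_2\exp(-t/c_2),\]
and absorbing the polynomial prefactor into a slightly worse exponential rate yields the claimed bound $c\exp(-t/c)$ for some $c=c(d)>0$ and all $t>c$.

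The principal obstacle lies in the third paragraph: both $\neg\vec R$ and the forward UCP sizes are functions of $\PHI'$, so the naive product-of-tails heuristic must be justified. The cleanest way to do so is to run the whole procedure as a single breadth-first search in the implication digraph — first sweeping backward from $W$, then sweeping forward from each newly discovered literal — and to note that at every step the unrevealed part of $\PHI'$ is still uniform subject only to the edges already exposed. Thus the number of new offspring at each step is dominated by an independent $\Po(d/2)$ variable, so the two-phase exploration is coupled to a pair of subcritical Galton-Watson trees whose total progenies each admit subexponential tails, yielding the desired bound.
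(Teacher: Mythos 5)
Your approach is genuinely different from the paper's. Where you recast $\vN'=|\bigcup_{l\in\neg\vec R}\cL(\PHI',\{l\})|$ via the implication digraph and run a two-phase (backward-then-forward) exploration from $W$, the paper combines a first-moment branching-process bound on $|\cN(\PHI',x_1)|$ (Claim~\ref{claim_taila}) with a Bayes'-rule/exchangeability argument for the conditional tail of $|\cL(\PHI',l)|$ given $x_1\in\cL(\PHI',l)$ (Claim~\ref{claim_tailb}): conditioned on its size $\ell$, the set $\cL(\PHI',l)\setminus\{l\}$ is a uniformly random $(\ell-1)$-subset of the remaining $2n-1$ literals, so $\pr[x_1\in\cL(\PHI',l)\mid|\cL(\PHI',l)|=\ell]=(\ell-1)/(2n-1)$; Bayes' rule with the lower bound $\pr[x_1\in\cL(\PHI',l)]\geq c''/n$ then converts the unconditional branching-process tail into the conditional one, with no bookkeeping of revealed versus unrevealed clauses at all. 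That trick is what your two-phase BFS is trying to replace.

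Your third paragraph contains a gap, though a reparable one. You claim that, conditionally on the backward sweep, $|\cL(\PHI',\{l\})|$ for $l\in\neg\vec R$ is stochastically dominated by the total progeny of a $\Po(d/2)$ Galton--Watson tree, on the grounds that the number of \emph{unrevealed} out-neighbours at each step is so dominated. But $\cL(\PHI',\{l\})$ is determined by all of $\PHI'$, and the forward sweep will also re-encounter clauses already exposed during the backward sweep; these are not covered by the coupling. Your final paragraph flags the difficulty, but the proposed remedy (coupling the procedure to ``a pair of subcritical Galton--Watson trees'') does not resolve it: the forward sweep is a \emph{forest} with a random number $|\neg\vec R|$ of roots, and the interaction through already-exposed clauses is still unaccounted for. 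The patch is as follows: every clause exposed during the backward sweep has both of its literals in $\vec R\cup\neg\vec R$ (at exposure time both endpoints are placed there), so re-querying such a clause in the forward sweep yields only a vertex already bounded by $|\vec R\cup\neg\vec R|\leq 2|\vec R|$; genuinely new exploration out of $\vec R\cup\neg\vec R$ proceeds exclusively via unrevealed clauses, whose offspring from each of the at most $2t$ literals of $\vec R\cup\neg\vec R$ is indeed dominated by $\Po(d/2)$, recovering the $t^2$ bound. As written, however, the stochastic-domination claim for $|\cL(\PHI',\{l\})|$ is not correct without this additional step.
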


	As a first step we are going to estimate the size of the set $\cN(\PHI',x_1)$ that contains all literals $l$ such that $\cL(\PHI',l)\cap\{x_1,\neg x_1\}\neq\emptyset$.

	\begin{claim}\label{claim_taila}
		There exists $c_1=c_1(d)>0$ such that for all $t>c_1$ we have $\pr\brk{|\cN(\PHI',x_1)|>t}\leq c_1\exp(-t/c_1).$
	\end{claim}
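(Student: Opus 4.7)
The plan is to bound $|\cN(\PHI',x_1)|$ by a deterministic function of the quantities $|\cV(\PHI',\{x_1\})|$ and $|\cV(\PHI',\{\neg x_1\})|$, and then invoke the tail bound \Lem~\ref{lem_forced_tail} twice together with a union bound.

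The starting point is a contraposition argument. By definition, $l\in\cN(\PHI',x_1)$ means that at least one of $x_1,\neg x_1$ belongs to $\cL(\PHI',\{l\})$. By \Lem~\ref{lem_ucp_reachable}, this is equivalent to the existence of an implication chain from $l$ to $x_1$ or to $\neg x_1$ in $\PHI'$. The contraposition observation preceding \Lem~\ref{lem_ucp_reachable} then yields an implication chain from $\neg x_1$ (respectively $x_1$) to $\neg l$, which by a second application of \Lem~\ref{lem_ucp_reachable} is equivalent to $\neg l\in\cL(\PHI',\{\neg x_1\})\cup\cL(\PHI',\{x_1\})$. Since $l\mapsto\neg l$ is injective, we obtain the deterministic bound
\begin{align*}
|\cN(\PHI',x_1)|\leq|\cL(\PHI',\{x_1\})|+|\cL(\PHI',\{\neg x_1\})|\leq 2|\cV(\PHI',\{x_1\})|+2|\cV(\PHI',\{\neg x_1\})|,
\end{align*}
where the second inequality uses that every variable contributes at most two literals to $\cL$.

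With this reduction in place, the tail estimate is immediate. For $t>c_1$ with $c_1=c_1(d)$ sufficiently large (in particular $c_1>32/(2-d)$), the event $\{|\cN(\PHI',x_1)|>t\}$ implies that at least one of $|\cV(\PHI',\{x_1\})|$ or $|\cV(\PHI',\{\neg x_1\})|$ exceeds $t/4>8/(2-d)$. A union bound together with \Lem~\ref{lem_forced_tail} (applied to $\PHI'$, which has the same distribution as a random 2-CNF with $m-1\sim dn/2$ clauses, so the lemma applies verbatim) gives
\begin{align*}
\pr\brk{|\cN(\PHI',x_1)|>t}\leq 2\cdot(2+o(1))\exp(-dt/160).
\end{align*}
Absorbing the constants into $c_1$ (and adjusting $c_1$ to also cover small $t$) yields the claimed bound.

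The only mild subtlety is that \Lem~\ref{lem_forced_tail} is stated for $\PHI$ with $m\sim dn/2$ clauses, whereas here we deal with $\PHI'$ obtained by removing $\va_m$; but $\PHI'$ is just a random 2-CNF with $m-1$ independent uniform clauses, so the asymptotic assumption $m-1\sim dn/2$ is still satisfied and the lemma applies without modification. No further obstacle arises.
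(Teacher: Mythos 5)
Your proof is correct and follows essentially the same approach as the paper: both arguments rest on the contraposition observation that $l\in\cN(\PHI',x_1)$ is equivalent to $\neg l$ being implication-reachable from $x_1$ or from $\neg x_1$, which reduces the problem to tail bounds on the UCP exploration tree rooted at $x_1$ and $\neg x_1$. The only (minor) variation is that where the paper re-derives a branching process tail bound inline, you instead invoke the already-established \Lem~\ref{lem_forced_tail} via the bound $|\cN(\PHI',x_1)|\leq 2|\cV(\PHI',\{x_1\})|+2|\cV(\PHI',\{\neg x_1\})|$; this is a clean shortcut rather than a different route.
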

	\begin{proof}
		We use a classical branching process argument.
		Let $\cR$ be the set of literals $l$ such that $x_1\in\cL(\PHI',l)$.
		By symmetry it suffices to bound $|\cR|$.

		For every $l\in\cR$ there exists an alternating sequence $l=l_0,a_1,l_1,a_2,\ldots,l_k=x_1$ of literals and clauses such that $a_i\equiv\neg l_{i-1}\vee l_{i}$.
		Flipping the negations along this sequence yields a reverse sequence
		$l_0'=\neg x_1=\neg l_k,a_1'=a_k,l_{1}'=\neg l_{k-1},\ldots,l_k'=\neg l$ such that $a_i'\equiv \neg l_{i-1}'\vee l_i'$.
		Hence, $\cR$ is precisely the set of literals $l$ that are reachable from $x_1$ via such an alternating sequence $l_0',a_1',\ldots,l_k'$.
		Furthermore, for any literal $l$ the expected number of clauses $\va_i$ such that $\va_i\equiv l\vee l'$ for some other literal $l'$ equals $m/2n\sim d/2$.
		Therefore, $|\cR|$ is stochastically dominated by the progeny of a branching process with offspring $\Po(d/2)$.
		Standard branching process tail bounds therefore yield the desired bound on $|\cR|$.
	\end{proof}

	\begin{claim}\label{claim_tailb}
		There exists $c_2=c_2(d)>0$ such that for all $t>c_2$ and for every literal $l\neq x_1$ we have
		\begin{align*}
			\pr\brk{|\cL(\PHI',l)|>t\mid x_1\in\cL(\PHI',l)}\leq c_2\exp(-t/c_2).
		\end{align*}
	\end{claim}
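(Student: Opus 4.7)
The plan is to bound the conditional probability via the ratio identity
\[
\pr\brk{|\cL(\PHI', l)| > t \mid x_1 \in \cL(\PHI', l)} = \frac{\pr\brk{|\cL(\PHI', l)| > t,\, x_1 \in \cL(\PHI', l)}}{\pr\brk{x_1 \in \cL(\PHI', l)}},
\]
estimating the denominator from below and the numerator from above. I will first show $\pr[x_1 \in \cL(\PHI', l)] = \Omega(1/n)$ by exhibiting a short implication chain whose presence is guaranteed with probability $\Theta(1/n)$. If $|l| \neq x_1$, then the single clause $\neg l \vee x_1$ already implies $x_1 \in \cL(\PHI', l)$, and this specific clause appears in $\PHI'$ with probability $(m-1)/(4\binom{n}{2}) = \Theta(1/n)$. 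If $l = \neg x_1$, then a length-two chain $\neg x_1 \to y \to x_1$ suffices, and summing the $\Theta(1/n^2)$ contributions over the $2(n-1)$ possible intermediate literals $y$ again gives $\Omega(1/n)$.

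For the numerator in the generic case $|l| \neq x_1$, the distribution of $\PHI'$ is invariant under any permutation of the variables that fixes $|l|$. By exchangeability the joint probability
\[
\pr\brk{|\cL(\PHI', l)| > t,\, x_i \in \cL(\PHI', l)}
\]
does not depend on $i$ as long as $x_i \neq |l|$. Summing over those $n-1$ indices yields
\[
(n-1)\, \pr\brk{|\cL(\PHI', l)| > t,\, x_1 \in \cL(\PHI', l)} \leq \Erw\brk{|\cL(\PHI', l)|\, \vecone\{|\cL(\PHI', l)| > t\}}.
\]
Since $|\cV(\PHI', l)| \geq |\cL(\PHI', l)|/2$, \Lem~\ref{lem_forced_tail} furnishes an exponential tail for $|\cL(\PHI', l)|$, so the right-hand side is $O(t \eul^{-t/c'})$ for some $c' = c'(d) > 0$. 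Dividing by the $\Omega(1/n)$ denominator and absorbing the polynomial factor into the exponential for large $t$ delivers the claim in this case.

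The main obstacle is the remaining case $l = \neg x_1$, where the exchangeability argument breaks because $x_1 = |l|$ itself is the target. I plan to handle it by a spine decomposition of the implication exploration: condition on the length $\vec K$ of the shortest implication chain from $\neg x_1$ to $x_1$, for which a union bound over intermediate literals shows $\pr[\vec K = k] \leq (d/2)^{k}/(2n)$, decaying geometrically thanks to $d < 2$. Conditional on a specific shortest chain of length $k$, the set $\cL(\PHI', \neg x_1)$ is dominated by $k+1$ plus the sum of $k+1$ independent total progenies of subcritical $\Po(d/2)$ Galton-Watson processes hanging off the spine, each with an exponential tail by the same branching-process estimate that underlies \Lem~\ref{lem_forced_tail}. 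Combining these tail bounds geometrically in $k$ and dividing by the $\Omega(1/n)$ lower bound on $\pr[x_1 \in \cL(\PHI', \neg x_1)]$ yields the required exponential decay in $t$.
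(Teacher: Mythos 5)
Your proposal is correct and shares the same building blocks as the paper's proof — the branching-process tail bound (\Lem~\ref{lem_forced_tail}), an exchangeability argument, and the $\Omega(1/n)$ lower bound on $\pr[x_1\in\cL(\PHI',l)]$ — but combines them differently and, in one respect, more carefully. The paper conditions on $|\cL(\PHI',l)|=\ell$, asserts that $\cL(\PHI',l)\setminus\{l\}$ is then uniformly distributed over $(\ell-1)$-subsets of the $2n-1$ remaining literals, and applies Bayes' rule. You instead bound $\pr[|\cL(\PHI',l)|>t,\ x_1\in\cL(\PHI',l)]$ directly by averaging over the $n-1$ positive literals $x_i\neq|l|$ and comparing to $\Erw[\,|\cL(\PHI',l)|\vecone\{|\cL(\PHI',l)|>t\}\,]$, then divide by the lower bound on $\pr[x_1\in\cL(\PHI',l)]$; both routes yield the exponential bound. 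Where your version genuinely adds something is the case $l=\neg x_1$: you observe that the exchangeability argument controls only the chance of hitting a \emph{generic} literal, not $\neg l$, since the stabiliser of $l$ under variable permutations and sign flips also fixes $\neg l$, which therefore sits in its own orbit. You isolate that case and resolve it by a spine decomposition of the implication exploration (geometric decay of the shortest chain length, a Chernoff-type bound for the total progeny of the subcritical subtrees hanging off the spine). The paper's uniform-subset claim silently covers $l=\neg x_1$ too, which is a minor looseness that your extra argument addresses. The spine sketch is sound, though a complete write-up would need to make precise the domination and approximate-independence claims for the off-spine subtrees; this is standard branching-process bookkeeping.
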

	\begin{proof}
		We combine a branching process argument with Bayes' formula.
		Specifically, because the formula $\PHI'$ is random, the set $\cL(\PHI',l)\setminus\{l\}$ is random given its size.
		Hence, for an integer $\ell$ we have
		\begin{align}\label{eq_claim_tailb_1}
			\pr\brk{x_1\in\cL(\PHI',l)\mid|\cL(\PHI',l)|=\ell}&=\frac{\ell-1}{2n-1}.
		\end{align}
		Furthermore, the size $|\cL(\PHI',l)|$ is stochastically dominated by the progeny of a branching process with offspring $\Po(d/2)$.
		Therefore, there exists $c_2'=c_2'(d)>0$ such that for all $t>c_2'$ we have
		\begin{align}\label{eq_claim_tailb_2}
			\pr\brk{|\cL(\PHI',l)|>t}\leq c_2'\exp(-t/c_2').
		\end{align}
		Moreover, for any $d>0$ there exists $c_2''=c_2''(d)>0$ such that
		\begin{align}\label{eq_claim_tailb_3}
			\pr\brk{x_1\in\cL(\PHI',l)}\geq c_2''/n.
		\end{align}
		Hence, combining \eqref{eq_claim_tailb_1}--\eqref{eq_claim_tailb_3} with Bayes' rule, we obtain for $\ell>c_2'$,
		\begin{align*}
			\pr\brk{|\cL(\PHI',l)|=\ell\mid x_1\in\cL(\PHI',l)}
			&\leq\frac{\pr\brk{x_1\in\cL(\PHI',\ell)\mid|\cL(\PHI',l)|=\ell}\pr\brk{|\cL(\PHI',l)|=\ell}}{\pr\brk{x_1\in\cL(\PHI',l)}}\leq\frac{c_2'}{c_2''}\ell\exp(-\ell/c_2'),
		\end{align*}
		which implies the assertion.
	\end{proof}

	\begin{proof}[Proof of \Lem~\ref{lem_tail}]
		Let $\fR(\ell)$ be the event that there exists $l\in\cN(\PHI',x_1)\setminus\{x_1\}$ such that $|\cL(\PHI',l)|>\ell$.
		Claim~\ref{claim_taila} implies that there exists $c_3=c_3(d)>0$ such that
		\begin{align}\label{eq_lem_tail_1}
			\pr\brk{x_1\in\cL(\PHI',l)}\leq c_3/n.
		\end{align}
		Hence, by Claim~\ref{claim_tailb}, \eqref{eq_lem_tail_1} and the union bound,
		\begin{align}\label{eq_lem_tail_2}
			\pr\brk{\fR(\ell)}\leq\sum_{l\neq x_1}\pr\brk{x_1\in\cL(\PHI',l),\,|\cL(\PHI',l)|>\ell}\leq 2c_2c_3\exp(-\ell/c_2).
		\end{align}
		Furthermore, Claim~\ref{claim_taila} shows that
		\begin{align}\label{eq_lem_tail_3}
			\pr\brk{\cN(\PHI',x_1)\setminus\{x_1\}>\ell}&\leq c_1\exp(-\ell/c_1).
		\end{align}
		Combining \eqref{eq_lem_tail_2} and \eqref{eq_lem_tail_3}, we obtain
		\begin{align}\label{eq_lem_tail_4}
			\pr\brk{\sum_{l\in\cN(\PHI',x_1)}|\cL(\PHI',l)|>\ell^2}&
			\leq\pr\brk{\fR(\ell)}+\pr\brk{\cN(\PHI',x_1)\setminus\{x_1\}>\ell}
			\leq c_1\exp(-\ell/c_1)+2c_2c_3\exp(-\ell/c_2).
		\end{align}
		By symmetry the same bound holds with $x_1$ replaced by $\neg x_1$.
		Therefore, the assertion follows from \eqref{eq_lem_tail_4} and the union bound.
	\end{proof}

	\subsection{The squared martingale difference}\label{sec_comb_proc}

	We derive a combinatorial formula for the squared martingale differences $\vX_i^2$.
	Let
	\begin{align*}
		\vDelta(M)&=\log\bcfr{Z(\hPHI_1(M,m-M))}{Z(\hPHI_1(M-1,m-M))} \cdot \log\bcfr{Z(\hPHI_2(M,m-M))}{Z(\hPHI_2(M-1,m-M))},\\
		\vDelta'(M)&=\log\bcfr{Z(\hPHI_1(M-1,m-M+1))}{Z(\hPHI_1(M-1,m-M))} \cdot \log\bcfr{Z(\hPHI_2(M-1,m-M+1))}{Z(\hPHI_2(M-1,m-M))} ,\\
		\vDelta''(M)&=\log\bcfr{Z(\hPHI_1(M,m-M))}{Z(\hPHI_1(M-1,m-M))} \cdot \log\bcfr{Z(\hPHI_2(M-1,m-M+1))}{Z(\hPHI_2(M-1,m-M))} .
	\end{align*}

	\begin{lemma}\label{lem_comb_proc}
		We have $m\vX_M^2=\ex\brk{\vDelta(M)+\vDelta(M)'-2\vDelta''(M)\mid\fF_M}.$
	\end{lemma}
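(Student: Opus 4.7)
The plan is to express $m\vX_M^2$ as a square $(U-V)^2$ where $U = \ex[\log Z(\hPHI)\mid\fF_M]$ and $V = \ex[\log Z(\hPHI)\mid\fF_{M-1}]$, and then to realise this square as the conditional expectation on the right-hand side by exploiting the correlated-formulas construction from~\eqref{eqPHI12}.

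First, because the conditional distribution of $\PHI$ given $\fF_M$ is obtained by keeping $\va_1,\ldots,\va_M$ fixed and drawing the remaining clauses independently and uniformly, relabelling lets us write $U = \ex[\log Z(\hPHI_1(M,m-M))\mid\fF_M]$ and (since $\hPHI_1(M-1,m-M+1)$ does not involve $\va_M$) also $V = \ex[\log Z(\hPHI_1(M-1,m-M+1))\mid\fF_M]$. Since $\vX_M = m^{-1/2}(U-V)$, we have $m\vX_M^2 = (U-V)^2$, which is $\fF_M$-measurable.

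Abbreviate $A_h = \log Z(\hPHI_h(M,m-M))$, $B_h = \log Z(\hPHI_h(M-1,m-M))$, $C_h = \log Z(\hPHI_h(M-1,m-M+1))$ so that $\vDelta(M) = (A_1-B_1)(A_2-B_2)$, $\vDelta'(M) = (C_1-B_1)(C_2-B_2)$, $\vDelta''(M) = (A_1-B_1)(C_2-B_2)$. The key structural fact coming from~\eqref{eqPHI12} is that, for $h=1$, the triple $(A_1,B_1,C_1)$ is a function of $\va_1,\ldots,\va_M$ and the primed clauses $\va'_\bullet$, whereas for $h=2$ it is a function of $\va_1,\ldots,\va_M$ and the doubly-primed clauses $\va''_\bullet$. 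Since the primed and doubly-primed families are mutually independent and independent of $\fF_M$, conditionally on $\fF_M$ the triples $(A_1,B_1,C_1)$ and $(A_2,B_2,C_2)$ are independent and identically distributed. Moreover $B_h$ and $C_h$ do not involve $\va_M$, so their $\fF_M$-conditional means coincide with their $\fF_{M-1}$-conditional means: $\ex[A_h\mid\fF_M] = U$, $\ex[C_h\mid\fF_M] = V$, and $\ex[B_h\mid\fF_M] = \ex[B_h\mid\fF_{M-1}] =: \bar B$ for $h = 1,2$.

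Factorising each of the three products across the two independent copies yields
\begin{align*}
\ex[\vDelta(M)\mid\fF_M] &= (U-\bar B)^2,\\
\ex[\vDelta'(M)\mid\fF_M] &= (V-\bar B)^2,\\
\ex[\vDelta''(M)\mid\fF_M] &= (U-\bar B)(V-\bar B),
\end{align*}
and the elementary identity $(U-\bar B)^2+(V-\bar B)^2-2(U-\bar B)(V-\bar B) = (U-V)^2$ then gives exactly $m\vX_M^2$, with the auxiliary quantity $\bar B$ dropping out. The only real obstacle is careful bookkeeping: one must confirm that $B_h$ and $C_h$ are measurable with respect to the $\sigma$-algebra generated by $\va_1,\ldots,\va_{M-1}$ together with one of the auxiliary clause families (so that the extra information in $\fF_M\setminus\fF_{M-1}$ does not affect their conditional expectation) and that the two families $\va'_\bullet,\va''_\bullet$ are genuinely independent of each other and of $\fF_M$. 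Both of these properties are built directly into the coupling~\eqref{eqPHI12}, so no further probabilistic estimate is needed—the lemma is essentially an algebraic consequence of the correlated-pair construction.
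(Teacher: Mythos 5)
Your proof is correct and fills in exactly the routine computation the paper dispenses with in one line. The decomposition $m\vX_M^2=(U-V)^2$, the identification of the $\fF_M$-conditional laws of $\PHI_h(M,m-M)$, $\PHI_h(M-1,m-M+1)$, $\PHI_h(M-1,m-M)$ with those of $\PHI$ under $\fF_M$, $\fF_{M-1}$, and the conditional independence of the two coordinates $h=1,2$ given $\fF_M$ are precisely the ingredients needed, and the algebraic identity $(U-\bar B)^2+(V-\bar B)^2-2(U-\bar B)(V-\bar B)=(U-V)^2$ closes the argument.
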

	\begin{proof}
		This follows from a direct computation.
	\end{proof}

	\begin{proof}[Proof of \Lem~\ref{lem_comb_rgb}]
		\Lem~\ref{lem_comb_rgb} is an immediate consequence of \Lem~\ref{lem_comb_proc}.
	\end{proof}

	\subsection{An $L^2$-bound}\label{sec_comb_l2}
	The following $L^2$-bound will enable us to deal with error terms.

	\begin{lemma}\label{lem_comb_l2}
		Uniformly for all $1\leq M\leq m$ we have $\ex\brk{\vec\Delta(M)^2+\vec\Delta'(M)^2+\vec\Delta''(M)^2}=O(1)$.
	\end{lemma}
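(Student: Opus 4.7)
The approach is to combine the deterministic per-clause bound of Lemma~\ref{lem_forced_add} with the tail estimate of Lemma~\ref{lem_tail}, bridging the two via Cauchy--Schwarz. The upshot will be that each log-ratio entering $\vec\Delta,\vec\Delta',\vec\Delta''$ is controlled by an explicit random set of literals whose size has stretched-exponential tails, and hence finite fourth moment.

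Concretely, each of the three log-ratios appearing in $\vec\Delta(M),\vec\Delta'(M),\vec\Delta''(M)$ is of the form $\log Z(\hat\Psi)-\log Z(\hat\Phi)$, where $\Psi$ is obtained from some random 2-CNF $\Phi$ by inserting a single uniformly random 2-clause $e$ drawn independently of $\Phi$. Fact~\ref{lem_hatphi_sat} guarantees $Z(\hat\Phi),Z(\hat\Psi)>0$, so Lemma~\ref{lem_forced_add} yields the pointwise bound $\abs{\log Z(\hat\Psi)-\log Z(\hat\Phi)}\leq\abs{\cN(\Phi,e)}\log 2$. Consequently each of $\abs{\vec\Delta(M)},\abs{\vec\Delta'(M)},\abs{\vec\Delta''(M)}$ is at most $(\log 2)^2\abs{\cN_1}\abs{\cN_2}$ for an appropriate pair of neighbourhoods $(\cN_1,\cN_2)$, and Cauchy--Schwarz gives
\begin{align*}
\ex\brk{\vec\Delta(M)^2}\leq(\log 2)^4\sqrt{\ex\brk{\abs{\cN_1}^4}\ex\brk{\abs{\cN_2}^4}},
\end{align*}
so the task reduces to a uniform-in-$M$ bound on $\ex[\abs{\cN_h}^4]$; the analogous bounds for $\vec\Delta'(M)^2$ and $\vec\Delta''(M)^2$ follow identically.

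In every instance the background formula to which $e$ is appended consists of exactly $m-1$ independent uniform random 2-clauses, and $e$ itself is an independent uniform random 2-clause. Hence $\abs{\cN_h}$ is distributed exactly like the quantity $\vN'$ of Lemma~\ref{lem_tail}, which furnishes $\pr[\vN'>t^2]\leq c\exp(-t/c)$ for $t>c$, i.e.\ a stretched-exponential tail with index $1/2$. Rewriting as $\pr[\vN'>s]\leq c\exp(-\sqrt{s}/c)$ for $s>c^2$ and integrating $\ex[\vN'^4]=4\int_0^\infty s^3\pr[\vN'>s]\dd s$ shows that this integral converges to a constant depending only on $d$, giving $\ex[\abs{\cN_h}^4]=O(1)$ uniformly in $M$ and $h$.

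The main (mild) obstacle is to verify the distributional identification of each $\cN_h$ with $\vN'$ case by case---particularly for $\vec\Delta(M)$, where the same shared clause $\va_M$ appears in both factors on the correlated base formulas $\PHI_1(M-1,m-M)$ and $\PHI_2(M-1,m-M)$. The use of Cauchy--Schwarz sidesteps any need to exploit independence between the two factors and keeps us from having to compute their joint distribution, so the marginal identification suffices to close the argument.
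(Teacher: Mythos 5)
Your proof is correct and matches the paper's argument in all essentials: both use \Lem~\ref{lem_forced_add} to pass to $\abs{\cN(\Phi,e)}$, \Lem~\ref{lem_tail} to get a finite fourth moment, and Cauchy--Schwarz to decouple the two correlated factors. The only cosmetic difference is the ordering — the paper applies Cauchy--Schwarz first on the product of squared log-ratios and only then invokes \Lem~\ref{lem_forced_add} on the resulting single $\log^4$-term, whereas you bound each log-ratio by $\abs{\cN_h}\log 2$ first and then apply Cauchy--Schwarz to $\ex[\abs{\cN_1}^2\abs{\cN_2}^2]$; you also spell out the tail-to-moment integration that the paper leaves implicit.
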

	\begin{proof}
		We will bound $\ex[\vec\Delta(M)^2]$; the bounds on the other two terms follow analogously.
		Invoking the Cauchy-Schwarz inequality, we obtain
		\begin{align}\nonumber
			\ex\brk{\vec\Delta(M)^2}&=\ex\brk{\log^2\bcfr{Z(\hPHI_1(M,m-M))}{Z(\hPHI_1(M-1,m-M))} \cdot \log^2\bcfr{Z(\hPHI_2(M,m-M))}{Z(\hPHI_2(M-1,m-M))}}\\
			&\leq\ex\brk{\log^4\bcfr{Z(\hPHI_1(M,m-M))}{Z(\hPHI_1(M-1,m-M))}}^{1/2}\ex\brk{\log^4\bcfr{Z(\hPHI_2(M,m-M))}{Z(\hPHI_2(M-1,m-M))}}^{1/2}\nonumber\\
			&=\ex\brk{\log^4\bcfr{Z(\hPHI_1(M,m-M))}{Z(\hPHI_1(M-1,m-M))}}.\label{eq_lem_comb_l2_1}
		\end{align}
		Furthermore, the random formula $\PHI_1(M,m-M)$ is obtained from $\PHI_1(M-1,m-M)$ by adding a single random clause $\va_M$, which is independent of $\PHI_1(M-1,m-M)$.
		Therefore, \Lem~\ref{lem_forced_add} implies that
		\begin{align}\label{eq_lem_comb_l2_2}
			\log\bcfr{Z(\hPHI_1(M,m-M))}{Z(\hPHI_1(M-1,m-M))}&\leq|\cN(\PHI_1(M-1,m-M),\va_M)|\log2.
		\end{align}
		Moreover, since $|\cN(\PHI_1(M-1,m-M),\va_M)|$ has the same distribution as the random variable $\vN'$ from \Lem~\ref{lem_tail}, we obtain
		\begin{align}\label{eq_lem_comb_l2_3}
			\ex\brk{|\cN(\PHI_1(M-1,m-M),\va_M)|^4}&=O(1)
		\end{align}
		uniformly for all $M$.
		Finally, the assertion follows from \eqref{eq_lem_comb_l2_1}--\eqref{eq_lem_comb_l2_3}.
	\end{proof}

	To facilitate the following steps we introduce trunacted versions of $\vDelta(M),\vDelta'(M),\vDelta''(M)$: for $B>0$ and $x>0$ let
	\begin{align*}
		\Lambda_B(x)&=\begin{cases}
			-B&\mbox{if $\log(x)<-B$},\\
			B&\mbox{if $\log(x)>B$},\\
			\log(x)&\mbox{otherwise}.
		\end{cases}
	\end{align*}
	Further, let
	\begin{align*}
		\vDelta_B(M)&=\Lambda_B\bc{\frac{Z(\hPHI_1(M,m-M))}{Z(\hPHI_1(M-1,m-M))}} \cdot \Lambda_B\bc{\frac{Z(\hPHI_2(M,m-M))}{Z(\hPHI_2(M-1,m-M))}},\\
		\vDelta'_B(M)&=\Lambda_B\bc{\frac{Z(\hPHI_1(M-1,m-M+1))}{Z(\hPHI_1(M-1,m-M))}} \cdot \Lambda_B\bc{\frac{Z(\hPHI_2(M-1,m-M+1))}{Z(\hPHI_2(M-1,m-M))}} ,\\
		\vDelta''_B(M)&=\Lambda_B\bc{\frac{Z(\hPHI_1(M,m-M))}{Z(\hPHI_1(M-1,m-M))}} \cdot \Lambda_B\bc{\frac{Z(\hPHI_2(M-1,m-M+1))}{Z(\hPHI_2(M-1,m-M))}}.
	\end{align*}
	Combining \Lem~\ref{lem_comb_l2} with the Cauchy-Schwarz inequality, we obtain the following.

	\begin{corollary}\label{cor_comb_l2}
		For any $\eps>0$ there exists $B>0$ such that for all $1\leq M\leq m$ we have
		\begin{align*}
			\ex\abs{\vDelta(M)-\vDelta_B(M)} +\ex\abs{\vDelta'(M)-\vDelta'_B(M)}+ \ex\abs{\vDelta''(M)-\vDelta''_B(M)}<\eps.
		\end{align*}
	\end{corollary}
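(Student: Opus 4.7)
The plan is to reduce the corollary to a uniform fourth-moment bound by combining Cauchy--Schwarz with the elementary truncation identity $|\log x - \Lambda_B(x)| \leq |\log x|\cdot\vecone\{|\log x| > B\}$, which is immediate from the definition of $\Lambda_B$ at the beginning of \Sec~\ref{sec_comb_l2}.

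First I would fix $1 \leq M \leq m$ and abbreviate $X_h = \log\bc{Z(\hPHI_h(M,m-M))/Z(\hPHI_h(M-1,m-M))}$, and let $Y_h$ denote the truncation of $X_h$ to $[-B,B]$, so that $\vDelta(M) = X_1 X_2$, $\vDelta_B(M) = Y_1 Y_2$, $|Y_h|\leq|X_h|$, and $|X_h - Y_h| \leq |X_h|\cdot\vecone\{|X_h| > B\}$. The decomposition $X_1 X_2 - Y_1 Y_2 = (X_1 - Y_1) X_2 + Y_1 (X_2 - Y_2)$ together with Cauchy--Schwarz then yields
\[\ex|\vDelta(M) - \vDelta_B(M)| \leq \ex[X_2^2]^{1/2} \ex[(X_1-Y_1)^2]^{1/2} + \ex[Y_1^2]^{1/2} \ex[(X_2-Y_2)^2]^{1/2}.\]

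The proof of \Lem~\ref{lem_comb_l2} already establishes, via \Lem s~\ref{lem_forced_add} and~\ref{lem_tail}, that $\ex[X_h^4] = O(1)$ uniformly in $M$; consequently $\ex[X_h^2], \ex[Y_h^2] = O(1)$ uniformly in $M$ as well. The key truncation estimate $(X_h - Y_h)^2 \leq X_h^2\vecone\{|X_h| > B\} \leq X_h^4/B^2$ combined with the uniform fourth-moment bound then gives $\ex[(X_h - Y_h)^2] = O(B^{-2})$ uniformly in $M$. Feeding these bounds into the Cauchy--Schwarz estimate above yields $\ex|\vDelta(M) - \vDelta_B(M)| = O(B^{-1})$, with the hidden constant independent of $M$.

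The same argument applies verbatim to $\vDelta'(M)$ and $\vDelta''(M)$: the log-ratios now involve either the additional independent clauses $\va'_{m-M+1}$ or $\va''_{m-M+1}$ in place of the shared clause $\va_M$, or a mixture of the two, but \Lem s~\ref{lem_forced_add} and~\ref{lem_tail} (exactly as deployed inside the proof of \Lem~\ref{lem_comb_l2}) deliver the same uniform $O(1)$ fourth-moment bound on each individual log-ratio factor, and the truncation step produces the same $O(B^{-1})$ bound. Choosing $B$ sufficiently large so that each of the three contributions is at most $\eps/3$ then completes the proof. I do not anticipate any substantive obstacle here---this is a routine cleanup corollary of \Lem~\ref{lem_comb_l2}, and the only point requiring mild attention is that all constants remain uniform in $M$, which is already the case in \Lem s~\ref{lem_forced_add} and~\ref{lem_tail}.
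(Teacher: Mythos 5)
Your proof is correct and is exactly the argument the paper intends by its one-line proof ``Combining \Lem~\ref{lem_comb_l2} with the Cauchy--Schwarz inequality'': the decomposition $X_1X_2-Y_1Y_2=(X_1-Y_1)X_2+Y_1(X_2-Y_2)$, Cauchy--Schwarz, and the uniform fourth-moment bound on the individual log-ratio factors established inside the proof of \Lem~\ref{lem_comb_l2} (via \Lem s~\ref{lem_forced_add} and~\ref{lem_tail}). The only point worth flagging is that you rightly reach into the \emph{proof} of \Lem~\ref{lem_comb_l2} for the bound $\ex[X_h^4]=O(1)$ rather than relying on the lemma's statement alone, since the statement bounds only $\ex[X_1^2X_2^2]$ and would not by itself give a uniform tail estimate on a single factor; this is the right reading of the paper's citation.
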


	\subsection{The variance process}\label{sec_var_proc}
	In light of \Lem~\ref{lem_comb_proc}, to prove \Prop~\ref{prop_varproc} we need to show that
	\begin{align*}
		\frac1m\sum_{M=1}^m\ex\brk{\vDelta(M)+\vDelta'(M)-2\vDelta''(M)\mid\fF_{M}}&\to\eta(d)^2\qquad\mbox{in probability.}
	\end{align*}
	To this end we divide the above sum up into batches $\bar\Sigma(L,L')=\Sigma(L,L')+\Sigma'(L,L')-2\Sigma''(L,L')$, where
	\begin{align*}
		\Sigma(L,L')&=\frac1{L'-L}\sum_{M=L}^{L'-1}\ex\brk{\vDelta(M)\mid\fF_{M}},\\
		\Sigma'(L,L')&=\frac1{L'-L}\sum_{M=L}^{L'-1}\ex\brk{\vDelta'(M)\mid\fF_{M}},\\
		\Sigma''(L,L')&=\frac1{L'-L}\sum_{M=L}^{L'-1}\ex\brk{\vDelta''(M)\mid\fF_{M}}.
	\end{align*}
	Then for any sequence $1=L_0<\cdots<L_k=m$ we have
	\begin{align*}
		\frac1n\sum_{M=1}^m\ex\brk{\vDelta(M)+\vDelta'(M)-2\vDelta''(M)\mid\fF_{M}}&=\sum_{i=1}^{k}\frac{L_i-L_{i-1}}{n}\bar\Sigma(L_{i-1},L_i).
	\end{align*}

	The following lemma is the centerpiece of the proof.

	\begin{lemma}\label{lem_varproc}
		For any $\eps>0$ there exists $\omega>0$ such that uniformly for all $1\leq L<L'\leq m$ with $\omega\leq L'-L\leq2\omega$ we have
		\begin{align*}
			\ex\abs{\Sigma(L,L')-\cB^\tensor_{d,t}(\pi_{d,t})}+
			\ex\abs{\Sigma'(L,L')-\cB^\tensor_{d,0}(\pi_{d,0})}+
			\ex\abs{\Sigma''(L,L')-\cB^\tensor_{d,0}(\pi_{d,0})}
			&<\eps+o(1),&&\mbox{where }t=L/m.
		\end{align*}
	\end{lemma}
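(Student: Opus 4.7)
The three sums $\Sigma, \Sigma', \Sigma''$ are handled in parallel; the backbone of the argument is to combine truncation, the marginal-form expressions from Proposition~\ref{lem_rs}, the empirical-marginal convergence of Corollary~\ref{lem_empirical}, and a spatial decoupling analogous to Lemma~\ref{lem_rs_single} for the paired formulas.

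\textbf{Step 1 (truncation and marginal form).} Apply Corollary~\ref{cor_comb_l2} with parameter $\eps/16$ to pick a constant $B = B(\eps)$ such that $\vDelta, \vDelta', \vDelta''$ may be replaced uniformly in $M$ by their truncated versions $\vDelta_B, \vDelta_B', \vDelta_B''$ at total $L^1$-cost $\eps/4$. Then apply Proposition~\ref{lem_rs} together with the bound $|\log(1+x)|\le 2|x|$ for $|x|<1/2$ to rewrite each log-ratio, up to an $o(1)$ additive error that remains small in $L^1$ after truncation, as $\log\bigl(1 - \prod_{y \in \partial e} p^{(h)}_{y,e}\bigr)$, where $e$ is the freshly inserted clause and $p^{(h)}_{y,e}=\mu_{\hPHI_h(M-1,m-M)}(\{\SIGMA_y\neq\sign(y,e)\})$. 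A short calculation using $\mypsi_1(x)=(1+\tanh(x/2))/2$ identifies $p^{(h)}_{y,e}$ with the functional form appearing inside $\cB^\tensor$.

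\textbf{Step 2 (convergence of the unconditional conditional expectation).} For fixed $M$ write $t' = M/m$. Fix $\fF_{M-1}$ and average over $\va_M$ (and over the additional independent clauses $\va'_{m-M+1},\va''_{m-M+1}$ for the $\Sigma',\Sigma''$ cases). The resulting expression is an average over at most four uniformly chosen variables $y_1,\ldots,y_r$ of the joint quantities $(p^{(1)}_{y_i,\cdot},p^{(2)}_{y_i,\cdot})_i$ paired with uniform signs. By Corollary~\ref{lem_empirical} and the pair extension of Lemma~\ref{lem_rs_single} (which follows by rerunning the replica-symmetry argument using the joint Gibbs uniqueness from Corollary~\ref{prop_uniqueness} and the joint local convergence from Proposition~\ref{lemma_lwc}, together with Lemma~\ref{lem_lwc_d}), these vectors are approximately i.i.d.\ from $\mypsi(\rho_{d,t'})$. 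For $\Sigma$ the $\va_M$ is a single shared clause, so the paired Xi-vectors enter with the \emph{same} random signs $\vr_1,\vr_2$, producing exactly the integrand of $\cB^\tensor_{d,t'}(\rho_{d,t'})$. For $\Sigma',\Sigma''$ the two log-factors reference disjoint clauses, hence disjoint variables w.h.p., and the paired-vertex independence makes $\ex[\vDelta_B'(M)\mid\fF_{M-1}]$ and $\ex[\vDelta_B''(M)\mid\fF_{M-1}]$ factor into the product of two single-formula expectations $A^2$ where
\[
A=\ex\Bigl[\log\bigl(1-(1+\tanh(\vec\xi_1/2))(1+\tanh(\vec\xi_2/2))/4\bigr)\Bigr],
\]
and $\vec\xi_1,\vec\xi_2$ are i.i.d.\ from the (single-coordinate) marginal of $\rho_{d,t'}$. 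Because $\vT_h\disteq\vT$ for every $t\in[0,1]$, this marginal---and hence $A$---does not depend on $t'$. A direct calculation exploiting the coordinate-independence of $\rho_{d,0}$ (the trees $\vT_1,\vT_2$ are independent at $t=0$) together with the sign-symmetry from Claim~\ref{claim_sym_prop} shows $\cB^\tensor_{d,0}(\rho_{d,0})=A^2$, which matches. Picking $\omega\le m^{1/2}$ keeps $M/m$ within $o(1)$ of $t=L/m$ throughout the window, and the continuity of the map $(d,t)\mapsto \cB^\tensor_{d,t}(\rho_{d,t})$ (inherited from the Wasserstein continuity of $t\mapsto\rho_{d,t}$) absorbs the drift in $t'$.

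\textbf{Step 3 (window concentration and main obstacle).} Step 2 controls only the $\fF_{M-1}$-conditional expectations, whereas $\Sigma,\Sigma',\Sigma''$ use $\fF_M$-conditioning. For each $M$ the increment $\ex[\vDelta_B(M)\mid\fF_M]-\ex[\vDelta_B(M)\mid\fF_{M-1}]$ is a martingale difference bounded by $2B^2$, so the window-averaged differences are $O(B^2/\sqrt{\omega})$ in $L^2$ by Azuma--Hoeffding; choosing $\omega=\omega(\eps,B)$ large makes this $<\eps/4$, and the analogous bounds hold for the primed versions. Assembling Steps~1--3 gives the $\eps+o(1)$ bound claimed. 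The main technical obstacle is the pair extension of Lemma~\ref{lem_rs_single}: one must verify that the replica-symmetry proof, which relies on Gibbs uniqueness for a single tree, still works when one tracks the joint marginals under the coupled pair of formulas. This is where Corollary~\ref{prop_uniqueness} is crucial---it provides the needed decoupling $\pr[\SIGMA_{\vT_h,o}=1\mid\vT^\tensor,\SIGMA_{\partial^{2\ell}o}]\to\pr[\SIGMA_{\vT_h,o}=1\mid\vT^\tensor]$ for both $h\in\{1,2\}$ simultaneously on the joint branching process $\vT^\tensor$, precisely the ingredient that powers the joint spatial mixing of $(\hPHI_1,\hPHI_2)$.
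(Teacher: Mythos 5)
Your proposal uses the same underlying ingredients (truncation via \Cor~\ref{cor_comb_l2}, the marginal form from \Prop~\ref{lem_rs}, and the joint empirical-measure convergence of \Cor~\ref{lem_empirical}), but at the concentration step you take a genuinely different route from the paper. The paper (via Claims~\ref{claim_varproc_a}--\ref{claim_varproc_b}) couples the truncated increments $\vD_M$ to i.i.d.\ copies $\vS_M$ drawn from the limiting law $\pi_{d,t}$, bounds $\ex\,W_1\bc{\sum\vD_M,\sum\vS_M}=o(1)$, and concludes with the strong law of large numbers for the bounded i.i.d.\ $\vS_M$; the passage from the $\fF_M$-conditional expectations defining $\Sigma_B$ to the unconditioned averages is then a one-line Jensen inequality, since $\ex[\vD_M\mid\fF_M]=\ex[\vD_M\mid\fF_{L'-1}]$. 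Your version instead first controls the $\fF_{M-1}$-conditional expectations and then bridges to $\fF_M$-conditioning with Azuma--Hoeffding applied to the martingale differences $\ex[\vDelta_B(M)\mid\fF_M]-\ex[\vDelta_B(M)\mid\fF_{M-1}]$, which buys an explicit $O(B^2/\sqrt\omega)$ error. Both routes close the argument; the paper's is shorter and avoids the extra $\fF_{M-1}\to\fF_M$ step entirely.

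One point to correct: the ``pair extension of \Lem~\ref{lem_rs_single}'' that you flag as the main technical obstacle is not actually needed. \Lem~\ref{lem_rs_single} is only invoked (inside \Prop~\ref{lem_rs}) to factorize the ratio $Z(\hPHI_h(M,m-M))/Z(\hPHI_h(M-1,m-M))$ into a product over the two variables of the added clause, for each $h$ separately -- this is single-formula replica symmetry. The fact that the pair of joint-marginal vectors $(\mu_{\hPHI_1}(\cdot,y_i),\mu_{\hPHI_2}(\cdot,y_i))_{i=1,2}$ at two uniformly chosen clause-variables is approximately i.i.d.\ from $\mypsi(\rho_{d,t})$ follows directly from \Cor~\ref{lem_empirical}: the Wasserstein convergence of the empirical joint-marginal measure $\pi_{\hat\PHI_1,\hat\PHI_2}$ to $\mypsi(\rho_{d,t})$ already propagates to its product measure, which is what two independently chosen variables sample. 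No separate joint-formula replica symmetry argument (and hence no fresh invocation of \Cor~\ref{prop_uniqueness}) is required at this point; \Cor~\ref{prop_uniqueness} already did its work upstream in establishing \Cor~\ref{lem_empirical} itself. Your identity $\cB_{d,0}^\tensor(\rho_{d,0})=A^2$ is correct (it follows from the coordinate independence of $\rho_{d,0}$ and the sign-symmetry of Claim~\ref{claim_sym_prop}) but is not used in the paper's proof.
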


	We will carry out the details for the first term $\ex|\Sigma(L,L')-\cB^\tensor_{d,t}(\pi_{d,t})|$, which is the most delicate; similar but slightly simpler steps yield the other two estimates.
	We begin by replacing $\vDelta(M)$ by its truncated version $\vDelta_B(M)$.
	Accordingly, let
	\begin{align*}
		\Sigma_B(L,L')&=\frac1{L'-L}\sum_{M=L}^{L'-1}\ex\brk{\vDelta_B(M)\mid\fF_{M}},\\
		\Sigma'_B(L,L')&=\frac1{L'-L}\sum_{M=L}^{L'-1}\ex\brk{\vDelta'_B(M)\mid\fF_{M}},\\
		\Sigma''_B(L,L')&=\frac1{L'-L}\sum_{M=L}^{L'-1}\ex\brk{\vDelta''_B(M)\mid\fF_{M}}.
	\end{align*}

	\begin{claim}\label{claim_varproc_a}
		For any $\eps>0$ there exists $B_0>0$ such that for all $B>B_0$ and all $L,L'>0$ we have $$\ex\abs{\Sigma(L,L')-\Sigma_B(L,L')}<\eps+o(1).$$
	\end{claim}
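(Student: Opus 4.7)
The plan is to reduce the claim directly to Corollary~\ref{cor_comb_l2}, which already provides a uniform $L^1$-bound $\ex|\vDelta(M)-\vDelta_B(M)|<\eps$ for all $1\leq M\leq m$ once $B$ is chosen large enough. Since $\Sigma(L,L')$ and $\Sigma_B(L,L')$ are built from the \emph{same} conditional expectations against the \emph{same} filtration $\fF_M$, with identical weights $1/(L'-L)$, the bound will transfer to the averaged quantity without loss.

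First I would use linearity of conditional expectation to write
\begin{align*}
\Sigma(L,L')-\Sigma_B(L,L') \;=\; \frac{1}{L'-L}\sum_{M=L}^{L'-1}\ex[\vDelta(M)-\vDelta_B(M)\mid\fF_M],
\end{align*}
and then invoke the triangle inequality together with Jensen's inequality (the latter applied inside the conditional expectation $|\ex[\,\cdot\mid\fF_M]|\leq\ex[|\cdot|\mid\fF_M]$) to obtain
\begin{align*}
\abs{\Sigma(L,L')-\Sigma_B(L,L')}\;\leq\;\frac{1}{L'-L}\sum_{M=L}^{L'-1}\ex\brk{|\vDelta(M)-\vDelta_B(M)|\;\big|\;\fF_M}.
\end{align*}

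Second, I would take an unconditional expectation on both sides and apply the tower property, which yields
\begin{align*}
\ex\abs{\Sigma(L,L')-\Sigma_B(L,L')}\;\leq\;\frac{1}{L'-L}\sum_{M=L}^{L'-1}\ex\abs{\vDelta(M)-\vDelta_B(M)}.
\end{align*}

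Third, given $\eps>0$, I would choose $B_0$ as provided by Corollary~\ref{cor_comb_l2}, so that for every $B>B_0$ and every $1\leq M\leq m$ the inequality $\ex|\vDelta(M)-\vDelta_B(M)|<\eps$ holds. Substituting into the last display gives $\ex|\Sigma(L,L')-\Sigma_B(L,L')|<\eps$, which is stronger than what the claim asks (the $+o(1)$ slack is not needed here). There is essentially no obstacle: the claim is a routine consequence of the uniform $L^1$-control in $M$, together with the fact that averaging and taking conditional expectation both preserve $L^1$-bounds.
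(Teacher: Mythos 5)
Your argument is correct and is exactly what the paper's one-line proof ("immediate consequence of \Cor~\ref{cor_comb_l2}") is suppressing: the triangle inequality, conditional Jensen, and the tower property transfer the uniform-in-$M$ $L^1$ bound of \Cor~\ref{cor_comb_l2} to the averaged conditional expectations. You even correctly observe that the $o(1)$ slack in the claim's statement is not needed.
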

	\begin{proof}
		This is an immediate consequence of \Cor~\ref{cor_comb_l2}.
	\end{proof}

	We proceed to relate the change in the pruned partition function to the marginal distribution of the truth values of the variables of the additional clause $\va_M$.

	\begin{claim}\label{claim_varproc_c}
		Let $B>0$.
		\Whp\ we have
		\begin{align*}
			\Lambda_B\bcfr{Z(\hPHI_h(M,m-M))}{Z(\hPHI_h(M-1,m-M))}&=\Lambda_B\bc{1-\prod_{y\in\partial\va_M}\mu_{\hPHI_h(M-1,m-M)}\bc{\SIGMA_{y}\neq\sign(y,\va_M)}}+o(1)&&(h=1,2).
		\end{align*}
	\end{claim}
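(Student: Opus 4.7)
The plan is to deduce Claim~\ref{claim_varproc_c} from \Prop~\ref{lem_rs} via the Lipschitz continuity of the truncated logarithm $\Lambda_B$, with $B$ treated as a fixed (large) constant.

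First, I would record a Lipschitz estimate for $\Lambda_B$ on $[0,\infty)$, using the convention $\Lambda_B(0)=-B$. Outside the interval $[\eul^{-B},\eul^B]$ the function $\Lambda_B$ is constant, while inside it equals $\log$, whose derivative $1/x$ is bounded by $\eul^B$ for $x\geq \eul^{-B}$. Consequently
\begin{align*}
|\Lambda_B(x)-\Lambda_B(y)|\leq \eul^B|x-y| \qquad\text{for all }x,y\geq 0.
\end{align*}

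Next, \Prop~\ref{lem_rs} asserts that \whp
\begin{align*}
\frac{Z(\hPHI_h(M,m-M))}{Z(\hPHI_h(M-1,m-M))}
=1-\prod_{y\in\partial\va_M}\mu_{\hPHI_h(M-1,m-M)}\bc{\SIGMA_y\neq\sign(y,\va_M)}+o(1),
\end{align*}
because $\mu_\Phi$ is the uniform distribution on $S(\Phi)$, so the conditional probabilities appearing in \Prop~\ref{lem_rs} coincide with the marginals $\mu_{\hPHI_h(M-1,m-M)}(\cdot)$. Applying the Lipschitz bound with $B$ fixed, the additive $o(1)$ error is preserved under $\Lambda_B$ up to a harmless multiplicative factor of $\eul^B$, which yields the claim.

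I do not foresee any genuine obstacle here: once one has \Prop~\ref{lem_rs}, the only subtlety is the regime where the ratio of partition functions might be very small so that $\log$ itself is ill-behaved, but that is precisely the point of the truncation $\Lambda_B$, which absorbs this issue via its global Lipschitz constant $\eul^B$.
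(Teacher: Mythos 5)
Your proposal is correct and takes essentially the same route as the paper, which disposes of the claim in one line by invoking that $\Lambda_B$ is bounded and continuous and then citing \Prop~\ref{lem_rs}. Your version is in fact slightly more careful: you supply the explicit Lipschitz constant $\eul^B$ for $\Lambda_B$ on $[0,\infty)$, which is the cleanest way to see that a deterministic $o(1)$ additive perturbation of the argument is preserved under $\Lambda_B$ (mere boundedness and continuity would not by itself rule out pathological sensitivity, whereas uniform continuity — here via Lipschitz — does). You also correctly identify that the conditional probabilities in \Prop~\ref{lem_rs} are exactly the marginals $\mu_{\hPHI_h(M-1,m-M)}(\cdot)$, so the two expressions match.
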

	\begin{proof}
		Since the function $\Lambda_B$ is bounded and continuous, this follows from \Prop~\ref{lem_rs}.
	\end{proof}

	A combinatorial interpretation of $\Sigma(L,L')$ is that the sum gauges the cumulative effect of adding a total of $L'-L$ `shared' clauses, one after the other.
	Claim~\ref{claim_varproc_c} expresses the effect of adding a shared clause in terms of the marginals of the formula $\hPHI_h(M-1,m-M)$.
	So long as the total number $L'-L$ of clauses added is not too large, we may expect that this marginal distribution does not shift all to much as we add clauses one by one.
	This is what the following claim verifies.

	\begin{claim}\label{claim_varproc_d}
		Let $t=M/m$.
		If $L'-L=O(1)$, then \whp\ we have $$\sum_{M=L}^{L'-1}W_1(\pi_{\hat\PHI_1(M-1,m-M),\hat\PHI_2(M-1,m-M)},\pi_{d,t})=o(1).$$
	\end{claim}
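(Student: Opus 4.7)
The plan is to apply Corollary~\ref{lem_empirical} term-by-term to each summand, bound each expected contribution by $o(1)$, and then sum over the $O(1)$ indices and promote to a high-probability statement via Markov's inequality.

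To be concrete, for each $M$ in the range $L \leq M \leq L' - 1$, set $t_M = (M - 1)/m$, so that $M - 1 \sim t_M dn / 2$ and $m - M \sim (1 - t_M) dn / 2$ (using $m \sim dn/2$). Corollary~\ref{lem_empirical} applied to the pair $(\hat\PHI_1(M-1, m-M), \hat\PHI_2(M-1, m-M))$ with $t_M$ in the role of $t$ then yields
\begin{align*}
\ex\brk{W_1\bc{\pi_{\hat\PHI_1(M-1, m-M), \hat\PHI_2(M-1, m-M)}, \pi_{d, t_M}}} = o(1).
\end{align*}

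If the $t$ in the claim is read as $t = t_M$ varying with the summand, we are already done: summing $O(1)$ expectations yields $o(1)$, and Markov gives the w.h.p. bound. In the alternative reading, where $t$ is fixed across the sum (say $t = L/m$), the additional ingredient needed is $W_1(\pi_{d, t_M}, \pi_{d, t}) = o(1)$. Since $L' - L = O(1)$ forces $|t_M - t| = O((L'-L)/m) = o(1)$ uniformly in the range, this reduces to the continuity of the map $t \mapsto \pi_{d,t}$ in Wasserstein distance. That continuity is a byproduct of the proof of \Prop~\ref{prop_noela}: the operator $\LLN$ depends continuously (in fact Lipschitz) on $t$ through its Poisson parameters $td$ and $(1-t)d$, while the contraction factor $d/2 < 1$ from \Lem~\ref{lem_noela} is uniform in $t$. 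A standard Banach fixed-point perturbation argument then gives continuous dependence of $\rho_{d,t}$, hence of $\pi_{d,t} = \mypsi(\rho_{d,t})$, on $t$.

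Combining the two estimates via the triangle inequality shows that each of the $O(1)$ summands has expectation $o(1)$; summing and invoking Markov's inequality finishes the proof. The only non-routine step is the continuity of $t \mapsto \pi_{d,t}$, which is the main (though mild) obstacle and the only place where the hypothesis $L' - L = O(1)$ is genuinely used.
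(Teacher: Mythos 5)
Your proposal is correct and takes essentially the same route as the paper, which simply invokes \Cor~\ref{lem_empirical} term-by-term and sums $O(1)$ expectations. The only difference is that you insert an additional continuity argument for $t \mapsto \pi_{d,t}$ that is not actually needed: the hypothesis of \Cor~\ref{lem_empirical} is stated for any $M \sim tnd/2$ (an asymptotic condition with $o(n)$ slack), so with $t = L/m$ fixed, every index $M \in [L,L'-1]$ already satisfies $M-1 \sim tnd/2$ once $L'-L = O(1)$; hence the corollary applies directly with the \emph{same} $t$ to each summand. Your extra step is harmless and in fact verifies a true auxiliary fact, but the simpler observation about the $\sim$-slack is what makes the paper's one-line proof work.
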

	\begin{proof}
		This follows from \Cor~\ref{lem_empirical}.
	\end{proof}

	As a next step we truncate the functional $\cB_{d,t}^\tensor$ from \eqref{eqfunctional}.
	Hence, for $B>0$ let
	\begin{align}\nonumber
		\cB^\tensor_{B,d,t}(\pi)&=
		\ex\Big[\Lambda_B\bc{1-(\vecone\{\vr_{-1,1}=-1\}+\vr_{-1,1}\MU_{\pi,-1,1,1})(\vecone\{\vr_{-1,2}=-1\}+\vr_{-1,2}\MU_{\pi,-1,2,1})}\\
		&\qquad\qquad			\Lambda_B\bc{1-(\vecone\{\vr_{-1,1}=-1\}+\vr_{-1,1}\MU_{\pi,-1,1,2})(\vecone\{\vr_{-1,2}=-1\}+\vr_{-1,2}\MU_{\pi,-1,2,2})}\Big].
		\label{eqfunctional_trunc}
	\end{align}

	\begin{claim}\label{claim_varproc_b}
		For any $\eps>0$ there exists $B_0>0$ such that for all $B>B_0$ and all $t\in[0,1]$ we have $$\abs{\cB^\tensor_{d,t}(\pi_{d,t})-\cB^\tensor_{B,d,t}(\pi_{d,t})}<\eps.$$
	\end{claim}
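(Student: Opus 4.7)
The plan is to reduce Claim~\ref{claim_varproc_b} to uniform-in-$t$ second- and fourth-moment bounds on the integrands of $\cB^\tensor_{d,t}$, via a standard Cauchy--Schwarz truncation estimate. Writing the integrand of \eqref{eqfunctional_trunc} as $\tilde X_1\tilde X_2$ with $\tilde X_h=\Lambda_B(Y_h)$ and $Y_h\in(0,1]$ the argument of the $h$-th truncation, and its untruncated counterpart as $X_1X_2$ with $X_h=\log Y_h\le 0$, we have $|X_h-\tilde X_h|\le|X_h|\vecone\{|X_h|>B\}$ and $|\tilde X_h|\le|X_h|$. Expanding $X_1X_2-\tilde X_1\tilde X_2=(X_1-\tilde X_1)X_2+\tilde X_1(X_2-\tilde X_2)$ and applying Cauchy--Schwarz yields
\[
|\cB^\tensor_{d,t}(\pi_{d,t})-\cB^\tensor_{B,d,t}(\pi_{d,t})|\le\ex[X_1^2\vecone\{|X_1|>B\}]^{1/2}\ex[X_2^2]^{1/2}+\ex[X_1^2]^{1/2}\ex[X_2^2\vecone\{|X_2|>B\}]^{1/2},
\]
and Markov's inequality bounds the restricted moments by $\ex[X_h^4]/B^2$. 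So the task reduces to proving that both $\ex[X_h^2]$ and $\ex[X_h^4]$ are bounded by constants depending only on $d$.

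The key pointwise estimate is
\[
|X_h|\le\min(|\vec\xi_{\rho_{d,t},-1,1,h}|,|\vec\xi_{\rho_{d,t},-1,2,h}|)+\log 2,
\]
obtained via the elementary inequalities $1-ab\ge\max(1-a,1-b)$ for $a,b\in[0,1]$ and $(1-|\tanh(\xi/2)|)/2=e^{-|\xi|}/(1+e^{-|\xi|})\ge e^{-|\xi|}/2$. Applying these to the two factors $(1+\vr_{-1,j}\tanh(\vec\xi_{\rho_{d,t},-1,j,h}/2))/2$ that make up $1-Y_h$, one gets $Y_h\ge\exp(-\min(|\vec\xi_{\rho_{d,t},-1,1,h}|,|\vec\xi_{\rho_{d,t},-1,2,h}|))/2$. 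Since $\min(|a|,|b|)^k\le|a|^{k/2}|b|^{k/2}$ and $\vec\xi_{\rho_{d,t},-1,1},\vec\xi_{\rho_{d,t},-1,2}$ are two independent copies of a vector with distribution $\rho_{d,t}$, the previous display produces
\[
\ex[X_h^2]\le 2\,\ex_{\rho_{d,t}}[\|\vec\xi\|_2^2]+2\log^2 2,\qquad\ex[X_h^4]\le 8\,\ex_{\rho_{d,t}}[\|\vec\xi\|_2^2]^2+8\log^4 2.
\]

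It then remains to bound $\ex_{\rho_{d,t}}[\|\vec\xi\|_2^2]=W_2(\rho_{d,t},\delta_0)^2$ uniformly in $t\in[0,1]$, where $\delta_0$ denotes the Dirac mass at the origin of $\RR^2$. Combining the fixed-point identity $\rho_{d,t}=\LLN(\rho_{d,t})$ with the $W_2$-contraction from \Lem~\ref{lem_noela},
\[
W_2(\rho_{d,t},\delta_0)\le W_2(\LLN(\rho_{d,t}),\LLN(\delta_0))+W_2(\LLN(\delta_0),\delta_0)\le\sqrt{d/2}\,W_2(\rho_{d,t},\delta_0)+W_2(\LLN(\delta_0),\delta_0),
\]
and specialising the variance computation from the proof of Claim~\ref{claim_haodong} to $\rho=\delta_0$ (so that all $\vec\xi$'s vanish) gives $W_2(\LLN(\delta_0),\delta_0)^2=\ex[\|\vec V\|_2^2]=2d\log^2 2$, independently of $t$. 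Hence $\ex_{\rho_{d,t}}[\|\vec\xi\|_2^2]\le 2d\log^2 2/(1-\sqrt{d/2})^2$, uniformly in $t$, and substituting back in the first display yields $|\cB^\tensor_{d,t}(\pi_{d,t})-\cB^\tensor_{B,d,t}(\pi_{d,t})|=O(B^{-1})$ uniformly in $t$; any $B_0$ large enough in terms of $d$ and $\eps$ then settles the claim.

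The hard part will be the pointwise bound on $|X_h|$: a naive estimate of the form $|\log((1+\vr\tanh(\xi/2))/2)|\le|\xi|+\log 2$ applied separately to each of the two factors inside $1-Y_h$ is too weak, because the resulting $L^4$-bound on $X_h$ would require uniform control of a fourth moment of $\vec\xi$, which the $W_2$-contraction of \Lem~\ref{lem_noela} does not supply. Exploiting the inequality $1-ab\ge\max(1-a,1-b)$ to obtain a bound in terms of the minimum rather than the sum of $|\vec\xi_{\rho_{d,t},-1,1,h}|$ and $|\vec\xi_{\rho_{d,t},-1,2,h}|$ is precisely what converts the $L^2$-control of $\vec\xi$ furnished by the contraction into the $L^4$-control of $X_h$ needed to rule out tail mass uniformly in $t$.
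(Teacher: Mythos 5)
Your proof is correct and takes a genuinely different, more explicit route than the paper. The paper's own argument is a one-liner: it notes that $\cB^\tensor_{B,d,t}(\pi_{d,t}) \uparrow \cB^\tensor_{d,t}(\pi_{d,t})$ as $B\to\infty$ (by monotone convergence, using that the integrand $\Lambda_B(Y_1)\Lambda_B(Y_2)$ is nonnegative and nondecreasing in $B$) and invokes \Prop~\ref{prop_noela}, leaving implicit the step from pointwise convergence in $t$ to the required uniform convergence on $[0,1]$ --- presumably via continuity of $t\mapsto\cB^\tensor_{B,d,t}(\pi_{d,t})$ and Dini's theorem, though neither is spelled out. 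Your argument bypasses that entirely: by Cauchy--Schwarz you reduce to second- and fourth-moment bounds on the integrand, and you obtain these uniformly in $t$ from two observations. First, the pointwise bound $|X_h|\le\min(|\vec\xi_{\rho_{d,t},-1,1,h}|,|\vec\xi_{\rho_{d,t},-1,2,h}|)+\log 2$ via $1-ab\ge\max(1-a,1-b)$ is the key technical point: it converts $L^2$-control of $\vec\xi$ into $L^4$-control of $X_h$, which a factor-wise estimate (giving the sum rather than the minimum) cannot do since the contraction in \Lem~\ref{lem_noela} supplies no fourth moment of $\rho_{d,t}$. Second, the fixed-point/contraction bound $W_2(\rho_{d,t},\delta_0)\le W_2(\LLN(\delta_0),\delta_0)/(1-\sqrt{d/2})$ with $W_2(\LLN(\delta_0),\delta_0)^2=2d\log^2 2$ gives a $t$-independent second-moment bound on $\rho_{d,t}$. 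The net result is an explicit $O(1/B)$ rate uniform in $t$, which is both more quantitative and more self-contained than what the paper writes.
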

	\begin{proof}
		Since $\cB^\tensor_{B,d,t}(\pi_{d,t}) \uparrow \cB^\tensor_{d,t}(\pi_{d,t})$ as $B \to \infty$, this follows from \Prop~\ref{prop_noela}.
	\end{proof}

	\begin{proof}[Proof of \Lem~\ref{lem_varproc}]
		\Lem~\ref{lem_bethe_fin} ensures that $\cB_{d,t}^\tensor(\pi_{d,t})<\infty$ for all $t$.
		Moreover, Claims~\ref{claim_varproc_a} and~\ref{claim_varproc_b} imply that we just need to show that for large $B>0$,
		\begin{align}\label{eq_lem_varproc_1}
			\ex\abs{\Sigma_B(L,L')-\cB^\tensor_{B,d,t}(\pi_{d,t})}<\eps.
		\end{align}
		Let $t=L/m$ and let $(\vS_M)_{L\leq M<L'}$ be independent copies of the random variable
		\begin{align*}
			\prod_{h=1}^2\Lambda_B&\bc{1-(\vecone\{\vr_{-1,1}=-1\}+\vr_{-1,1}\MU_{\pi_{d,t},-1,1,h})(\vecone\{\vr_{-1,2}=-1\}+\vr_{-1,2}\MU_{\pi_{d,t},-1,2,h})}.
		\end{align*}
		Furthermore, let
		\begin{align*}
			\vD_M&=\prod_{h=1}^2\Lambda_B\bcfr{Z(\hPHI_h(M,m-M))}{Z(\hPHI_h(M-1,m-M))}
		\end{align*}
		Then Claims~\ref{claim_varproc_c}--~\ref{claim_varproc_d} show that
		\begin{align}\label{eq_lem_varproc_2}
			\ex\brk{W_1\bc{\sum_{M=L}^{L-1}\vD_M,\sum_{M=L}^{L-1}\vS_M}}&=o(1).
		\end{align}
		Finally, since $\sum_{M=L}^{L-1}\vS_M$ is a sum of bounded independent random variables, \eqref{eq_lem_varproc_1} follows from \eqref{eq_lem_varproc_2} and the strong law of large numbers.
	\end{proof}

	\begin{proof}[Proof of \Prop~\ref{prop_varproc}]
		The second equality in~\eqref{eqmyhh} follows from \Cor~\ref{cor_comb_l2}, \Lem~\ref{lem_varproc} and the triangle inequality.
		Thus, we are left to verify the first condition.
		Since \Lem~\ref{lem_comb_proc} shows that
		\begin{align*}
			\vX_M^2=\frac1m\ex\brk{\vDelta(M)+\vDelta(M)'-2\vDelta''(M)\mid\fF_M},
		\end{align*}
		it suffices to prove that
		\begin{align}\label{eq_thm_clt_1}
			\ex\max_{1\leq M\leq m}\vDelta(M)^2+\ex\max_{1\leq M\leq m}\vDelta'(M)^2+\ex\max_{1\leq M\leq m}\vDelta''(M)^2&=o(m).
		\end{align}
		We will bound the first expectation; similar arguments apply to the others.

		Retracing the steps of the proof of \Lem~\ref{lem_comb_l2}, we write
		\begin{align}\nonumber
			\vec\Delta(M)^2&=\log^2\bcfr{Z(\hPHI_1(M,m-M))}{Z(\hPHI_1(M-1,m-M))} \cdot \log^2\bcfr{Z(\hPHI_2(M,m-M))}{Z(\hPHI_2(M-1,m-M))}.
		\end{align}
		Since $\PHI_1(M,m-M)$ is obtained from $\PHI_1(M-1,m-M)$ by adding the random clause $\va_M$, \Lem~\ref{lem_forced_add} implies that
		\begin{align}\label{eq_thm_clt_2}
			\log\bcfr{Z(\hPHI_h(M,m-M))}{Z(\hPHI_h(M-1,m-M))}&\leq|\cN(\PHI_h(M-1,m-M),\va_M)|\log2&&(h=1,2).
		\end{align}
		As $|\cN(\PHI_h(M-1,m-M),\va_M)|$ has the same distribution as $\vN'$ from \Lem~\ref{lem_tail}, we obtain $$\ex\brk{|\cN(\PHI_1(M-1,m-M),\va_M)|^4}=O(1)$$ uniformly for all $M$.
		Therefore, Markov's inequality implies
		\begin{align}\label{eq_thm_clt_3}
			\pr\brk{|\cN(\PHI_1(M-1,m-M),\va_M)|>m^{1/3}}&\leq O(m^{-4/3}).
		\end{align}
		Finally, \eqref{eq_thm_clt_1} follows from \eqref{eq_thm_clt_2}, \eqref{eq_thm_clt_3} and the union bound.
		%
		%
		%
	\end{proof}

	As a final preparation towards the proof of \Prop~\ref{cor_var} we need a lower bound on $\log Z(\hPHI)$.

	\begin{lemma}\label{lem_var_pos}
		We have $\Var(\log Z(\hPHI))=\Omega(n)$.
	\end{lemma}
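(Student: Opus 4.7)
The plan is to combine the martingale framework of \Prop~\ref{prop_varproc} with a direct verification that the limiting constant $\eta(d)^2$ is strictly positive. By orthogonality of martingale increments, $m^{-1}\Var(\log Z(\hPHI)) = \ex\brk{\sum_{M=1}^{m}\vX_{n,M}^2}$, and the right-hand side converges to $\eta(d)^2$ by \Prop~\ref{prop_varproc}. Since $m \sim dn/2$, the lemma is therefore equivalent to
\begin{align*}
	\eta(d)^2 = \int_0^1 \bc{\cB_{d,t}^\tensor(\rho_{d,t}) - \cB_{d,0}^\tensor(\rho_{d,0})} \dd t > 0.
\end{align*}

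The first step is to compare the two boundary values. At $t = 0$ the tree $\vT^\tensor$ decouples into two independent copies of $\vT$, so $\rho_{d,0} = \rho^{(1)} \otimes \rho^{(1)}$ for a common marginal $\rho^{(1)}$ on $\RR$, which inherits $\xi \disteq -\xi$ from the Rademacher factors $\vs_i$. This symmetry forces the inner expectation
\begin{align*}
	G(r_1,r_2) := \ex_{\xi_1,\xi_2 \sim \rho^{(1)}}\brk{\log\bc{1 - \tfrac14(1+r_1\tanh(\xi_1/2))(1+r_2\tanh(\xi_2/2))}}
\end{align*}
to be independent of $(r_1,r_2)$, whence $\cB_{d,0}^\tensor(\rho_{d,0}) = G_0^2$ with $G_0 := G(1,1)$. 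At $t = 1$, by contrast, every node of $\vT^\tensor$ is shared and $\LLN$ preserves the diagonal, so $\rho_{d,1}$ sits on $\{\xi_1 = \xi_2\}$ and $\cB_{d,1}^\tensor(\rho_{d,1}) = \ex[F(\xi_1,\xi_2;\vr_1,\vr_2)^2]$ for the single log-factor $F$. Jensen then gives $\cB_{d,1}^\tensor(\rho_{d,1}) > G_0^2$, strictly because $\rho^{(1)}$ is non-degenerate (by symmetry the only possible atom would be $\delta_0$, but a single application of $\LLN$ to $\luni^\tensor$ already produces non-trivial Rademacher fluctuations when $d > 0$) and puts no mass at $\pm\infty$ thanks to the finite second moment in \Prop~\ref{prop_noela}.

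I would then upgrade this strict inequality to a set of positive Lebesgue measure via continuity. The operator $\LLN$ depends continuously on $t$ through its Poisson rates $td$ and $(1-t)d$, and by \Lem~\ref{lem_noela} it is uniformly contractive on $\cW_2(\RR^2)$ with Lipschitz constant $\sqrt{d/2} < 1$; the parametric Banach fixed-point theorem consequently yields $t \mapsto \rho_{d,t}$ continuous in $W_2$, and a dominated-convergence argument based on the second-moment bounds underpinning \Lem~\ref{lem_bethe_fin} carries this continuity over to $g(t) := \cB_{d,t}^\tensor(\rho_{d,t})$. Hence $g(t) \geq g(0) + \eps$ on some interval $[1-\delta,1]$ with $\eps,\delta > 0$.

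To ensure this positive surplus on $[1-\delta,1]$ is not wiped out elsewhere, I would finally establish the pointwise lower bound $g(t) \geq g(0)$ on all of $[0,1]$ via a conditional covariance decomposition. Denoting by $S$ the shared portion of the Galton-Watson tree $\vT^\tensor$ (i.e.\ the shared vertices, shared clauses, and the messages propagated up through them), the log-factors $F_1,F_2$ entering \eqref{eqfunctional} decompose, conditional on $S$, into contributions from the disjoint $1$-distinct and $2$-distinct branches and thereby become independent, while symmetry between the two formulas forces $\ex[F_1\mid S] = \ex[F_2\mid S] =: H(S)$; hence $g(t) - g(0) = \Var(H(S)) \geq 0$, and combining with the previous paragraph yields $\eta(d)^2 \geq \eps\delta > 0$. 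The main technical obstacle will be implementing this decomposition at the level of the abstract fixed-point measure $\rho_{d,t}$, which I would handle by arguing inductively along the iterates $\rho_{d,t}^{(\ell)} = (\LLN)^{\ell}(\luni^\tensor)$: at finite $\ell$ the shared part is simply the first $\ell$ generations of $\vT^\tensor$ and the requisite conditional independence is manifest from the product structure in~\eqref{eqlogBPtensor}, and one then passes to the limit using the $W_2$-continuity established above.
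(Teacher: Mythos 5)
Your proposal is correct in spirit but takes a genuinely different — and considerably heavier — route than the paper. The paper's proof of $\Lem$~\ref{lem_var_pos} is a short, elementary counting argument that completely sidesteps the fixed-point formula: it identifies two fixed shapes of small isolated sub-formulas of $\hPHI$ (one with $Z=5$, one with $Z=4$), shows each occurs $\Omega(n)$ times with conditional fluctuations $\Omega(n)$, and since $\log Z(\hPHI)$ is additive over connected components, $\Var(\log Z(\hPHI))=\Omega(n)$ follows in about ten lines. In the paper's architecture $\Lem$~\ref{lem_var_pos} is therefore an \emph{independent} input that $\Cor$~\ref{cor_var} uses to \emph{deduce} $\eta(d)^2>0$ from the asymptotic $\Var\sim m\eta(d)^2$. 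You invert this: using $\Prop$~\ref{prop_varproc} and martingale orthogonality (legitimate, since $\Prop$~\ref{prop_varproc} does not depend on $\Lem$~\ref{lem_var_pos}) you reduce the lemma to proving $\eta(d)^2>0$ directly from $\rho_{d,t}$, via the product/symmetry structure at $t=0$, strict Jensen on the diagonal at $t=1$, $W_2$-continuity of $t\mapsto\rho_{d,t}$ from the uniform contraction of $\Lem$~\ref{lem_noela}, and a conditional-covariance decomposition giving $g(t)\geq g(0)$ on $[0,1]$. The conceptual observations here are correct — in particular that the one-dimensional marginal of $\rho_{d,t}$ is $t$-independent (so $\ex[F_1]=G_0$ for all $t$), and that after conditioning on the shared sub-trees of \emph{both} independent draws plus $\vr_1,\vr_2$ one gets $F_1\perp F_2\mid S$ with $\ex[F_1\mid S]=\ex[F_2\mid S]$, whence $g(t)-g(0)=\Var(\ex[F_1\mid S])\geq0$ — and this has the genuine merit of explaining directly \emph{why} $\eta(d)^2>0$, something the paper never does. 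But it imports real technical debt: step (iv) needs $\rho_{d,t}$ realised as the limiting law of root marginals of $(\vT_1,\vT_2)$ together with uniform-integrability control to interchange limit and expectation, and the non-degeneracy of $\rho^{(1)}$ should be argued as ``$\delta_{(0,0)}$ is not a fixed point of $\LLN$ when $d>0$'' rather than by inspecting one iterate. None of this machinery is needed for the paper's short combinatorial proof, which is self-contained and does not presuppose any of the variance-process analysis.
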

	\begin{proof}
		Let $\fC$ be the set of isolated sub-formulas of $\hPHI$ with precisely three clauses and three variables that are acyclic and whose unique variable of degree two appears with the same sign in both its adjacent clauses.
		Moreover, let $\vC'$ be the set of isolated sub-formulas of $\hPHI$ with precisely three clauses and three variables that are acyclic such that the unique variable of degree two appears with two different signs in its adjacent clauses.
		Then $\ex|\fC|=\ex|\fC'|=\Omega(n)$ and \whp\ we have
		\begin{align}\label{eqlem_var_pos1}
			\Var(|\fC|\mid|\fC|+|\fC'|)=\Var(|\fC'|\mid|\fC|+|\fC'|)=\Omega(n).
		\end{align}
		Additionally, for each sub-formula $\cC\in\fC$ we have $Z(\cC)=5$, while for $\cC'\in\fC'$ we have $Z(\vC')=4$.
		Since with the sum ranging over the connected components $\cC$ of $\hPHI$ we have $\log Z(\hPHI)=\sum_{\cC}\log Z(\cC)$, the assertion follows from~\eqref{eqlem_var_pos1}.
	\end{proof}

	\begin{proof}[Proof of \Cor~\ref{cor_var}]
		The corollary is an immediate consequence of \Lem~\ref{lem_comb_rgb}, \Lem~\ref{lem_bethe_fin},\\ \Cor~\ref{cor_comb_l2},
		\Lem~\ref{lem_varproc} and \Lem~\ref{lem_var_pos}.
	\end{proof}

	\section{Proof of \Thm~\ref{thm_clt}}
\label{sec_hh}


\noindent
We derive \Thm~\ref{thm_clt} from the following general martingale central limit theorem, which is a special case of \cite[Theorem 3.2]{HH} (see also the subsequent remark there).




\begin{theorem}[{\cite[Theorem 3.2]{HH}}]\label{thm_hh_original}
	Let $(\vZ_{n,i}, \fF_{n,i})_{0\leq i \leq m_n, n \geq 1}$ be a zero-mean, square-integrable martingale array with differences $\vX_{n,i} = \vZ_{n,i} - \vZ_{n,i-1}$ for $1 \leq i \leq m_n$. Assume that there exists a constant $\eta^2$ such that
	\begin{align}
		\lim_{n \to \infty}\max_{1 \leq i \leq m_n} |\vX_{n,i}| = 0 \qquad &\text{in probability}, \label{eq_hh1} \\
		\lim_{n \to \infty}\sum_{i=1}^{m_n}\vX_{n,i}^2 = \eta^2 \qquad &\text{in probability},\label{eq_hh2} \\
		\ex\brk{\max_{1 \leq i \leq m_n} \vX_{n,i}^2} \qquad &\text{is bounded in } n. \label{eq_hh3} 
	\end{align}
	Then $\vZ_{n,m_n}$ converges in distribution to a Gaussian distribution with mean zero and variance $\eta^2$.
	\end{theorem}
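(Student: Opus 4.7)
I would prove convergence of characteristic functions: show that for every $t\in\RR$,
\[
\ex\brk{\exp(it\vZ_{n,m_n})} \to \exp(-t^2\eta^2/2) \qquad (n\to\infty).
\]
The strategy is the classical product-expansion technique (going back to Brown and Eagleson): match $\exp(it\vZ_{n,m_n})$ factor-by-factor with an auxiliary complex martingale whose expectation is identically $1$, absorb the leftover quadratic terms into the deterministic limit $e^{-t^2\eta^2/2}$ via (\ref{eq_hh2}), and control the cubic remainders using (\ref{eq_hh1}).

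\textbf{Step 1: Truncation to bounded differences.} Fix a small auxiliary parameter $\eps>0$. I would replace $\vX_{n,k}$ by
\[
\vX_{n,k}' = \vX_{n,k}\vecone\{|\vX_{n,k}|\leq\eps\} - \ex\brk{\vX_{n,k}\vecone\{|\vX_{n,k}|\leq\eps\}\mid\fF_{n,k-1}},
\]
so that $|\vX_{n,k}'|\leq 2\eps$ and $(\vZ_{n,k}' = \sum_{j\leq k}\vX_{n,j}',\fF_{n,k})$ is still a martingale array. Using (\ref{eq_hh1}) together with the $L^1$ control from (\ref{eq_hh3}) (which yields uniform integrability of $(\max_k \vX_{n,k}^2)_n$), I would check that $\vZ_{n,m_n}-\vZ_{n,m_n}'\to 0$ in probability and that $\sum_k (\vX_{n,k}')^2 \to \eta^2$ in probability as well.

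\textbf{Step 2: The multiplicative martingale.} Consider the complex-valued product
\[
\vec M_{n,k} = \prod_{j=1}^k (1 + it\vX_{n,j}').
\]
Because $\ex[\vX_{n,j}'\mid\fF_{n,j-1}]=0$, the sequence $(\vec M_{n,k})_k$ is a complex $(\fF_{n,k})$-martingale, and in particular $\ex[\vec M_{n,m_n}]=1$. The algebraic heart of the argument is the Taylor factorisation, valid for $|x|\leq 2\eps$ with $\eps$ small enough that $1+itx$ stays off $(-\infty,0]$:
\[
\exp(itx) = (1+itx)\cdot\exp(-t^2x^2/2)\cdot\exp(r(t,x)), \qquad |r(t,x)|\leq C_t |x|^3.
\]
Applying this identity to each factor of $\exp(it\vZ_{n,m_n}') = \prod_{j\leq m_n}\exp(it\vX_{n,j}')$ yields
\[
\exp(it\vZ_{n,m_n}') = \vec M_{n,m_n}\cdot\exp\!\bc{-\tfrac{t^2}{2}\sum_{k=1}^{m_n}(\vX_{n,k}')^2}\cdot\exp\!\bc{\sum_{k=1}^{m_n} r(t,\vX_{n,k}')}.
\]

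\textbf{Step 3: Passing to the limit.} The last factor is bounded by $\exp(C_t\eps\sum_k (\vX_{n,k}')^2)$, which by Step 1 and (\ref{eq_hh2}) tends to $\exp(C_t\eps\eta^2)$ in probability. The middle factor lies in $[0,1]$ and converges in probability to $\exp(-t^2\eta^2/2)$ by (\ref{eq_hh2}). Bounded convergence handles the middle factor; for $\vec M_{n,m_n}$ one uses the deterministic bound $|\vec M_{n,m_n}|^2 = \prod_k (1+t^2(\vX_{n,k}')^2) \leq \exp(t^2\sum_k(\vX_{n,k}')^2)$, which by (\ref{eq_hh3}) is uniformly integrable. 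Taking expectations and letting $n\to\infty$ first and $\eps\to 0$ afterwards yields
\[
\ex\brk{\exp(it\vZ_{n,m_n}')} = e^{-t^2\eta^2/2}\cdot\ex[\vec M_{n,m_n}] + o_\eps(1) = e^{-t^2\eta^2/2} + o_\eps(1),
\]
and undoing the truncation from Step 1 finishes the proof.

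\textbf{Main obstacle.} The delicate point is the uniform integrability step that justifies interchanging expectation and limit in the presence of the complex multiplicative factor $\vec M_{n,m_n}$, whose modulus can a priori grow like $\exp(t^2\sum_k (\vX_{n,k}')^2/2)$. This is where condition (\ref{eq_hh3}) is indispensable: conditions (\ref{eq_hh1}) and (\ref{eq_hh2}) alone only control the quadratic variation in probability, whereas the expectation interchange demands an $L^1$-domination that (\ref{eq_hh3}) supplies via the pointwise bound on $\max_k \vX_{n,k}^2$. A secondary nuisance is verifying that the recentering error introduced in Step 1 does not disturb the limiting variance; this is routine because the corrections are conditional expectations of terms bounded by $\vX_{n,k}^2\vecone\{|\vX_{n,k}|>\eps\}$, whose total contribution is $o_\eps(1)$ by (\ref{eq_hh1}) and (\ref{eq_hh3}).
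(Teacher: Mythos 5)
A remark on scope first: the paper does not prove this statement at all — it is quoted verbatim from {\cite[Theorem~3.2]{HH}} — so the benchmark is the classical McLeish/Hall--Heyde product-martingale proof, which is precisely the route you chose; your Steps 2--3 are the right skeleton. However, there is a genuine gap exactly at the point you flag as "the delicate point": the uniform integrability of $\vec M_{n,m_n}$. Truncating the \emph{increments} at level $\eps$ does not bound the \emph{quadratic variation} $\sum_k(\vX_{n,k}')^2$ (it can be as large as $4\eps^2m_n$), and condition \eqref{eq_hh3} only bounds $\ex\brk{\max_k\vX_{n,k}^2}$, which gives no control whatsoever of $\exp(t^2\sum_k(\vX_{n,k}')^2)$. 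Concretely, take $m_n=n$ and, on an $\fF_{n,0}$-measurable event of probability $1/n$, let the increments be independent fair signs times $\eps$, while off this event let them be fair signs times $\sqrt{\eta^2/n}$: conditions \eqref{eq_hh1}--\eqref{eq_hh3} all hold, your truncation changes nothing, yet $\ex\abs{\vec M_{n,m_n}}\geq(1+t^2\eps^2)^n/n\to\infty$, so the family is not even bounded in $L^1$ and the interchange of limit and expectation in Step 3 is unjustified (although the CLT itself of course still holds in this example). A secondary problem sits in Step 1: the recentering terms $\ex\brk{\vX_{n,k}\vecone\{|\vX_{n,k}|>\eps\}\mid\fF_{n,k-1}}$, summed over $k$, form a Lindeberg-type quantity that \eqref{eq_hh1}--\eqref{eq_hh3} do not obviously control, so the claims $\vZ_{n,m_n}-\vZ_{n,m_n}'\to0$ and $\sum_k(\vX_{n,k}')^2\to\eta^2$ are not established as stated.

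The standard repair — and the way \eqref{eq_hh3} is actually used in the classical proof — is to stop the quadratic variation rather than truncate the increments: replace $\vX_{n,k}$ by $\vX_{n,k}\vecone\{\sum_{j<k}\vX_{n,j}^2\leq 2\eta^2\}$. The indicator is $\fF_{n,k-1}$-measurable, so the martingale-difference property is preserved with no recentering and $\ex\brk{\vec M_{n,m_n}}=1$ holds exactly; by \eqref{eq_hh2} the modification is void with probability tending to one; and the stopped quadratic variation is at most $2\eta^2+\max_k\vX_{n,k}^2$, whence $\abs{\vec M_{n,m_n}}\leq\eul^{t^2\eta^2}\bc{1+|t|\max_k|\vX_{n,k}|}$, which is bounded in $L^2$ by \eqref{eq_hh3} and hence uniformly integrable. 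With this substitution in place of your Step 1, your Steps 2--3 go through essentially verbatim.
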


	\begin{proof}[Proof of \Thm~\ref{thm_clt}]
		We apply \Thm~\ref{thm_hh_original} to the filtration $(\fF_{n,M})_{0\leq M\leq m_n}$ from \Sec~\ref{sec_hh_apply} and to the Doob martingale $(\vZ_{n,M} - \Erw\brk{\vZ_{n,M}})_M$ from \eqref{eqZstab1}. This is zero-mean by construction and square-integrable, as $\log Z(\hat\PHI)$ is non-negative and bounded above by $n$. 
		Let $\vX_{n,M}=\vZ_{n,M}-\vZ_{n,M-1}$ be the martingale differences.
		\Prop~\ref{prop_varproc} immediately implies conditions \eqref{eq_hh1}--\eqref{eq_hh2} of \Thm~\ref{thm_hh_original} since $L^1$-convergence implies convergence in probability. Condition \eqref{eq_hh3} also follows from \Prop~\ref{prop_varproc} by observing that
		\begin{align*} 
			\ex\brk{\max_{1 \leq M \leq m_n} \vX_{n,M}^2} &\leq  \Erw\brk{\sum_{M=1}^{m_n}\vX_{n,M}^2} \leq \Erw\abs{\sum_{M=1}^{m_n}\vX_{n,M}^2 - \eta(d)^2} + \eta(d)^2.
		\end{align*}
		Furthermore, \Lem~\ref{lem_bethe_fin} guarantees that $\eta(d)<\infty$, while \Cor~\ref{cor_var} shows that $\eta(d)>0$.
		Thus, the assertion follows from \Thm~\ref{thm_hh_original}.
	\end{proof}

	\subsection*{Acknowledgement}
	Amin Coja-Oghlan's research is supported by DFG CO 646/3, DFG CO 646/5 and DFG CO 646/6.  Pavel Zakharov's research is supported by DFG CO 646/6.  Haodong Zhu's research is supported by the European Union's Horizon 2020 research and innovation programme under the Marie Sk\l odowska-Curie grant agreement no.~945045, and by the NWO Gravitation project NETWORKS under grant no.~024.002.003.  Noela M\"uller's research is supported by the NWO Gravitation project NETWORKS under grant no.~024.002.003.
	We thank Nicloa Kistler for helpful discussions, and particularly for bringing~\cite{ChenDeyPanchenko} to our attention.

\end{document}